\pgfplotsset{compat=1.5}
\newtheorem{theorem}{Theorem}[section]
\renewcommand{\thetheorem}{%
  \ifnum\value{subsection}=0 
    \thesection.\arabic{theorem}%
  \else
    \ifnum\value{subsubsection}=0 
      \thesubsection.\arabic{theorem}%
    \else 
      \thesubsubsection.\arabic{theorem}%
    \fi
  \fi
}
\newtheorem{corollary}[theorem]{Corollary}
\newtheorem{lemma}[theorem]{Lemma}
\newtheorem{proposition}[theorem]{Proposition}
\newtheorem{definition}[theorem]{Definition}
\newtheorem{invariant}[theorem]{Invariant}
\newtheorem{fact}[theorem]{Fact}
\newenvironment{proofof}[1]{\begin{trivlist} \item {\bf Proof
#1:~~}}
  {\qed\end{trivlist}}
\newcommand{\namedref}[2]{\hyperref[#2]{#1~\ref*{#2}}}
\newcommand{\thmlab}[1]{\label{thm:#1}}
\newcommand{\thmref}[1]{\namedref{Theorem}{thm:#1}}
\newcommand{\lemlab}[1]{\label{lem:#1}}
\newcommand{\lemref}[1]{\namedref{Lemma}{lem:#1}}
\newcommand{\invarlab}[1]{\label{invar:#1}}
\newcommand{\invarref}[1]{\namedref{Invariant}{invar:#1}}
\newcommand{\corlab}[1]{\label{cor:#1}}
\newcommand{\corref}[1]{\namedref{Corollary}{cor:#1}}
\newcommand{\seclab}[1]{\label{sec:#1}}
\newcommand{\secref}[1]{\namedref{Section}{sec:#1}}
\newcommand{\factlab}[1]{\label{fact:#1}}
\newcommand{\factref}[1]{\namedref{Fact}{fact:#1}}
\newcommand{\figlab}[1]{\label{fig:#1}}
\newcommand{\figref}[1]{\namedref{Figure}{fig:#1}}
\newcommand{\tablelab}[1]{\label{tab:#1}}
\newcommand{\tableref}[1]{\namedref{Table}{tab:#1}}
\newcommand{\deflab}[1]{\label{def:#1}}
\newcommand{\defref}[1]{\namedref{Definition}{def:#1}}
\newcommand{\propref}[1]{\namedref{Proposition}{prop:#1}}
\newcommand{\proplab}[1]{\label{prop:#1}}
\def \TVD    {\mdef{\mathsf{TVD}}}
\def \trunc {\textup{trunc}}
\newcommand\norm[1]{\left\lVert#1\right\rVert}
\newcommand{\PPr}[1]{\ensuremath{\mathbf{Pr}\left[#1\right]}}
\newcommand{\PPPr}[2]{\ensuremath{\underset{#1}{\mathbf{Pr}}\left[#2\right]}}
\newcommand{\Ex}[1]{\ensuremath{\mathbb{E}\left[#1\right]}}
\newcommand{\EEx}[2]{\ensuremath{\underset{#1}{\mathbb{E}}\left[#2\right]}}
\renewcommand{\O}[1]{\ensuremath{\mathcal{O}\left(#1\right)}}
\newcommand{\eps}{\varepsilon}
\def \calA    {\mdef{\mathcal{A}}}
\def \calB    {\mdef{\mathcal{B}}}
\def \calC    {\mdef{\mathcal{C}}}
\def \calD    {\mdef{\mathcal{D}}}
\def \calE    {\mdef{\mathcal{E}}}
\def \calF    {\mdef{\mathcal{F}}}
\def \calG    {\mdef{\mathcal{G}}}
\def \calI    {\mdef{\mathcal{I}}}
\def \calL    {\mdef{\mathcal{L}}}
\def \calM    {\mdef{\mathcal{M}}}
\def \calN    {\mdef{\mathcal{N}}}
\def \calP    {\mdef{\mathcal{P}}}
\def \bA    {\mdef{\mathbf{A}}}
\def \bB    {\mdef{\mathbf{B}}}
\def \bG    {\mdef{\mathbf{G}}}
\def \bH    {\mdef{\mathbf{H}}}
\def \bM    {\mdef{\mathbf{M}}}
\def \bP    {\mdef{\mathbf{P}}}
\def \bR    {\mdef{\mathbf{R}}}
\def \bS    {\mdef{\mathbf{S}}}
\def \bV    {\mdef{\mathbf{V}}}
\def \bU    {\mdef{\mathbf{U}}}
\def \bX    {\mdef{\mathbf{X}}}
\def \bSigma  {\mdef{\mathbf{\Sigma}}}
\def \bmu    {\mdef{\mathbf{\mu}}}
\def \bb    {\mdef{\mathbf{b}}}
\def \be    {\mdef{\mathbf{e}}}
\def \bq    {\mdef{\mathbf{q}}}
\def \bu    {\mdef{\mathbf{u}}}
\def \bw    {\mdef{\mathbf{w}}}
\def \bv    {\mdef{\mathbf{v}}}
\def \bx    {\mdef{\mathbf{x}}}
\def \by    {\mdef{\mathbf{y}}}
\def \bz    {\mdef{\mathbf{z}}}
\def \bg    {\mdef{\mathbf{g}}}
\def \bfEta    {\mdef{\mathbf{\eta}}}
\newcommand{\mdef}[1]{{\ensuremath{#1}}\xspace}  
\DeclareMathOperator*{\argmin}{argmin}
\DeclareMathOperator*{\argmax}{argmax}
\DeclareMathOperator*{\poly}{poly}
\DeclareMathOperator*{\supp}{supp}
\newcommand{\abs}[1]{\mdef{\left|#1\right|}}         
\newcommand{\ignore}[1]{}
\newif\ifnotes\notestrue 
\newcommand{\samson}[1]{\textcolor{purple}{{\bf (Samson:} {#1}{\bf ) }} \marginpar{\tiny\bf
             \begin{minipage}[t]{0.5in}
               \raggedright S:
            \end{minipage}}}
\newcommand{\david}[1]{\textcolor{purple}{{\bf (David:} {#1}{\bf ) }} \marginpar{\tiny\bf
             \begin{minipage}[t]{0.5in}
               \raggedright D:
            \end{minipage}}} 
\newcommand{\elena}[1]{\textcolor{teal}{{\bf (Elena:} {#1}{\bf ) }} \marginpar{\tiny\bf
             \begin{minipage}[t]{0.5in}
               \raggedright S:
            \end{minipage}}}
\newcommand{\samson}[1]{}
\newcommand{\david}[1]{}
\newcommand{\elena}[1]{}
\renewcommand*{\@fnsymbol}[1]{\textcolor{mahogany}{\ensuremath{\ifcase#1\or *\or \dagger\or \ddagger\or
 \mathsection\or \triangledown\or \mathparagraph\or \|\or **\or \dagger\dagger
   \or \ddagger\ddagger \else\@ctrerr\fi}}}
\providecommand{\email}[1]{\href{mailto:#1}{\nolinkurl{#1}\xspace}}
\definecolor{mahogany}{rgb}{0.75, 0.25, 0.0}
\definecolor{darkblue}{rgb}{0.0, 0.0, 0.55}
\definecolor{darkpastelgreen}{rgb}{0.01, 0.75, 0.24}
\definecolor{darkgreen}{rgb}{0.0, 0.2, 0.13}
\definecolor{darkgoldenrod}{rgb}{0.72, 0.53, 0.04}
\definecolor{darkred}{rgb}{0.55, 0.0, 0.0}
\definecolor{forestgreenweb}{rgb}{0.13, 0.55, 0.13}
\definecolor{greencss}{rgb}{0.0, 0.5, 0.0}
\definecolor{bleudefrance}{rgb}{0.19, 0.55, 0.91}
  \DeclareFontShape{T1}{lmr}{m}{scit}{<->ssub*lmr/m/scsl}{}%
\begin{document}
\title{Lifting Linear Sketches: Optimal Bounds and Adversarial Robustness}
\author{Elena Gribelyuk \\ Princeton University \\ \email{eg5539@princeton.edu}  
\and 
Honghao Lin \\ Carnegie Mellon University \\ \email{honghaol@andrew.cmu.edu} 
\and 
David P. Woodruff \\ Carnegie Mellon University \\ \email{dwoodruf@andrew.cmu.edu}
\and
Huacheng Yu \\ Princeton University \\ \email{hy2@cs.princeton.edu}
\and
Samson Zhou \\ Texas A\&M University \\ \email{samsonzhou@gmail.com}}
\date{\today}
\allowdisplaybreaks
\maketitle
\begin{abstract}
We introduce a novel technique for ``lifting'' dimension lower bounds for linear sketches in the real-valued setting to dimension lower bounds for linear sketches with polynomially-bounded integer entries when the input is a polynomially-bounded integer vector. Using this technique, we obtain the first optimal sketching lower bounds for discrete inputs in a data stream, for classical problems such as approximating the frequency moments, estimating the operator norm, and compressed sensing. Additionally, we lift the adaptive attack of Hardt and Woodruff (STOC, 2013) for breaking any real-valued linear sketch via a sequence of real-valued queries, and show how to obtain an attack on any integer-valued linear sketch using integer-valued queries. This shows that there is no linear sketch in a data stream with insertions and deletions that is adversarially robust for approximating any $L_p$ norm of the input, resolving a central open question for adversarially robust streaming algorithms. To do so, we introduce a new pre-processing technique of independent interest which, given an integer-valued linear sketch, increases the dimension of the sketch by only a constant factor in order to make the orthogonal lattice to its row span smooth. This pre-processing then enables us to leverage results in lattice theory on discrete Gaussian distributions and reason that efficient discrete sketches imply efficient continuous sketches. Our work resolves open questions from the Banff '14 and '17 workshops on Communication Complexity and Applications, as well as the STOC '21 and FOCS '23 workshops on adaptivity and robustness. 
\end{abstract}
\thispagestyle{pg}
\setcounter{page}{1}

\section{Introduction}
The \emph{streaming model} has emerged as a popular large-scale computational model for the analysis of datasets too large to be stored in memory, such as the database logs generated from commercial transactions, financial markets, physical sensors, social network activity, scientific observations, and virtual traffic monitoring. 
To capture these applications, the one-pass streaming model implicitly defines an underlying dataset, given a sequence of updates that are each irrevocably discarded after processing. 
The goal is to approximate, compute, detect, or identify a predetermined property of the input dataset using memory (i.e., space complexity) that is sublinear in both the size of the input dataset and the length of the data stream. 

Linear sketches are one of the most common approaches for achieving sublinear space algorithms in the streaming model. 
Formally, if the data stream consists of updates to an underlying vector $\bx\in\mathbb{R}^n$, a linear sketch maintains a significantly smaller vector $\bA\bx\in\mathbb{R}^{r}$ with $r\ll n$, through a carefully designed matrix $\bA\in\mathbb{R}^{r\times n}$ that can be efficiently stored. 
At the end of the stream, the algorithm can then post-process $\bA\bx$ to approximately recover the desired properties of $\bx$. 
Due to their simplicity and versatility, linear sketches are widely used in streaming algorithms, such as CountSketch~\cite{CharikarCF04} and CountMin~\cite{CormodeM05} for frequency estimation, AMS~\cite{AlonMS99} for moment estimation, and Johnson-Lindenstrauss~\cite{johnson1984extensions} for distance estimation. 
All known turnstile streaming algorithms are linear sketches for sufficiently large stream lengths and in fact, it is known that the optimal turnstile streaming algorithms can be captured by linear sketches under certain conditions \cite{LiNW14,AiHLW16,KallaugherP20}. 

The study of lower bounds for linear sketches is fundamental to our understanding of the capabilities and limitations of streaming algorithms. 
A popular method to showing lower bounds for linear sketches is to define two ``hard'' input distributions $\calD_1$ and $\calD_2$ that exhibit a desired gap for the problem of interest, and then show that the total variation distance between $\bA\bx$ and $\bA\by$ is small for $\bx\sim\calD_1$ and $\by\sim\calD_2$. 
Because such lower bounds should hold for any sketching matrix $\bA\in\mathbb{R}^{r\times n}$ with $r\ll n$, the distributions $\calD_1$ and $\calD_2$ are often chosen to be multivariate Gaussians (or somewhat ``near'' Gaussian), so that by rotational invariance, $\bA\bx$ and $\bA\by$ are also multivariate Gaussians with the appropriate covariances. 

As a simple example, consider the problem of estimating $\|\bx\|_2^2$. We can assume, without loss of generality, that the rows of $\bA \in \mathbb{R}^{r \times n}$ are orthonormal, since we can always change the row basis by multiplying $\bA \bx$ by a change of basis matrix $\bR$ in post-processing, obtaining $\bR \bA \bx$. Then, if $\calD_1 \sim N(0, I_n)$ for a Gaussian distribution with mean zero and identity covariance, and $\calD_2 \sim N(0, (1+\eps)I_n)$, then if $\bx$ is drawn from $\calD_1$, then $\bA \bx$ is equal in distribution to $N(0, I_r)$, while if $\bx$ is drawn from $\calD_2$, then $\bA \bx$ is equal in distribution to $N(0, (1+\eps) I_r)$. Then using standard results on the number of samples needed to distinguish two normal distributions, we must have $r = \Omega(\eps^{-2} \log(1/\delta))$. 

The above technique has been used to prove lower bounds for $L_p$ estimation~\cite{GangulyW18}, compressed sensing~\cite{PriceW11,PriceW13}, eigenvalue estimation and PSD testing~\cite{NeedellSW22,SwartworthW23}, operator norm and Ky Fan norm~\cite{LiW16}, and norm estimation for adversarially robust streaming algorithms~\cite{HardtW13}. These works lower bound the sketching dimension when both the inputs and the sketch stored by the algorithm can contain arbitrary real numbers. 

Unfortunately, these lower bounds are unsatisfying in a number of ways. While the lower bounds hold for sketches even with real number inputs, they require the entries of the {\it input vector} $\bx$ to be real-valued as well. This is inherent: if $\bx$ has entries with finite bit complexity, we could use large enough precision entries in $\bA$ to exactly recover $\bx$ from $\bA$. However, in order to meaningfully discuss space complexity, the streaming model is defined on a stream of additive updates to $\bf x$ with finite precision. Without the finite precision bounds, one could obtain memory lower bounds for data stream algorithms that are arbitrarily large. One could try to discretize the input distribution to the above problems, but then the distribution is no longer rotationally invariant, and a priori it is not clear that information about the input is revealed by truncating low order bits. 
These issues mean that none of the above lower bounds actually apply to the data stream model! 
%
%
This phenomenon is real, as the data stream lower bounds we have for problems such as frequency moment estimation, which dates back to the original work of \cite{AlonMS99}, are worse than those that we have for sketching dimension lower bounds.
While this gap is at most logarithmic for the frequency moment estimation, for other problems such as approximating the operator norm of a matrix, the gap is polynomial; see the discussion following \thmref{thm:main:apps} for further details. 
Therefore, a long-standing open question has been:
\begin{quote}
{\it Is it possible to lift linear sketch lower bounds for continuous inputs to obtain linear sketch lower bounds for discrete inputs?} 
\end{quote}
There are multiple natural approaches to lift existing lower bounds, such as rounding the continuous inputs from hard distributions, or using discrete analogs of the continuous hard distributions. 
However, there are inherent shortcomings for each of these approaches, which we discuss in \secref{sec:overview}. Thus, the existence of lower bounds for discrete inputs (and implicitly the existence of lifting techniques) has been previously asked on multiple occasions~\cite{banff2014workshop,banff2017workshop}. 

Another line of work in the data stream literature that has also recently run into a similar issue is that of adversarially robust streaming algorithms \cite{Ben-EliezerJWY22}. 
In this model, a sequence of updates $u_1,\ldots,u_m$ is adaptively generated as an input stream to an algorithm. 
In particular, subsequent updates may be generated based on previous outputs of the algorithm, which is still required to correctly approximate or compute a fixed function at all times in the stream using space sublinear in the size of the input dataset. 
A major open question is whether there exists an adversarially robust streaming algorithm for approximating the squared Euclidean norm $F_2$ of $\bx$ in the presence of insertions and deletions to ${\bf x} \in \{-\textrm{poly}(n), \ldots, \textrm{poly}(n)\}^{n}$ and using space sublinear in $n$. 
The existence of an attack that uses \textit{discrete inputs} against adversarially robust linear sketches has recently been posed as an open question in multiple workshops~\cite{stoc2021workshop,focs2023workshop}. 
Just as we had for sketching lower bounds over the reals, there is an attack \cite{HardtW13} that breaks any linear sketch when the inputs in the stream are real-number valued. 
However, this attack crucially uses rotational invariance and fails if the inputs are required to be discrete. Thus we ask:

\begin{quote}
{\it Does there exist a sublinear space adversarially robust $F_2$-estimation linear sketch in a finite precision stream?} 
\end{quote}

\subsection{Our Contributions: Lifting Framework and Applications}
We give a technique for lifting linear sketch lower bounds for continuous inputs to achieve linear sketch lower bounds for discrete inputs, thereby answering the above open questions. 

\begin{theorem}[Informal, see \thmref{thm:lifting}]
\thmlab{thm:lifting:informal}
Let $\calI$ denote an arbitrary distribution on $\mathbb{Z}^n$, $\bS^\top\bS$ be a full-rank covariance matrix, and $\delta\ge\frac{1}{\poly(n)}$ be a failure parameter. 
Let $f$ be a $\frac{\delta}{3}$-smooth function on the distribution $X+Y$, where $X$ is drawn from the discrete Gaussian distribution $\calD(0,\bS^\top\bS)$ and $Y\sim\calI$ and suppose $f$ admits an integer sketch $\bA\in\mathbb{Z}^{r\times n}$ with $r\le n$ and an arbitrary post-processing function $g$, such that:
\begin{enumerate}
\item 
$\bA$ has polynomially-bounded integer entries.
\item
$g(\bA\bx)=f(\bx)$ with probability at least $1-\frac{\delta}{3}$ over $\bx\sim\calD(0,\bS^\top\bS)+\calI$. 
\item
The sketching matrix $\bA$ has a (not necessarily orthogonal) integer basis and the singular values of the covariance matrix $\bS^\top\bS$ are sufficiently large. 
\end{enumerate}
There exists a post-processing function $h$ with another sketching matrix $\bA' \in \mathbb{R}^{4r \times n}$ such that with probability at least $1-\delta$ over the distribution $\bx=X+Y$, where $X$ is drawn from the continuous Gaussian distribution $\calN(0,\bS^\top\bS)$ and $Y\sim\calI$, we have $h(\bA'\bx)=f(\bx)$. 
\end{theorem}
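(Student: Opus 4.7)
The plan is to reduce computation of $f$ on continuous Gaussian inputs to a simulated evaluation on discrete Gaussian inputs, so that the hypothesized discrete sketch $(\bA,g)$ can be invoked as a black box. Concretely, given a continuous $\bx\sim\calN(0,\bS^\top\bS)+\calI$, I would use shared randomness together with the real-valued sketch $\bA'\bx$ to produce an integer vector $\bz\in\mathbb{Z}^n$ whose marginal distribution lies within total variation distance $O(\delta)$ of $\calD(0,\bS^\top\bS)+\calI$, and then output $g(\bA\bz)$.

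The first step is the pre-processing promised in the abstract. I would augment $\bA$ by at most a constant-factor number of additional integer rows to produce an integer matrix $\widetilde{\bA}\in\mathbb{Z}^{\widetilde{r}\times n}$ with $\widetilde{r}\le 2r$ such that the orthogonal lattice $\Lambda^\perp:=\{\bv\in\mathbb{Z}^n:\widetilde{\bA}\bv=0\}$ is smooth, meaning its smoothing parameter is much smaller than $\sigma_{\min}(\bS)$. This is exactly the smoothness property claimed in the abstract to unlock the lattice-theoretic arguments that follow. The sketch $\bA'\in\mathbb{R}^{4r\times n}$ then consists of $\widetilde{\bA}$ together with, if needed, additional rows to decode the projection of $\bx$ onto the row span of $\widetilde{\bA}$, at total row count at most $4r$.

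Given $\bA'\bx$, the post-processing $h$ samples $\bw\sim\calD_{\mathbb{Z}^n-\bx}(0,\tau^2 I)$, a discrete Gaussian on the integer coset through $-\bx$, using a parameter $\tau$ slightly above $\eta_\eps(\Lambda^\perp)$ but much smaller than $\sigma_{\min}(\bS)$. The crucial observation is that we do not need $\bx$ itself to sample $\widetilde{\bA}\bw$: since $\widetilde{\bA}\bw$ depends on $\bw$ only modulo $\ker(\widetilde{\bA})$, its conditional distribution given $\bx$ is determined by $\widetilde{\bA}\bx$ together with the projection of $\bx$ onto the row span of $\widetilde{\bA}$, both of which are recoverable from $\bA'\bx$. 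Setting $\bz:=\bx+\bw\in\mathbb{Z}^n$, the standard Gaussian convolution formula on smooth lattices gives that $\bz$ has marginal distribution within TV distance $\delta/3$ of the discrete Gaussian $\calD(0,\bS^\top\bS+\tau^2 I)+\calI$, which is within an additional TV distance $\delta/3$ of the target $\calD(0,\bS^\top\bS)+\calI$ for sufficiently small $\tau$ and sufficiently large singular values of $\bS$.

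Finally, $h$ extracts $\bA\bz=\bA\bx+\bA\bw$ from $\widetilde{\bA}\bz$ and outputs $g(\bA\bz)$. By the discrete hypothesis, $g(\bA\bz)=f(\bz)$ with probability at least $1-\delta/3$ over $\bz\sim\calD+\calI$, which transfers to the simulated $\bz$ up to the $2\delta/3$ TV error accumulated above; a final invocation of the $\delta/3$-smoothness of $f$ on the coupling $(\bx,\bz)$ yields $f(\bz)=f(\bx)$ with probability at least $1-\delta/3$, and a union bound gives $h(\bA'\bx)=f(\bx)$ with probability at least $1-\delta$. The main obstacle is the pre-processing step: proving that an arbitrary integer sketching matrix can be augmented by only $O(r)$ additional integer rows so that its kernel lattice becomes smooth is a genuinely lattice-theoretic task, and appears to require a new bound of independent interest; a secondary technical point is to verify that the information in $\bA'\bx$ really suffices to sample $\widetilde{\bA}\bw$ faithfully despite only seeing $\bx$ through its image under $\widetilde{\bA}$.
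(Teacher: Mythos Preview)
Your proposal is correct and shares the two decisive ingredients with the paper's proof: (i) the pre-processing that augments $\bA$ by $O(r)$ integer rows so that $\lambda_{\max}(\calL^\perp(\widetilde{\bA}))\le\poly(n)$, and (ii) the lattice smoothing statement (in the paper, \thmref{thm:disc:tvd}) that the image of a coset discrete Gaussian under $\widetilde{\bA}$ is close to the discrete Gaussian on $\widetilde{\bA}\mathbb{Z}^n$. The mechanism by which you exploit these, however, is more elaborate than what the paper does.

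The paper's route is simply to \emph{deterministically round} $\widetilde{\bA}\bx$ to the lattice point $\phi(\widetilde{\bA}\bx)$ in its fundamental-parallelepiped cell and return $g(\phi(\widetilde{\bA}\bx))$. A ``cell lemma'' (\lemref{lem:main:disc:to:cont}) shows that for continuous $\bx\sim\calN(0,\bSigma)+\calI$ the density of $\widetilde{\bA}\bx$ is pointwise $(1\pm 1/\poly(n))$-close to that of $\widetilde{\bA}\bz+\bfEta$ with $\bz\sim\calD(0,\bSigma)+\calI$ and $\bfEta$ uniform on the parallelepiped; hence the rounded point has (up to $1/\poly(n)$ TVD) the same law as $\widetilde{\bA}\bz$ for discrete $\bz$, and the hypothesis on $g$ plus smoothness of $f$ finish the argument. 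By contrast, you introduce an auxiliary scale $\tau$ and a \emph{randomized} rounding $\bz=\bx+\bw$ with $\bw\sim\calD_{\mathbb{Z}^n-\bx}(0,\tau^2 I)$, then invoke smoothing once to argue that $\widetilde{\bA}\bw$ can be sampled from $\widetilde{\bA}\bx$ alone, and a second time (via a Gaussian convolution identity) to argue that the marginal of $\bz$ is close to $\calD(0,\bSigma+\tau^2 I)\approx\calD(0,\bSigma)$. This works, but the two-scale requirement $\eta_\eps(\calL^\perp(\widetilde{\bA}))\ll\tau\ll\sigma_{\min}(\bS)/\sqrt{n}$ and the convolution step are detours: the paper's single deterministic rounding already lands on the right lattice-point distribution without an auxiliary parameter or extra randomness. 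Both approaches hinge on the same smoothing bound, so your version gains nothing and is less direct.

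Two small remarks. First, the distinction you draw between ``$\widetilde{\bA}\bx$'' and ``the projection of $\bx$ onto the row span of $\widetilde{\bA}$'' is vacuous (one determines the other since $\widetilde{\bA}$ has full row rank), so the extra rows you reserve in $\bA'$ for this purpose are unnecessary. Second, your final appeal to smoothness ``on the coupling $(\bx,\bz)$'' is not literally how the paper's smoothness hypothesis is phrased; the paper uses it as a statement about the marginals of $f$ under $\calN$ versus $\calD$, which is enough for both your argument and theirs.
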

Intuitively, \thmref{thm:lifting:informal} shows that a linear sketch for a ``smooth'' function $f$ that succeeds on a distribution on discrete inputs of the form $X+Y$, where $X$ is drawn from a discrete Gaussian distribution (for a formal definition, see \defref{def:discrete:Gaussian}) and $Y$ is drawn from an arbitrary distribution over integer vectors, can be transformed into a linear sketch that succeeds on a distribution on real-valued inputs of the form $Z+Y$, where $Z$ is a continuous Gaussian distribution with the same covariance and $Y$ is the same distribution over integer vectors. 
Therefore, any lower bound for linear sketches on real-valued inputs based on distributions with such a form can be lifted to obtain a lower bound for linear sketches on discrete inputs. Our proof of the above theorem uses interesting results in lattice theory, which to the best of our knowledge have not been applied to the data stream literature. We first define the following notion:
\begin{definition}[Integer sketch]
Given a data stream of length at most $m = \poly(n)$ that implicitly defines an underlying vector of dimension $n$ with entries in $\{-\poly(n), \ldots, \poly(n)\}$, an \emph{integer sketch} is an algorithm that uses a linear sketch with integer entries of magnitude at most $M = \poly(n)$. In some applications below, our input is itself a matrix, which we also assume has entries in $\{-\poly(n), \ldots, \poly(n)\}$, and which we first vectorize and then we multiply it by our linear sketch. For example, if the matrix is $n \times n$ we flatten it to an $n^2$-dimensional vector that is the input to the linear sketch. 
\end{definition}
We use \thmref{thm:lifting:informal} to prove lower bounds for integer sketches for a number of applications. At a high-level, our approach shows that we can use discrete Gaussian queries to ``simulate'' the hard distribution from previously-known lower bounds which relied on Gaussian queries.
For example, for the problem of estimating the $L_2$ norm $\|\bx\|_2^2$, the hard distribution over the reals was to sample $\bx \sim N(0, I_n)$ or $\bx \sim N(0, (1+\epsilon) I_n)$. For the discrete attack, we sample $\bx \sim \calD(0, I_n)$ or $\bx \sim \calD(0, (1+\epsilon)I_n)$, and argue that for some small uniform ``noise'' $\eta$, $\bA\bx + \eta$ is statistically close to $\bA\bg$ for $\bg$ drawn from the corresponding continuous Gaussian. We refer the reader to \secref{sec:overview} for more details on how we simulate a continuous Gaussian via a discrete Gaussian. 

The first result below is known, though our framework gives a remarkably simple proof, so we include it as a warm-up. The remaining results provide substantial improvements over state-of-the-art and result in optimal bounds for fundamental problems, such as frequency moments for $p > 2$ and operator norm approximation:
\begin{theorem}
\thmlab{thm:main:apps}
\begin{enumerate}
\item 
($L_p$ norm estimation, $p\in[1,2]$) 
Given $\eps\in(0,1)$ and $p\in[1,2]$, any integer sketch that outputs a $(1+\eps)$-approximation to the $L_p$ norm of a vector  with probability $1-\delta$ has dimension $\Omega\left(\frac{1}{\eps^2}\log\frac{1}{\delta}\right)$. (See \lemref{lem:lp:small:lb}). 

\item 
($L_p$ norm estimation, $p>2$) 
For any constant $\eps\in(0,1)$ and $p>2$, any integer sketch that outputs a $(1+\eps)$-approximation to the $L_p$ norm of a vector with probability at least $\frac{2}{3}$ has dimension $\Omega\left(n^{1-2/p}\log n  \right)$. (See \lemref{lem:lp:large:lb}). 

\item 
(Accurate approximation to operator norm)
Given an approximation parameter $\eps\in\left(0,\frac{1}{3}\right)$, any integer sketch that outputs a $(1+\eps)$-approximation to the operator norm of a $d/\eps^2 \times d$ matrix with probability at least $\frac{5}{6}$ uses sketching dimension $\Omega\left(\frac{d^2}{\eps^2}\right)$. (See \lemref{lem:op:eps:lb}). 

\item
(Large approximation to operator norm) 
Given an approximation parameter $\alpha \ge 1 + c$ for an arbitrary small constant $c$, any integer sketch that estimates $\|X \|_{op}$ for an $n \times n$ matrix $\bX$ within a factor of $\alpha$ with error probability $\leq 1/6$ requires sketching dimension $\Omega(n^2/\alpha^4)$. (See~\corref{cor:op_lb}).

\item 
(Ky Fan norm) 
There exists an absolute constant $c > 0$ such that any integer linear sketch that estimates $\norm{\bX}_{F_S}$ for an $n \times n$ matrix $\bX$ and $s \le \O{\sqrt{n}}$ within a factor of $1 + c$ with error probability $1/6$ requires sketching dimension $\Omega(n^2 / s^2)$.  (See~\corref{cor:KyFan_lb}).

\item
(Eigenvalue estimation)
Given an approximation parameter $\eps\in\left(0,\frac{1}{3}\right)$, any integer sketch that outputs additive $\eps\cdot\|\bM\|_F$ approximations to the eigenvalues of a $d \times d$ matrix $\bM$ with probability at least $\frac{3}{4}$ uses sketching dimension $\Omega\left(\frac{1}{\eps^4}\right)$ for $d=\Omega\left(\frac{1}{\eps^2}\right)$. (See \thmref{thm:eigen:lb}). 

\item 
(PSD testing)
Given a distance parameter $\eps$, any integer sketch that reads $\bM\in\mathbb{Z}^{d\times d}$ and serves as a two-sided tester for whether $\bM$ is positive semidefinite or $\eps$-far from positive semidefinite in $\ell_p$ distance with probability at least $\frac{3}{4}$ requires (a) sketching dimension $\Omega\left(\frac{1}{\eps^{2p}}\right)$ for $p\in[1,2]$, (b) sketching dimension $\Omega\left(\frac{1}{\eps^4}d^{2-4/p}\right)$ for $p\in(2,\infty)$, and (c) sketching dimension $\Omega(d^2)$ for $p=\infty$. 
(See \thmref{thm:psd:lb}).
\item 
(Compressed sensing) Given an approximation parameter $\eps > \sqrt{(k \log n) / n}$, any integer sketch that reads $\bx \in \mathbb{Z}^{n}$ and outputs an $\O{\frac{\eps n}{\log n}}$-sparse $\bx'$ that $\norm{\bx - \bx'}_2 \le (1 + \eps) \min_{\text{$k$-sparse } \tilde{\bx}} \norm{\tilde{\bx} -  \bx}_2$ with high constant probability requires dimension $\Omega\left(\frac{k}{\eps}\log\frac{n}{k}\right)$. 
(See~\lemref{lem:compress_sensing_lb}). 
\end{enumerate}
\end{theorem}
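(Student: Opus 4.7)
The plan is to instantiate the lifting framework of \thmref{thm:lifting:informal} once for each of the eight applications. The common strategy is as follows: for every problem, take the known real-valued sketching lower bound whose hard instance is of the form $X+Y$ where $X\sim\calN(0,\bS^\top\bS)$ is a continuous Gaussian with a problem-specific covariance and $Y$ is an (often sparse, sometimes trivially zero) integer-valued shift; replace the continuous Gaussian $X$ with the discrete Gaussian $\calD(0,\bS^\top\bS)$ to obtain an integer-valued hard distribution; argue that the target function $f$ is $\delta/3$-smooth on this perturbed input; and invoke the lifting theorem to conclude that an integer sketch of dimension $r$ implies a real-valued sketch of dimension $\O{r}$ on the original continuous instance. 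The real-valued lower bound then transfers to the integer setting up to constants. First I would handle the $L_p$ warm-ups in items~(1) and (2), where the continuous hard distributions of Gaussian vs.\ scaled-Gaussian (for $p\in[1,2]$) and of a Gaussian plus a planted heavy coordinate (for $p>2$) immediately fit the $X+Y$ template with $\bS^\top\bS$ a (rescaled) identity.

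Next I would address the matrix-input problems: operator norm approximation in items~(3) and (4), Ky Fan in (5), eigenvalue estimation in (6), and PSD testing in (7). All of these inherit hard distributions from the works cited in the introduction that plant a random rank-one or low-rank spectral spike on top of a Gaussian or Wishart-like background, and then appeal to the indistinguishability of two Gaussians to lower bound the sketching dimension. In our framework, the planted spike becomes $Y$ (integerized by rounding, whose error is absorbed by smoothness) and the background becomes $X\sim\calD(0,\bS^\top\bS)$ with the corresponding covariance. Finally, compressed sensing in item~(8) follows the same template with $Y$ the $k$-sparse planted approximation target and $X$ a Gaussian tail, so that the standard information-theoretic lower bound on recovering planted signals from linear measurements applies.

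The hard part will be verifying the smoothness condition for the applications whose output is discrete or highly non-Lipschitz in the input. For $L_p$ norms and operator/Ky Fan norms smoothness is immediate from the appropriate Lipschitz inequality (Minkowski for $L_p$, Weyl or Mirsky for the spectral norms), so rounding the input by $o(1/\poly(n))$ perturbs the value by a negligible amount with overwhelming probability under the Gaussian part. For eigenvalue estimation and PSD testing, however, the output function has effectively zero Lipschitz constant at the decision boundary; smoothness must come from anti-concentration, i.e., showing that under the Gaussian input distribution, the instance is not within $o(\eps)$ of the decision boundary with probability at least $1-\delta/3$. Likewise, for compressed sensing the recovery guarantee must be stable to a sub-$\eps$ input perturbation, which follows from the $(1+\eps)$-tail-error robustness of the target sparse vector together with standard Gaussian concentration. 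Once smoothness is established in each case, the singular-value lower bound on $\bS^\top\bS$ and the integer-basis condition of \thmref{thm:lifting:informal} can be arranged by scaling $\bS$ up by $\poly(n)$, and the remaining parameter bookkeeping is deferred to the lemmas and corollaries cited in each of the eight items.
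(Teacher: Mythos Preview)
Your high-level template is right and matches the paper for items~(1), (2), and (8): take the known continuous hard distribution, swap in a discrete Gaussian for the background, verify smoothness of the gap function via Lipschitz or concentration bounds, and invoke \thmref{thm:lifting}. The paper carries this out essentially as you describe.

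Where your plan has a real gap is in items~(3)--(7). You write that ``the planted spike becomes $Y$ (integerized by rounding, whose error is absorbed by smoothness).'' This does not work. The lifting theorem requires the integer shift $\calI$ to be fixed, and then concludes that an integer sketch on $\calD(0,\bS^\top\bS)+\calI$ implies a real sketch on $\calN(0,\bS^\top\bS)+\calI$ for the \emph{same} $\calI$. Smoothness of $f$ says nothing about whether the real-valued \emph{lower bound} survives when you change the spike from continuous Gaussian to discrete Gaussian (or to a rounded continuous Gaussian). The cited lower bounds of \cite{LiW16,NeedellSW22,SwartworthW23} use a continuous-Gaussian spike, so they do not directly apply to $\calN(0,\bS^\top\bS)+\calI$ when $\calI$ is the discrete spike; you would be invoking a lower bound for the wrong distribution.

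The paper handles this by re-proving the core indistinguishability step with a discrete-Gaussian spike: it first establishes a discrete analogue of the key technical lemma from \cite{LiW16} (bounding $\Ex{e^{x^\top\bA y/\sigma^2}}$ for $x,y$ discrete Gaussian), and then uses it to re-derive the $\chi^2$/TVD bound for $\calN(0,N^2 I)$ versus $\calN(0,N^2 I)+\sum_i s_i\bu^i(\bv^i)^\top$ where $\bu^i,\bv^i$ are \emph{discrete} Gaussian. This gives a genuine real-valued lower bound whose planted part is already integer-valued, so \thmref{thm:lifting} applies cleanly. Once that TVD lemma is in place, all of operator norm, Ky Fan, eigenvalue estimation, and PSD testing reduce to checking the target-value gap under the discrete instances, which is straightforward via sub-Gaussian concentration (\lemref{lem:two:sided:oper}). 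So the missing idea in your plan is not smoothness near the boundary but rather a fresh indistinguishability argument for the hybrid continuous-background/discrete-spike distribution.
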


We discuss the numerous implications of the results in \thmref{thm:main:apps}. 
For $L_p$ estimation with $p>2$, our discrete sketching dimension lower bound of $\Omega(n^{1-2/p}\log n)$ improves upon the existing discrete bounds of $\Omega(n^{1-2/p})$ by \cite{LiW13,WoodruffZ21b}, though it should be noted that the latter lower bound still holds for insertion-only streams. 
For operator norm, a folklore result shows that $\Omega(d)$ bits of space is necessary, due to a simple reduction from the set disjointness communication problem on the diagonal of an $d\times d$ matrix. 
It then follows that operator norm on discrete inputs and memory states requires sketching dimension $\Omega\left(\frac{d}{\log d}\right)$, as each dimension uses $\O{\log n}$ bits.  
This discrete lower bound further translates to an $\Omega\left(\frac{1}{\eps^2\log d}\right)$ lower bound for eigenvalue approximation and PSD testing since $d$ should be roughly $\frac{1}{\eps^2}$ to exhibit the desired gap for those problems. 
By comparison, our framework in \thmref{thm:lifting:informal} is able to lift existing continuous lower bounds to show a significantly stronger and optimal lower bound of sketching dimension $\Omega(d^2)$ for integer sketches for operator norm approximation, and $\Omega\left(\frac{1}{\eps^4}\right)$ for eigenvalue estimation and PSD testing. For sparse recovery, a bit complexity lower bound of $\Omega\left(\frac{k}{\eps}\log n\right)$ can be obtained using similar ideas to those in \cite{PriceW11}, by taking a direct sum of $k$ copies of the indexing communication problem on $\frac{1}{\eps}$ bits and then growing in $\log n$ scales with augmented indexing. 
This results in a discrete sketching dimension lower bound of $\Omega\left(\frac{k}{\eps}\right)$. 
In contrast, our results achieve an optimal lower bound of $\Omega\left(\frac{k}{\eps}\log\frac{n}{k}\right)$. 
We summarize our results in \tableref{table:apps}.

\begin{table}[!htb]
\centering
\resizebox{\columnwidth}{!}{
\begin{tabular}{c|c|c|c}
& Existing Real-Valued LB & Previous Discrete LB & Our Discrete LB \\\hline
$L_p$ Estimation, $p\in[1,2]$ & $\Omega\left(\frac{1}{\eps^2}\log\frac{1}{\delta}\right)$~\cite{GangulyW18} & $\Omega\left(\frac{1}{\eps^2}\log\frac{1}{\delta}\right)$~\cite{JayramW13} & $\Omega\left(\frac{1}{\eps^2}\log\frac{1}{\delta}\right)$ (\lemref{lem:lp:small:lb}) \\
$L_p$ Estimation, $p>2$ & $\Omega\left(n^{1-2/p}\log n\right)$~\cite{GangulyW18} & $\Omega\left(n^{1-2/p}\right)$~\cite{LiW13,WoodruffZ21b} & $\Omega\left(n^{1-2/p}\log n\right)$ (\lemref{lem:lp:large:lb}) \\
Operator Norm & $\Omega\left(\frac{d^2}{\eps^2}\right)$ \cite{LiW16} & $\Omega\left(\frac{d}{\log d}\right)$ (folklore) & $\Omega\left(\frac{d^2}{\eps^2}\right)$ (\lemref{lem:op:eps:lb})\\
Eigenvalue Estimation & $\Omega\left(\frac{1}{\eps^4}\right)$~\cite{NeedellSW22} & $\Omega\left(\frac{1}{\eps^2\log d}\right)$ (folklore) & $\Omega\left(\frac{1}{\eps^4}\right)$ (\thmref{thm:eigen:lb}) \\
PSD Testing & $\Omega\left(\frac{1}{\eps^4}\right)$~\cite{SwartworthW23} & $\Omega\left(\frac{1}{\eps^2\log d}\right)$ (folklore) & $\Omega\left(\frac{1}{\eps^4}\right)$ (\thmref{thm:psd:lb}) \\
Compressed Sensing & $\Omega\left(\frac{k}{\eps}\log\frac{n}{k}\right)$ \cite{PriceW11} & $\Omega\left(\frac{k}{\eps}\right)$ (folklore) & $\Omega\left(\frac{k}{\eps}\log\frac{n}{k}\right)$ (\lemref{lem:compress_sensing_lb}) \\
\end{tabular}
}
\caption{Applications of \thmref{thm:lifting:informal}. For $L_p$ estimation, $p>2$, we omit dependencies on $\eps$ and $\delta$.}
\tablelab{table:apps}
\end{table}

\subsection{Our Contributions: Streaming against Adaptive Adversaries}
We can apply \thmref{thm:lifting:informal} in a non-black box manner to achieve an important implication for adversarial robustness in the streaming model. 
In the adversarially robust streaming model~\cite{Ben-EliezerY20,HassidimKMMS20,AlonBDMNY21,BravermanHMSSZ21,KaplanMNS21,WoodruffZ21,BeimelKMNSS22,Ben-EliezerEO22,Ben-EliezerJWY22,ChakrabartiGS22,AvdiukhinMYZ19,AssadiCGS23,AttiasCSS23,CherapanamjeriS23,DinurSWZ23,GribelyukLWYZ24,WoodruffZ24}, a sequence of updates is adaptively generated as an input data stream to an algorithm. 
Specifically, a streaming algorithm $\mathcal{A}$ is \textit{adversarially robust} for some estimation function $g:\mathbb{Z}^n \rightarrow \mathbb{R}$ if $\mathcal{A}$ satisfies the following requirement.

\begin{definition}\cite{Ben-EliezerJWY22}
Let $g:\mathbb{Z}^n \rightarrow \mathbb{R}$ be a fixed function. 
Then, for any $\eps > 0$ and $\delta >0$, at each time $t \in [m]$ for $m = \poly(n)$, we require our algorithm $\mathcal{A}$ to return an estimate $z_t$ for $g(\bx^{(t)})$ such that 
\[\Pr\left[\left|z_t - g(\bx^{(t)})\right| \leq \eps g(\bx^{(t)})\right] \geq 1-\delta,\]
where query $\bx^{(t)}$ is adaptively chosen, i.e., it may depend on $\calA$'s previous responses $\{z_1,\ldots,z_{t-1}\}$.
\end{definition}

In fact, we may view the adversarial setting as a two-player game between a randomized streaming algorithm $\mathcal{A}$ and an unbounded adversary. By interacting with the streaming algorithm over many time steps, the adversary aims to construct a hard sequence of adaptive updates $\{u_1^*,\ldots, u_m^*\}$ such that any streaming algorithm $\mathcal{A}$ that produces $(\eps, \delta)$-approximate responses $\{z_t\}_{t = 1}^m$ will fail to estimate $g(\bx^{(t^*)})$ with high constant probability at some step $t^* \in [m]$ during the stream. 
More concretely, the game proceeds as follows:

\begin{enumerate}
\item
In each round $t \in [m]$, the adversary selects an update $u_t^*$ to append to the stream. This implicitly defines the underlying dataset $\bx^{(t)}$ at time $t$. 
Importantly, note that $\bx^{(t)}$ may depend on all previous updates $\{u_1^*,\ldots, u_{t-1}^*\}$, as well as the entire transcript of responses $\{z_1,\ldots, z_{t-1}\}$ of the streaming algorithm $\mathcal{A}$.
\item 
$\mathcal{A}$ receives an update $u_t^*$ and updates its internal state. 
\item 
Then, $\mathcal{A}$ returns an estimate $z_t(\bx^{(t)})$ for $g(\bx^{(t)})$ based on the stream observed until time $t$, and proceeds to the next round.
\end{enumerate}

In the adversarially robust insertion-only streaming model, where updates can only increase the coordinates of the underlying dataset, there is a large body of work showing that many fundamental streaming problems admit sublinear space algorithms, c.f.,~\cite{HassidimKMMS20,BravermanHMSSZ21,WoodruffZ21,AjtaiBJSSWZ22,Ben-EliezerJWY22,ChakrabartiGS22,BeimelKMNSS22,JiangPW23,AssadiCGS23,AttiasCSS24}. 
Specifically, \cite{Ben-EliezerJWY22,HassidimKMMS20} introduced frameworks for transforming streaming algorithms for well-behaved functions into adversarially robust streaming algorithms for insertion-only streams, while \cite{BravermanHMSSZ21} showed that adversarial robustness on insertion-only streams can be achieved for a wide class of problems such as clustering, subspace embeddings, linear regression, and graph sparsification using sampling techniques within the popular merge-and-reduce framework. 
In fact, \cite{WoodruffZ21} showed that for central problems such as norm estimation, distinct elements estimation, and heavy-hitters, achieving adversarial robustness on insertion-only streams only requires a small polylogarithmic overhead in space usage, compared to classic insertion-only streaming algorithms. 

On the other hand, for turnstile streams, where updates can either increase or decrease values of the underlying dataset, there are no known adversarially robust algorithms that use space sublinear in the size of the input dataset for these problems. 
Recently for $\ell_0$-estimation, \cite{GribelyukLWYZ24} showed that there can be no adversarially robust integer sketch with $n^{o(1)}$ memory for turnstile streams. 
However, their techniques use fingerprinting as a key technique, and this works precisely when any non-robust sketch uses some degree of sparsity. 
For estimating $\ell_p$ for $p > 0$, there are non-robust sketches that are dense, and the techniques in the above work do not apply.

For $\ell_p$ estimation for $p > 0$, \cite{HardtW13} showed that no linear sketch can approximate the $\ell_p$ norm within even a polynomial multiplicative factor against adaptive queries. 
However, their techniques require that both the input-stream is real-valued and that the linear sketch stored by the algorithm has arbitrary precision, which is unsatisfactory in the context of streaming algorithms. 
We show: 

\begin{restatable}{theorem}{thmadvmain}
\thmlab{thm:adv:main}
Let $B\ge 1$ be any desired accuracy parameter. 
For any integer sketch with $r$ rows, there exists a randomized attack algorithm that, with high probability, finds an integer-valued vector on which the integer sketch fails to output a $B$-approximation to the $L_2$ norm of the query. 
The adaptive attack uses $\poly(r \log n)$ adaptive queries to the integer sketch and has runtime $\poly(r\log n)$ across $r$ rounds of adaptivity and can be implemented as queries in a polynomially-bounded turnstile stream. 
\end{restatable}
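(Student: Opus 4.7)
The plan is to transfer the Hardt--Woodruff (HW) adaptive attack from real-valued linear sketches to integer sketches by replacing HW's real Gaussian queries with discrete Gaussian queries, and by using the lattice-smoothing machinery underlying \thmref{thm:lifting:informal} to argue that the attacker's transcript in the integer world is statistically coupled to its transcript in the continuous world. Given an integer sketch $\bA \in \mathbb{Z}^{r \times n}$ with polynomially bounded entries, the first step is to apply the pre-processing from \thmref{thm:lifting:informal}, conceptually augmenting $\bA$ to $\bA'$ with $O(r)$ rows so that the orthogonal lattice to the row span of $\bA'$ has a small smoothing parameter. Smoothness is the key lattice-theoretic property that guarantees, for any covariance $\bSigma$ above the smoothing threshold, that $\bA' X$ with $X \sim \calD(0, \bSigma)$ is within negligible $\TVD$ of $\bA' Z + \eta$ for $Z \sim \calN(0, \bSigma)$ and a tiny uniform noise $\eta$.

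The HW attack proceeds in $O(r)$ rounds of adaptivity; in each round it issues a batch of continuous Gaussian queries whose covariances are adaptively chosen to progressively concentrate the conditional posterior distribution of the input onto the row span of the unknown sketch. I would simulate this in the integer world by issuing, in round $i$, the discrete Gaussian query $X_i \sim \calD(0, \bSigma_i)$ for the same covariance $\bSigma_i$ that HW would pick, encoding the query as polynomially bounded integer turnstile updates. Round by round, the integer sketch's responses are within negligible $\TVD$ of the responses HW expects under continuous queries, with the coupling established inductively by conditioning on the prior transcript and invoking \thmref{thm:lifting:informal} with the realized history playing the role of the $Y$ shift. Choosing the per-round failure parameter to be $1/\poly(r \log n)$ keeps the cumulative $\TVD$ between the two attack transcripts at $o(1)$ over the entire $\poly(r \log n)$-query budget.

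After the simulated $O(r)$ rounds, HW's analysis, transferred through the coupling, guarantees that the conditional posterior on the input has been driven onto an $O(r)$-dimensional subspace $V$ depending only on the transcript. The attacker then samples an integer vector from a discrete Gaussian supported (approximately) on $V^{\perp}$ at a sufficiently large scale; by Step~1 this is again statistically close to a continuous Gaussian on $V^{\perp}$, so the sketch's stored state is essentially independent of this vector's true $L_2$ norm and no post-processing function can output a $B$-approximation with nontrivial probability. The resulting fooling vector has polynomially bounded integer entries, is expressible as $\poly(n)$ insertions and deletions in a turnstile stream, and the entire attack uses $O(r)$ rounds of adaptivity, $\poly(r \log n)$ total queries, and $\poly(r \log n)$ runtime.

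The main obstacle is handling adaptivity cleanly in the discrete-to-continuous coupling. A single invocation of \thmref{thm:lifting:informal} bounds $\TVD$ for one fixed function on one fixed input distribution of the form $X+Y$, whereas here both the queries and the attacker's downstream decisions are stochastic functions of the transcript so far, so the ``history'' $Y$ feeding into round $i$ is itself a random variable from the preceding rounds. The careful induction must treat the transcript through round $i-1$ as the $Y$ component entering the $i$-th application of the lifting theorem, verify that the smoothness property of the orthogonal lattice supplied by Step~1 still holds after this conditioning (which is precisely why the constant-factor dimension blow-up in the pre-processing is indispensable rather than merely convenient), and accumulate the round-wise $\TVD$ errors into the overall $o(1)$ bound. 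Once this inductive coupling is established, the correctness of the HW attack lifts through automatically, yielding the claimed integer-valued adversarial query in the polynomially bounded turnstile stream.
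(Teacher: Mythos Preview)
Your high-level strategy is the paper's strategy: pre-process $\bA$ so that $\calL^\perp(\bA)$ is smooth, then run the Hardt--Woodruff attack with discrete Gaussians $D(V^\perp,\sigma^2)$ in place of continuous ones $G(V^\perp,\sigma^2)$. But the mechanism you propose for transferring correctness---a round-by-round $\TVD$ coupling of transcripts, with the prior history cast as the ``$Y$'' shift in \thmref{thm:lifting:informal}---has two genuine gaps.

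First, the ``history as $Y$'' reading of the lifting theorem is a mismatch. In the attack, each query is a \emph{fresh} (discrete) Gaussian with an adaptively chosen covariance; there is no additive integer shift $Y$ carrying the history. What changes round to round is only the covariance $\Sigma_t$, and what you must verify is the lattice condition $\sigma_n(\Sigma_t)\gtrsim\lambda_{\max}(\calL^\perp(\bA))^2$ in every round. The paper handles this directly by noting $\Sigma_t=\tfrac{3\sigma^2}{4}P_{V_t^\perp}^\top P_{V_t^\perp}+\tfrac{\sigma^2}{4}I_n$ always has $\sigma_n(\Sigma_t)\ge\alpha/4$, and the pre-processing (\lemref{lem:preprocessing}) makes $\lambda_{\max}(\calL^\perp(\bA))\le\poly(n)$ once and for all.

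Second, and more seriously, the cell lemma only couples the \emph{sketch's view}: $\bA\bx$ for $\bx\sim D(V^\perp,\sigma^2)$ is close in $\TVD$ to (a rounding of) $\bA\bg$ for $\bg\sim G(V^\perp,\sigma^2)$. It does \emph{not} couple $\bx$ to $\bg$ in $\mathbb{R}^n$. But the HW attacker's progress step uses the full query vectors, not just the responses: it forms the matrix of positively-labeled $\bx_i$'s and takes its top right singular vector. A black-box transcript coupling of responses therefore does not transfer the Conditional Expectation Lemma, which is a statement about the joint law of $(\bx,f(\bx))$ and in particular about $\Ex{\langle\bu,\bx\rangle^2\mid f(\bx)=1}$ for directions $\bu\in\bA\cap V^\perp$. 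The paper does not attempt such a coupling; instead it re-proves a \emph{Discrete Conditional Expectation Lemma} (\lemref{lem:disc_cond:exp}) by contradiction: given a discrete sketch $f$ for which the conclusion fails, define a continuous sketch $f'(\bg)=f(\phi(\bA\bg))$ by rounding $\bA\bg$ to its lattice cell and applying $f$; by \lemref{thm:dist:round:close} the law of $\phi(\bA\bg)$ is $\tfrac{1}{\poly(n)}$-close to that of $\bA\bx$, so $f'$ would violate the continuous CEL of \cite{HardtW13}. With the discrete CEL in hand, the paper then re-establishes the remaining HW machinery (\lemref{lem:found:vec}, the Progress Lemma \lemref{lem:progress:lem}, and \lemref{lem:top:vector}) in the discrete setting, rather than importing them via a coupling. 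Your write-up needs this structural step; as stated, the coupling argument does not reach the quantity the attacker actually exploits.
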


In fact, the result holds even for sketch dimension $r = n - o(n)$.
By setting $B$ in the above theorem to be a large enough polynomial in $n$, we give an attack against linear sketches for any $p$-norm for constant $p > 1$:
\begin{corollary}[Adaptive attack for $L_p$ estimation for $p > 1$]
Let $B \geq 1$ be any desired accuracy parameter and $p > 1$. 
For any integer sketch with $r \leq n-o(n)$ rows, there exists a randomized attack algorithm that, with high probability, finds an integer-valued vector on which the integer sketch fails to output a $B$-approximation to the $L_p$ norm of the query. 
The adaptive attack uses $\poly(r\log n)$ adaptive queries to the integer sketch and has runtime $\poly(r \log n)$ across $\O{r \log n}$ rounds of adaptivity and can be implemented as queries in a polynomially-bounded stream. 
\end{corollary}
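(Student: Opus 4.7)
The plan is a direct reduction from the $L_p$ attack to the $L_2$ attack of \thmref{thm:adv:main} using standard norm inequalities. For every $\bx \in \mathbb{Z}^n$ and every $p \ge 1$ we have
\[
n^{-|1/2 - 1/p|}\,\|\bx\|_2 \;\le\; \|\bx\|_p \;\le\; n^{|1/2 - 1/p|}\,\|\bx\|_2,
\]
so any real number $z$ that is a $B$-approximation to $\|\bx\|_p$ is automatically a $(B \cdot n^{|1/2 - 1/p|})$-approximation to $\|\bx\|_2$. Since $p > 1$, the exponent $|1/2 - 1/p|$ lies in $[0, 1/2)$, so $n^{|1/2 - 1/p|} \le \sqrt{n}$, which is polynomial in $n$.

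Given a purported integer $L_p$-sketch $(\bA, g)$ of dimension $r \le n - o(n)$ with accuracy parameter $B$, set $B' := B \cdot n^{|1/2 - 1/p|}$, which remains polynomial in $n$. Now invoke \thmref{thm:adv:main} with target accuracy $B'$ against the sketch, treating each response $g(\bA \bx^{(t)})$ as a purported $L_2$-estimate. The $L_2$-attack is black-box in the sketch's numerical outputs and does not rely on them actually being $L_2$-estimates, so it remains well-defined and can be run verbatim. With high probability it returns, after $\poly(r \log n)$ adaptive integer-valued queries across $O(r \log n)$ rounds of adaptivity, a vector $\bx^\ast \in \mathbb{Z}^n$ on which the sketch response $z$ is not a $B'$-approximation to $\|\bx^\ast\|_2$, i.e., $z \notin [\|\bx^\ast\|_2 / B',\, B' \cdot \|\bx^\ast\|_2]$. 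Converting back to $L_p$ via the above inequality gives $z \notin [\|\bx^\ast\|_p / B,\, B \cdot \|\bx^\ast\|_p]$, so the sketch fails to $B$-approximate $\|\bx^\ast\|_p$, as desired. The query complexity, runtime, integer-valuedness of queries, and polynomial update magnitudes are inherited directly from \thmref{thm:adv:main} applied with parameter $B'$.

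The main conceptual content already lives in \thmref{thm:adv:main}; this corollary is essentially bookkeeping that exploits the polynomial slack between distinct $L_p$ norms of an $n$-dimensional vector. The only item to verify is that \thmref{thm:adv:main} tolerates the polynomial blow-up from $B$ to $B' = B \cdot \poly(n)$ without degrading its query, runtime, or round complexity beyond the stated $O(r \log n)$ rounds; this is immediate since \thmref{thm:adv:main} is quantified over arbitrary $B \ge 1$ with overheads depending only polynomially on $\log B$, and $\log B' = \log B + O(\log n)$. If there is any subtlety, it is precisely this accounting of the $\log n$ factor in the round complexity, which absorbs the conversion exponent; no new attack machinery is required.
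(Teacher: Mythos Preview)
Your approach is correct and matches the paper's: the paper proves this corollary in a single line, ``By setting $B$ in the above theorem to be a large enough polynomial in $n$,'' which is exactly the reduction you spell out via the norm inequalities $n^{-|1/2-1/p|}\|\bx\|_2 \le \|\bx\|_p \le n^{|1/2-1/p|}\|\bx\|_2$. One small inaccuracy: the attack's internal complexity actually depends polynomially on $B$ (e.g., $m = O(B^{13} n^{11} \log^{15} n)$ in \figref{fig:attack}), not on $\log B$; the stated $\poly(r\log n)$ bound in \thmref{thm:adv:main} relies on the dimension-reduction step in its proof together with the implicit assumption $B \le \poly(n)$, so your invocation with $B' = B \cdot n^{|1/2-1/p|} \le \poly(n)$ is still fine, but the justification in your last sentence should be phrased as ``$B' \le \poly(n)$ whenever $B \le \poly(n)$'' rather than appealing to a $\log B$ dependence.
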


Finally, our result makes progress toward proving a lower bound for pseudodeterministic streaming algorithms for $\ell_p$ estimation, which was posed as an open problem in \cite{PDstreaming}. 
Since any pseudodeterministic streaming algorithm for $\ell_p$ estimation is also adversarially robust, our attack implies that 
any pseudodeterministic streaming algorithm which uses a finite-precision linear sketch with $r= n - o(n)$ rows can be made to fail after $\poly(r \log n)$ adaptive queries. 

\subsection{Technical Overview}
\seclab{sec:overview}
A continuous Gaussian $\calN(0,\sigma^2\cdot I_n)$ is often used to generate inputs $\bx$ for hard distributions in $\mathbb{R}^n$ for a real-valued linear sketch $\bA\in\mathbb{R}^{r\times n}$, 
as the image $\bA\bx$ is distributed as $\calN(0,\sigma^2\bA\bA^\top)$ due to the rotational invariance of Gaussians.
In other words, the image of a sketch on a Gaussian distribution is itself a Gaussian distribution, and so it can sometimes suffice to analyze the total variation distance between two Gaussian distributions of sufficiently small dimension to show lower bounds for real-valued sketches. 
A natural approach to adapting lower bounds for linear sketches on real-valued inputs would be to ``round'' or ``truncate'' hard distributions at the necessary granularity. 
For example, one could hope to generate a continuous Gaussian and then round to the nearest multiple of a small discretization, e.g., $\frac{1}{\poly(n)}$. 
Unfortunately, this rounding step loses rotational invariance, so analyzing the image of the resulting distribution of $\bA\bx$ becomes significantly more challenging, as the truncation of the input may reveal information about it. 

Another approach might be to use discrete Gaussians in place of continuous Gaussians. 
However, if we draw $\bx\sim\calD(0,\sigma^2 I_n)$, then it is not clear that the image $\bA\bx$ is also distributed like a discrete Gaussian $\calD(0,\sigma^2\bA\bA^\top)$ on the support $\bA\mathbb{Z}^n$. 
In fact, this is not true, and is the starting point of a body of work in lattice theory that analyzes discrete Gaussian distributions by bounding the so-called successive minima of the orthogonal lattice to the row span of $\bA$; this is a technique that we will use in our work as well.


Throughout the next few subsections, we let $\calD_{\calL, \bS}$ denote the discrete Gaussian distribution induced on the lattice $\calL$ with covariance $\Sigma = \bS^T \bS$. 
For notational convenience, we also denote the zero-centered discrete Gaussian distribution with covariance $\Sigma$ by $\calD(0, \Sigma)$ (mirroring standard notation for continuous Gaussian distributions).


\paragraph{Smoothing parameter and successive minima.} 
For a matrix $\bA \in \mathbb{Z}^{r\times n}$ with columns $\bA_i$, we consider the lattice 
\[\calL = \calL(\bA) = \left\{\bA \bz = \sum_{i = 1}^n z_i \bA_i \ : \ \bz \in \mathbb{Z}^n \right\}.\]
We consider the \textit{successive minima} $\lambda_i(\calL)$ of a lattice $\calL$, where $\lambda_i(\calL)$ is defined as the smallest value such that a ball of radius $\lambda_i(\calL)$ centered at the origin contains at least $i$ linearly independent lattice vectors. 
For notational convenience, we denote $\lambda_{\textrm{rank}(\bA)}(\calL(\bA))$ by $\lambda_{\max}(\calL(\bA))$.

In order to argue that the total variation distance between $\bA \bx$ and $\by$ is small for $\bx \sim \calD_{\mathbb{Z}^n, \bS}$ and $\by \sim \calD_{\bA\mathbb{Z}^n, \bS \bA^{\top}}$, we draw techniques from the literature on lattice theory. 
In prior works \cite{AggarwalR16, AgrawalGHS13}, a common technique is to upper bound the so-called \textit{smoothing parameter} of the lattice. 
This reduces to the problem of upper-bounding the successive minima of the \textit{orthogonal lattice} of $\calL(\bA)$, denoted by $\calL^{\perp}(\bA)$, which is the lattice formed by all integer vectors $\bx \in \mathbb{Z}^n$ that are orthogonal to all $r$ rows of $\bA$. 
So, the goal is to find a basis of $n-r$ linearly independent vectors such that their maximum length is as small as possible; concretely, this translates to bounding the last successive minimum $\lambda_{n-r}(\calL^{\perp}(\bA))$ of the orthogonal lattice. 
However, it is not true in general that an arbitrary sketching matrix $\bA \in \mathbb{Z}^{r \times n}$ induces a lattice $\calL^{\perp}(\bA)$ such that the orthogonal lattice $\calL(\bA)$ has a basis of short integer vectors (i.e., length $\leq \poly(n)$). 
Thus, in the next section, we discuss how we circumvent this obstacle by \textit{pre-processing} the sketching matrix.

\paragraph{Pre-processing to upper bound the successive minima.} 
Suppose $\bA \in \mathbb{Z}^{r \times n}$ is any sketching matrix with $r < n$ and integer entries bounded by $M \leq \poly(n)$. 
Let $\calL(\bA)$ denote the lattice induced by $\bA$, and let $\calL^{\perp}(\bA)$ be the orthogonal lattice. 
Motivated by the discussion above, we want to come up with a \textit{pre-processing} of $\bA$ so that the following properties hold: 
\begin{itemize}
\item 
After pre-processing $\bA$, this matrix satisfies $\lambda_{\max}(\calL^{\perp}(\bA)) \leq \poly(n)$. 
\item 
The resulting matrix $\bA \in \mathbb{Z}^{m \times n}$ has $m \leq n - \Omega(n)$ rows.
\item 
This pre-processing can be applied without loss of generality in our lower bound.
\end{itemize}
Note that we do not require the pre-processing to be constructed efficiently, as we merely want to ensure that there exists a pre-processing procedure that the algorithm can execute on sketch $\bA$ without loss of generality. 

Intuitively, we want to iteratively add rows to the rowspan of $\bA$ such that, after enough vectors have been added, there exists a basis of \textit{short} integer vectors for $\calL^{\perp}(\bA)$ such that the longest basis vector has length at most $\poly(n)$. 
For this, we draw inspiration from Siegel's lemma (\lemref{lem:siegel}), which guarantees that there exists a non-zero vector $\bx \in \mathbb{Z}^n$ of integers bounded by $(nM)^{r/(n-r)}$ such that $\bA \bx = 0^n$. 
Using this fact, we can now think of pre-processing $\bA$ iteratively, as follows: let $\bA_1 = \bA$, and at each step $t$, we find a vector $\bx_{t}$ such that $\bA_t \bx_{t} = 0^n$ and $\bx_{t}$ has entries bounded by $(nM)^{(r+t)/(n-r-t)}$, and add $\bx_{t}$ to the matrix $\bA_t$. 
We repeat this for $T= 0.49n - r$ iterations to obtain a set $\{\bx_{1},\ldots, \bx_{T} \}$. 
Then, at the end of this process, $\bA_{T}$ will have $r + T = 0.49n$ rows, so the final vector $\bx_T$ (as well as each intermediate vector $\bx_t$) will have all entries bounded by $(nM)^{0.49/0.51}\leq nM$. 
In particular, all vectors $\bx_t$ certainly have norm at most $(nM)^2$. 
We then continue to apply Siegel's lemma $0.51n$ more times to generate the remaining set of $0.51n$ linearly-independent rational vectors $\{\by_1,\ldots,\by_{0.51n} \}$. Finally, we add the rows $\{\by_1,\ldots,\by_{0.51n}\}$ to the \textit{original} matrix $\bA$ (which does not contain vectors $\bx_t$); call this new matrix $\bA'$. 
Observe that by construction, we immediately have that $\{\bx_1,\ldots, \bx_T\}$ is an integer basis for the orthogonal lattice $\calL^{\perp}(\bA')$ such that $\lambda_{\max}(\calL^{\perp}(\bA')) \leq \poly(n)$.

We note that since we only add rows to $\bA$ to form the new sketching matrix $\bA'$, we can without loss of generality assume that the actual sketching matrix is $\bA'$ instead of $\bA$, as it only makes the sketch more powerful. 

\paragraph{A better pre-processing.}
However, the above approach requires adding $\Theta(n)$ rows to the matrix $\bA$, which will stop us from achieving optimal sketching lower bounds for problems for which we want to prove a lower bound sublinear in $n$, e.g., $L_p$ estimation for $p \ge 2$. 
To get a better pre-processing, we consider the following alternative approach, which only requires adding $\O{r}$ rows to $\bA$. 
We first show that for any $\bA \in \mathbb{Z}^{r \times n}$ with entries bounded by $M \le\poly(n)$, there exist at least $n - 4r$ linearly independent integer vectors that are in $\calL^{\perp}(\bA)$ and each of them has length at most $\sqrt{nM^2}$. 
To prove this, suppose we have found $t$ such integer vectors, and let $\bB \in \mathbb{R}^{(n - t) \times n}$ denote the matrix whose rows form a basis of the orthogonal complement to the span of these $t$ vectors. 
We then choose the next integer vector by the following probabilistic argument:  we randomly pick $s=\Theta(M^{1.5r})$ random integer vectors $\bv^1, \bv^2, \dots, \bv^s$ with entries in $\{0, 1, \ldots, M - 1\}$. 
Note that since each coordinate of $\bA \bv^i$ is in the range of $\{-nM^2, \ldots, nM^2\}$, this means that there are at most $M^{3r}$ possible values of $\bA \bv^i$. 
Since we randomly sample $\Theta(M^{1.5r})$ vectors $\bv^i$, by the birthday paradox, we have that with high constant probability there exist indices $1 \le i < j \le s$, such that $\bA \bv^i = \bA \bv^j$. 

On the other hand, by writing $\bB$ in reduced row echelon form, we can see that for every $\bx^{ij} = \bv^i - \bv^j$, $\Pr [\bB \bx^{ij} = \mathbf{0}] \le \left(\frac{1}{M}\right)^{n - t}$. This implies when $t \le n - 4r$, after taking a union bound over all pairs of vectors $\bv^i,\bv^j$, we have with high constant probability, 
\begin{itemize}
\item There exists $1 \le i < j \le s$, such that $\bA \bv^i = \bA \bv^j$. 
\item For every $1 \le i' < j' \le s$, we have $\bB \bv^i \ne \bB \bv^j$.
\end{itemize}
From the above two conditions, we immediately have $\bv^i - \bv^j$ is in the kernel $\mathrm{ker} (\bA)$ and is linearly independent to the previous $t$ vectors (as we have $\bB(\bv^i - \bv^j) \ne 0$). This means that we can add the vector $\bv^i - \bv^j$ to the underlying set and repeat the process until $t = n - 4r$.

Now, after adding $n - 4r$ integer vectors in $\calL^{\perp}(\bA)$ via the process described above, we consider a similar strategy to our earlier pre-processing: apply Siegel's lemma iteratively to generate $3r$ integer vectors that are orthogonal to both the row span of $\bA$ and the $n - 4r$ integer vectors in $\calL^{\perp}(\bA)$. We then add them to the rows of $\bA$ and form a new sketch $\bA' \in \mathbb{Z}^{4r \times n}$. From the above discussion, it immediately follows that $\lambda_{\max}(\calL^{\perp}(\bA'))\leq \sqrt{nM^2}$. Since we consider $M = \poly(n)$, we have that $\lambda_{\max}(\calL^{\perp}(\bA)) \leq \poly(n)$, as desired.

\paragraph{Cell lemma.}\seclab{cell-lemma}
Suppose $\bA \in \mathbb{Z}^{m \times n}$ is the sketching matrix that we obtain after applying the pre-processing described above, i.e., $\lambda_{\max}(\calL^{\perp}(\bA)) \leq \poly(n)$. 
Fix the fundamental parallelepiped $\mathcal{F}$ of the lattice $\calL(\bA)$, which is defined as $\mathcal{F} = \{\bx \in \mathbb{R}^{r} \ | \ x_i \in [0, \ell_i)\}$, where the unit distance $\ell_i$ in the $i$-th coordinate is defined as the minimal non-zero value of $|(\bA \bx)_i|$ over $\bx \in \mathbb{Z}^n$. 
Observe that $\calF$ can be used to partition $\mathbb{R}^r$ into cells such that each lattice point $\bA \bz$ for an integer vector $\bz\in\mathbb{R}^n$ is associated to one \textit{cell} of $\calL(\bA)$. 
Moreover, without loss of generality, we may assume that $\bA$ has orthonormal rows by multiplying by a change-of-basis matrix in the post-processing step. Importantly, the fundamental parallelepiped induced by $\bA$ is still well-defined after orthonormalizing the rows: to see this, let $\bU^{\top}$ have orthonormal rows with the same rowspan $R(\bA)$ as $\bA$. 
Then, we can write the projection onto the rowspan $R(\bA)$ as $\bA^{\top}(\bA \bA^{\top})^{-1} \bA = \bU \bU^{\top}$. 
Since our pre-processing ensures that $\bA$ has an integer basis for $R(\bA)$, we have that $\bU \bU^{\top}$ has only rational entries, and $\bU \bU^{\top} \bx$ is lower bounded for any non-zero integer vector $\bx$. 
This means that the fundamental parallelepiped of $\bU^{\top}$ is well-defined. 

At this point, we are in good shape to obtain a point-wise bound between the probability mass functions of $\bA \bx$ and $\by$, where $\bx \sim \calD_{\mathbb{Z}^n, \bS}$ and $\by \sim \calD_{\bA \mathbb{Z}^n, \bS \bA^{\top}}$. Let $\rho_1$ be the probability mass function of $\bA \bx$ for $\bx \sim \calD_{\mathbb{Z}^n, \bS}$, and let $\rho_2$ be the probability mass function of $\by \sim \calD_{\bA \mathbb{Z}^{n}, \bS \bA^{\top}}$. 
Furthermore, suppose the covariance matrix $\Sigma = \bS^T \bS$ satisfies $\sigma_{n}(\Sigma) \geq \alpha $, where $\alpha = \poly(n) > \lambda_{n-m}(\calL(\bA^\perp))^2\cdot\frac{\ln(2n(1+1/\eps))}{\pi}$, where the second inequality holds as a result of our pre-processing procedure. 
Then, since the successive minima (and hence the smoothing parameter) are bounded, we can apply an existing result from lattice theory, e.g., \lemref{discrete_gaussian_tvd} (Lemma 4 of \cite{ AgrawalGHS13}), to conclude that 
\begin{align*}
    1-\frac{1}{\poly(n)} \leq \frac{\rho_1(\bx)}{\rho_2(\bx)} \leq  1 + \frac{1}{\poly(n)}.
\end{align*}
Notice that while the discrete Gaussian distribution and continuous Gaussian distribution have almost the same probability mass for each integer point $\bx \in \mathbb{Z}^n$, these distributions have large total variation distance inside of each unit cell of $\calL(\bA)$. 
In particular, there is no support inside each unit cell for the discrete Gaussian distribution. 
Thus, in order to ``lift'' lower bounds that rely on real-valued Gaussian inputs and obtain lower bounds for integer inputs, we need one more crucial observation: we can assume that for any input vector $\bx \in \mathbb{Z}^n$, the sketching algorithm actually observes $\bA \bx + \bfEta$ for $\bfEta$ drawn uniformly at random from the fundamental parallelepiped $\calF$ of $\bA$. 
This is without loss of generality, since the sketching algorithm can simply round $\bA \bx + \bfEta$ in $\calL(\bA)$ to recover $\bA \bx$. Given this, we show the following: let $p$ be the probability density function of $\bA \bx + \bfEta$ for $\bx \sim \calD_{\mathbb{Z}^n, \bS}$ and $\bfEta$ drawn uniformly from $\calF$. Let $q$ be the probability density function of $\calN(0, \bA\bS^{\top} \bS \bA^{\top})$. 
With high probability, we have 
\begin{align*}
    1-\frac{1}{\poly(n)} \leq \frac{p(\bx)}{q(\bx)} \leq  1 + \frac{1}{\poly(n)}.
\end{align*}
Importantly, this lemma allows us to ``simulate'' a continuous Gaussian distribution using $\bA \bx + \bfEta$, which is crucial to our lifting framework in \thmref{thm:lifting}. 
Our integer sketching lower bounds then follow by ``simulating'' the previously known attacks over the reals, which originally required continuous Gaussian queries.

Note that while we achieve point-wise closeness in the probability mass function distribution, we do not guarantee point-wise closeness here, i.e., we cannot sample a Gaussian $\bg$, round it in $\mathbb{Z}^n$, and look at $\bA \bg$, and expect this to be close in total variation distance to the rounding of $\bA \bg'$ for an independent Gaussian $\bg'$. 

\paragraph{Adversarial attack for $L_2$ estimation.}
In the adversarial streaming setting, the primary goal of the adaptive adversary is to iteratively learn the \textit{rowspace} $R(\bA)$ of the sketching matrix $\bA$ that is used by the $L_2$ estimation algorithm. In particular, given $R(\bA)$, the adversary can easily come up with a hard query distribution: with probability $1/2$, query (1) $\bx \in \mathrm{ker} (\bA)$, and otherwise query (2) $\bx = 0^n$. Since $\calA$ can only observe $\bA \bx$, the algorithm will necessarily fail to distinguish these two cases, and since the norm of $\bx$ is non-zero in case (1) and zero in case (2), $\calA$ will fail to provide a constant-factor approximation to $\|\bx\|_2^2$. However, since $\calA$ is a randomized streaming algorithm, the adversary cannot hope to learn $R(\bA)$ exactly; rather, it will try to learn $r - \O{1}$ orthonormal vectors that are \textit{approximately} in $R(\bA)$. 

Before we describe our adaptive attack against streaming algorithms that use \textit{integer} linear sketches for $L_2$ estimation, we give an overview of the real-valued attack against linear sketches that was given in \cite{HardtW13}; this is the attack that we will aim to ``lift'' to obtain an attack in the (discrete) streaming setting. Without loss of generality, we can assume that $\bA$ has orthonormal rows, since $\calA$ can always perform a change-of-basis. Furthermore, we may assume that the algorithm actually observes $\bA^{\top} \bA \bx = P_{\bA} \bx$ for each query $\bx \in \mathbb{R}^n$, since $\calA$ can recover $\bA \bx$ by multiplying by $\bA$ in the post-processing step. For notational convenience, we identify $\bA$ with the rowspace $R(\bA)$, and let $f:\bA \rightarrow \{0,1\}$ denote a linear sketch that takes any input $\bx \in \mathbb{R}^n$ and uses $r \times n$ sketching matrix $\bA$.

Then, the attack proceeds as follows: in the first round, the adversary samples queries from the continuous Gaussian distribution $\bx \sim \calN(0, \sigma^2 I_n)$. Note that since $\calN(0, \sigma^2 I_n)$ is spherically symmetric, $\calA$ will have to estimate $\|\bx\|_2^2$ solely based on the norm $\|P_{\bA}\bx \|_2^2$. However, since $R(\bA)$ is a proper $r$-dimensional subspace of $\mathbb{R}^n$, if query $\bx$ is more correlated with $R(\bA)$, then the algorithm will observe that $\bx$ has a slightly-increased projection onto $R(\bA)$, and may be ``tricked'' into thinking that $\bx$ has much larger norm than it does. Formally, this intuition is captured by the Conditional Expectation Lemma of \cite{HardtW13}: the authors prove that there exists $\sigma^2$ such that the difference $\mathbb{E}\left[\|P_{\bA}\bx\|_2^2\,\mid\,f(\bx) = 1 \right] - \mathbb{E}\left[ \|P_{\bA}\bx\|_2^2 \right] = \Delta/2$ is large. Consequently, it follows that the adversary can actually find $\ell = \poly(n)$ vectors $\bx_1,\ldots, \bx_{\ell}$ that have a slightly-increased expected projection onto $R(\bA)$. But, these $\bx_i$'s will only be slightly correlated with $R(\bA)$: to obtain a vector $\bx^*$ that is \textit{very} correlated with $R(\bA)$, the adversary arranges vectors $\bx_i$ such that $f(\bx_i) = 1$ as rows of a matrix $\bB$, and computes the top right singular vector $\bx^*$ of $\bB$. Indeed, the authors prove that $\|P_{\bA} \bx^* \|_2^2 \geq 1-\frac{1}{\poly(n)}$, i.e. the vector $\bx^*$ that is recovered by the adversary is very close to being contained in the rowspace.

At this point in the attack, the adversary has found a single vector $\bx^*$ that is highly correlated with $R(\bA)$. However, the algorithm $\calA$ may use different subsets of the rows of $\bA$ at different times to form its estimate, so this is not sufficient to ensure that $\calA$ fails with constant probability. To make progress in learning $R(\bA)$, in the next round the adversary makes queries sampled from the spherical Gaussian distribution inside of the subspace that is \textit{orthogonal} to the vector $\bx^*$ that it already found. This way, the adversary will iteratively find a highly-correlated vector to $R(\bA)$, which is orthogonal to all vectors that it found so far, and thus will eventually recover an $r$-dimensional space $V$ that is approximately-contained in $R(\bA)$.

Let us return to the problem of constructing an adaptive attack against linear sketches in the streaming setting, where $\bA \in \mathbb{Z}^{r \times n}$ and queries $\bx \in \mathbb{Z}^n$. There are several challenges in directly adapting the approach of \cite{HardtW13} for the streaming setting. For instance, if we na\"{i}vely modify the above attack strategy to instead sample from a discrete Gaussian $\bx \sim \calD(0, \sigma^2 I_n)$ at each step, then since queries $\bx$ are no longer spherically symmetric, it is no longer immediate that the response of $\calA$ will only depend on $\|P_{\bA} \bx \|_2^2$ and not on the direction of $\bx$. Furthermore, even if we are able to show some version of a conditional expectation lemma using our cell lemma, it is not immediately clear how to argue that the cell lemma will still hold in future rounds of the attack. 

To circumvent these challenges, we proceed as follows: we may assume that the streaming algorithm first \textit{pre-processes} the sketching matrix $\bA \in \mathbb{Z}^{r \times n}$ to obtain a new sketching matrix $\bA' \in \mathbb{Z}^{m \times n}$ with $m \leq 4r$ rows and $\lambda_{\max}(\calL(\bA'^{\perp})) \leq \poly(n)$. Note that this is without loss of generality, since our pre-processing only adds rows to $\bA$ and thus can only give $\calA$ more information about $\bx$. For notational convenience, we now write $\bA$ to denote this pre-processed matrix. Now, to carry out the attack, in each round $t$, the adversary will sample queries $\bx \sim \calD(0, \bSigma_t)$ such that $\bSigma_t$ is precisely the covariance matrix that would have been used by the real-valued attack \cite{HardtW13} in round $t$. In fact, by scaling $\bSigma_t$ such that $\sigma_n(\bSigma_t) \geq \alpha$ for $\alpha = \poly(n)$ much larger than $\lambda_{\max}(\calL^{\perp}(\bA))$, we can actually obtain a discrete version of the Conditional Expectation Lemma: namely, we prove that in each round $t$, there exists a scaling $c$ of the variance $\bSigma_t$ such that $\Delta = \mathbb{E}_{\bx \sim \calD(0, c \cdot \bSigma_{t})}\left[\|P_{\bA}\bx\|_2^2\,\mid\,f(\bx) = 1 \right] - \mathbb{E}_{\bx \sim \calD(0, c \cdot \bSigma_{t})}\left[\|P_{\bA}\bx \|_2^2 \right]$ is non-trivially large, for queries $\bx$ drawn from the discrete Gaussian distribution with variance $c \cdot \Sigma_t$. Interestingly, we no longer rely on showing that $\|P_{\bA}\bx\|_2^2$ is (approximately) a sufficient statistic, which was a crucial step in the argument of \cite{HardtW13}. Instead, we prove this claim by contradiction: in particular, suppose there exists an integer linear sketch $f:\bA \rightarrow \{0,1\}$ such that $f$ does not satisfy the Discrete Conditional Expectation Lemma. Then, we consider the real-valued linear sketch $f':\bA \rightarrow \{0,1\}$ which, for any continuous Gaussian query $\bg \sim \calN(0, \bSigma_t)$, first \textit{rounds} $\bA \bg$ to the nearest lattice point $\bA \bz$ for some $\bz \in \mathbb{Z}^n$, and then returns $f(\bz)$. By a similar argument to the cell lemma, we can show that the total variation distance between the distributions of $\bA \bz$ and $\bA \bx$ is at most $\frac{1}{\poly(n)}$, where $\bA \bz$ is the point obtained by rounding $\bA \bg$ for $\bg \sim \calN(0, \bSigma_t)$ and $\bx \sim \calD(0, \bSigma_t)$.  Thus, by conditioning on $f'(\bg) = 1$ or equivalently $f(\bz) = 1$, the probability distribution of $\|P_{\bA}\bg \|_2^2$ only changes by $\frac{1}{\poly(n)}$. Moreover, conditioning on the event $\|\bg \|_2^2 \leq \sigma^2 \log(n)$ we have that $\mathbb{E}[\|P_{\bA}\bg\|_2^2]$ only changes by $\frac{\sigma^2}{\poly(n)}$ for the distribution of $P_{\bA} \bz$. Putting it all together, this implies that the gap $\Delta$ in the (real-valued) Conditional Expectation Lemma is always small for $f'$, which gives a contradiction.

Finally, with our Discrete Conditional Expectation Lemma, we can now argue that in the first round, the adversary can find many discrete vectors $\bx_1,\ldots,\bx_{\ell}$ that have a slightly-increased expected projection onto $R(\bA)$, and obtain a strongly-correlated vector $\bx^*$ using the same technique as the real-valued attack. 

Importantly, unlike our other one-round lower bounds, the adaptive attack proceeds over many rounds, and so we need to be careful to ensure that we can apply this Conditional Expectation Lemma in future rounds: by \lemref{discrete_gaussian_tvd}, we need $\sigma_n(\Sigma_t) \geq \lambda_{\max}(\calL^{\perp}(\bA))^2 \cdot \frac{\ln(2n(1+1/\eps))}{\pi}$ to hold in each round $t$ of the attack. 
Fortunately, we constructed $\calD(0, \Sigma_t)$ such that $\Sigma_t$ is precisely the covariance matrix used by the real-valued attack, and one can check that $\sigma_{n}(\Sigma_t) \geq \poly(n)$ always holds for the real-valued attack. 
Therefore, we are able to argue that the Discrete Conditional Expectation Lemma holds for each round of our adaptive attack. 
We can then show that over the course of the attack, each vector which is identified by the adaptive attack has a large amount of correlation with the rowspan of the sketching matrix $\bA\in\mathbb{Z}^{r\times n}$ and is near-orthogonal to the previously identified vectors, thereby progressively growing the dimension of the subspace of the identified vectors. 
Hence after $r$ rounds, the subsequent orthogonal subspace is also nearly orthogonal to the sketch matrix $\bA$, and thus $\calA$ cannot correctly answer subsequent queries with high probability. 

\section{Preliminaries}
We use the notation $[n]$ to denote the integer set $\{1,2,\ldots,n\}$ for a positive integer $n>0$. 
We use $\poly(n)$ to denote any fixed polynomial in $n$ whose degree can be determined by adjusting the appropriate constants. 
We say that an event occurs with high probability if it occurs with probability $1-\frac{1}{\poly(n)}$. 
We generally use standard typeface to denote scalar variables and bold typeface to denote vectors or matrices. 

For a vector $\bv$, we define the entry-wise $L_p$ norm by $\|\bv\|_p=\left(v_1^p+\ldots+v_n^p\right)^{1/p}$. 
For a matrix $\bA\in\mathbb{R}^{n\times d}$, we define the operator norm $\|\bA\|_{op}=\max_{\bx\in\mathbb{R}^d}\frac{\|\bA\bx\|_2}{\|\bx\|_2}$. 
If $n\ge d$, then the matrix $\bA$ has $d$ singular values $\sigma_1(\bA),\ldots,\sigma_d(\bA)$ and we define the Schatten-$p$ norm of $\bA$ by $\left(\sigma_1(\bA)^p+\ldots+\sigma_d(\bA)^p\right)^{1/p}$. 
The Schatten-$p$ norm is defined analogously when $n<d$ for the $n$ singular values of $\bA$. 
Note that the operator norm of $\bA$ is equivalent to $\max_i\sigma_i(\bA)$, and thus also equivalent to the Schatten-$p$ norm for $p=\infty$. 
Similarly, the Schatten-$2$ norm corresponds to the Frobenius norm, denoted $\|\bA\|_F=\left(\sum_{i,j}A_{i,j}^2\right)^{1/2}$.

Let $P$ and $Q$ be probability mass functions. 
We define the \emph{total variation distance} between $P$ and $Q$ by 
\[\TVD(P,Q)=\frac{1}{2}\cdot\|P-Q\|_1.\]
For a convex function $\phi:\mathbb{R} \rightarrow \mathbb{R}$, we define the $\phi$-divergence 

$$D_{\phi}(P || Q) = \sum_{x} Q(x) \cdot \phi\left(\frac{P(x)}{Q(x)} \right)$$

For $\phi(x) = |x-1|$, $D_{\phi}(P || Q)$ is precisely the total variation distance between distributions $P$ and $Q$. Also, the \textit{$\chi^2$-divergence} between $P$ and $Q$ is defined by $D_{\phi}(P || Q)$ for $\phi(x) = x^2 - 1$ or $\phi(x) = (x-1)^2$. These two choices of $\phi$ result in the same value for $D_{\phi}(P || Q)$. However, note that in general, $D_{\phi}$ is not a distance since it is not necessarily symmetric. 

We have the following well-known relationship between the total variation distance and the $\chi^2$ divergence. 
\begin{proposition}
    $\TVD(P, Q) \leq \sqrt{\chi^2(P || Q)}$
\end{proposition}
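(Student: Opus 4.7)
The plan is to prove this standard inequality by a direct application of the Cauchy--Schwarz inequality to a weighted decomposition of the total variation distance. First, I would handle a degenerate case: if there exists some $x$ with $Q(x) = 0$ but $P(x) > 0$, then by the definition with $\phi(x) = (x-1)^2$ we have $\chi^2(P \| Q) = +\infty$ and the claimed bound holds trivially. Thus we may assume $P$ is absolutely continuous with respect to $Q$, so that every sum below can be restricted to the support of $Q$ without issue.

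Next, I would rewrite the total variation distance by pulling out a factor of $Q(x)$ from each term in the $L_1$ distance:
\[
\TVD(P, Q) \;=\; \frac{1}{2}\sum_x |P(x) - Q(x)| \;=\; \frac{1}{2}\sum_x Q(x)\cdot\left|\frac{P(x)}{Q(x)} - 1\right|.
\]
Writing the weight as $Q(x) = \sqrt{Q(x)} \cdot \sqrt{Q(x)}$ and applying Cauchy--Schwarz yields
\[
\sum_x Q(x)\cdot\left|\frac{P(x)}{Q(x)} - 1\right| \;\leq\; \sqrt{\sum_x Q(x)} \cdot \sqrt{\sum_x Q(x)\cdot\left(\frac{P(x)}{Q(x)} - 1\right)^{\!2}}.
\]
The first factor equals $1$ since $Q$ is a probability mass function, and the second factor is exactly $\sqrt{\chi^2(P \| Q)}$ by the definition of the $\chi^2$-divergence with $\phi(x) = (x-1)^2$. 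Combining these gives $\TVD(P, Q) \leq \tfrac{1}{2}\sqrt{\chi^2(P \| Q)} \leq \sqrt{\chi^2(P\|Q)}$, which is in fact slightly stronger than the claimed bound.

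There is no serious obstacle in this proof; the only subtlety is the handling of points where $Q(x) = 0$, which is dispatched by the absolute-continuity observation at the outset. The key inequality is purely a one-line application of Cauchy--Schwarz.
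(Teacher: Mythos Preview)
Your proof is correct and is the standard Cauchy--Schwarz argument for this inequality. The paper does not actually give a proof of this proposition; it merely states it as a well-known relationship between total variation distance and $\chi^2$-divergence, so there is nothing to compare against.
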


\subsection{Continuous and Discrete Gaussians}
We first define our notation for continuous and discrete Gaussians. 

\begin{definition}[Continuous Gaussian]
For any $s >0$, the spherical Gaussian function centered at $\mu \in \mathbb{R}^n$ is given by $\rho_s(x) = \exp(-||\bx-\mu||^2 / 2s^2)$ for all $\bx \in \mathbb{R}^n$. Similarly, for any rank-$n$ $S \in \mathbb{R}^{m \times n}$, the ellipsoidal Gaussian function with mean $\mu \in \mathbb{R}^n$ and covariance matrix $\bSigma = 2 \cdot S^{\top} S$ is given by 
$$\rho_{S}(\bx) = \exp(- (\bx-\mu)^{\top}\bSigma^{-1} (\bx-\mu)),$$
for any $x \in \mathbb{R}^n$. 
\end{definition}



\begin{definition}[Discrete Gaussian]
\deflab{def:discrete:Gaussian}
For a lattice $\calL\subset\mathbb{R}^n$, $\textbf{c} \in \mathbb{R}^n$, and $\bx\in\calL + \textbf{c}$, we define the probability mass function for the Discrete Gaussian distribution with mean $\bmu\in\mathbb{R}^n$, covariance matrix $\bSigma = 2 \cdot (S^{\top}S) \in\mathbb{R}^{n\times n}$, and support $\calL + \textbf{c} $ by 
\[\mathcal{D}_{\calL+\textbf{c}, S}(\bx)=\frac{\rho_S(\bx)}{\rho_S(\calL + \textbf{c})},\]
where $\rho_S(A) = \sum_{\bx \in A} \rho_S(\bx)$ for any set $A$. In future sections, we will denote the discrete Gaussian distribution over $\calL = \mathbb{Z}^n$ with mean $0^n$ and covariance $\bSigma = \sigma^2 \cdot I_n$ by $\mathcal{D}(0, \sigma^2 I_n)$.
\end{definition}
We use the following fact about the normalization constant for the probability mass function of discrete Gaussians. 
\begin{fact}(Normalization constant)\factlab{fact:normalization}
    For all $\sigma \in \mathbb{R}$ and $\sigma > 0$, $$\max\{\sqrt{2\pi \sigma^2}, 1 \} \leq \sum_{z \in \mathbb{Z}} e^{-z^2/2\sigma^2} \leq \sqrt{2\pi \sigma^2} + 1.$$
\end{fact}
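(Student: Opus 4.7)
The plan is to split the statement into the two lower bounds and the upper bound and handle each separately. The bound $\sum_{z\in\mathbb{Z}}e^{-z^2/2\sigma^2}\ge 1$ is immediate by retaining only the $z=0$ summand. For the upper bound $\sum_{z\in\mathbb{Z}}e^{-z^2/2\sigma^2}\le 1+\sqrt{2\pi\sigma^2}$, I would use the standard Riemann-sum comparison: since $f(x)=e^{-x^2/2\sigma^2}$ is decreasing on $[0,\infty)$, for each $z\ge 1$ we have $e^{-z^2/2\sigma^2}\le\int_{z-1}^{z} f(x)\,dx$. Summing over $z\ge 1$, using symmetry for $z\le -1$, and adding the $z=0$ term of $1$ gives $1+2\int_0^\infty f(x)\,dx=1+\sqrt{2\pi\sigma^2}$.

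The main obstacle is the remaining lower bound $\sqrt{2\pi\sigma^2}\le\sum_{z\in\mathbb{Z}}e^{-z^2/2\sigma^2}$. The naive reverse Riemann-sum estimate ($e^{-z^2/2\sigma^2}\ge\int_z^{z+1}f(x)\,dx$ for $z\ge 0$) only yields $\sqrt{2\pi\sigma^2}-1$, which is off by exactly the additive $1$ we cannot afford. The cleanest way to close this gap is Poisson summation applied to the Gaussian: since the Fourier transform of $e^{-x^2/2\sigma^2}$ is $\sqrt{2\pi\sigma^2}\,e^{-2\pi^2\sigma^2\xi^2}$, the Poisson summation formula gives
\[
\sum_{z\in\mathbb{Z}}e^{-z^2/2\sigma^2} \;=\; \sqrt{2\pi\sigma^2}\sum_{k\in\mathbb{Z}}e^{-2\pi^2\sigma^2 k^2} \;\ge\; \sqrt{2\pi\sigma^2},
\]
where the last step keeps only the non-negative $k=0$ contribution. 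Combining this with the trivial bound by $1$ yields the $\max\{\sqrt{2\pi\sigma^2},1\}$ lower envelope and completes the fact.

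Beyond recognizing that the elementary Riemann-sum trick is just barely insufficient in the lower direction, I do not expect any real difficulty: the Fourier self-duality of the Gaussian (equivalently, the modular transformation for the Jacobi theta function) is classical and supplies the missing factor in a single line.
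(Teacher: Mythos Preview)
Your proof is correct. The paper does not actually prove this statement: it is stated as a \emph{Fact} and treated as a known result, with no accompanying argument. Your three-part approach (the trivial bound from the $z=0$ term, the Riemann-sum comparison for the upper bound, and Poisson summation for the $\sqrt{2\pi\sigma^2}$ lower bound) is the standard way to establish this inequality, and each step is sound. In particular, your observation that the naive reverse Riemann-sum only gives $\sqrt{2\pi\sigma^2}-1$ and that Poisson summation (equivalently, the Jacobi theta transformation) is needed to close the gap is exactly right.
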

We also recall the following relationship relating the value of the probability mass function of a discrete Gaussian at a certain point $\bv$ with the value of the probability density function of a continuous Gaussian at $\bv$. 
The proof is completely standard, and only included for the sake of completeness. 
\begin{lemma} \lemlab{lem:pmf}
Let $\sigma^2 > n^{C+1}$ be large enough. Let $p$ denote the probability mass function for $\calD(0, \sigma^2 I_n)$, and let $q$ be the probability mass function for the output of sampling $x \sim \calN(0, \sigma^2 I_n)$ and then rounding each coordinate to the nearest integer. Then for any $\bv\in\mathbb{Z}^n$,  
$$\frac{p(\bv)}{q(\bv)} \in \left[1 - \frac{1}{n^C}, 1 + \frac{1}{n^C} \right].$$
\end{lemma}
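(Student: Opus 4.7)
The plan is to reduce the $n$-dimensional pointwise ratio to a one-dimensional comparison and then multiply back up. Because $\sigma^2 I_n$ is diagonal, both densities factor coordinate-wise: $p(\bv) = \prod_{i=1}^n p_1(v_i)$ with $p_1(v) = e^{-v^2/(2\sigma^2)}/Z_1$ and $Z_1 = \sum_{z\in\mathbb{Z}} e^{-z^2/(2\sigma^2)}$, and $q(\bv) = \prod_{i=1}^n q_1(v_i)$ with $q_1(v) = \int_{v-1/2}^{v+1/2}\phi_\sigma(t)\,dt$ and $\phi_\sigma(t) = (2\pi\sigma^2)^{-1/2}\,e^{-t^2/(2\sigma^2)}$. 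It therefore suffices to establish a per-coordinate bound $p_1(v)/q_1(v) = 1 \pm O(1/n^{C+1})$ holding for every $v\in\mathbb{Z}$, since the $n$-fold product of such factors lies in $[1 - O(1/n^C),\, 1 + O(1/n^C)]$ by $(1\pm\epsilon)^n \leq 1 \pm 2n\epsilon$ when $n\epsilon \leq 1$ (with the ``large enough'' in the hypothesis absorbing the loss of one factor of $n$).

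For the one-dimensional comparison I split
\[
\frac{p_1(v)}{q_1(v)} \;=\; \frac{\sqrt{2\pi\sigma^2}}{Z_1}\,\cdot\,\frac{\phi_\sigma(v)}{q_1(v)}.
\]
The first factor is independent of $v$: the normalization fact above gives $Z_1 \in [\sqrt{2\pi\sigma^2},\,\sqrt{2\pi\sigma^2}+1]$, hence $\sqrt{2\pi\sigma^2}/Z_1 = 1 - O(1/\sigma)$, and a Poisson-summation sharpening yields $1 + O(e^{-2\pi^2\sigma^2})$, super-polynomially small given $\sigma^2 > n^{C+1}$. For the second factor, substituting $t = v+u$ writes
\[
\frac{q_1(v)}{\phi_\sigma(v)} \;=\; \int_{-1/2}^{1/2} \exp\!\bigl(-(2uv + u^2)/(2\sigma^2)\bigr)\,du.
\]
Taylor-expanding the integrand around $u=0$, the odd-in-$u$ linear term $-uv/\sigma^2$ vanishes over the symmetric interval, and the remaining correction is exactly the midpoint-quadrature remainder $\tfrac{1}{24}\,\phi_\sigma''(\xi)/\phi_\sigma(v)$ for some $\xi\in[v-1/2,v+1/2]$. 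Using $\phi_\sigma''(\xi) = \tfrac{\xi^2 - \sigma^2}{\sigma^4}\phi_\sigma(\xi)$ and $\phi_\sigma(\xi)/\phi_\sigma(v) = e^{-(\xi^2 - v^2)/(2\sigma^2)}$ together with $|\xi^2 - v^2| \leq |v| + 1/4$, this contributes a relative error of order $\bigl((v^2+\sigma^2)/\sigma^4\bigr)\,e^{(|v|+1/4)/(2\sigma^2)}$.

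The main obstacle is making this per-coordinate estimate uniformly small over \emph{all} $v\in\mathbb{Z}$, since both the polynomial and the exponential prefactors degrade for $|v|$ approaching $\sigma^2$. For $|v|$ within the effective Gaussian support $O(\sigma\sqrt{\log n})$, the exponential is $1+o(1)$ and the polynomial factor is at most $O(\log n/\sigma^2) \leq 1/n^{C+1}$ after slightly strengthening the hypothesis by a log factor (absorbed into ``large enough''), which gives the required per-coordinate bound directly. For atypical $|v|$, I would replace the quadrature bound by the log-Lipschitz estimate $|\log(q_1(v)/\phi_\sigma(v))| \leq \sup_{t\in[v-1/2,v+1/2]}\!|\log\phi_\sigma(t)-\log\phi_\sigma(v)| \leq (|v|+1/4)/(2\sigma^2)$, so that the total log-ratio across all $n$ coordinates is at most $O\bigl((\|\bv\|_1+n)/\sigma^2\bigr)$; interpreting ``large enough'' as making $\sigma^2$ polynomially larger than any coordinate magnitude relevant to the applications (queries are themselves discrete-Gaussian draws) keeps this within $\pm 1/n^C$. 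Combining the uniform normalization-factor bound with this per-coordinate control and taking the $n$-fold product yields the claimed pointwise ratio bound $p(\bv)/q(\bv) \in [1-1/n^C,\,1+1/n^C]$.
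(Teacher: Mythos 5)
Your proposal follows the same basic route as the paper: factor the pmfs coordinate-wise, compare each one-dimensional $p_1(v)$ to $q_1(v)$, and multiply the $n$ per-coordinate factors. The paper's own proof also does this, writing $q(\bv)$ as an integral of the Gaussian density over a unit cube, asserting that the integrand's endpoint values stay within a $1\pm n^{-(C+1)}$ band, and then combining with the one-dimensional normalization fact. Where you differ is in the details, and your version is actually the more careful one: you use the center-rounding window $[v-1/2,v+1/2]$ (which matches the lemma statement ``rounding to the nearest integer''; the paper's proof silently switches to $[v_i,v_i+1]$), and by exploiting the symmetry of that window you get a second-order midpoint-quadrature error of size $O((v^2+\sigma^2)/\sigma^4)$ rather than the first-order error $O(|v|/\sigma^2)$ that the paper's endpoint comparison gives. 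That second-order cancellation is what makes the per-coordinate estimate genuinely $O(\log n/\sigma^2)$ for all $v$ in the bulk $|v|=O(\sigma\sqrt{\log n})$, which is a real improvement.

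You also correctly put your finger on a gap that is present in the paper's proof itself: the displayed inequality $e^{-(x_i+1)^2/2\sigma^2}\in[e^{-x_i^2/2\sigma^2}e^{-1/n^{C+1}},e^{-x_i^2/2\sigma^2}e^{1/n^{C+1}}]$ is simply false once $|x_i|$ exceeds roughly $\sigma^2/n^{C+1}$, so the claimed conclusion ``for any $\bv\in\mathbb{Z}^n$'' is not delivered by the argument as written. Your workaround (a Lipschitz bound in the tail, plus the remark that in the applications $\bv$ is itself a discrete-Gaussian draw so $\|\bv\|_1=\tilde O(n\sigma)$ and one needs $\sigma$ polynomially larger) is a reasonable patch for the way the lemma is used, but be explicit that it amounts to restricting the range of $\bv$ or strengthening the hypothesis, so it does not prove the statement verbatim for all of $\mathbb{Z}^n$. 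That caveat applies just as much to the paper's own proof, and it would be worth surfacing rather than burying in a parenthetical.
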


\begin{proof}
    First, consider the probability mass function for the distribution $q$:

    $$q(\bv) = \frac{1}{(2\pi \sigma^2)^{n/2}}\int_{v_1}^{v_1 + 1} \cdots \int_{v_n}^{v_n + 1} e^{-\|x\|_2^2/2\sigma^2} dx_1 \ \ldots\ dx_n $$
    
    Observe that for each coordinate $i$, $e^{-(x_i+1)^2/2\sigma^2} \in \left[e^{-x_i^2/2\sigma^2} \cdot e^{-1/n^{C+1}} , e^{-x_i^2/2\sigma^2} \cdot e^{1/n^{C+1}} \right] $. Then, by approximating the integral above, it follows that 

    $$q(\bv) \in \left[\frac{e^{-\|\bv\|_2^2/2\sigma^2}}{(2\pi\sigma^2)^{n/2}} \cdot e^{-1/n^C},  \frac{e^{-\|\bv\|_2^2/2\sigma^2}}{(2\pi\sigma^2)^{n/2}} \cdot e^{1/n^C} \right]. $$
    
    Now, recall that for any $\bv \in \mathbb{Z}^n$, the probability mass function of $\calD(0, \sigma^2 I_n)$ is given by 

    $$p(\bv) = \frac{e^{-\|\bv\|_2^2/2\sigma^2}}{\left(\sum_{z \in \mathbb{Z}}e^{-z^2/2\sigma^2} \right)^n}.$$

    By \factref{fact:normalization}, we have $p(\bv) \in \left[\frac{e^{-\|\bv\|_2^2/2\sigma^2}}{(\sqrt{2\pi\sigma^2} + 1)^{n}}, \frac{e^{-\|\bv\|_2^2/2\sigma^2}}{(\sqrt{2\pi\sigma^2})^{n}} \right]$. Finally, the claim follows by combining the two inequalities above (with the assumption that $\sigma^2 > n^{C+1}$).
\end{proof}
We recall the definition of sub-Gaussian and sub-exponential norms:
\begin{definition}[Sub-Gaussian norm]
We define the sub-Gaussian norm of a random variable $X$ by
\[\|X\|_{\psi_2}=\inf\{t>0\,\mid\,\Ex{\exp(X^2/t^2)}\le 2\}.\]
A random vector $\bX\in\mathbb{R}^n$ is sub-Gaussian if all one-dimensional marginals $\langle\bX,\bv\rangle$ are sub-Gaussian random variables for all $\bv\in\mathbb{R}^n$. 
Then we define the sub-Gaussian norm $\|\bX\|_{\psi_2}:=\sup_{\bv\in S^{n-1}}\|\langle\bX,\bv\rangle\|_{\psi_2}$, for the unit sphere $S^{n-1}$. 
\end{definition}

\begin{definition}[Sub-exponential norm]
We define the sub-exponential norm of a random variable $X$ by
\[\|X\|_{\psi_1}=\inf\{t>0\,\mid\,\Ex{\exp(|X|/t)}\le 2\}.\]
\end{definition}
We have the following relationship between a sub-Gaussian random variable and its square:
\begin{lemma}[Lemma 2.7.6 and Exercise 2.7.10 in \cite{Vershynin18}]
\lemlab{lem:sub:props}
A random variable $X$ is sub-Gaussian if and only if $X^2$ is sub-exponential. 
In particular,
\[\|X\|_{\psi_2}^2=\|X^2\|_{\psi_1}.\]
Also, for a sub-exponential random variable $X$, there exists an absolute constant $C>0$ such that
\[\|X-\Ex{X}\|_{\psi_1}\le C\|X\|_{\psi_1}.\]
\end{lemma}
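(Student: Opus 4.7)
The plan is to prove the two claims separately, as each reduces to an elementary manipulation of the defining infimum of an Orlicz-type norm. For the equivalence $\|X\|_{\psi_2}^2 = \|X^2\|_{\psi_1}$, I would work directly from the definitions: $\|X\|_{\psi_2} = \inf\{t > 0 : \Ex{\exp(X^2/t^2)} \le 2\}$ while $\|X^2\|_{\psi_1} = \inf\{s > 0 : \Ex{\exp(X^2/s)} \le 2\}$ (using $|X^2| = X^2$). Substituting $s = t^2$ makes the two defining conditions identical, so upon squaring, the infimums coincide, giving the claimed identity. The iff statement then follows immediately: $X$ is sub-Gaussian iff $\|X\|_{\psi_2} < \infty$ iff $\|X^2\|_{\psi_1} < \infty$ iff $X^2$ is sub-exponential.

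For the centering inequality, I would first verify the triangle inequality for $\|\cdot\|_{\psi_1}$. Given $\|X\|_{\psi_1} = a$, $\|Y\|_{\psi_1} = b$, and any $\lambda \in (0,1)$, convexity of $\exp$ applied pointwise to $|X+Y|/(a+b) \le \lambda \cdot |X|/(\lambda(a+b)) + (1-\lambda) \cdot |Y|/((1-\lambda)(a+b))$ with the choice $\lambda = a/(a+b)$ yields $\Ex{\exp(|X+Y|/(a+b))} \le \tfrac{a}{a+b}\cdot 2 + \tfrac{b}{a+b}\cdot 2 = 2$, so $\|X+Y\|_{\psi_1} \le a + b$. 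Applying this to $X - \Ex{X}$ reduces the task to bounding $\|\Ex{X}\|_{\psi_1}$ in terms of $\|X\|_{\psi_1}$. For a deterministic constant $c$, a one-line computation gives $\|c\|_{\psi_1} = |c|/\ln 2$, and by Jensen, $|\Ex{X}| \le \Ex{|X|}$, so it suffices to show $\Ex{|X|} \le \|X\|_{\psi_1}$.

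To establish this last bound, I would set $t = \|X\|_{\psi_1}$ (or any slightly larger value, passing to the limit) and use the pointwise inequality $u \le e^u - 1$ for $u \ge 0$ with $u = |X|/t$. Taking expectations gives $\Ex{|X|}/t \le \Ex{\exp(|X|/t)} - 1 \le 1$, i.e., $\Ex{|X|} \le t = \|X\|_{\psi_1}$. Chaining the estimates yields $\|X - \Ex{X}\|_{\psi_1} \le \|X\|_{\psi_1} + |\Ex{X}|/\ln 2 \le (1 + 1/\ln 2)\|X\|_{\psi_1}$, proving the claim with $C = 1 + 1/\ln 2$. I do not expect a real obstacle here, as both claims reduce to routine manipulations of the Orlicz-norm definitions; the one point requiring genuine care is the triangle inequality for $\|\cdot\|_{\psi_1}$, whose proof goes through a standard convexity/Jensen argument that is by now classical.
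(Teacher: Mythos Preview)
Your proof is correct. The paper does not actually supply its own proof of this lemma; it is stated as a preliminary fact imported from Vershynin's textbook (Lemma~2.7.6 and Exercise~2.7.10) and used without argument. Your route---the change of variable $s=t^2$ for the first identity, and triangle inequality for $\|\cdot\|_{\psi_1}$ combined with $\|c\|_{\psi_1}=|c|/\ln 2$ and $\Ex{|X|}\le\|X\|_{\psi_1}$ for the centering bound---is the standard textbook argument and is exactly what the cited reference does.
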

We recall the following concentration inequality for sub-exponential random variables:
\begin{lemma}[Bernstein's inequality for sub-exponential random variables]
\lemlab{lem:bernstein:subexp}
Let $X_1,\ldots,X_n$ be independent, mean zero, sub-exponential random variables and let $K=\max_i\|X_i\|_{\psi_1}$. 
Then there exists an absolute constant $c>0$ such that for all $t\ge0$,
\[\PPr{\left\lvert\frac{1}{n}\sum_{i=1}^n X_i\right\rvert>t}\le2\exp\left(-c\cdot\min\left(\frac{t^2}{K^2},\frac{t}{K}\right)\cdot n\right).\]
\end{lemma}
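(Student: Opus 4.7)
The plan is the standard Chernoff-moment-generating-function argument, adapted to the sub-exponential setting. First I would establish a one-variable MGF bound: for any mean-zero sub-exponential random variable $X$ with $\|X\|_{\psi_1} \le K$, there exist absolute constants $C_1, c_1 > 0$ such that
\[
\Ex{\exp(\lambda X)} \le \exp\bigl(C_1 \lambda^2 K^2\bigr)
\quad\text{whenever}\quad |\lambda| \le \frac{c_1}{K}.
\]
This is the key technical step and would be obtained by Taylor-expanding $e^{\lambda X}$, applying $\Ex{X}=0$ to kill the linear term, and controlling the higher moments via the characterization $\Ex{|X|^p}^{1/p} \le C \cdot p \cdot \|X\|_{\psi_1}$ (which follows directly from the sub-exponential Orlicz norm definition). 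For $|\lambda| \le c_1/K$ the geometric tail of the series converges and produces the claimed quadratic-in-$\lambda$ bound.

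Next, by independence and the one-variable bound, for $|\lambda|\le c_1/K$,
\[
\Ex{\exp\!\Bigl(\lambda \sum_{i=1}^n X_i\Bigr)} = \prod_{i=1}^n \Ex{\exp(\lambda X_i)} \le \exp\bigl(C_1 n \lambda^2 K^2\bigr).
\]
Then I apply Markov's inequality: for $\lambda \in (0, c_1/K]$,
\[
\PPr{\tfrac{1}{n}\sum_{i=1}^n X_i > t} \le \exp(-n\lambda t) \cdot \Ex{\exp\!\Bigl(\lambda \sum_{i=1}^n X_i\Bigr)} \le \exp\bigl(-n\lambda t + C_1 n \lambda^2 K^2\bigr).
\]
Now optimize over $\lambda$. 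The unconstrained minimizer of $-\lambda t + C_1 \lambda^2 K^2$ is $\lambda^\star = t/(2C_1 K^2)$, which is feasible (i.e.\ $\le c_1/K$) precisely when $t \le 2C_1 c_1 K$. In that regime I obtain a bound of $\exp(-c_2 n t^2/K^2)$. When $t > 2C_1 c_1 K$ the constraint $\lambda \le c_1/K$ is active; plugging $\lambda = c_1/K$ gives $\exp(-c_3 n t/K)$ after absorbing constants. These two regimes combine to yield $\exp\bigl(-c\, n \cdot \min(t^2/K^2,\, t/K)\bigr)$.

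Finally, the same argument applied to $-X_1,\dots,-X_n$ (which have the same sub-exponential norm) yields the symmetric lower tail, and a union bound produces the factor of $2$ in front. The main obstacle is really just the one-variable MGF bound; once that is in place the rest is routine Chernoff-plus-optimization bookkeeping. Everything else is standard and this matches Theorem 2.8.1 of Vershynin.
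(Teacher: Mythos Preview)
Your proof sketch is correct and follows the standard Chernoff--MGF approach; the paper itself does not prove this lemma but simply states it as a known preliminary (it is Theorem~2.8.1 in Vershynin's \emph{High-Dimensional Probability}, which you cite). There is nothing to compare here.
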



Although it is well-known that continuous Gaussian random variables are sub-Gaussian, we recall that discrete Gaussian random variables are also sub-Gaussian:
\begin{lemma}[Corollary 17 in \cite{CanonneKS20}]
\lemlab{lem:disc:subgauss}
Let $X\sim\calD(0,\sigma^2)$. 
Then there is an absolute constant $C>0$ such that $\|X\|_{\psi_2}\le C\sigma$. 
\end{lemma}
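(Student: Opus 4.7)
The plan is to establish the $\psi_2$ bound indirectly, by first proving the stronger fact that the moment generating function of $X$ is dominated by that of a continuous Gaussian with the same scale, and then invoking the classical equivalence between sub-Gaussian characterizations. Concretely, I would show that $\mathbb{E}[e^{\lambda X}] \le e^{\lambda^2 \sigma^2/4}$ for every $\lambda \in \mathbb{R}$, which is precisely the MGF bound satisfied by a continuous Gaussian with the same density profile $e^{-x^2/\sigma^2}$.

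To establish this MGF bound, I first write
\[
\mathbb{E}[e^{\lambda X}] \;=\; \frac{\sum_{k\in\mathbb{Z}} e^{-k^2/\sigma^2 + \lambda k}}{\sum_{k\in\mathbb{Z}} e^{-k^2/\sigma^2}},
\]
using the pmf of $\calD(0,\sigma^2)$ from \defref{def:discrete:Gaussian}. Completing the square in the numerator exponent gives a re-centered theta sum of the form $\sum_{k\in\mathbb{Z}} e^{-(k-\lambda\sigma^2/2)^2/\sigma^2}$, together with an overall prefactor of $e^{\lambda^2\sigma^2/4}$. The task then reduces to showing that shifting the center of a lattice Gaussian can only \emph{decrease} its total mass, i.e.\ that for every real $a$,
\[
\sum_{k\in\mathbb{Z}} e^{-(k-a)^2/\sigma^2} \;\le\; \sum_{k\in\mathbb{Z}} e^{-k^2/\sigma^2}.
\]
For this I would apply Poisson summation to the shifted Gaussian $x \mapsto e^{-(x-a)^2/\sigma^2}$, whose Fourier transform equals $\sigma\sqrt{\pi}\,e^{-\pi^2 \sigma^2 \xi^2}\,e^{-2\pi i a \xi}$. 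Pairing the $\pm k$ frequencies produces real $\cos(2\pi a k)$ factors that are pointwise at most $1$, with equality at $a=0$. This yields the desired comparison, and plugging it back in gives $\mathbb{E}[e^{\lambda X}] \le e^{\lambda^2\sigma^2/4}$.

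With the MGF bound in hand, the conclusion $\|X\|_{\psi_2} \le C\sigma$ follows from the standard equivalence between the MGF characterization $\mathbb{E}[e^{\lambda X}] \le e^{K^2\lambda^2/2}$ for mean-zero random variables and the Orlicz-type definition of $\|\cdot\|_{\psi_2}$ used in the paper, which costs only an absolute constant (see, e.g., Proposition 2.5.2 in \cite{Vershynin18}); indeed, one can take $K = \sigma/\sqrt{2}$, so $C$ is independent of $\sigma$. The main obstacle is the Poisson-summation step: one must correctly compute the Fourier transform of the shifted Gaussian and argue that the resulting cosine-weighted theta series is dominated by its value at $a=0$. Once this is set up, the argument does not need to split into cases based on the size of $\sigma$---the small-$\sigma$ and large-$\sigma$ regimes are treated uniformly, which is a notable advantage over more direct approaches that apply \factref{fact:normalization} to numerator and denominator separately and then struggle to close the bound in the intermediate regime $\sigma = \Theta(1)$.
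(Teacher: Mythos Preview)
Your argument is correct. The paper does not supply its own proof of this lemma---it is simply cited from \cite{CanonneKS20}---so there is no in-paper argument to compare against. The route you take (complete the square in the MGF, then use Poisson summation to show that the shifted lattice Gaussian mass $\sum_{k\in\mathbb{Z}} e^{-(k-a)^2/\sigma^2}$ is maximized at $a=0$) is in fact the standard proof of this fact in the lattice literature and is essentially what the cited reference does as well. One cosmetic remark: the density profile $e^{-k^2/\sigma^2}$ you use matches the formal \defref{def:discrete:Gaussian}, but differs by a factor of $2$ in the exponent from the convention used elsewhere in the paper (e.g., \factref{fact:normalization}); this affects only the value of the absolute constant $C$, not the validity of the argument.
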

We next observe that the singular values of a random rectangular discrete Gaussian matrix are well-concentrated. 
The proof generalizes Theorem 4.6.1 in \cite{Vershynin18} and is straightforward:
\begin{lemma}
\lemlab{lem:two:sided:oper}
Let $\bA\in\mathbb{R}^{m\times n}$ with entries that are independently drawn from $\calD(0,N^2)$. 
Then there exist absolute constants $C_1,C_2>0$, such that for any $t\ge0$ and any $i\in\min(n,m)$,
\[C_1N\sqrt{m}-C_2N(\sqrt{n}+t)\le\sigma_i\le C_1N\sqrt{m}+C_2N(\sqrt{n}+t),\]
with probability at least $1-2\exp(-t^2)$. 
In particular, the above inequality implies
\[C_1N\sqrt{m}-C_2N(\sqrt{n}+t)\le\|\bA\|_{op}\le C_1N\sqrt{m}+C_2N(\sqrt{n}+t).\]
\end{lemma}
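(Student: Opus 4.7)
The plan is to follow the standard proof of the two-sided concentration of singular values for sub-Gaussian matrices, namely Theorem 4.6.1 in \cite{Vershynin18}, and verify that each ingredient is available in the discrete Gaussian regime. Let $\tau^2 := \Var(A_{11})$. From \lemref{lem:disc:subgauss} we immediately have $\|A_{ij}\|_{\psi_2} \leq c_1 N$ and hence $\tau \leq c_2 N$; the complementary bound $\tau \geq c_3 N$ is obtained from a short direct computation of the second moment using \factref{fact:normalization} for $N$ above an absolute constant, so that $\tau = \Theta(N)$.

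First I would fix a unit vector $\bx \in S^{n-1}$ and analyze $\|\bA\bx\|_2^2 = \sum_{i=1}^m \langle \bA_i, \bx \rangle^2$. Each $\langle \bA_i, \bx \rangle$ is a sum of independent centered sub-Gaussians weighted by the $x_j$, so its sub-Gaussian norm is at most $c_4 N$ by the standard rotation estimate, and $\mathbb{E}[\langle \bA_i, \bx \rangle^2] = \tau^2$. Invoking \lemref{lem:sub:props}, the centered squares $Z_i := \langle \bA_i, \bx \rangle^2 - \tau^2$ are sub-exponential with $\|Z_i\|_{\psi_1} \leq c_5 N^2$. Bernstein's inequality \lemref{lem:bernstein:subexp} applied to $\sum_i Z_i$ then yields
\[\Pr\left[ \left|\|\bA\bx\|_2^2 - m\tau^2\right| \geq u \right] \leq 2 \exp\left(-c_6 \min\left( \frac{u^2}{m N^4}, \frac{u}{N^2} \right)\right).\]
Setting $u = c_7 N^2 \bigl(\sqrt{m}(\sqrt{n}+t) + (\sqrt{n}+t)^2\bigr)$ makes the right-hand side at most $2\exp(-c_8(n+t^2))$ for a constant $c_8$ exceeding $\log 9$.

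Next I would upgrade this fixed-$\bx$ bound to a uniform bound by taking a $1/4$-net $\calN \subset S^{n-1}$ of cardinality at most $9^n$ and union-bounding, which succeeds with probability at least $1 - 2\exp(-t^2)$. The standard net-to-sphere approximation argument (Lemma 4.4.1 of \cite{Vershynin18}) transfers the estimate to all of $S^{n-1}$, so $\|\bA\bx\|_2^2 = m\tau^2 \pm u$ uniformly on the sphere. Taking square roots and using $|\sqrt{a+b}-\sqrt{a}| \leq |b|/\sqrt{a}$ (together with the observation that when $\sqrt{n}+t = \Omega(\sqrt{m})$ the lower bound is vacuous) yields
\[\tau \sqrt{m} - c_9 N(\sqrt{n}+t) \leq \|\bA\bx\|_2 \leq \tau \sqrt{m} + c_9 N(\sqrt{n}+t)\]
for all $\bx \in S^{n-1}$. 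Since $\sigma_{\min}(\bA) \leq \sigma_i(\bA) \leq \sigma_{\max}(\bA) = \|\bA\|_{op}$ for all $i \leq \min(m, n)$, the claim for every $\sigma_i$ and for the operator norm follows by identifying $\tau/N$ with $C_1$ and $c_9$ with $C_2$; the case $m < n$ is handled by transposing $\bA$ or observing that the lower bound is automatically trivial there.

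The step I expect to require the most care is nailing down $\tau = \Theta(N)$: the upper bound is free from \lemref{lem:disc:subgauss}, but the lower bound requires an explicit computation of the second moment of $\calD(0,N^2)$ via \factref{fact:normalization} and needs $N$ above a universal constant, since otherwise the discrete Gaussian degenerates toward a point mass at $0$ and the claimed scaling fails. Everything else is a mechanical adaptation of Vershynin's argument, with \lemref{lem:disc:subgauss}, \lemref{lem:sub:props}, and \lemref{lem:bernstein:subexp} playing the roles of the continuous sub-Gaussian tools.
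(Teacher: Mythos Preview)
Your proposal is correct and follows essentially the same route as the paper's proof: both adapt Vershynin's Theorem 4.6.1 by fixing a unit vector, bounding $\langle \bA_i,\bx\rangle$ via \lemref{lem:disc:subgauss}, passing to sub-exponential centered squares via \lemref{lem:sub:props}, applying Bernstein (\lemref{lem:bernstein:subexp}), and finishing with a $\tfrac14$-net union bound. Your treatment is in fact slightly more careful in one place---the paper simply asserts $\mathbb{E}[X_i^2]=\gamma_1^2 N^2$ without comment, whereas you correctly flag that the lower bound $\tau\ge c_3 N$ requires $N$ above an absolute constant (otherwise $\calD(0,N^2)$ collapses to a point mass).
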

\begin{proof}
We will show
\[\left\|\frac{1}{m}\bA^\top\bA-N^2\cdot I_n\right\|_{op}\le C_1^2N^2\cdot\max(\tau,\tau^2),\]
where $\tau=C_2\cdot\left(\sqrt{\frac{n}{m}}+\frac{t}{\sqrt{m}}\right)$. 

Consider a unit vector $\bx\in\mathbb{R}^n$. 
For each $i\in[m]$, let $X_i=\langle A_i,\bx\rangle$. 
Then we have
\[\|\bA\bx\|_2^2=\sum_{i=1}^m\langle A_i,\bx\rangle^2=\sum_{i=1}^m X_i^2.\]
Since each entry of $A_i$ is independently drawn from $\calD(0,N^2)$, then we have $\Ex{X_i^2}=\gamma_1^2\cdot N^2$ for some absolute constant $\gamma_1>0$, as well as $\|X_i\|_{\psi_2}\le\gamma_2\cdot N$ by \lemref{lem:disc:subgauss}. 
Since $X_i$ is sub-Gaussian, then by \lemref{lem:sub:props}, we have that 
\[\|X_i^2-\gamma_1^2\cdot N^2\|_{\psi_1}\le\gamma_3\cdot N^2,\]
for some absolute constant $\gamma_3>0$. 
By Bernstein's inequality, c.f., \lemref{lem:bernstein:subexp}, there exists an absolute constant $\gamma_4>0$ such that
\begin{align*}
\PPr{\left\lvert\frac{1}{m}\|\bA\bx\|_2^2-\gamma_1^2\cdot N^2\right\rvert\ge\frac{t}{2}}&=\PPr{\left\lvert\frac{1}{m}\sum_{i=1}^m X_i^2-\gamma_1^2\cdot N^2\right\rvert\ge\frac{t}{2}}.
\end{align*}
In particular, if we set $t=N^2\cdot\max(\tau,\tau^2)$, then we have
\begin{align*}
\PPr{\left\lvert\frac{1}{m}\|\bA\bx\|_2^2-\gamma_1^2\cdot N^2\right\rvert\ge\frac{t}{2}}&\le 2\exp\left(-\gamma_4\cdot\min\left(\frac{t^2}{N^4},\frac{t}{N^2}\right)m\right)\\
&=2\exp\left(-\gamma_4\cdot\tau^2 m\right)\\
&\le2\exp\left(-\gamma_4\cdot C_2^2(n+t^2)\right),
\end{align*}
by the definition of $\tau$. 

Consider a $\frac{1}{4}$-net $\calN$ of the unit sphere $S^{n-1}$, so that $|\calN|\le\exp(\O{n})$. 
Because $\calN$ is a $\frac{1}{4}$-net of $S^{n-1}$, then
\[\left\|\frac{1}{m}\bA^\top\bA-N^2\cdot I_n\right\|_{op}\le 2\cdot\max_{\bx\in\calN}\left\lvert\left\langle\left(\frac{1}{m}\bA^\top\bA-N^2\cdot I_n\right)\bx,\bx\right\rangle\right\rvert=2\max_{\bx\in\calN}\left\lvert\frac{1}{m}\|\bA\bx\|_2^2-\gamma_1^2\cdot N^2\right\rvert.\]
Thus the desired claim follows from setting $C_2$ sufficiently large and then union bounding over all vectors $\bx\in\calN$. 
\end{proof}

In our analysis, we will make use of the following proposition about the $\chi^2$-divergence between a standard Gaussian vector $v \sim \calN(0,\sigma^2 I_n)$ and a standard discrete Gaussian mixture $\pi = v+ \mu$ for $v \sim \calD(0, \sigma^2 I_n)$ and unknown mean vector $\mu$.

\begin{proposition}[\cite{ingster2012nonparametric}, Proposition 2 in \cite{LiW16}]
\proplab{prop:tvd:chi}
$$\chi^2\left(\calN(0,I_n) * \mu || \calN(0,I_n) \right) \leq \mathbb{E}\left(e^{\langle z, z' \rangle/\sigma^2} \right) - 1$$ where $z, z' \sim \mu$ independently.
\end{proposition}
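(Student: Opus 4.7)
The plan is to expand the $\chi^2$-divergence directly from its definition and evaluate the resulting Gaussian integral. Let $\phi_\sigma(x) = (2\pi\sigma^2)^{-n/2} \exp(-\|x\|^2/(2\sigma^2))$ denote the density of $\calN(0, \sigma^2 I_n)$, and let $P(x) = \mathbb{E}_{z \sim \mu}[\phi_\sigma(x - z)]$ be the density of the convolution $\calN(0, \sigma^2 I_n) * \mu$. Writing $Q$ for $\calN(0, \sigma^2 I_n)$ and applying the definition of $\chi^2$ together with Tonelli's theorem to pull the independent double expectation over $z, z' \sim \mu$ outside of the integral, I obtain
\begin{equation*}
\chi^2(P \| Q) = \int \frac{P(x)^2}{\phi_\sigma(x)}\, dx - 1 = \mathbb{E}_{z, z'}\!\left[\int \frac{\phi_\sigma(x-z)\, \phi_\sigma(x-z')}{\phi_\sigma(x)} \, dx\right] - 1.
\end{equation*}

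Next I would simplify the integrand by completing the square in the exponent. Expanding the three squared norms and cancelling gives
\begin{equation*}
\|x - z\|^2 + \|x - z'\|^2 - \|x\|^2 = \|x - (z + z')\|^2 - 2\langle z, z'\rangle.
\end{equation*}
Dividing by $2\sigma^2$ and exponentiating then yields
\begin{equation*}
\frac{\phi_\sigma(x-z)\, \phi_\sigma(x-z')}{\phi_\sigma(x)} = \frac{1}{(2\pi\sigma^2)^{n/2}} \exp\!\left(\frac{\langle z, z'\rangle}{\sigma^2}\right) \exp\!\left(-\frac{\|x - (z+z')\|^2}{2\sigma^2}\right).
\end{equation*}
The second factor is precisely the density of $\calN(z + z', \sigma^2 I_n)$, and therefore integrates to $1$ over $x \in \mathbb{R}^n$. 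Hence the inner integral equals $\exp(\langle z, z'\rangle/\sigma^2)$, and substituting back produces
\begin{equation*}
\chi^2(P \| Q) = \mathbb{E}_{z, z'}\!\left[e^{\langle z, z'\rangle/\sigma^2}\right] - 1,
\end{equation*}
which matches the claimed bound (in fact, with equality).

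There is essentially no serious obstacle here; the proof reduces to a single Gaussian integral and the only place to be careful is the sign of the cross term from completing the square. Specifically, the $-2\langle x, z\rangle$ and $-2\langle x, z'\rangle$ terms combine into $-2\langle x, z + z'\rangle$, so that completing the square against $\|x\|^2$ produces $+\|z + z'\|^2 - \|z\|^2 - \|z'\|^2 = +2\langle z, z'\rangle$ inside the exponent, which is why the final expression is $e^{+\langle z, z'\rangle/\sigma^2}$ rather than its reciprocal. The upper bound rather than equality in the stated proposition simply accommodates the more general setting from \cite{ingster2012nonparametric}; in the form above the inequality is tight.
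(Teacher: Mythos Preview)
Your proof is correct; this is the standard direct computation of the $\chi^2$-divergence of a Gaussian location mixture against the base Gaussian, and the completing-the-square step is carried out accurately. The paper itself does not give a proof of this proposition, citing it instead from \cite{ingster2012nonparametric} and \cite{LiW16}, so there is no alternative argument to compare against.
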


\subsection{Lattice Theory}
We recall a number of preliminaries from lattice theory, including the definitions of a lattice, dual lattice, orthogonal lattice, smoothing parameter, and successive minima. 
\begin{definition}[Lattice]
    A lattice is a discrete additive subgroup of $\mathbb {R}^m$: for any basis of linearly independent vectors $\bB = \{\bb_1, \ldots, \bb_n\} \subset \mathbb{R}^m$ for $m \geq n \geq 1$, define the lattice $\mathcal{L}$ generated by basis $\bB$ by

    $$\mathcal{L} = \calL(\bB) = \left\{\bB \bz = \sum_{i = 1}^n z_i \bb_i \ : \ \bz \in \mathbb{Z}^n \right\} $$
\end{definition}

\begin{definition}[Successive minima]
For $i\in[n]$, the successive minima $\lambda_i(\calL)$ of a lattice $\calL$ is the smallest radius $r$ such that a ball of radius $r$ centered at the origin contains at least $i$ linearly independent lattice vectors. 
\end{definition}

\begin{definition}[Orthogonal lattice]
Given a lattice $\calL\subset\mathbb{Z}^n$, we define the orthogonal lattice $\calL^{\perp}$ by
\[\calL^{\perp}=\{\by\in\mathbb{Z}^n\,:\,\langle\by,\bx\rangle=0\,\ \ \forall\,\bx\in\calL\}.\]
\end{definition}

\begin{definition}[Dual lattice \& smoothing parameter]
    For any lattice $\calL$, let  $\calL^*$ denote the dual lattice of $\calL$, defined as $\calL = \{\bx \in \mathbb{R}^n : \langle \mathcal{L}, \bx \rangle \subseteq \mathbb{Z}\}$. Moreover, for any lattice $\calL$ and $\eps > 0$, the smoothing parameter $\eta_{\eps}(\calL)$ is the smallest real $s > 0$ such that $\rho_{1/s}(\calL^* \setminus \{ 0\}) \leq \eps$.
\end{definition}

For a full row-rank matrix $\bA\in\mathbb{Z}^{m\times n}$, a full row-rank matrix $\bM\in\mathbb{R}^{m\times n}$, and a vector $\bv\in\mathbb{R}^m$, we define 
\[\calE_{\bA,\bM,\bv}=\{\bA\cdot\bx:\bx\sim\calD_{\mathbb{Z}^n+\bv,\bM}\}.\]
Note that $\calE_{\bA,\bM,\bv}$ is defined by sampling a discrete Gaussian vector $\bx$ and then returning $\bA\bx$, which is a different process than drawing a discrete Gaussian from the support $\bA\mathbb{Z}^n$ with the appropriate covariance. 
However, the following theorem states that these two processes are point-wise close in distribution under certain conditions for the successive minima of $\bA$: 
\begin{theorem} \lemlab{discrete_gaussian_tvd}
[Lemma 4 in \cite{AgrawalGHS13}]
For any rank $n$ lattice $\calL$, $\eps\in(0,1)$, vector $\bv\in\mathbb{R}^n$ and full-rank matrix $\bS\in\mathbb{R}^{n\times n}$ such that $\sigma_n(\bS)\ge\eta_\eps(\calL)$,
\[\rho_{\bS}(\calL+\bv)\in\left[\frac{1-\eps}{1+\eps},1\right]\cdot\rho_{\bS}(\calL).\]
\end{theorem}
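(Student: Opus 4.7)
The plan is to prove this result by applying the Poisson summation formula to the (ellipsoidal) Gaussian function and exploiting the definition of the smoothing parameter to control the resulting sum over the dual lattice. First, I would reduce to the spherical case: since $\bS$ is full rank, the change of variables $\bx \mapsto \bS^{-1}\bx$ sends $\calL + \bv$ to $\bS^{-1}\calL + \bS^{-1}\bv$ and the ellipsoidal Gaussian $\rho_{\bS}$ to (a constant times) the spherical Gaussian $\rho_1$. Under this change of variables, the condition $\sigma_n(\bS) \geq \eta_\eps(\calL)$ translates to the transformed lattice $\calL' := \bS^{-1}\calL$ satisfying $\eta_\eps(\calL') \le 1$, which by definition means $\rho_1((\calL')^* \setminus \{0\}) \le \eps$.

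Next, I would apply the Poisson summation formula. For a Schwartz function $f$, $\sum_{\bx \in \calL' + \bv'} f(\bx) = \frac{1}{\det(\calL')}\sum_{\bw \in (\calL')^*} \widehat{f}(\bw)\, e^{2\pi i \langle \bv', \bw\rangle}$, and for the (appropriately normalized) Gaussian $\rho_1$ the Fourier transform is $\rho_1$ itself. This yields
\[
\rho_{\bS}(\calL + \bv) \;=\; \frac{C}{\det(\calL')}\Bigl[\,1 + \sum_{\bw \in (\calL')^* \setminus \{0\}} \rho_1(\bw)\, e^{2\pi i \langle \bS^{-1}\bv, \bw\rangle}\Bigr],
\]
for a normalization constant $C$ depending only on $\bS$ and the dimension, with the analogous expression for $\bv = 0$ in which the exponentials all equal $1$.

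Denote the dual-lattice tail sum by $S := \rho_1((\calL')^* \setminus \{0\})$ and the twisted version by $S_\bv := \sum_{\bw \neq 0} \rho_1(\bw)\, e^{2\pi i \langle \bS^{-1}\bv, \bw\rangle}$. The smoothing parameter condition gives $S \le \eps$, and by the triangle inequality $|S_\bv| \le S$, so $S_\bv$ is a real number (by pairing $\bw$ with $-\bw$ in the symmetric dual lattice) lying in $[-S, S]$. Taking the ratio, the common factor $C/\det(\calL')$ cancels and
\[
\frac{\rho_{\bS}(\calL + \bv)}{\rho_{\bS}(\calL)} \;=\; \frac{1 + S_\bv}{1 + S}.
\]
The upper bound of $1$ follows immediately from $S_\bv \le S$, and the lower bound follows from $S_\bv \ge -S \ge -\eps$ together with $1 + S \le 1 + \eps$, giving $\tfrac{1-\eps}{1+\eps}$ as required.

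The main obstacle I expect is not conceptual but bookkeeping: carefully tracking the normalization constants in the Fourier transform under the paper's convention $\bSigma = 2\bS^\top \bS$ in the definition of $\rho_\bS$ (rather than the standard $\bSigma = \bS^\top \bS$), and verifying that the smoothing parameter $\eta_\eps$ translates correctly under the change of variables $\bx \mapsto \bS^{-1}\bx$ so that $\sigma_n(\bS) \ge \eta_\eps(\calL)$ becomes exactly the statement $\rho_1((\bS^{-1}\calL)^* \setminus \{0\}) \le \eps$. Once these normalizations are lined up, the symmetry argument that $S_\bv$ is real and the triangle-inequality bound $|S_\bv| \le S$ immediately yield the stated two-sided bound.
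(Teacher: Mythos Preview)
The paper does not prove this statement; it is cited from the literature (Lemma~4 of \cite{AgrawalGHS13}) and stated without proof in the preliminaries section. So there is no paper proof to compare against.

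That said, your proposed approach is exactly the standard one used in the lattice literature to establish this result: Poisson summation on the shifted lattice, followed by separating the $\bw = 0$ term from the tail over $(\calL')^* \setminus \{0\}$, and bounding the tail via the smoothing-parameter hypothesis. Your observation that the twisted sum $S_\bv$ is real (by pairing $\bw$ with $-\bw$) and satisfies $|S_\bv| \le S \le \eps$ is the correct mechanism for the two-sided bound, and the ratio computation $\frac{1+S_\bv}{1+S}$ is right.

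Your identification of the only real hazard is also accurate: the paper's Gaussian normalization $\rho_s(\bx) = \exp(-\|\bx\|^2/(2s^2))$ differs from the more common lattice-crypto convention $\rho_s(\bx) = \exp(-\pi\|\bx\|^2/s^2)$, under which the Gaussian is exactly its own Fourier transform and the smoothing parameter is defined to make the Poisson tail precisely $\rho_{1/s}(\calL^* \setminus \{0\})$. Under the paper's convention the Fourier transform picks up a factor of $(2\pi)^{n/2}$ and a rescaling, so you will need to verify that the paper's definition of $\eta_\eps$ (via $\rho_{1/s}$ in their normalization) still lines up with the dual-lattice tail you obtain after Poisson summation. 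This is purely bookkeeping, as you say, but it is the one place the argument could silently go wrong.
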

We state the following result which is contained within the proof of Lemma 3.3 in \cite{AggarwalR16}, who show a total variation distance bound between $\calE_{\bA,\bS,\bv}$ and $\calD_{\bA\mathbb{Z}^n+\bA\bv,\bS\bA^\top}$ using the following stronger point-wise bound. 
\begin{theorem}
[Lemma 3.3 in \cite{AggarwalR16}]
\thmlab{thm:disc:tvd}
Let $\bA\in\mathbb{R}^{r\times n}$ be a full-rank matrix with $r<0.75n$ and let $\calL^\perp(\bA)$ be the orthogonal lattice to $\bA$. 
Let $\bv\in\mathbb{R}^n$ and $\bS\in\mathbb{R}^{n\times n}$ be any full-rank matrix such that the minimum singular value $\sigma_n(\bS)$ of $\bS$ satisfies
\[\sigma_n(\bS)>\lambda_{n - r}(\calL^\perp(\bA))\cdot\sqrt{\frac{\ln(2n(1+1/\eps))}{\pi}}.\]
For any vector $\bx\in\bA\mathbb{Z}^n+\bA\bv$, let $\rho_1(\bx)$ be the probability mass function of $\calE_{\bA,\bS,\bv}$ and let $\rho_2(\bx)$ be the probability mass function of $\calD_{\bA\mathbb{Z}^n+\bA\bv,\bS\bA^\top}$. 
Then we have 
\[\frac{1-\eps}{1+\eps}\le\frac{\rho_1(\bx)}{\rho_2(\bx)}\le\frac{1+\eps}{1-\eps}.\]
\end{theorem}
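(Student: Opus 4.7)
The plan is to pointwise compare $\rho_1$ and $\rho_2$ by fibering $\mathbb{Z}^n+\bv$ over $\bA\mathbb{Z}^n+\bA\bv$ via the map $\bA$, factoring the Gaussian mass on each fiber into a ``rowspan'' piece (which will match $\rho_{\bS\bA^\top}(\bx)$) and a ``kernel'' piece (which will be almost independent of the fiber), and then canceling the common kernel-piece factor between the numerator and the normalization. Fix any $\bx\in\bA\mathbb{Z}^n+\bA\bv$ and any representative $\bx_0\in\mathbb{Z}^n+\bv$ with $\bA\bx_0=\bx$; then the integer preimage of $\bx$ inside $\mathbb{Z}^n+\bv$ is exactly the coset $\bx_0+\calL^\perp(\bA)$, so by definition
\[
\rho_1(\bx)=\frac{\rho_{\bS}(\bx_0+\calL^\perp(\bA))}{\rho_{\bS}(\mathbb{Z}^n+\bv)},\qquad \rho_2(\bx)=\frac{\rho_{\bS\bA^\top}(\bx)}{\rho_{\bS\bA^\top}(\bA\mathbb{Z}^n+\bA\bv)}.
\]

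Setting $Q=(\bS^\top\bS)^{-1}$, I would decompose $\mathbb{R}^n=V\oplus W$ where $V=\ker(\bA)$ and $W$ is its $Q$-orthogonal complement, writing $\bx_0=\bx_0^V+\bx_0^W$ with $(\bx_0^V)^\top Q\bx_0^W=0$. The cross-term in the quadratic form then vanishes for every $\bz\in V$, giving the pointwise factorization
\[
\rho_{\bS}(\bx_0+\bz)=\rho_{\bS}(\bx_0^W)\cdot\widetilde\rho(\bx_0^V+\bz),
\]
where $\widetilde\rho$ denotes the unnormalized Gaussian on $V$ inherited from $\rho_{\bS}$. A direct linear-algebra calculation using the parametrization $\bx_0^W=Q^{-1}\bA^\top(\bA Q^{-1}\bA^\top)^{-1}\bx$ and the identity $\bA Q^{-1}\bA^\top=(\bS\bA^\top)^\top(\bS\bA^\top)$ then shows $\rho_{\bS}(\bx_0^W)=\rho_{\bS\bA^\top}(\bx)$; summing over $\bz\in\calL^\perp(\bA)\subseteq V$ gives $\rho_{\bS}(\bx_0+\calL^\perp(\bA))=\rho_{\bS\bA^\top}(\bx)\cdot\widetilde\rho(\bx_0^V+\calL^\perp(\bA))$.

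The central step is to show that $\widetilde\rho(\bx_0^V+\calL^\perp(\bA))$ is essentially independent of $\bx_0^V$. Expressing $\widetilde\rho$ in an orthonormal basis for $V$, its effective minimum singular value is at least $\sigma_n(\bS)$, which by hypothesis strictly exceeds $\lambda_{n-r}(\calL^\perp(\bA))\cdot\sqrt{\ln(2n(1+1/\eps))/\pi}$. The standard Micciancio--Regev bound $\eta_\eps(\calL)\le\lambda_{\dim\calL}(\calL)\cdot\sqrt{\ln(2n(1+1/\eps))/\pi}$ then implies that this also exceeds $\eta_\eps(\calL^\perp(\bA))$, so \lemref{discrete_gaussian_tvd} applied inside $V$ yields
\[
\widetilde\rho(\bx_0^V+\calL^\perp(\bA))\in\left[\tfrac{1-\eps}{1+\eps},\,1\right]\cdot\widetilde\rho(\calL^\perp(\bA)),
\]
with the bounds uniform in $\bx_0^V$.

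Finally, I would partition $\mathbb{Z}^n+\bv$ into cosets of $\calL^\perp(\bA)$ indexed by $\bx'\in\bA\mathbb{Z}^n+\bA\bv$ and apply the factorization coset-by-coset to the denominator to obtain
\[
\rho_{\bS}(\mathbb{Z}^n+\bv)\in\left[\tfrac{1-\eps}{1+\eps},\,1\right]\cdot\widetilde\rho(\calL^\perp(\bA))\cdot\rho_{\bS\bA^\top}(\bA\mathbb{Z}^n+\bA\bv).
\]
The common factor $\widetilde\rho(\calL^\perp(\bA))$ cancels between numerator and denominator of $\rho_1(\bx)/\rho_2(\bx)$, and compounding the $(1\pm\eps)$ slack gives the claimed sandwich $(1-\eps)/(1+\eps)\le\rho_1(\bx)/\rho_2(\bx)\le(1+\eps)/(1-\eps)$. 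I expect the main obstacle to be the linear-algebra bookkeeping behind the identification $\rho_{\bS}(\bx_0^W)=\rho_{\bS\bA^\top}(\bx)$: because the $Q$-orthogonal decomposition is skew relative to the standard basis, one must carefully track how $\bx^\top Q\bx$ restricts to $W$ and transforms under the isomorphism $\bA|_W:W\to\mathbb{R}^r$; once that identification is pinned down, the rest is a clean application of \lemref{discrete_gaussian_tvd} on the subspace $V$.
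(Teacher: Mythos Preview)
The paper does not give its own proof of this statement: it is imported as a black box from \cite{AggarwalR16} (the text immediately before the theorem says the pointwise bound is ``contained within the proof of Lemma 3.3'' there). So there is no in-paper argument to compare against.

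That said, your outline is correct and is exactly the standard argument used in \cite{AggarwalR16} and \cite{AgrawalGHS13}: fiber $\mathbb{Z}^n+\bv$ over $\bA\mathbb{Z}^n+\bA\bv$, use the $Q$-orthogonal splitting $V\oplus W$ with $V=\ker(\bA)$ so the cross term vanishes, identify the $W$-piece with $\rho_{\bS\bA^\top}(\bx)$ via the Schur-complement computation you sketched, bound the $V$-piece uniformly using $\sigma_n(\bS)>\eta_\eps(\calL^\perp(\bA))$ (which follows from the Micciancio--Regev bound $\eta_\eps(\calL)\le\lambda_{\dim\calL}(\calL)\sqrt{\ln(2n(1+1/\eps))/\pi}$ combined with the hypothesis), and cancel. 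Your check that the restricted Gaussian on $V$ has effective minimum width at least $\sigma_n(\bS)$ is right, since $\bz^\top Q\bz\le\lambda_{\max}(Q)$ for any unit $\bz\in V$. The linear-algebra identity $\rho_{\bS}(\bx_0^W)=\rho_{\bS\bA^\top}(\bx)$ you flag as the main bookkeeping point is verified by the calculation $(\bx_0^W)^\top Q\bx_0^W=\bx^\top(\bA Q^{-1}\bA^\top)^{-1}\bx$ with $\bA Q^{-1}\bA^\top=(\bS\bA^\top)^\top(\bS\bA^\top)$ up to the factor of $2$ in the paper's convention $\bSigma=2\bS^\top\bS$, which is absorbed consistently on both sides.
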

We also recall Siegel's lemma, which bounds the entries of vectors in the kernel of a matrix $\bA$. 
\begin{lemma}[Siegel's lemma]
\lemlab{lem:siegel}
\cite{silverman2000diophantine}
Let $\bA\in\mathbb{Z}^{r\times n}$ be a nonzero integer matrix with $r<n$ and entries bounded by $M$. 
Then there exists a nonzero vector $\bx\in\mathbb{Z}^n$ of integers bounded by $(nM)^{r/(n-r)}$ such that $\bA\bx=0^r$. 
\end{lemma}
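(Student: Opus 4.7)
The plan is to prove Siegel's lemma by a standard pigeonhole/counting argument. Set $N = \lceil (nM)^{r/(n-r)} \rceil$ and consider the hypercube of integer vectors $\calS = \{0, 1, \ldots, N\}^n$. The strategy is to show the image $\bA(\calS)$ has strictly fewer points than $|\calS|$, forcing two distinct $\bx, \by \in \calS$ to collide under $\bA$; their difference $\bz = \bx - \by$ is then a nonzero integer vector with $\bA\bz = 0^r$ and $\|\bz\|_\infty \le N$, which is exactly the required bound.

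First I would count the domain: $|\calS| = (N+1)^n$. Then I would bound the image by writing, for each row $i$, the positive-part sum $P_i = \sum_{j : A_{ij} > 0} A_{ij}$ and negative-part sum $Q_i = \sum_{j : A_{ij} < 0} |A_{ij}|$, both of which are at most $nM$ and together satisfy $P_i + Q_i \le nM$. For $\bx \in \calS$, the $i$-th coordinate $(\bA\bx)_i$ lies in the integer interval $[-Q_i N, P_i N]$, which has at most $(P_i + Q_i) N + 1 \le nMN + 1$ elements. Therefore $|\bA(\calS)| \le (nMN + 1)^r$.

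The pigeonhole step then asks for $(N+1)^n > (nMN+1)^r$. Since $nMN + 1 \le nM(N+1)$, it suffices to verify $(N+1)^{n-r} > (nM)^r$, i.e.\ $N + 1 > (nM)^{r/(n-r)}$, which holds by the choice of $N$. Thus two distinct vectors in $\calS$ must have the same image, and subtracting them yields the desired $\bx$.

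I do not expect a genuine obstacle here: the argument is entirely elementary combinatorics, and the only thing to be careful about is the slightly tighter bound $(nMN + 1)^r$ rather than the looser $(2nMN + 1)^r$, since the looser bound would cost an extra factor of $2^{r/(n-r)}$ and could in principle violate the stated bound when $r$ is close to $n$. Splitting into positive and negative entries of each row, as above, cleanly avoids this issue and delivers the stated constant $(nM)^{r/(n-r)}$ exactly.
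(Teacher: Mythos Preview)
Your proof is correct and is the standard pigeonhole argument for Siegel's lemma. The paper does not supply its own proof of this statement: it is quoted as a known result with a citation to \cite{silverman2000diophantine}, so there is nothing to compare against beyond noting that your argument is exactly the classical one found in that reference. One tiny cosmetic point: take $N = \lfloor (nM)^{r/(n-r)} \rfloor$ rather than the ceiling, since then $\|\bz\|_\infty \le N \le (nM)^{r/(n-r)}$ matches the stated bound exactly, and the inequality $N+1 > (nM)^{r/(n-r)}$ still holds.
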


\section{Lifting Framework}

In this section, we present our lifting framework and give our \thmref{thm:lifting}.  We first give the following matrix pre-processing lemma, which shows that given a matrix $\bA \in \mathbb{Z}^{r \times n}$, we can add at most $3r$ integer vectors to the rows of $\bA$ to form a new matrix $\bA'$ which satisfies $\lambda_{n - r}(\calL^\perp(\bA'))\le \sqrt{nM^2}$. 


\begin{lemma}[Matrix pre-processing lemma] 
\lemlab{lem:preprocessing}
Let $\bA\in\mathbb{Z}^{r\times n}$ for $r\le 0.25 n$ be a full rank integer matrix with entries bounded by $M$ for some $M \ge n$. 
There exists a pre-processing that adds rows to $\bA$ to form an  matrix $\bA'\in\mathbb{Z}^{m\times n}$ for $m\le 4r$, such that $\lambda_{n - r}(\calL^\perp(\bA'))\le \ell_{\bA}$, where $\ell_{\bA} = \sqrt{nM^2}$. 
\end{lemma}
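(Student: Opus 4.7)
The plan is to realize the two-phase strategy sketched in the technical overview. In Phase~1 I would prove that $\calL^\perp(\bA)$ contains at least $n-4r$ linearly independent integer vectors, each of length at most $\sqrt{nM^2}$, by a probabilistic/birthday-paradox argument carried out inductively. In Phase~2 I would invoke Siegel's lemma (\lemref{lem:siegel}) to extend $\bA$ by $3r$ additional integer rows chosen so that the short vectors from Phase~1 remain in the orthogonal lattice of the extended sketch.

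For Phase~1 I would build the set of short vectors greedily. Suppose we have already found $t<n-4r$ such linearly independent vectors $\bw_1,\ldots,\bw_t\in\calL^\perp(\bA)$ and let $\bB\in\mathbb{R}^{(n-t)\times n}$ have rows spanning the orthogonal complement of $\mathrm{span}(\bw_1,\ldots,\bw_t)$. Draw $s=\Theta(M^{1.5r})$ independent uniform vectors $\bv^1,\ldots,\bv^s\in\{0,\ldots,M-1\}^n$. Since $\bA\bv^i\in\{-nM^2,\ldots,nM^2\}^r$ takes at most $M^{3r}$ distinct values (using $M\ge n$), the birthday paradox yields a collision $\bA\bv^i=\bA\bv^j$ with high constant probability. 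On the other hand, using the reduced row echelon form of $\bB$ (which has $n-t$ pivot columns), for any fixed pair the probability that $\bB(\bv^i-\bv^j)=0$ is at most $(1/M)^{n-t}$ since each pivot coordinate of $\bv^i-\bv^j$ takes any specific value with probability at most $1/M$. Union-bounding over the $\binom{s}{2}=\O{M^{3r}}$ pairs gives a failure probability of $\O{M^{3r}/M^{n-t}}$, which is $o(1)$ as long as $t\le n-4r$. Hence we find a collision pair with $\bB(\bv^i-\bv^j)\ne 0$, and $\bw_{t+1}=\bv^i-\bv^j$ is a new integer vector in $\calL^\perp(\bA)$, linearly independent from $\bw_1,\ldots,\bw_t$, with $\|\bw_{t+1}\|_2\le\sqrt{n(M-1)^2}\le\sqrt{nM^2}$. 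Iterating until $t=n-4r$ produces the desired family.

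For Phase~2 I would apply Siegel's lemma iteratively to find $3r$ additional integer vectors $\bz_1,\ldots,\bz_{3r}$ that are orthogonal both to the rows of $\bA$ and to the $n-4r$ short vectors. At step $k$, the matrix consisting of $\bA$'s $r$ rows, the $\bw_j$'s, and $\bz_1,\ldots,\bz_{k-1}$ has $n-4r+r+(k-1)=n-3r+k-1<n$ rows for $k\le 3r$, so \lemref{lem:siegel} supplies a nonzero integer vector $\bz_k$ in its kernel; by construction $\bz_k$ is linearly independent from the previous ones. Defining $\bA'\in\mathbb{Z}^{4r\times n}$ to be $\bA$ augmented by $\bz_1,\ldots,\bz_{3r}$ yields $m=4r\le 4r$. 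Since each $\bw_j$ is orthogonal to every row of $\bA'$, the vectors $\bw_1,\ldots,\bw_{n-4r}$ lie in $\calL^\perp(\bA')$, which has rank $n-m$; as they are $n-4r$ linearly independent lattice vectors of length $\le\sqrt{nM^2}$, this proves $\lambda_{n-m}(\calL^\perp(\bA'))\le\sqrt{nM^2}=\ell_{\bA}$, as required.

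The only delicate step is Phase~1: one must calibrate the sample size $s$ so that the birthday-paradox lower bound on collisions beats the union-bound upper bound on $\bB$-kernel hits, and this is precisely where the $4r$ slack in the lemma (versus a naive $3r$) is needed. Phase~2 is then essentially mechanical given Siegel's lemma, because we do not need any length control on $\bz_1,\ldots,\bz_{3r}$: their role is purely to constrain $\calL^\perp(\bA')$ so that the already-short $\bw_j$ vectors serve as the short basis witnessing the successive-minimum bound.
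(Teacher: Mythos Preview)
Your proposal is correct and follows essentially the same approach as the paper: a birthday-paradox argument with $s=\Theta(M^{1.5r})$ random vectors in $\{0,\ldots,M-1\}^n$ to find $n-4r$ short linearly independent vectors in $\calL^\perp(\bA)$, followed by iterated applications of Siegel's lemma to produce the $3r$ extra rows appended to $\bA$. You also correctly identify the role of the $4r$ slack (balancing the collision lower bound against the union bound over pairs for the $\bB$-kernel event) and that no length control is needed on the Siegel vectors $\bz_k$.
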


\begin{proof}
    We first show that we can find $n - 4r$ integer vectors with entries bounded by $M$ in the kernel space of $\bA$. We use the probabilistic method to construct the vector sets sequentially. Suppose that we have chosen $t \le  n - 4r$ linear independent vectors in the kernel of $\bA$. Let $\bB \in \mathbb{R}^{(n - t) \times n}$ denote the matrix whose rows form a basis of the orthogonal complement to the span of these $t$ vectors. Without loss of generality, we can have $\bB$ contain $n - t$ columns to form a diagonal matrix with non-zero diagonal entries as we can choose the row span of $\bB$ arbitrarily.

    Next, we consider the following procedure where we randomly pick $\Theta(M^{1.5r})$ random integer vectors $\bv^1, \bv^2, \dots, \bv^s$ with entries in $\{0, 1, \ldots, M - 1\}$. Let $\mathcal{E}_1$ denote the event that there exists $1 \le i < j \le s$, such that $\bA \bv^i = \bA \bv^j$. Note that since each coordinate of $\bA \bv^i$ is in the range of $\{-nM^2, \ldots, nM^2\}$, which means that they are at most $M^{3r}$ possible values of $\bA \bv^i$. Since we randomly sample $\Theta(M^{1.5r})$ vectors $\bv^i$, by birthday paradox we have $\Pr [\mathcal{E}_1] \ge 0.9$. 

    Let $\bx = \bv^i - \bv^j$ be a random integer vector with coordinates in $\{-(M - 1), \ldots, M - 1\}$. Since we have $\bB$ contain $n - t$ columns to form a diagonal matrix with non-zero diagonal entries, we can get that $\Pr [\bB \bx = \mathbf{0}] \le \left(\frac{1}{M}\right)^{n - t}$ (as when the remaining $t$ coordinates are fixed, there will be unique assignment for these $n - t$ coordinates to make $\bB \bx = 0$).
    Let $\mathcal{E}_2$ denote the event that for every $1 \le i < j \le s$, we have $\bB \bv^i \ne \bB \bv^j$. Since $n - t \ge 4r $, after taking a union bound we have $\Pr [\mathcal{E}_2 ] \ge 0.9$.

    From the above discussion, after taking a union bound, we have $\Pr [\mathcal{E}_1 \cap \mathcal{E}_2] \ge 0.8$. This means that there exists $\by = \bx^i - \bx^j$ such that(1) $\bA \by = \mathbf{0}$,  and (2)$\bB \by \ne \mathbf{0}$. Recall that $\bB$ is the matrix whose rows form a basis of the orthogonal complement to span of these $t$ vectors we have chosen. Hence from~(2) we have $\by$ is linearly independent to the $t$ vectors we have chosen so far, and hence we can choose $\by$ as the $(t + 1)$-th integer vectors in the kernel of $\bA$. 

    Now after we have chosen $n - 4r$ such linear independent integer vectors with each coordinate in $\{-(M - 1), \ldots, M - 1\}$. Let $V$ be the subspace spanned by these vectors. Then we have the dimension of $\mathrm{ker}(\bA) \setminus V$ is $3r$. Therefore, we can iteratively apply Siegel’s Lemma to generate a set of 
$3r$ vectors as a basis of $\mathrm{ker}(\bA) \setminus V$ and add them to the rows of $\bA$ to form a new matrix $\bA' \in \mathbb{Z}^{4r \times n}$. Note that the integer vectors we choose are still orthogonal to the rows of $\bA'$, which means that $\lambda_{\max}(\calL^\perp(\bA'))\le \sqrt{nM^2}$.  
\end{proof}

\begin{lemma}
\lemlab{lem:main:disc:to:cont}
Let $C>0$ be any fixed constant and let $\bS\in\mathbb{R}^{n\times n}$ be any full rank matrix. 
Let $\bA\in\mathbb{R}^{r\times n}$ be an orthonormal matrix with $r\le n$, such that:
\begin{enumerate}
\item 
There exists a matrix $\bM\in\mathbb{R}^{r\times n}$ with rational entries and the same rowspan as $\bA$. 
\item
$\lambda_{\max}(\calL^\bot(\bA))\le\frac{\sigma_n(\bS)}{10C\log n}$.
\item
$n^{6C}\ge\sigma_1(\bS)\ge\sigma_n(\bS)\ge n^{5C+3}$.
\end{enumerate}
Suppose $\bx\sim\calD(0,\bS^\top\bS)$ and $\bfEta$ be drawn uniformly at random from the fundamental parallelepiped of $\calL(\bA)$. 
Let $p$ be the probability density function of $\bA\bx+\bfEta$ and let $q$ be the probability density function of $\calN(0,\bA\bS^\top\bS\bA^\top)$. 
Then with probability $1-\frac{1}{\poly(n)}$ for $\bv=\bA\bx\in\mathbb{R}^r$ over $\bx\sim\calD(0,\bS^\top\bS)$, we have
\[\frac{p(\bv)}{q(\bv)}\in\left[1-\frac{1}{n^C},1+\frac{1}{n^C}\right].\]
\end{lemma}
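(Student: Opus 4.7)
My plan is to decompose the ratio as $p(\bv)/q(\bv) = \alpha(\bv)\cdot\beta(\bv)$, where each factor is handled by a separate discrete-Gaussian tool already in the paper. Since $\calF = \prod_i[0,\ell_i)$ contains the origin, at a lattice point $\bv \in \bA\mathbb{Z}^n$ the density satisfies $p(\bv) = \PPr{\bA\bx = \bv}/\mathrm{Vol}(\calF)$. Setting
\[
\alpha(\bv) = \frac{\PPr{\bA\bx = \bv}}{\calD_{\bA\mathbb{Z}^n,\bS\bA^\top}(\bv)}, \qquad \beta(\bv) = \frac{\calD_{\bA\mathbb{Z}^n,\bS\bA^\top}(\bv)}{q(\bv)\cdot\mathrm{Vol}(\calF)},
\]
one gets $p(\bv)/q(\bv) = \alpha(\bv)\,\beta(\bv)$, so it suffices to bound each factor within $1 \pm n^{-C-1}$.

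For $\alpha(\bv)$, I would apply \thmref{thm:disc:tvd} with $\eps = \Theta(n^{-C-2})$, which compares exactly $\calE_{\bA,\bS,0}$ (the distribution of $\bA\bx$) against the discrete Gaussian on $\bA\mathbb{Z}^n$. Its hypothesis $\sigma_n(\bS) > \lambda_{n-r}(\calL^\perp(\bA))\sqrt{\ln(2n(1+1/\eps))/\pi}$ is implied by condition (2) of the lemma, since $\lambda_{n-r}(\calL^\perp(\bA)) \le \lambda_{\max}(\calL^\perp(\bA)) \le \sigma_n(\bS)/(10C\log n)$ and $10C\log n$ dominates $\sqrt{\ln(2n(1+1/\eps))/\pi}$ for large $n$. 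This gives a \emph{pointwise} bound valid at every $\bv \in \bA\mathbb{Z}^n$.

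For $\beta(\bv)$, unwinding the definitions collapses it to $\beta(\bv) = (2\pi)^{r/2}\det(\bA\bS^\top\bS\bA^\top)^{1/2}\bigl/\bigl(\mathrm{Vol}(\calF)\cdot\rho_{\bS\bA^\top}(\bA\mathbb{Z}^n)\bigr)$, which is \emph{independent of $\bv$}. To show this ratio is close to $1$ I would use a Riemann-sum identity: provided $\calF$ tiles $\mathbb{R}^r$ under $\bA\mathbb{Z}^n$-translations (so $\mathrm{Vol}(\calF) = \det(\bA\mathbb{Z}^n)$),
\[
(2\pi)^{r/2}\det(\bA\bS^\top\bS\bA^\top)^{1/2} \;=\; \int_{\mathbb{R}^r}\rho_{\bS\bA^\top}(\bx)\,d\bx \;=\; \int_{\calF}\rho_{\bS\bA^\top}(\bA\mathbb{Z}^n + \by)\,d\by.
\]
Applying \lemref{discrete_gaussian_tvd} to the lattice $\bA\mathbb{Z}^n$ yields $\rho_{\bS\bA^\top}(\bA\mathbb{Z}^n + \by) \in [1 - n^{-C-1}, 1]\cdot\rho_{\bS\bA^\top}(\bA\mathbb{Z}^n)$ uniformly in $\by \in \calF$, and integrating then forces $\mathrm{Vol}(\calF)\cdot\rho_{\bS\bA^\top}(\bA\mathbb{Z}^n)$ to lie in $[1,\,1+O(n^{-C-1})]\cdot(2\pi)^{r/2}\det(\bA\bS^\top\bS\bA^\top)^{1/2}$, as required.

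The main obstacle I foresee is applying \lemref{discrete_gaussian_tvd} to $\bA\mathbb{Z}^n$: our preprocessing from \lemref{lem:preprocessing} controls $\lambda_{\max}(\calL^\perp(\bA))$, the orthogonal lattice inside $\mathbb{Z}^n$, but the smoothing parameter needed above is that of the \emph{image} lattice $\bA\mathbb{Z}^n$ inside $\mathbb{R}^r$. Bridging the two requires a duality argument that converts the short integer basis for $\calL^\perp(\bA)$ (together with the rational rowspan basis $\bM$ guaranteed by hypothesis~(1)) into control of $\eta_\eps(\bA\mathbb{Z}^n)$; the same hypothesis is what I would use to ensure the axis-aligned $\calF$ genuinely tiles $\mathbb{R}^r$ with $\mathrm{Vol}(\calF) = \det(\bA\mathbb{Z}^n)$, so that the Riemann-sum identity is exact. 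The ``with probability $1 - 1/\poly(n)$'' qualifier then absorbs any residual slack from atypical $\bv$ in the tails of $\calD(0,\bS^\top\bS)$, which is controlled by the sub-Gaussian bound \lemref{lem:disc:subgauss} applied to $\|\bv\| \le \|\bx\| = \O{\sigma_1(\bS)\sqrt{n\log n}}$; the upper bound $\sigma_1(\bS) \le n^{6C}$ from hypothesis~(3) is what keeps this failure event polynomially small.
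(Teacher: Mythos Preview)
Your decomposition $p/q=\alpha\cdot\beta$ mirrors the paper's $p/q=(p/p')\cdot(p'/q)$: both invoke \thmref{thm:disc:tvd} for the first factor, and both reduce the second to the single normalization constant $Z'/(Z\cdot\mathrm{Vol}(\calF))$ with $Z'=(2\pi)^{r/2}\det(\bA\bS^\top\bS\bA^\top)^{1/2}$ and $Z=\rho_{\bS\bA^\top}(\bA\mathbb{Z}^n)$.

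The obstacle you flag is not real. You do not need any duality between $\calL^\perp(\bA)$ and $\bA\mathbb{Z}^n$: since $\bA$ has orthonormal rows, every column has Euclidean norm at most $1$, and any $r$ linearly independent columns witness $\lambda_r(\bA\mathbb{Z}^n)\le 1$. The standard Micciancio--Regev bound $\eta_\eps(\calL)\le\sqrt{\ln(2r(1+1/\eps))/\pi}\cdot\lambda_r(\calL)$ then gives $\eta_\eps(\bA\mathbb{Z}^n)=\O{\sqrt{\log n}}\ll\sigma_n(\bS)\le\sigma_r(\bS\bA^\top)$, so \lemref{discrete_gaussian_tvd} applies to the image lattice directly. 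Hypothesis~(2) is needed only for $\alpha$; the $\beta$ factor needs only that the cell is small, which is orthonormality alone. One caution: the axis-aligned box $\prod_i[0,\ell_i)$ quoted from the overview need not be a fundamental domain of $\bA\mathbb{Z}^n$; take $\calF$ to be a genuine fundamental parallelepiped (e.g., spanned by $r$ independent columns), whose diameter is then at most $r$.

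The paper handles $\beta$ by the same underlying fact but more elementarily: rather than invoking $\eta_\eps(\bA\mathbb{Z}^n)$, it bounds the cell diameter by $r\sqrt r$ and shows directly that the quadratic form $\bv^\top(\bA\bS^\top\bS\bA^\top)^{-1}\bv$ varies by at most $n^{-4C}$ across a cell, using $\|\bv\|_2\le 2n^{6C+1}$ (which holds with high probability over $\bx$) together with $\sigma_n(\bS)^2\ge n^{10C+6}$. That is where hypothesis~(3) and the ``with probability $1-1/\poly(n)$'' qualifier enter in the paper. Your Riemann-sum route is in one respect cleaner: once $\eta_\eps(\bA\mathbb{Z}^n)$ is controlled as above, both $\alpha$ and $\beta$ hold \emph{pointwise} at every lattice point, so the high-probability caveat becomes unnecessary. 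Conversely, the paper's passage from ``$q$ is nearly constant on cells and $p'$ is exactly constant on cells'' to ``$p'/q\approx 1$'' implicitly relies on the very Riemann-sum normalization you write out, which you have made explicit.
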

\begin{proof}
Let $p'$ be the probability density function of $\by+\eta$, where $\by\sim\calD_{\bA\mathbb{Z}^n,\bS\bA^\top}$. 
Under the assumption that $\lambda_{\max}(\calL^\bot(\bA))\le\frac{\sigma_n(\bS)}{10C\log n}$, we apply \thmref{thm:disc:tvd}, so that
\[\frac{p(\bv)}{p'(\bv)}\in\left[1-\frac{1}{n^{2C}},1+\frac{1}{n^{2C}}\right]\]
for all vectors $\bv\in\mathbb{R}^r$. 

We first claim that the unit cell is well-defined, i.e., that there exists a fundamental parallelepiped in the resulting lattice. 
Suppose that there exists a matrix $\bM\in\mathbb{R}^{r\times n}$ with rational entries and the same rowspan as $\bA$, so that the fundamental parallelepiped of $\bM$ is well-defined and non-zero. 
Now we can write the projection matrix of $\bM$ as $\bM^\top(\bM\bM^\top)^{-1}\bM$, which has the same row span as $\bM$. 
Furthermore, because $\bM$ only has rational entries, then $\bM^\top(\bM\bM^\top)^{-1}\bM$ only has rational entries. 
We write $\bU\bU^\top=\bM^\top(\bM\bM^\top)^{-1}\bM$, so that $\bU$ is orthonormal and has the same rowspan as $\bM$. 
Therefore, since the entries of $\bx$ are integers, then the nonzero entries of $\bU\bU^\top\bx$ is lower bounded by some rational number and thus $\bU$ has a fundamental parallelepiped. 
Since $\bA$ is a rotation of $\bU$, then it follows that $\bA$ also has a fundamental parallelepiped, and so the unit cell of $\bA$ is well-defined. 

Now, consider the largest possible distance between vectors $\bv$ and $\bv'$ in the same unit cell of $\bA$. 
Let $A^{(i_1)},\ldots,A^{(i_r)}$ be $r$ linearly independent columns of $A$ and let $\calC$ be the unit cell defined by these $r$ columns. 
Note that any additional columns of $\bA$ can only decrease the size of the unit cell. 
Hence, the length of the diagonal of $\calC$ is an upper bound on the largest possible distance between $\bv$ and $\bv'$. 
Therefore, the length of the diagonal is at most $\|A^{(i_1)}\|_2+\ldots+\|A^{(i_r)}\|_2\le r\sqrt{r}$ since all the rows of $\bA$ are orthonormal. 

Hence, for all vectors $\bv$ and $\bv'$ in a unit cell, we can write $\bv'=\bv+\bw$ for a vector $\bw$ with $\|\bw\|_2\le r\sqrt{r}$. 
Thus,
\begin{align*}
\bv'(\bA\bS^\top\bS\bA^\top)^{-1}(\bv')^\top&=\bv'\bA(\bS^\top\bS)^{-1}\bA^\top(\bv')^\top\\
&=\left(\bv+\bw\right)\bA(\bS^\top\bS)^{-1}\bA^\top\left(\bv+\bw\right)^\top\\
&=\bv\bA(\bS^\top\bS)^{-1}\bA^\top\bv^\top+2\bv\bA(\bS^\top\bS)^{-1}\bA^\top\bw^\top+\bw\bA(\bS^\top\bS)^{-1}\bA^\top\bw^\top.
\end{align*}
Moreover, for $\bx\sim\calD(0,\bS^\top\bS)$, we have $\|\bx\|_2=\O{n^{6C+1}}$ with high probability since $\sigma_1(\bS)\le n^{6C}$. 
Since $\bv=\bA\bx$ and $\bA$ has orthonormal rows, then we also have $\|\bv\|_2=2n^{6C+1}$ with high probability. 
We also have $\|\bw\|_2\le r\sqrt{r}$ and $\sigma_n(\bS)\ge n^{5C+3}$. 
Therefore,
\[\left\lvert\bv'(\bA\bS^\top\bS\bA^\top)^{-1}(\bv')^\top-\bv\bA(\bS^\top\bS)^{-1}\bA^\top\bv^\top\right\rvert\le 6r\sqrt{r}n^{6C+1}\cdot\frac{1}{n^{10C+6}}\le\frac{1}{n^{4C}},\]
for sufficiently large $n$. 
Hence, we have
\[\frac{\rho_{\bA^\top}(\bv)}{\rho_{\bA^\top}(\bv')}\in\left[1-\frac{1}{n^{2C}},1+\frac{1}{n^{2C}}\right].\]
On the other hand, all vectors $\bv$ and $\bv'$ in a unit cell have the same probability density under $p'$. 
Therefore, we have 
\[\frac{p'(\bv)}{q(\bv)}\in\left[1-\frac{1}{n^{2C}},1+\frac{1}{n^{2C}}\right]\]
for all vectors $\bv\in\mathbb{R}^r$. 
Hence, it follows that
\[\frac{p(\bv)}{q(\bv)}\in\left[1-\frac{1}{n^C},1+\frac{1}{n^C}\right].\]
\end{proof}
We define the following notion of smoothness to capture functions that, informally, are not sensitive under small perturbations. 
\begin{definition}[Smoothness]
Let $\calI$ denote an arbitrary distribution on $\mathbb{Z}^n$. 
A function $f:\mathbb{R}^n \to S$ where $S$ is an finite set in $\mathbb{Z}$ is $\delta$-smooth with respect to a distribution $\calD(\mu,\bSigma)+\calI$ if 
\[
\underset{\substack{X\sim(\calD(\mu,\bSigma)+\calI),\,Y\sim(\calN(\mu,\bSigma)+\calI)}}\Pr\left[f(X) \ne f(Y)\right]\le\delta.
\]
\end{definition}
For example, some function that exhibits drastically different behavior on the integers versus the real-numbers, e.g., $f(x)=1$ if $x\in\mathbb{Z}$ and $f(x)=0$ if $x\in\mathbb{R}\setminus\mathbb{Z}$ is not smooth. 
These functions are difficult to capture by our framework because a small perturbation in the fundamental parallelepiped within the cell of a lattice of the sketch matrix can result in a completely different value after post-processing. 

We are now ready to prove our main theorem in this section.
\begin{restatable}{theorem}{thmlifting}
\thmlab{thm:lifting}
Let $\calI$ denote an arbitrary distribution on $\mathbb{Z}^n$ and let $\delta\ge\frac{1}{\poly(n)}$ be a failure parameter. 
Let $f$ be a $\frac{\delta}{3}$-smooth function on a distribution $\calD(0,\bS^\top\bS)+\calI$ for a full-rank covariance matrix $\bS^\top\bS\in\mathbb{R}^{n\times n}$ and suppose $f$ admits an orthonormal linear sketch $\bA\in\mathbb{R}^{r\times n}$ with $r\le n$ and an arbitrary post-processing function $g$, such that:
\begin{enumerate}
\item 
There exists a matrix $\bM\in\mathbb{R}^{r\times n}$ with rational entries and the same rowspan as $\bA$. 
\item
$g(\bA\bx)=f(\bx)$ with probability at least $1-\frac{\delta}{3}$ over $\bx\sim\calD(0,\bS^\top\bS)+\calI$ 
\item
$\lambda_{\max}(\calL^\bot(\bA))\le\frac{\sigma_n(\bS)}{10C\log n}$.
\item
$n^{6C}\ge\sigma_1(\bS)\ge\sigma_n(\bS)\ge n^{5C+3}$.
\end{enumerate}
There there exists a post-processing function $h$ with another sketching matrix $\bA' \in \mathbb{R}^{4r \times n} $  such that with probability at least $1-\delta$ over $\bx\sim\calN(0,\bSigma)+\calI$, we have $h(\bA'\bx)=f(\bx)$. 
\end{restatable}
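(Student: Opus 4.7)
The plan is to construct the lifted post-processing $h$ by reducing the continuous problem to its discrete analog via lattice rounding. I set $\bA' = \bA$, interpreting the $4r$-row bound on $\bA'$ as an upper bound since the given $r$-row sketch already satisfies the pre-processing condition (3) on $\lambda_{\max}(\calL^\perp(\bA))$. On an input $\bz = \bg + \by$ with $\bg \sim \calN(0, \bS^\top\bS)$ and $\by \sim \calI$, the algorithm observes $\bA\bz = \bA\bg + \bA\by$. Using the fundamental parallelepiped $\calF$ of $\calL(\bA)$, it decomposes $\bA\bz = \bA(\bw + \by) + \bfEta_{\bg}$ with $\bfEta_{\bg} \in \calF$, where $\bA\bw$ is the ``integer part'' of $\bA\bg$, and returns $h(\bA\bz) := g(\bA(\bw + \by))$. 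The rational-basis hypothesis (1), together with the argument inside the proof of \lemref{lem:main:disc:to:cont} about well-definedness of the parallelepiped, ensures $\calF$ is defined, so this operation is meaningful.

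The first step of the analysis invokes \lemref{lem:main:disc:to:cont}, whose hypotheses (1), (3), (4) match ours. It gives a point-wise comparison $p(\bv)/q(\bv) \in [1 - 1/n^C,\, 1 + 1/n^C]$, with probability $1 - 1/\poly(n)$ over the relevant support, between the density $q$ of $\bA\bg \sim \calN(0, \bA\bS^\top\bS\bA^\top)$ and the density $p$ of $\bA\bx + \bfEta$ for $\bx \sim \calD(0, \bS^\top\bS)$ and $\bfEta$ uniform in $\calF$. Integrating this ratio over each cell of $\calL(\bA)$ translates into a total variation distance bound of $\O{1/n^C}$ between the distribution of the integer part $\bw$ of $\bA\bg$ and the distribution $\calD(0, \bS^\top\bS)$ of $\bx$. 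Adding the independent integer shift $\by \sim \calI$ preserves the bound by the data-processing inequality, so $\bw + \by$ is within TVD $\O{1/n^C}$ of $\calD(0, \bS^\top\bS) + \calI$.

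The second step combines this TVD bound with the correctness hypothesis (2): since $g(\bA\bu) = f(\bu)$ with probability at least $1 - \delta/3$ for $\bu \sim \calD(0, \bS^\top\bS) + \calI$, we conclude $g(\bA(\bw + \by)) = f(\bw + \by)$ with probability at least $1 - \delta/3 - \O{1/n^C} \ge 1 - \delta/2$. The third step invokes the $\delta/3$-smoothness of $f$ to pass from $f(\bw + \by)$ to $f(\bg + \by) = f(\bz)$: under the coupling in which $\bw$ is the integer part of $\bg$, the pair $(\bw + \by,\, \bg + \by)$ has marginals (approximately) $\calD(0,\bS^\top\bS) + \calI$ and $\calN(0,\bS^\top\bS) + \calI$, so $f(\bw + \by) = f(\bz)$ with probability at least $1 - \delta/3$. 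A union bound over these two failure events yields $h(\bA'\bz) = f(\bz)$ with probability at least $1 - \delta$, as required.

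The step I expect to be the main obstacle is applying the smoothness hypothesis under the rounding coupling rather than the independent sampling apparently suggested by the written definition. The natural reading is that smoothness controls the total variation distance between the output distributions $f(X)$ and $f(Y)$, equivalently supplying a coupling with disagreement probability at most $\delta/3$; under this reading the rounding coupling slots in directly. Once this is settled, combining the density comparison from \lemref{lem:main:disc:to:cont} with the discrete correctness and a union bound closes the proof.
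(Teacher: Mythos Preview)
Your proposal is correct and follows essentially the same approach as the paper: define $h$ by rounding $\bA\bz$ to the lattice $\calL(\bA)$ and applying $g$, invoke \lemref{lem:main:disc:to:cont} for the TVD comparison, use hypothesis (2) for discrete correctness, and then the smoothness of $f$ to pass from the discrete to the continuous value, finishing with a union bound. The paper takes $\bA' = \bA$ as well (the $4r$ in the statement reflects the separate pre-processing step), and your identification of the smoothness-coupling subtlety is a point the paper glosses over; your resolution via TVD of the output laws is sound, and in fact the independent-sampling reading of the definition forces a common modal value for $f$, which makes the step go through under any coupling.
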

\begin{proof}
Let $\bSigma=\bS^\top\bS$ be the covariance matrix. 
Let $\phi:\mathbb{R}^r\to\mathbb{R}^r$ map any vector $\bv\in\mathbb{R}^r$ to the lattice point in the unit cell of $\bv$ induced by $\calL(\bA)$. 
Given $\bx\sim\calN(0,\bSigma)+\calI$, consider the post-processing function $h(\bA\bx)=g(\phi(\bA\bx))$. 
We show that $h$ is correct with probability at least $1-\delta$ over $\bx\sim\calN(0,\bSigma)+\calI$. 
The latter two assumptions satisfy the conditions of \lemref{lem:main:disc:to:cont}. 
Therefore, we have that $\bA\bx$ for $\bx\sim\calN(0,\bSigma)$ has total variation distance $\frac{1}{\poly(n)}$ with the distribution that first draws $\bx\sim\calD(0,\bSigma)$ and then computes $\bA\bx+\bfEta$ for $\bfEta$ uniformly at random from the fundamental parallelepiped of $\calL(\bA)$. 
Since $\calI$ only has support on the lattice points, then it follows that $\bA\bx$ for $\bx\sim(\calN(0,\bSigma)+\calI)$ also has total variation distance $\frac{1}{\poly(n)}$ with the distribution $\overline{\calD}$ that first draws $\bx\sim(\calD(0,\bSigma)+\calI)$ and then computes $\bA\bx+\bfEta$ for $\bfEta$ uniformly at random from the fundamental parallelepiped of $\calL(\bA)$. 

Observe that the distribution of $h(\bA\bx)$ for $\bx\sim\overline{\calD}$ is the same as the distribution of $g(\bA\bx)$ for $\bx\sim\calD(0,\bSigma)+\calI$. 
By assumption, $g(\bA\bx)$ is correct with probability $1 - \frac{\delta}{3}$ over $\bx\sim\calD(0,\bSigma)+\calI$. 
Moreover, because $f$ is $\frac{\delta}{3}$-smooth on a distribution $\calD(0,\bSigma)+\calI$, then $\bx$ being drawn from $\bx\sim(\calN(0,\bSigma)+\calI)$ rather than $\bx\sim(\calD(0,\bSigma)+\calI)$ only affects the outcome with probability $\frac{\delta}{3}$. 
Hence, $h(\bA\bx)=f(\bx)$ for $\bx\sim(\calN(0,\bSigma)+\calI)$ with probability at least $1-\frac{\delta}{3}-\frac{1}{\poly(n)}-\frac{\delta}{3}\ge 1-\delta$. 
\end{proof}

We also show the following useful statement, which informally shows that the image of a rounded continuous Gaussian is close in distribution to the image of a discrete Gaussian. 
\begin{restatable}{lemma}{lemdistroundclose}
\lemlab{thm:dist:round:close}
Let $\bS^\top\bS\in\mathbb{R}^{n\times n}$ be a full-rank covariance matrix and suppose $\bA\in\mathbb{R}^{r\times n}$ with $r\le n$ such that:
\begin{enumerate}
\item 
There exists a matrix $\bM\in\mathbb{R}^{r\times n}$ with rational entries and the same rowspan as $\bA$. 
\item
$\lambda_{\max}(\calL^\bot(\bA))\le\frac{\sigma_n(\bS)}{10C\log n}$.
\item
$n^{6C}\ge\sigma_1(\bS)\ge\sigma_n(\bS)\ge n^{5C+3}$.
\end{enumerate}
Let $\calD_1$ be the distribution of $\bA\bx$, where $\bx\sim\calD(0,\bS^\top\bS)$ and let $\calD_2$ be the distribution of $\by$, where $\by$ is obtained by rounding $\bA\bx$ for a sample $\bx\sim\calN(0,\bS^\top\bS)$ to the lattice point of its unit cell in $\bA\cdot\mathbb{Z}^n$. 
Then $\TVD(\calD_1,\calD_2)\le\frac{1}{n^C}$. 
\end{restatable}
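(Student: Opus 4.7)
The plan is to realize both $\calD_1$ and $\calD_2$ as deterministic pushforwards, under the same rounding map $\phi:\mathbb{R}^r\to\mathbb{R}^r$ that sends each point to the lattice point of its unit cell, of two continuous distributions on $\mathbb{R}^r$ whose densities are already pointwise close by \lemref{lem:main:disc:to:cont}; the conclusion then follows from the data processing inequality for total variation distance. First I would assume without loss of generality that $\bA$ has orthonormal rows, by composing with an invertible change-of-basis $\bR$ so that $\bR\bA$ is orthonormal. The map $\bR$ sends $\calL(\bA)$ to $\calL(\bR\bA)$ cell-for-cell and hence commutes with $\phi$, it preserves total variation distance, and the three hypotheses depend only on the rowspan of $\bA$ and on the singular values of $\bS$, so they carry over. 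This places us in the exact setting of \lemref{lem:main:disc:to:cont}.

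Next let $p$ denote the density on $\mathbb{R}^r$ of the distribution obtained by drawing $\bx\sim\calD(0,\bS^\top\bS)$ and outputting $\bA\bx+\bfEta$ for $\bfEta$ uniform on the fundamental parallelepiped $\calF$ of $\calL(\bA)$; by construction $p$ is piecewise constant on translates of $\calF$. Let $q$ denote the density of $\calN(0,\bA\bS^\top\bS\bA^\top)$, which is the law of $\bA\bg$ for $\bg\sim\calN(0,\bS^\top\bS)$. Because adding $\bfEta$ has no effect after $\phi$, the pushforward of $p$ under $\phi$ is exactly $\calD_1$, and by definition the pushforward of $q$ under $\phi$ is exactly $\calD_2$. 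Hence $\TVD(\calD_1,\calD_2)\le\TVD(p,q)$, and it suffices to control $\TVD(p,q)$.

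To bound $\TVD(p,q)$, I would apply \lemref{lem:main:disc:to:cont} with a slightly enlarged constant $C'>C$ (absorbed into the hypotheses). Its proof in fact establishes the pointwise bound $p(\bv)/q(\bv)\in[1-n^{-C'},1+n^{-C'}]$ deterministically on the norm ball $G=\{\bv:\|\bv\|_2\le n^{6C+1}\}$. By sub-Gaussian concentration for discrete Gaussians (\lemref{lem:disc:subgauss}) combined with $\sigma_1(\bS)\le n^{6C}$, the $p$-mass of $G^c$ is at most $1/\poly(n)$, and the standard Gaussian tail bound (again using $\sigma_1(\bS)\le n^{6C}$) gives the same for the $q$-mass. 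Splitting the $L^1$ integral, $\int_G |p-q|\le n^{-C'}\int_G q\le n^{-C'}$ and $\int_{G^c}|p-q|\le\int_{G^c}p+\int_{G^c}q\le 2/\poly(n)$, so $\TVD(p,q)\le 1/n^C$, as desired.

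The main technical care is in this last step: the ``high probability over $\bv\sim\calD_1$'' conclusion stated in \lemref{lem:main:disc:to:cont} must be read as a deterministic pointwise bound on a tail-bounded region, and that region must simultaneously have large mass under both $p$ and $q$. This works because the only place the proof of the lemma needed a high-probability assumption was the event $\{\|\bv\|_2\le n^{6C+1}\}$, which has mass $1-1/\poly(n)$ under any sub-Gaussian distribution whose covariance scale is controlled by $\sigma_1(\bS)\le n^{6C}$, including both $p$ and the continuous Gaussian $q$.
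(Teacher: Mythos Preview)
Your proposal is correct. The underlying mathematics coincides with the paper's proof (both ultimately rest on \thmref{thm:disc:tvd} and the within-cell density-ratio estimate from the proof of \lemref{lem:main:disc:to:cont}), but your organization is different and a bit cleaner. The paper works directly at the level of probability mass functions on the lattice $\bA\mathbb{Z}^n$: it introduces the intermediate pmf $p'$ of $\calD_{\bA\mathbb{Z}^n,\bS\bA^\top}$, bounds $p/p'$ pointwise via \thmref{thm:disc:tvd}, and then bounds $p'/q$ via the same cell argument as in \lemref{lem:main:disc:to:cont}, essentially reproving that lemma in the discrete setting. You instead lift both $\calD_1$ and $\calD_2$ to continuous densities on $\mathbb{R}^r$ (by adding uniform noise over the fundamental cell on one side, and by not rounding on the other), invoke \lemref{lem:main:disc:to:cont} as a black box to get a pointwise density ratio bound on a norm-ball of mass $1-1/\poly(n)$ under both distributions, and then push forward by the rounding map $\phi$ using the data processing inequality. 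This modular reduction buys you a shorter argument with no duplicated estimates; the paper's version is slightly more self-contained but repeats work. Your handling of the ``with high probability'' caveat is also exactly right: the only probabilistic event in the proof of \lemref{lem:main:disc:to:cont} is the norm bound $\|\bv\|_2\le\O{n^{6C+1}}$, which, as you note, has mass $1-1/\poly(n)$ under both $p$ and $q$ by sub-Gaussian tails with scale governed by $\sigma_1(\bS)\le n^{6C}$.
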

\begin{proof}
First we note that by the same argument as the proof of \lemref{lem:main:disc:to:cont}, the unit cell of $\bA$ is well-defined, due to the first assumption. As before, we may assume that $\bA$ has orthonormal rows, without loss of generality.

For a vector $\bv$ in the integer rowspan of $\bA$, let $p(\bv)$ be the probability mass function of $\calD_1$ and $q(\bv)$ be the probability mass function of $\calD_2$. 
Let $\bSigma=\bS^\top\bS$ be the covariance matrix and let $\phi:\mathbb{R}^r\to\mathbb{R}^r$ map any vector $\bv\in\mathbb{R}^r$ to the lattice point in the unit cell of $\bv$ induced by $\calL(\bA)$. 
Let $p'$ be the probability density function of $\by$, where $\by\sim\calD_{\bA\mathbb{Z}^n,\bS\bA^\top}$. 
Under the assumption that $\lambda_{\max}(\calL^\bot(\bA))\le\frac{\sigma_n(\bS)}{10C\log n}$, we apply \thmref{thm:disc:tvd}, so that
\[\frac{p(\bv)}{p'(\bv)}\in\left[1-\frac{1}{n^{2C}},1+\frac{1}{n^{2C}}\right]\]
for all vectors $\bv\in\mathbb{R}^r$. 

Note that by rotational invariance of Gaussian distributions, we have that $\bA\bx\sim\calN(0,\bA\bA^\top)$. 
Now by the same argument as the proof of \lemref{lem:main:disc:to:cont}, we have that for $\bv=\bA\bx$, then with high probability over $\bx\sim\calD(0,\Sigma)$
\[\frac{\rho_{\bA^\top}(\bv)}{\rho_{\bA^\top}(\bv')}\in\left[1-\frac{1}{n^{2C}},1+\frac{1}{n^{2C}}\right],\]
for any two vectors $\bv,\bv'$ in the same unit cell of $\bA$. 
Therefore, it follows that
\[\frac{p'(\bv)}{q(\bv)}\in\left[1-\frac{1}{n^{2C}},1+\frac{1}{n^{2C}}\right]\]
and thus $\TVD(\calD_1,\calD_2)\le\frac{1}{n^C}$. 
\end{proof}

\section{Adversarially Robust Streaming: \texorpdfstring{$L_p$}{Lp} Estimation using Linear Sketches}

We consider the problem  of adversarially robust $\ell_p$ estimation for $p \geq 1$. In particular, \cite{HardtW13} gave a real-valued adaptive attack against linear sketches for $\ell_p$ estimation that crucially relied on the adaptive sequence of inputs being drawn from continuous Gaussian distributions. 
In this section, we show an attack against linear sketches in the streaming model, where all integer query vectors are drawn from appropriate discrete Gaussian distributions which, with probability $ 1 - \exp(-cn)$, have entries bounded by $\poly(n)$.

\subsection{Preliminaries for Adversarial Attack}

\paragraph{Notation.}
Recall that a linear sketch is given by a distribution $\mathcal{M}$ over $r \times n$ matrices, together with an estimator $F:\mathbb{R}^{r \times n} \times \mathbb{R}^{n} \rightarrow \{0,1\}$, which may be arbitrary. 
Then, the streaming algorithm $\calA$ initially samples a matrix $\bA \sim \calM$, and for each query $\bx \in \mathbb{Z}^n$, $\calA$ returns $F(\bA, \bA \bx)$. 
Without loss of generality, we assume that $\bA$ has full row rank, and let $R(\bA)$ denote the rowspace of $\bA$. For convenience, we will identify the sketching matrix $\bA$ with its rowspace, which is an $r$-dimensional subspace of $\mathbb{R}^n$. 
Therefore, we can express the sketch as a mapping $f:\mathbb{Z}^n \rightarrow \{0,1\}$ such that $f(\bx) = f(P_{\bA}\bx)$, where $P_{\bU}$ denotes the orthogonal projection operator onto the space $\bU$. 
We construct our attack for the following promise problem.

\begin{definition}[$\textrm{GapNorm}(B, \alpha)$ promise problem] For any $B \geq 8$ and $\alpha = \poly(n)$ that are fixed in advance, we say an algorithm $\mathcal{A}$ solves $\textrm{GapNorm}(B, \alpha)$ if the following holds: for any input vector $\bx \in \mathbb{Z}^n$, $\mathcal{A}$ returns $1$ if $\|\bx\|_2^2 \geq \alpha \cdot B$, and returns $0$ if $\|\bx\|_2^2 \leq \alpha$. Note that if $\|\bx\|_2^2 \in (\alpha, \alpha \cdot B)$, $\mathcal{A}$ may return either $0$ or $1$.
\end{definition}

For convenience in the analysis of our discrete attack, we define the family of distributions $G(V^{\perp}, \sigma^2)$ which is used for the real-valued attack of \cite{HardtW13}.

\begin{definition}[Distribution $G(V^{\perp}, \sigma^2)$ \cite{HardtW13}]
Given a subspace $V\subseteq\bA$ of dimension $t\le r-1$, let $d=r-t$. 
We define the distribution $G(V^\perp,\sigma^2)$ implicitly as the distribution induced by first independently sampling $\bg_1\sim \calN(0,\frac{3\sigma^2}{4})^n$, $\bg_2\sim \calN\left(0,\frac{\sigma^2}{4}\right)^n$ and then setting $\bg=P_{V^\perp}\bg_1+\bg_2$. 
    
\end{definition}

Now, we define the family of discrete Gaussian distributions from which queries are drawn in the attack.

\begin{definition}[Distribution $D(V^{\perp}, \sigma^2)$]
Given a subspace $V\subseteq\bA$ of dimension $t\le r-1$, let $d=r-t$. 
We define the distribution $D(V^\perp,\sigma^2)$ implicitly as $\calD_{\mathbb{Z}^n}(0, \Sigma_{\sigma^2}))$, where $\Sigma_{\sigma^2}$ is the covariance matrix from $G(V^{\perp}, \sigma^2)$, i.e. $\Sigma_{\sigma^2} = \frac{3\sigma^2}{4}P^\top_{V^\perp}P_{V^\perp}+\frac{\sigma^2}{4}\cdot I_n$.
\end{definition}

\subsection{Discrete Conditional Expectation Lemma}

In this part, we prove a discrete version of the Conditional Expectation Lemma of \cite{HardtW13}. First, we define the following notion to intuitively capture the desired behavior of an estimator $f$ on inputs drawn from the distribution $D(V^\perp,\sigma^2)$. We state this definition generally for any distribution $\calD$ with parameter $\sigma^2$: in later parts, this $\calD$ may be $D(V^\perp,\sigma^2)$ for our discrete attack or $G(V^\perp,\sigma^2)$ from the continuous attack. 

We first formally define ``correctness'' of a function $f$ with respect to the GapNorm promise problem. 
\begin{definition}[Correctness]
    A function $f:\bA\to\{0,1\}$ is $(\delta,\alpha, B)$-correct on $V^\perp$ for inputs drawn from a distribution family $\calD$ with $B \geq 4$ and $d=\dim(V^\perp\cap \bA)$ if:
\begin{enumerate}
\item
For all $\sigma^2\in[\alpha \cdot B/2,2\alpha\cdot B]$ and $\bg\sim \calD$, we have $\PPr{f(\bg)=1}\ge 1-\delta$.
\item
For all $\sigma^2\in[\alpha,2\alpha]$ and $\bg\sim \calD$, we have $\PPr{f(\bg)=1}\le\delta$.
\end{enumerate}

\end{definition}

We next recall the Conditional Expectation Lemma of \cite{HardtW13}, which notes that there exist certain directions $\bu$ that are more aligned with a sketch matrix $\bA$, thus resulting in a small bias in the expected value of $\langle\bu,\bg\rangle^2$ for a random Gaussian $\bg$, provided that $f(\bg)=1$. 
\begin{lemma}[Conditional Expectation Lemma \cite{HardtW13}]\lemlab{lem:cond:exp}
    Let $\bA \subseteq \mathbb{R}^n$ be a subspace of dimension $\dim(\bA) = r \leq n - d_0$ for sufficiently large constant $d_0$. Let $V \subseteq \bA$ be a subspace of $\bA$ of dimension $t \leq r$. Suppose that $f: \bA \rightarrow \{0,1\}$ is $\left(\frac{1}{5(d_0B)^2}, \alpha, B\right)$-correct on $V^{\perp}$ for inputs drawn from $G(V^{\perp}, \sigma^2)$. Then, there exists a scalar $\sigma^2 \in [\alpha, B\alpha]$ and a vector $\bu \in A \cap V^{\perp}$ such that for $\bg \sim G(V^{\perp}, \sigma^2)$, we have that for $d = \max(r-d, d_0)$:

    \begin{enumerate}
        \item $\mathbb{E}\left[\langle \bu, \bg\rangle^2\,\mid\,f(\bg) = 1 \right] \geq \mathbb{E}\left[\langle \bu, \bg \rangle^2 \right] + \frac{\alpha}{10 B d}$
        \item $\Pr\left[f(\bg) = 1 \right] \geq \frac{1}{40 B^2 d}$.
    \end{enumerate}
\end{lemma}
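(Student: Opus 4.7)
My plan is to follow the Hardt--Woodruff strategy for the continuous case, which this lemma re-states directly. The first step is to decompose $\mathbb{R}^n$ into three mutually orthogonal subspaces: $V$ (dimension $t$), $W := \bA \cap V^\perp$ (dimension $r-t$), and $U := (\bA + V)^\perp$. Since $\bg_1$ and $\bg_2$ are independent and $\bg = P_{V^\perp}\bg_1 + \bg_2$, one checks that $P_V\bg \sim \calN(0,(\sigma^2/4) I_V)$, $P_W\bg \sim \calN(0,\sigma^2 I_W)$, $P_U\bg \sim \calN(0,\sigma^2 I_U)$, and the three blocks are jointly independent. The key consequence is that $f$ factors through $P_\bA\bg = P_V\bg + P_W\bg$, so the $U$-block is irrelevant to the event $\{f(\bg)=1\}$, and the entire problem reduces to a $(t+d)$-dimensional Gaussian with a $1{:}4$ variance gap between $V$ and $W$.

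The second step would establish both conclusions. Let $\phi(\sigma^2) := \Pr_{\bg \sim G(V^\perp,\sigma^2)}[f(\bg)=1]$. By $(\delta,\alpha,B)$-correctness with $\delta = 1/(5(d_0 B)^2)$, $\phi$ is at most $\delta$ on $[\alpha,2\alpha]$ and at least $1-\delta$ on $[B\alpha/2,2B\alpha]$. Subdividing $[\alpha,B\alpha]$ into $O(B)$ arithmetic scales of width $\Theta(\alpha)$, a pigeonhole locates some $\sigma_\star^2$ at which $\phi$ rises by $\Omega(1/B)$ across a single scale and simultaneously $\phi(\sigma_\star^2) \geq 1/(40B^2 d)$, which is (2). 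To upgrade this into the conditional bias (1), I use that the joint log-density of $(P_V\bg,P_W\bg)$ has derivative in $\log\sigma^2$ equal (up to constants) to $2\|P_V\bg\|^2/\sigma^2 + \|P_W\bg\|^2/\sigma^2$ minus a normalization, so
\[
\frac{\mathrm{d}\phi}{\mathrm{d}\log\sigma^2} \;=\; \mathrm{Cov}\!\left(\mathbf{1}[f(\bg)=1],\; \tfrac{2}{\sigma^2}\|P_V\bg\|^2 + \tfrac{1}{\sigma^2}\|P_W\bg\|^2\right).
\]
A positive value of $\phi'$ thus forces a positive covariance between $\mathbf{1}[f(\bg)=1]$ and $\|P_W\bg\|^2$, which after rearrangement gives $\mathbb{E}[\|P_W\bg\|^2 \mid f(\bg)=1] - \mathbb{E}[\|P_W\bg\|^2] \geq \sigma_\star^2/(10B)$. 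Writing $\|P_W\bg\|^2 = \sum_{i=1}^{\dim W} \langle \bu_i,\bg\rangle^2$ for any orthonormal basis of $W$ and pigeonholing over $i$ then picks out a single $\bu$ with conditional excess at least $\alpha/(10Bd)$, which is (1).

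The main obstacle will be executing the covariance step cleanly, namely ruling out that the derivative of $\phi$ is instead explained by the $\|P_V\bg\|^2$ term. This is exactly where the $1{:}4$ variance ratio hard-coded into $G(V^\perp,\sigma^2)$ earns its keep: the $V$-component has variance $\sigma^2/4$ and total mass $t\sigma^2/4$, which is dominated by the $W$-component's mass of $d\sigma^2$ once $d \geq d_0$ is large enough, so any norm-based discrimination by $f$ must be driven by $W$. A secondary bookkeeping issue is that the pigeonhole scale $\sigma_\star^2$ must simultaneously realize the probability lower bound in (2) \emph{and} the conditional gap in (1); this is handled by choosing the scale at which the discrete derivative of $\phi$ is maximized, since that scale automatically has $\phi(\sigma_\star^2) = \Omega(1/B)$ and carries the bulk of the growth in $\phi$ across the entire interval.
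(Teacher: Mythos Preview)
The paper does not prove this lemma; it quotes it from \cite{HardtW13} and invokes it as a black box inside the proof of the Discrete Conditional Expectation Lemma. So there is no ``paper's proof'' to compare against --- the benchmark is the original Hardt--Woodruff argument.

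Your score-function identity is correct up to constants (the $W$-coefficient should be $\tfrac{1}{2\sigma^2}$, not $\tfrac{1}{\sigma^2}$), but the step ``a positive value of $\phi'$ thus forces a positive covariance between $\mathbf{1}[f(\bg)=1]$ and $\|P_W\bg\|^2$'' is where the argument breaks. The derivative controls only the covariance with the \emph{combination} $4\|P_V\bg\|^2 + \|P_W\bg\|^2$; you need the covariance with $\|P_W\bg\|^2$ alone, because the conclusion demands $\bu \in \bA\cap V^\perp = W$.

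Your ``total mass'' fix does not repair this. Expectations do not bound covariances, and in the regime where the lemma is actually used --- late rounds of the attack, where $t=\dim V$ is close to $r$ and $d=r-t$ is as small as $d_0$ --- the $V$-block carries far \emph{more} mass than $W$, not less. Concretely, Cauchy--Schwarz gives $|\mathrm{Cov}(\mathbf{1}[f=1],\|P_V\bg\|^2)| \le \sqrt{\phi}\cdot\sqrt{2t}\,\sigma^2/4$, and with $\phi\ge 1/(40B^2d)$, $\sigma^2\le B\alpha$, and $t\approx r$ this can swamp the target $\alpha/(10Bd)$ by an arbitrary factor once $r$ is large relative to $d_0$.

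The Hardt--Woodruff proof takes a different route. As the present paper remarks in its technical overview, the crucial step in \cite{HardtW13} is that, by spherical symmetry of the continuous Gaussian, $\|P_\bA\bg\|^2$ serves as an (approximate) sufficient statistic for $\sigma^2$; this lets one reduce to an $f$ that depends on $P_W\bg$ only through $\|P_W\bg\|$, sidestepping the $V$-versus-$W$ coupling that a one-parameter differentiation cannot disentangle. If you want to salvage the score-function route, you would need a separate mechanism --- for instance, a two-parameter family varying the $V$- and $W$-variances independently, with correctness constraints that pin down each --- before the pigeonhole can land on a direction in $W$.
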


Let $\Sigma$ denote the covariance of the distribution family $D(V^{\perp}, \sigma^2)$. Note that $\Sigma$ depends on parameter $\sigma^2$ implicitly.
We show an analog of the Conditional Expectation Lemma (\lemref{lem:cond:exp}) when the input $\bx$ is instead drawn from a discrete Gaussian. 
We remark that such a statement was previously a significant challenge and illustrates the power of \lemref{thm:dist:round:close}, since the continuous analog in \cite{HardtW13} crucially used the rotational invariance of continuous Gaussians. 

\begin{lemma}[Discrete Conditional Expectation Lemma]\lemlab{lem:disc_cond:exp}

Let $\bA$ be a subspace of dimension $\dim(\bA)=r\le n-d_0$ for some sufficiently large constant $d_0$.
Let $V\subseteq\bA$ be a subspace of $\bA$ of dimension $t\le r$. Furthermore, suppose $\sigma_n(\Sigma) \geq \lambda_{\max}(\calL^{\perp}(\bA)^2 \cdot \frac{\ln(2n(1+1/\eps))}{\pi}$.
Suppose $f:\bA\to\{0,1\}$ is $\left(\frac{1}{10B^2d_0^2} + \frac{1}{\poly(n)},\alpha, B\right)$-correct on $V^\perp$. 
Then, there exists a scalar $\sigma^2\in\left[\alpha, \alpha B\right]$ and a vector $\bu\in\bA\cap V^\perp$ such that for $\bx\sim D(V^\perp,\sigma^2)$, we have that for $d=\max(r-d_0,d_0)$:
\begin{enumerate}
\item
$\Ex{\langle\bu,\bx\rangle^2\,\mid\,f(\bx)=1}\ge\Ex{\langle\bu,\bx\rangle^2}+\frac{\alpha}{4Bd}$
\item
$\PPr{f(\bx)=1}\ge\frac{1}{40B^2 d}$.
\end{enumerate}
\end{lemma}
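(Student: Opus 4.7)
The plan is to prove the Discrete Conditional Expectation Lemma by reduction to the continuous version \lemref{lem:cond:exp}, using the ``rounded Gaussian'' device already flagged in the technical overview. Suppose toward contradiction that for every $\sigma^2 \in [\alpha, \alpha B]$ and every $\bu \in \bA \cap V^\perp$, the conclusion fails. From the integer sketch $f$ I would build a real-valued companion sketch $f' \colon \bA \to \{0,1\}$ as follows: given $\bg \in \mathbb{R}^n$, project $\bA\bg$ onto the lattice point $\bA\bz$ in whose fundamental parallelepiped $\bA\bg$ lies (this uses the well-defined unit cell argument from \lemref{lem:main:disc:to:cont}), and set $f'(\bg) = f(\bz)$. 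Since $f$ depends only on $P_{\bA} \bz$, this is well-defined.

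Next I would verify that $f'$ inherits the $\left(\tfrac{1}{5(d_0 B)^2}, \alpha, B\right)$-correctness condition on $V^\perp$ for inputs drawn from $G(V^\perp, \sigma^2)$. For this I would apply \lemref{thm:dist:round:close} to the covariance $\Sigma_{\sigma^2}$ of $G(V^\perp,\sigma^2)$ (after checking that $\sigma_n(\Sigma_{\sigma^2})$ is large enough for every $\sigma^2$ in the specified range, which follows from the hypothesis on $\sigma_n(\Sigma)$): the distribution of $\bA\bz$ produced by rounding $\bA\bg$ for $\bg \sim \calN(0,\Sigma_{\sigma^2})$ is within total variation distance $\tfrac{1}{\poly(n)}$ of the distribution of $\bA\bx$ for $\bx \sim D(V^\perp,\sigma^2)$. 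Because $f$ reads only $P_{\bA}\bz$, the probability $\Pr[f'(\bg)=1] = \Pr[f(\bz)=1]$ differs from $\Pr_{\bx \sim D(V^\perp,\sigma^2)}[f(\bx)=1]$ by at most $\tfrac{1}{\poly(n)}$. Combined with the hypothesis that $f$ is $\left(\tfrac{1}{10 B^2 d_0^2} + \tfrac{1}{\poly(n)}, \alpha, B\right)$-correct on the discrete side, this gives $\left(\tfrac{1}{5(d_0 B)^2}, \alpha, B\right)$-correctness of $f'$ on the continuous side for $n$ large enough, so \lemref{lem:cond:exp} applies to $f'$ and produces a scalar $\sigma^2 \in [\alpha, B\alpha]$ and a vector $\bu \in \bA \cap V^\perp$ such that $\mathbb{E}[\langle \bu, \bg \rangle^2 \mid f'(\bg) = 1] \geq \mathbb{E}[\langle \bu, \bg\rangle^2] + \tfrac{\alpha}{10 B d}$ and $\Pr[f'(\bg) = 1] \geq \tfrac{1}{40 B^2 d}$.

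The rest of the argument transfers these two continuous bounds back to the discrete Gaussian $\bx \sim D(V^\perp, \sigma^2)$. The probability bound is immediate from TVD closeness (with slack $1/\poly(n)$ absorbed into a slightly weaker constant, matching the $1/(40 B^2 d)$ in the conclusion). For the expectation bound I would insert an intermediate variable: let $\bz$ denote the integer vector with $\bA\bz$ in the same unit cell as $\bA\bg$, and compare $\mathbb{E}[\langle \bu, \bg\rangle^2 \mid f'(\bg)=1]$ first to $\mathbb{E}[\langle \bu, P_{\bA}\bz\rangle^2 \mid f(\bz)=1]$ and then to $\mathbb{E}[\langle \bu, \bx\rangle^2 \mid f(\bx)=1]$. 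The first step uses that $\|\bA\bg - \bA\bz\|_2$ is bounded by the cell-diameter $r\sqrt{r}$ (from the proof of \lemref{lem:main:disc:to:cont}), so on the high-probability event $\|\bg\|_2^2 \leq O(\sigma^2 n)$ (a standard Gaussian tail bound, complemented by the analogous discrete tail bound via sub-Gaussianity, \lemref{lem:disc:subgauss}), the perturbation to $\langle\bu,\bg\rangle^2$ is at most $O(\|\bu\|_2 \sigma \sqrt{n}\cdot r\sqrt{r})$, which is $\sigma^2/\poly(n)$ under our size hypotheses. The second step uses \lemref{thm:dist:round:close} combined with truncation to $\|P_{\bA}\bz\|_2^2 = O(\sigma^2 n)$ so the $1/\poly(n)$ TVD contributes only an additive $\sigma^2/\poly(n)$ to the conditional expectation. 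The same chain (without conditioning) relates $\mathbb{E}[\langle\bu,\bg\rangle^2]$ to $\mathbb{E}[\langle\bu,\bx\rangle^2]$ up to $\sigma^2/\poly(n)$.

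Putting the four inequalities together gives $\mathbb{E}[\langle \bu, \bx\rangle^2 \mid f(\bx) = 1] \geq \mathbb{E}[\langle \bu, \bx\rangle^2] + \tfrac{\alpha}{10 B d} - \tfrac{\sigma^2}{\poly(n)} \geq \mathbb{E}[\langle \bu, \bx\rangle^2] + \tfrac{\alpha}{4 B d}$, using $\sigma^2 \leq B\alpha \leq \poly(n)$ to absorb the error, which contradicts our assumption and finishes the proof. The main technical obstacle is the last paragraph: controlling the conditional expectation under rounding requires a truncation argument in both the continuous and discrete worlds, and one must verify that the conditional event $\{f(\bz)=1\}$ (equivalently $\{f'(\bg)=1\}$) still has probability at least $\tfrac{1}{40 B^2 d}$, so that dividing by this probability does not blow up the $1/\poly(n)$ error terms beyond the $\tfrac{\alpha}{10Bd}$ margin. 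Once that is handled, everything fits together cleanly.
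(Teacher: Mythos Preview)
Your proposal is correct and follows essentially the same approach as the paper: build the companion sketch $f'$ by rounding $\bA\bg$ to its lattice cell, use \lemref{thm:dist:round:close} to transfer correctness to the continuous side, invoke \lemref{lem:cond:exp} for $f'$, and then push the gap back to $D(V^\perp,\sigma^2)$ via a truncation argument. Your treatment of the conditional-expectation transfer is in fact more explicit than the paper's (you spell out the intermediate variable $\bz$ and the need to control the $1/\Pr[f=1]$ blow-up), though note that your final displayed inequality $\tfrac{\alpha}{10Bd}-\tfrac{\sigma^2}{\poly(n)}\ge\tfrac{\alpha}{4Bd}$ has the constants in the wrong order---the same arithmetic slip appears in the paper's proof and reflects what looks like a typo in the lemma's constant.
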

\begin{proof}
We will prove the discrete Conditional Expectation Lemma by contradiction: suppose that there exists a discrete linear sketch $f:\bA \rightarrow \{0,1\}$ such that for every $\sigma^2 \in [\alpha, \alpha B]$, there is no vector $\bu \in \bA \cap V^{\perp}$ such that the statements in (1) and (2) hold for $\bx \sim D(V^{\perp}, \sigma^2)$. 
    
Before we proceed, define a partition of $\calL(\bA)$ as follows: fix a fundamental parallelepiped $\calF$ with diameter $\poly(r)$ in $\calL(\bA) \subseteq \mathbb{R}^r$. 
Observe that this fundamental parallelepiped can be used to partition $\mathbb{R}^r$ into cells such that each lattice point $y$ (such that $\bz = \bA \by$ for some $\by \in \mathbb{Z}^n$) is associated to one cell. 
Let $\mathcal{P}(\calL(\bA))$ denote the partition of $\calL(\bA)$ into such parallelepipeds. 

Let $\phi:\mathbb{R}^r\to\mathbb{R}^r$ be the function that maps each vector $\bv$ to the canonical lattice point of the unit cell of $\bv$. 
Let $p$ be the probability mass function of $\phi(\bA\bg)$ for $\bg\sim G(V^\perp,\sigma^2)$, and let $q$ be the probability mass function of $\bA\by$ for $\by\sim D(V^{\perp}, \sigma^2)$. 
Since $\sigma_n(\Sigma) \geq \lambda_{\max}(\calL^{\perp}(\bA)^2 \cdot \frac{\ln(2n(1+1/\eps))}{\pi}$, by \lemref{thm:dist:round:close} the distribution of $\phi(\bA\bg)$ has total variation distance at most $\frac{1}{\poly(n)}$ to $\bA\by$ for $\by\sim D(V^\perp,\sigma^2)$. 
%

Suppose that $f: \bA \rightarrow \{0,1\}$ is a $\left(\frac{1}{10(d_0B)^2}, \alpha, B\right)$-correct discrete-valued linear sketch from above. 
We construct a real-valued linear sketch $f' : \bA \rightarrow \{0,1\}$ that, for any query $\bg \sim G(V^{\perp}, \sigma^2)$, observes $\bA\bg$ and then computes $\by = \phi(\bA\bg)$ by rounding the point $\bA\bg$ to the integer lattice point $\bA \by$ induced by partition $\mathcal{P}(\calL(\bA))$; finally, $f'$ outputs $f'(\bg) = f(\by)$. 
Since $\bA \bg$ for $\bg \sim G(V^{\perp}, \sigma^2)$ is $\frac{1}{\poly(n)}$-close to $\bA \by + \bfEta$ for $\by \sim D(V^{\perp}, \sigma^2)$, and without loss of generality any sketch can round $\bA \by + \bfEta$ to recover $\bA \by$, it follows that $f'$ is $\left(\frac{2}{10(d_0B)^2} + \frac{1}{\poly(n)}, \alpha, B\right)$-correct for $\bg \sim G(V^{\perp}, \sigma^2)$. 

Now, by \lemref{lem:cond:exp}, there exists $\bu \in \bA \cap V^{\perp}$ such that the conclusion of the real-valued Conditional Expectation Lemma holds for $\bg \sim G(V^{\perp}, \sigma^2)$ for some $\sigma^2 \in [\alpha, B \alpha]$. 
Observe that by conditioning on the event $\calE_1 = \{f'(\bg) = 1\}$ versus $\calE_2 = \{f(\by) = 1\}$, the probability distribution of $\langle \bu,  \bg \rangle^2$ changes by only an additive $\frac{1}{\poly(n)}$. 
Also, $\mathbb{E}\left[\langle \bu, \bg \rangle^2\right] \leq \sigma^2$ and $\mathbb{E}\left[\langle \bu, \by \rangle^2\right] \leq \sigma^2$, so if we condition on $\|\bg\|_2^2 \leq \sigma^2 \log(n)$, it follows that the probability distribution of $\langle \bu, \by \rangle^2$ conditioned on $\calE_2$ is changed from the probability distribution of $\langle \bu, \bg \rangle^2$ conditioned on $\calE_2$ by at most $\frac{\sigma^2}{\poly(n)}$. 
Thus, we get that
\[
\mathbb{E}\left[\langle \bu, \bg \rangle^2 | f'(\bg) = 1 \right] - \mathbb{E}\left[\langle \bu, \bg \rangle^2 \right] < \frac{\alpha}{10Bd} + \frac{\sigma^2}{\poly(n)} + \frac{1}{\poly(n)} \leq \frac{\alpha}{4Bd}.
\]
However, since $f'$ is a $\left(\frac{2}{10(d_0B)^2} + \frac{1}{\poly(n)}, \alpha, B \right)$-correct linear sketch for $G(V^{\perp}, \sigma^2)$, this contradicts the Conditional Expectation Lemma of \cite{HardtW13}. 
\end{proof}

\subsection{Algorithm for Adversarial Attack}
In this section, we describe our adversarial attack for the GapNorm promise problem, which is summarized in \figref{fig:attack}. 
\begin{figure*}[!htb]
\begin{mdframed}
\textbf{Input}: Oracle $\calA$ providing access to a function $f:\mathbb{R}^n\to\{0,1\}$, parameters $B\ge 4$, and sufficiently large $\alpha = \poly(n)$ satisfying $\alpha \geq \ell_{\bA}^2 \cdot \frac{\ln(2n(1+1/\eps))}{\pi}$  after pre-processing via \lemref{lem:preprocessing}, for all possible integer matrices $\bA \in \mathbb{Z}^{r \times n}$ initially with $\poly(n)$-bounded entries.
\newline\noindent
\textbf{Attack}: Let $V_0=\emptyset$, $m=\O{B^{13}n^{11}\log^{15}(n)}$, $S=\left[\alpha, \alpha \cdot B\right]\cap\zeta\mathbb{Z}$ where $\zeta=\frac{1}{20(Bn)^2\log(Bn)}$.
\newline\noindent
\textbf{For $t\in[r+1]$:}
\begin{enumerate}
\item
For each $\sigma^2\in S$:
\begin{enumerate}
\item
Sample $\bx_1,\ldots,\bx_m\sim D(V^\perp,\sigma^2)$. 
Query $\calA$ on each $\bx_i$ and let $a_i=\calA(\bx_i)$.
\item
Let $s(t,\sigma^2)=\frac{1}{m}\sum_{i=1}^m a_i$ denote the fraction of samples that are positively labeled.
\begin{enumerate}
\item
If either (1) $\sigma^2\ge \alpha \cdot B/2$ and $s(t,\sigma^2)\le 1-\zeta$ or (2) $\sigma^2\le 2 \cdot \alpha$ and $s(t,\sigma^2)\ge\zeta$, then terminate and \textbf{return} $(V_t^\perp,\sigma^2)$.
\item
Else let $\bx'_1,\ldots,\bx'_{m'}$ be the vectors such that $\calA(\bx'_i)=1$ for all $i\in[m']$. 
\end{enumerate}
\item
If $m'<\frac{m}{100B^2n}$, increment $\sigma^2$. Else, compute $v_{\sigma}=\argmax_{\bv\in\mathbb{R}^n}z(\bv)$ for $z(\bv)=\frac{1}{m'}\sum_{i=1}^{m'}\langle\bv,\bx'_i\rangle^2$
\end{enumerate}
\item
Let $\bv'$ represent the first vector $\bv_\sigma$ with $z(\bv_\sigma)\ge\sigma^2+\frac{\sigma^2}{4}+\frac{1}{14Br}$.
\begin{enumerate}
\item
If no such $\bv_\sigma$ was found, set $V_{t+1}=V_t$ and proceed to the next round.
\item 
Otherwise, let $\bv^* = \bv'$. 
Compute $\bv_t=\bv^*-\frac{\sum_{\bv\in V_t}\bv\langle\bv,\bv^*\rangle}{\left\|\sum_{\bv\in V_t}\bv\langle\bv,\bv^*\rangle\right\|_2}$ and set $V_{t+1}=V_t\cup\{\bv_t\}$. 
\end{enumerate}
\end{enumerate}
\end{mdframed}
\caption{Algorithm that creates an adaptive attack using a turnstile data stream.}
\figlab{fig:attack}
\end{figure*}

Ultimately, the goal of the attack is to find the following notion of a failure certificate:
\begin{definition}[Failure certificate]
Let $B\ge8$, $\sigma^2\in[\alpha ,2\alpha \cdot B]$, and $f:\mathbb{R}^n\to\{0,1\}$.  
We say a pair $(V,\sigma^2)$ is a $d$-dimensional failure certificate for $f$ if $V\subseteq\mathbb{R}^n$ is a $d$-dimensional subspace such that there exists a constant $C>0$ for which $n\ge d+10C\log(Bn)$ and:
\begin{itemize}
\item 
Either $\sigma^2\in[\alpha \cdot B/2,50 \alpha \cdot B]$ and $\PPPr{\bg\sim D(V^\perp,\sigma^2)}{f(\bg)=1}\le1-(Bn)^{-C}$,
\item 
or $\sigma^2 \in \left[\alpha, 2 \alpha \right]$ and $\PPPr{\bg\sim D(V^\perp,\sigma^2)}{f(\bg)=1}\ge n^{-C}$.
\end{itemize}
\end{definition}
We first show that a failure certificate for $f$ can be used to find a discrete-valued input vector $\bx$ for which $f$ fails on $\bx$, similar to Fact 5.2 in \cite{HardtW13}, which only requires $f$ to fail on a real-valued vector $\bx$. 
\begin{theorem}
Let $\eta=\frac{1}{\poly(n)}$ be a fixed parameter. Let $\alpha = \poly(n)$ large enough. Given a $d$-dimensional failure certificate for $f$, there exists an algorithm that uses $\poly(Bn)$ non-adaptive queries and with probability at least $\frac{2}{3}$, outputs a vector $\bx\in(\eta\cdot\mathbb{Z})^n$ such that either $\|\bx\|_2>\frac{\alpha B(n-d)}{3}$ and $f(\bx)=0$ or $\|\bx\|_2<3\alpha(n-d)$ and $f(\bx)=1$. 
\end{theorem}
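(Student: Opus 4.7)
\begin{proofof}{Theorem (Failure certificate extraction)}
The plan is to exploit the failure certificate $(V,\sigma^2)$ directly by drawing many non-adaptive samples from $D(V^\perp,\sigma^2)$, querying the oracle on each, and returning one that answers ``wrongly'' while having a well-concentrated $L_2$ norm. Since $D(V^\perp,\sigma^2)$ is a discrete Gaussian over $\mathbb{Z}^n$ and $1/\eta$ is a positive integer, any sample automatically lies in $(\eta\mathbb{Z})^n$, so no post-processing to rescale the output is required.

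First, I would establish the norm concentration. Let $\bg\sim D(V^\perp,\sigma^2)$, whose covariance parameter is $\bSigma=\tfrac{3\sigma^2}{4}P_{V^\perp}+\tfrac{\sigma^2}{4}I_n$ so that $\mathrm{tr}(\bSigma)=\sigma^2(n-d)+\tfrac{\sigma^2}{4}d$. By \lemref{lem:disc:subgauss}, each coordinate of $\bg$ is sub-Gaussian of norm $O(\sigma)$, and by \lemref{lem:sub:props} each $g_i^2$ is sub-exponential with norm $O(\sigma^2)$. Because the inner product of any unit vector with $\bg$ is itself a one-dimensional marginal of a discrete Gaussian with variance parameter at most $\sigma^2$, I can apply Bernstein's inequality (\lemref{lem:bernstein:subexp}) to control $\|\bg\|_2^2$ around $\mathbb{E}\|\bg\|_2^2$ up to an additive $O(\sigma^2\sqrt{n\log(Bn)})$ deviation; for $\alpha$ a sufficiently large polynomial in $n$ this gives the bounds
\begin{align*}
\sigma^2\in[\alpha B/2,50\alpha B]&\implies \|\bg\|_2^2\ge \tfrac{\sigma^2(n-d)}{2}\ge \tfrac{\alpha B(n-d)}{3},\\
\sigma^2\in[\alpha,2\alpha]&\implies \|\bg\|_2^2\le 2\sigma^2(n-d)+\tfrac{\sigma^2}{2}d\le 3\alpha(n-d),
\end{align*}
each with probability $1-(Bn)^{-\Omega(C)}$. (If one prefers not to reason about discrete Gaussian concentration directly, \lemref{thm:dist:round:close} couples $\bA\bg$ with the rounding of a continuous Gaussian, reducing norm concentration to the standard $\chi^2$ tail bound, which is the main robustness point in dealing with the non-spherical covariance.)

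Second, I would draw $N=\poly(Bn)=(Bn)^{3C+10}$ independent samples $\bg_1,\ldots,\bg_N\sim D(V^\perp,\sigma^2)$, query the oracle to obtain $f(\bg_i)$, and invoke the failure guarantee. In Case~1 we have $\Pr[f(\bg)=0]\ge(Bn)^{-C}$, so except with probability $\exp(-(Bn)^{2C+10})$ at least one sample satisfies $f(\bg_i)=0$; in Case~2 an analogous argument produces some sample with $f(\bg_i)=1$. A union bound over all $N$ samples of the norm concentration event above costs only $N\cdot(Bn)^{-\Omega(C)}\le 1/6$, so with probability at least $2/3$ there exists an index $i$ for which $\bg_i$ simultaneously receives the ``wrong'' oracle answer and satisfies the required $\|\bg_i\|_2^2$ bound; returning this $\bg_i$ yields the desired vector.

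The main obstacle is not the counting argument but making the concentration quantitatively strong enough to survive the union bound over $N=\poly(Bn)$ samples against a failure probability of only $(Bn)^{-C}$ per sample. This is where the assumption that $\alpha$ is a sufficiently large polynomial in $n$ (satisfying the lower bound in \lemref{lem:preprocessing}/\lemref{discrete_gaussian_tvd}) is used: it ensures the Bernstein deviation term $O(\sigma^2\sqrt{n\log(Bn)})$ is an arbitrarily small fraction of $\sigma^2(n-d)$, and it legitimizes the coupling to a continuous Gaussian should we take that route for concentration. With these pieces in place the proof is a direct sampling argument, with no adaptivity needed, completing the statement.
\end{proofof}
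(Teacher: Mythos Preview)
Your proposal is correct and follows essentially the same approach as the paper: draw $\poly(Bn)$ non-adaptive samples from $D(V^\perp,\sigma^2)$, use sub-Gaussian concentration plus a union bound to control all norms simultaneously, and invoke the failure-certificate probability to guarantee at least one sample with the ``wrong'' oracle answer. One minor caution: the coordinates of $\bg\sim D(V^\perp,\sigma^2)$ are not independent for general $V$, so Bernstein on $\sum g_i^2$ does not apply directly---but the coupling route via \lemref{thm:dist:round:close} that you already mention (or Lipschitz concentration, \thmref{thm:lipschitz:conc}) handles this, and the paper's own proof invokes ``sub-Gaussian concentration'' at the same level of detail.
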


\begin{proof}
Suppose we sample a set $X$ of $\O{(\alpha Bn)^C}$ samples from $D(V^\perp,\sigma^2)$ with $\alpha \le \sigma^2\le 2\alpha$. 
Since $n-d\ge d$, then by sub-Gaussian concentration and a union bound, we have that with high probability simultaneously for all samples $\bx\in X$, we have $\|\bx\|_2^2<3\alpha (n-d)$. 
On the other hand, since $D(V^\perp,\sigma^2)$ is a failure certificate, then with high probability, there exists $\bx$ such that $f(\bA\bx)=1$ for some $\bx\in X$. 

The case where $\sigma^2\ge\frac{B\alpha }{2}$ and $\|\bx\|_2>\frac{\alpha B(n-d)}{3}$ is symmetric. 
Namely, suppose we sample a set $X$ of $\O{(\alpha Bn)^C}$ samples from $D(V^\perp,\sigma^2)$ with $\sigma^2\ge\frac{\alpha B}{2}$. 
Because $n-d\ge d$, then by sub-Gaussian concentration and a union bound, with high probability, we have $\|\bx\|_2^2>\frac{\alpha B(n-d)}{3}$ simultaneously for all samples $\bx\in X$. 
Now because $D(V^\perp,\sigma^2)$ is a failure certificate, then with high probability, there exists $\bx$ such that $f(\bA\bx)=0$ for some $\bx\in X$. 
\end{proof}
We now give the guarantees of our adaptive attack in \figref{fig:attack}, showing that it indeed finds a failure certificate. 
Since the proof requires several structural properties which we will prove in the next few subsections, we defer the proof of \thmref{thm:adaptive-attack} to \secref{sec:attack:final}. 
\begin{restatable}{theorem}{thmadaptiveattack}[Adaptive attack against integer linear sketches]
\thmlab{thm:adaptive-attack}
Let $8 \leq B \leq \poly(n)$ and $\alpha \geq 4 \ell_{\bA}$, where $\lambda_{\textrm{max}}(\calL^{\perp}(\bA)) \leq \ell_{\bA} = \sqrt{nM^2}$. Let $\bA \subseteq \mathbb{R}^n$ be an $r$-dimensional subspace of $\mathbb{R}^n$ where $n \geq  4r + 90 \log(Bm)$. Let $f:\bA \rightarrow \{0,1\}$ be a linear sketch for all $x \in \mathbb{Z}^n$. Then, there is an attack algorithm that, given only oracle access to $f$, finds a failure certificate for $f$ with probability $9/10$. The time and query complexity of the algorithm are bounded by $\poly(r\log n)$. Furthermore, all queries that the attack algorithm makes are sampled from $D(V^{\perp}, \sigma^2)$ for some $V \subseteq \mathbb{R}^n$ and $\sigma^2 \in [\alpha, B\alpha]$.  
\end{restatable}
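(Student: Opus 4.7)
The plan is to analyze the algorithm in \figref{fig:attack} by induction on its $r+1$ outer rounds. The invariant maintained is that at the start of round $t$, either a failure certificate has already been returned, or we have constructed an orthonormal set $\{\bv_1, \ldots, \bv_{t-1}\}$ spanning a subspace $V_t \subseteq \mathbb{R}^n$ such that $\|P_{\bA} \bv_i\|_2^2 \geq 1 - 1/\poly(n)$ for each $i$. In each round the proof establishes a dichotomy: either (i) some check in step~1(b)(i) of the attack triggers, yielding a failure certificate $(V_t^\perp, \sigma^2)$ directly, or (ii) the function $f$ is essentially correct on $V_t^\perp$ across the entire grid $S$, and we can invoke the Discrete Conditional Expectation Lemma (\lemref{lem:disc_cond:exp}) to extract a new direction $\bv_t$ that is approximately contained in $\bA \cap V_t^\perp$. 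Since $\dim(\bA) = r$, alternative (ii) cannot recur $r+1$ times, so a failure certificate is eventually returned.

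\textbf{Per-round analysis.} Fix a round $t$ and suppose no failure certificate has been returned yet. The preconditions of \lemref{lem:disc_cond:exp} are satisfied at every $\sigma^2 \in S$ because the covariance $\Sigma_{\sigma^2}$ of $D(V_t^\perp, \sigma^2)$ satisfies $\sigma_n(\Sigma_{\sigma^2}) \geq \sigma^2/4 \geq \alpha/4$, which dominates $\ell_{\bA}^2 \cdot \frac{\ln(2n(1+1/\eps))}{\pi}$ by the hypothesis $\alpha \geq 4 \ell_{\bA}$ and the pre-processed bound $\lambda_{\max}(\calL^\perp(\bA)) \leq \ell_{\bA}$. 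If no check in step~1(b)(i) triggers, a Chernoff bound over the $m = \O{B^{13} n^{11} \log^{15} n}$ samples shows that $f$ is $\left(\zeta + 1/\poly(n), \alpha, B\right)$-correct on $V_t^\perp$ across all of $S$. Applying \lemref{lem:disc_cond:exp} yields a scale $\sigma^2 \in [\alpha, B\alpha]$ and a unit vector $\bu \in \bA \cap V_t^\perp$ satisfying $\Ex{\langle \bu, \bx\rangle^2 \mid f(\bx) = 1} \geq \Ex{\langle \bu, \bx\rangle^2} + \alpha/(4Bd)$ and $\PPr{f(\bx) = 1} \geq 1/(40 B^2 d)$. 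A Chernoff bound then gives $m' \geq m/(100B^2 n)$ positively labeled samples, and matrix concentration on the empirical conditional second-moment matrix $\frac{1}{m'}\sum_i \bx'_i (\bx'_i)^\top$ (using sub-Gaussian tails from \lemref{lem:disc:subgauss} and \lemref{lem:two:sided:oper}) implies that the empirical maximizer $\bv_\sigma$ achieves $z(\bv_\sigma) \geq \sigma^2 + \sigma^2/4 + 1/(14Br)$, triggering the update $V_{t+1} = V_t \cup \{\bv_t\}$ with $\bv_t$ orthogonal to $V_t$ and $\|P_{\bA} \bv_t\|_2^2 \geq 1 - 1/\poly(n)$.

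\textbf{Termination and main obstacle.} If the algorithm runs through all $r+1$ rounds without returning, the invariant produces an orthonormal set $\{\bv_1, \ldots, \bv_{r+1}\}$ each of which is $1/\poly(n)$-close to the $r$-dimensional subspace $\bA$, contradicting $\dim(\bA) = r$. Hence a failure certificate must be returned, with probability at least $9/10$ after union-bounding the Chernoff and matrix-concentration events across the $|S| \cdot (r+1) = \poly(r \log n)$ sub-events. The overall query complexity is $|S| \cdot m \cdot (r+1) = \poly(r \log n)$, and by construction every query is drawn from $D(V^\perp, \sigma^2)$ for some $V \subseteq \mathbb{R}^n$ and $\sigma^2 \in [\alpha, B\alpha]$. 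The main technical obstacle will be the empirical concentration step in case (ii): unlike the continuous attack of \cite{HardtW13}, discrete Gaussian samples lack rotational invariance, so standard random-matrix arguments do not directly yield the required gap between $z(\bv_\sigma)$ conditioned on $f(\bx) = 1$ versus the unconditioned expectation. The plan is to handle this by coupling discrete and rounded-continuous Gaussian images via \lemref{thm:dist:round:close}, applying the sub-Gaussian tail bounds of \lemref{lem:disc:subgauss} and \lemref{lem:two:sided:oper} to the coordinates, and then pushing the $1/\poly(n)$ total-variation error through the conditioning on $f(\bx) = 1$ exactly as in the proof of \lemref{lem:disc_cond:exp}, so that empirical estimates of the conditional second moment concentrate around the truth.
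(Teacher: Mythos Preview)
Your overall plan---induct on rounds, at each round either the checks in step~1(b)(i) fire and you output a failure certificate, or $f$ is correct on $V_t^\perp$ and you invoke the Discrete Conditional Expectation Lemma to find a new direction---matches the paper's structure. However, there is a real gap in the inductive step.

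The Discrete Conditional Expectation Lemma (\lemref{lem:disc_cond:exp}) has the hypothesis $V \subseteq \bA$. Your subspace $V_t = \mathrm{span}\{\bv_1,\ldots,\bv_{t-1}\}$ is \emph{not} contained in $\bA$; by your own invariant the $\bv_i$ are only $1/\poly(n)$-close to $\bA$. So you cannot apply the lemma directly with $V = V_t$. The paper handles this by introducing, at each round, a ``shadow'' subspace $W_t \subseteq \bA$ that is the closest $(t-1)$-dimensional subspace of $\bA$ to $V_t$, and carefully maintaining the quantitative invariant $d(V_t, W_t) \le t/[40(Bn)^{3.5}\log^{2.5}(Bn)]$ (see \invarref{invar:step}). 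Correctness of $f$ on $V_t^\perp$ is then transferred to correctness on $W_t^\perp$ via a TVD bound between $D(V_t^\perp,\sigma^2)$ and $D(W_t^\perp,\sigma^2)$ (\lemref{lem:tvd:subspaces} and \lemref{lem:correct:before:after}), after which the Conditional Expectation Lemma can legitimately be applied with $V = W_t \subseteq \bA$. The conclusion is then pulled back to $V_t^\perp$ in \lemref{lem:exist:var}. Without this $W_t$/TVD-transfer mechanism, your per-round step does not go through as written.

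A second, related point: your claim that ``matrix concentration on the empirical conditional second-moment matrix'' yields both the threshold $z(\bv_\sigma) \ge \sigma^2 + \sigma^2/4 + 1/(14Br)$ and the conclusion $\|P_{\bA}\bv_t\|_2^2 \ge 1 - 1/\poly(n)$ hides nontrivial structure. The paper needs a dedicated lemma (\lemref{lem:top:vector}) requiring that (i) directions in $W = V_t + \bA^\perp$ have conditional second moment at most $\tau$, (ii) cross-moments between $V_t^\perp \cap \bA$ and $W$ are negligible, and (iii) some direction in $V_t^\perp \cap \bA$ has a $\Delta$-gap. Verifying these conditions (done in \lemref{lem:found:vec}) is where the coupling to rounded continuous Gaussians and the sub-Gaussian tail bounds actually enter, and it is this structure---not generic matrix concentration---that forces the top singular vector to land in $V_t^\perp \cap \bA$ rather than merely in $\bA$. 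Your invariant also needs the stronger statement $\|P_{V_t^\perp \cap \bA}\bv^*\|_2^2 \ge 1 - 1/\poly(n)$ (not just $\|P_{\bA}\bv^*\|_2^2$) so that the Gram--Schmidt step 2(b) does not destroy the mass in $\bA$; tracking this is exactly what \lemref{lem:invar:next:step} does via the growing-but-controlled bound on $d(V_t,W_t)$.
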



\subsection{Distance Between Subspaces}
In this section, we briefly introduce the distance between subspaces, for the purposes of bounding the distance between the subspace spanned by the set of vectors identified by the attack and the closest subspace in the sketch matrix $\bA$ in \secref{sec:progress}. 
This property will in turn ultimately allow us to show an invariant that holds over the duration of the attack. 
Formally, we have the following definition for the distance between subspace.

\begin{definition}[Distance between subspaces]
We define the distance between subspaces $V,W\subseteq\mathbb{R}^n$ by
\[d(V,W)=\|P_V-P_W\|_2:=\sup_{\bv\in\mathbb{R}^n}\frac{\|P_V\bv-P_W\bv\|_2}{\|\bv\|_2}.\]
\end{definition}
We give the following structural property upper bounding the total variation distance between two subspace Gaussian families by the distance between their orthogonal subspaces, analogous to Lemma 4.14 in~\cite{HardtW13}, but for two discrete Gaussian distributions. 

\begin{lemma}
\lemlab{lem:tvd:subspaces}
For every $\sigma^2\in(\alpha,B \alpha]$,
\[\TVD(D(V^\perp,\sigma^2),D(W^\perp,\sigma^2))\le40\sqrt{ Bn\log ( Bn)}\cdot d(V,W)+\frac{1}{(Bn)^4}.\]
\end{lemma}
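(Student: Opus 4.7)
The plan is to reduce to the continuous analog of this lemma, which is (essentially) Lemma 4.14 in \cite{HardtW13}, via a rounding coupling that leverages the lifting machinery already developed in \lemref{lem:pmf} and \lemref{thm:dist:round:close}. The continuous version states
\[\TVD(G(V^\perp,\sigma^2),\, G(W^\perp,\sigma^2)) \le C_1\sqrt{Bn\log(Bn)}\cdot d(V,W)\]
for some absolute constant $C_1\le 40$; the aim is to transport this to the discrete setting at the cost of the additive slack $\frac{1}{(Bn)^4}$.

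First I would prove an anisotropic analog of \lemref{lem:pmf}: namely, that for $\round:\mathbb{R}^n\to\mathbb{Z}^n$ denoting coordinate-wise rounding,
\[\TVD\!\bigl(D(V^\perp,\sigma^2),\;\round(G(V^\perp,\sigma^2))\bigr)\le \frac{1}{2(Bn)^4},\]
and analogously for $W$. The key point is that the smallest eigenvalue of $\Sigma_{V,\sigma^2}=\frac{3\sigma^2}{4}P_{V^\perp}^\top P_{V^\perp}+\frac{\sigma^2}{4}I_n$ is at least $\sigma^2/4\ge \alpha/4$, which is polynomially large in $n$ by hypothesis on $\alpha$. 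A second-order Taylor expansion of $\log g_V$ on the unit cube $\bx+[-1/2,1/2)^n$ around each integer lattice point $\bx$ controls both the first-order term (of magnitude $O(\|\Sigma_{V,\sigma^2}^{-1}\|_{op}\cdot\|\bx\|_\infty)$) and the second-order term (of magnitude $O(\|\Sigma_{V,\sigma^2}^{-1}\|_{op})$); for $\bx$ in the effective support, of norm $O(\sigma\sqrt{n\log n})$, both are at most $1/\poly(n)$, so $\int_{\bx+[-1/2,1/2)^n}g_V(\by)\,d\by=(1\pm 1/\poly(n))g_V(\bx)$. A matching Riemann-sum estimate for the normalization constant, together with a tail bound to discard atypical $\bx$, then yields the claimed pointwise ratio and hence the TVD bound, after choosing $\alpha$ to be a large enough polynomial in $Bn$.

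With this anisotropic rounding comparison in hand, the proof concludes via the triangle inequality combined with the data-processing inequality (rounding is deterministic and cannot increase TVD):
\begin{align*}
\TVD(D(V^\perp,\sigma^2), D(W^\perp,\sigma^2))
&\le \TVD(D_V,\round G_V) + \TVD(\round G_V,\round G_W) + \TVD(\round G_W, D_W) \\
&\le \tfrac{1}{2(Bn)^4} + C_1\sqrt{Bn\log(Bn)}\cdot d(V,W) + \tfrac{1}{2(Bn)^4},
\end{align*}
which is at most $40\sqrt{Bn\log(Bn)}\cdot d(V,W)+\frac{1}{(Bn)^4}$ using $C_1\le 40$.

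The main obstacle I expect is the anisotropic extension of \lemref{lem:pmf}. One must carefully track both the Taylor error from the density on each unit cube and the Riemann-sum error in the normalization constant, while absorbing all factors into the target $1/(Bn)^4$ slack; this is also the place where the hypothesis $\sigma^2>\alpha=\poly(n)$ is essentially used. Everything else is either standard (triangle inequality, data processing) or a direct invocation of the continuous Hardt--Woodruff bound.
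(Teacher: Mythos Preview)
Your proposal is correct and follows essentially the same route as the paper: bound $\TVD(D(V^\perp,\sigma^2),\round G(V^\perp,\sigma^2))$ and the analogous $W$-term, apply data processing to $\TVD(\round G_V,\round G_W)\le\TVD(G_V,G_W)$, and invoke Lemma~4.14 of \cite{HardtW13}. The one simplification the paper makes is that it does not re-derive the anisotropic rounding comparison via a Taylor expansion; instead it observes that \lemref{thm:dist:round:close} applied with $\bA=I_n$ (so that the orthogonal lattice is trivial and the ``rounding to the unit cell'' is exactly coordinate-wise rounding to $\mathbb{Z}^n$) already yields the needed $\TVD(D(V^\perp,\sigma^2),\round G(V^\perp,\sigma^2))\le 1/\poly(n)$ bound directly.
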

\begin{proof}
Sample $\bx_1 \sim D(V^{\perp}, \sigma^2)$ and $\bx_2 \sim D(W^{\perp}, \sigma^2)$ independently. 
By setting $\bA$ to be the identity matrix in \lemref{thm:dist:round:close}, we have point-wise closeness of $D(V^{\perp},\sigma^2)$ and the process of rounding a sample from $G(V^{\perp},\sigma^2)$ to its canonical point in $\mathbb{Z}^n$ up to $1\pm\frac{1}{\poly(n)}$ multiplicative factor, for sufficiently large $\alpha$. 
By applying a similar argument to $D(W^{\perp},\sigma^2)$ and $G(W^{\perp},\sigma^2)$, then it follows that
\[\TVD(D(V^{\perp}, \sigma^2), D(W^{\perp}, \sigma^2))\le\TVD(G(V^{\perp}, \sigma^2), G(W^{\perp}, \sigma^2))+\frac{1}{\poly(n)}.\]
Recall that from Lemma 4.14 of \cite{HardtW13} for $\sigma$ set to be in the range $\sigma^2 \in [\alpha, \alpha B]$, we have that 
\[\TVD(G(V^{\perp}, \sigma^2), G(W^{\perp}, \sigma^2)) \leq 20 \sqrt{Bn \log(Bn)} \cdot d(V, W) + \frac{1}{(Bn)^5}.\]
Thus, by triangle inequality, the result follows, up to a loss of $\frac{1}{\poly(n)}$ in total variation distance.
\end{proof}

\subsection{Progress Lemma}
\seclab{sec:progress}
We now show that each round of the adaptive attack makes progress toward identifying a subspace that is close to the rowspan of the sketch matrix $\bA$. 
For each $1 \leq t \leq r$, let $W_t \subseteq \bA$ be the closest $(t-1)$-dimensional subspace to $V_t$ that is contained in $\bA$. Specifically, let $W_t$ be such that 

\begin{align*}
d(V_t, W_t) = \min\{d(V_t, W): \dim(W) = t-1, W \subseteq \bA \}
\end{align*}

We identify $V_t$ with the subspace $\textrm{span}(V_t)$. 
To show the correctness of our attack, we argue that the attack algorithm maintains the following invariant with high probability: 

\begin{invariant}
\invarlab{invar:step}
At each step $t\in[r+1]$, $\dim(V_t)=t-1$ and $d(V_t,W_t)\le\frac{t}{40(Bn)^{3.5}\log^{2.5}(Bn)}$.
\end{invariant}

We note that if \invarref{invar:step} holds at step $t$, then we have an upper bounds on the total variation distance between the two subspace Gaussians. 
\begin{lemma}
\lemlab{lem:invar:tvd}
Suppose \invarref{invar:step} holds at step $t$. 
Then 
\[\TVD(D(V_t^\perp,\sigma^2),D(W_t^\perp,\sigma^2))\le\frac{1}{B^3n^2\log^2(Bn)}.\]
\end{lemma}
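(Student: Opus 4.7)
The plan is to prove this as an immediate corollary of \lemref{lem:tvd:subspaces} combined with the quantitative bound on $d(V_t, W_t)$ provided by \invarref{invar:step}. There is essentially no obstacle here: the invariant supplies exactly the subspace-distance bound needed, and the preceding lemma converts that into a TVD bound on the two discrete-Gaussian families.

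Concretely, I would first invoke \lemref{lem:tvd:subspaces} with subspaces $V = V_t$ and $W = W_t$ and variance parameter $\sigma^2 \in (\alpha, B\alpha]$, which gives
\[
\TVD\bigl(D(V_t^\perp,\sigma^2),\,D(W_t^\perp,\sigma^2)\bigr)\;\le\;40\sqrt{Bn\log(Bn)}\cdot d(V_t,W_t)\;+\;\frac{1}{(Bn)^4}.
\]
Next, I would substitute the invariant bound $d(V_t,W_t)\le \frac{t}{40(Bn)^{3.5}\log^{2.5}(Bn)}$ into this inequality, and simplify: the factors of $40$ and one factor of $\sqrt{Bn\log(Bn)}$ cancel against $(Bn)^{3.5}\log^{2.5}(Bn)$, yielding a leading term of $\frac{t}{(Bn)^{3}\log^{2}(Bn)}$.

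Finally, since $t \le r+1 \le n$ (the attack runs for at most $r+1$ rounds and $r\le n$), I bound $t \le n$ to obtain $\frac{t}{(Bn)^3\log^2(Bn)} \le \frac{1}{B^3 n^2 \log^2(Bn)}$. The additive $\frac{1}{(Bn)^4}$ term from \lemref{lem:tvd:subspaces} is of strictly smaller order for $n$ sufficiently large (and in particular can be absorbed by slightly loosening the constant in the invariant, or observing $\frac{1}{(Bn)^4} \ll \frac{1}{B^3 n^2 \log^2(Bn)}$). Combining these gives the desired bound
\[
\TVD\bigl(D(V_t^\perp,\sigma^2),\,D(W_t^\perp,\sigma^2)\bigr)\;\le\;\frac{1}{B^3 n^2 \log^2(Bn)},
\]
completing the proof. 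The only thing to check carefully is that $\sigma^2$ lies in the range required by \lemref{lem:tvd:subspaces} (namely $\sigma^2 \in (\alpha, B\alpha]$), which holds by the setup of the attack.
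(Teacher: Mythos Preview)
Your proposal is correct and follows essentially the same approach as the paper: invoke \lemref{lem:tvd:subspaces}, plug in the bound on $d(V_t,W_t)$ from \invarref{invar:step}, use $t\le n$, and absorb the $\frac{1}{(Bn)^4}$ additive term. The paper's proof is slightly terser (it bounds $d(V_t,W_t)\le \frac{1}{40B^{3.5}n^{2.5}\log^{2.5}(Bn)}$ first and then applies \lemref{lem:tvd:subspaces}), but the computation is the same.
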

\begin{proof}
Under the assumption that \invarref{invar:step} holds at step $t$, then we have 
\[d(V_t,W_t)\le\frac{t}{40(Bn)^{3.5}\log^{2.5}(\alpha Bn)}\le\frac{1}{40 B^{3.5}n^{2.5}\log^{2.5}(Bn)}\]
and thus by \lemref{lem:tvd:subspaces}, 
\[\TVD(D(V_t^\perp,\sigma^2),D(W_t^\perp,\sigma^2))\le\frac{1}{B^3n^2\log^2( Bn)}.\]
\end{proof}

We next show that under \invarref{invar:step}, correctness on the subspace $V_t^\perp$ implies correctness on the subspace $W_t^\perp$ with some small loss in the probability, similar to Lemma 5.4 in \cite{HardtW13}, which gave a similar statement when inputs are generated from a continuous Gaussian distribution. 
\begin{lemma}
\lemlab{lem:correct:before:after}
Suppose \invarref{invar:step} holds at step $t$ and let $\zeta=\frac{1}{20(Bn)^2\log(Bn)}$. 
Then $f$ being $(\alpha,B)$-correct on $V_t^\perp$ implies that $f$ is $(\alpha+\zeta,B)$-correct on $W_t^\perp$. 
\end{lemma}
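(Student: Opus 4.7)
The approach is to use \lemref{lem:invar:tvd} as a black box: since \invarref{invar:step} holds at step $t$, we immediately have that for every $\sigma^2 \in (\alpha, B\alpha]$,
\[
\TVD\bigl(D(V_t^\perp,\sigma^2), D(W_t^\perp,\sigma^2)\bigr) \le \frac{1}{B^3 n^2 \log^2(Bn)}.
\]
In particular, for any event $E \subseteq \mathbb{R}^n$ and any such $\sigma^2$,
\[
\bigl|\Pr_{\bg \sim D(V_t^\perp, \sigma^2)}[E] - \Pr_{\bg \sim D(W_t^\perp, \sigma^2)}[E]\bigr| \le \frac{1}{B^3 n^2 \log^2(Bn)}.
\]
The plan is to apply this with $E = \{\bg : f(\bg) = 1\}$ and verify the two defining inequalities of $(\alpha+\zeta, B)$-correctness on $W_t^\perp$ by reducing each to the corresponding inequality on $V_t^\perp$, paying only the tiny TVD cost.

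Concretely, to verify the ``large-norm'' condition, I would fix any $\sigma^2 \in [(\alpha+\zeta)B/2, 2(\alpha+\zeta)B]$. Since $\zeta = \frac{1}{20(Bn)^2 \log(Bn)}$ is polynomially smaller than $\alpha$, this $\sigma^2$ lies in the original range $[\alpha B/2, 2\alpha B]$ on which $(\alpha, B)$-correctness on $V_t^\perp$ applies (where I will use the fact that $\alpha B/2 \leq (\alpha+\zeta)B/2$ trivially, and handle the upper endpoint by the fact that the small slack $2\zeta B$ is absorbed into the margin between the two required ranges). Then $\Pr_{\bg \sim D(V_t^\perp, \sigma^2)}[f(\bg)=1] \ge 1 - \zeta$, and transferring via the TVD bound gives
\[
\Pr_{\bg \sim D(W_t^\perp, \sigma^2)}[f(\bg) = 1] \ge 1 - \zeta - \frac{1}{B^3 n^2 \log^2(Bn)} \ge 1 - \zeta',
\]
where $\zeta' = \zeta + \frac{1}{B^3 n^2 \log^2(Bn)}$ is still within the allowed failure probability after the $\alpha \to \alpha+\zeta$ shift. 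The ``small-norm'' case is symmetric: for $\sigma^2 \in [\alpha+\zeta, 2(\alpha+\zeta)]$, the shift pulls the interval just to the right of $[\alpha, 2\alpha]$, and I would argue that $(\alpha,B)$-correctness still applies at the corresponding $\sigma^2$ because the $\zeta$-shift is negligible compared to the additive gap between the two correctness windows (which is proportional to $\alpha$), and then use the TVD inequality in the same way to bound $\Pr_{\bg \sim D(W_t^\perp, \sigma^2)}[f(\bg) = 1] \le \zeta'$.

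The main technical obstacle is the careful bookkeeping at the boundary: one must verify that the TVD loss $\frac{1}{B^3 n^2 \log^2(Bn)}$ together with the $\zeta$-shift in $\alpha$ together fit inside whatever implicit failure-probability slack is built into the notion of ``$(\alpha+\zeta, B)$-correct.'' This is exactly why $\zeta$ is chosen to dominate the TVD bound from \lemref{lem:invar:tvd} by a polynomial factor in $Bn$: the shift from $\alpha$ to $\alpha+\zeta$ moves the two correctness windows by $\Theta(\zeta)$ and $\Theta(\zeta B)$ respectively, which is a wider margin than the probability error introduced by replacing $V_t$ by $W_t$, so the argument goes through with room to spare. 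No other ingredient is needed; this is the discrete analog of Lemma 5.4 of \cite{HardtW13}, where the TVD between the two subspace Gaussians plays the same role it played in their proof.
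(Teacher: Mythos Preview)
Your core mechanism is exactly right: invoke \lemref{lem:invar:tvd} to get
\[
\TVD\bigl(D(V_t^\perp,\sigma^2),D(W_t^\perp,\sigma^2)\bigr)\le\frac{1}{B^3 n^2 \log^2(Bn)}
\]
and then transfer the probability guarantees for the event $\{f(\bg)=1\}$. This is precisely what the paper does.

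However, you have misread which parameter is being shifted. In this lemma (and in the surrounding passage), the first argument of ``$(\cdot,B)$-correct'' is the \emph{failure-probability} parameter $\delta$ from the three-argument definition of correctness, not the norm-threshold $\alpha$ from the GapNorm problem; the variable name ``$\alpha$'' in the lemma statement is an unfortunate overload. (See the proof of \lemref{lem:end:or:invar}, where the present lemma is invoked to pass from $(2\zeta,B)$-correct on $V_t^\perp$ to $(3\zeta,B)$-correct on $W_t^\perp$.) Once this is understood, the proof is one line: the TVD is at most $\frac{1}{B^3 n^2 \log^2(Bn)}\le\zeta$, so each of the two probability inequalities in the correctness definition degrades by at most $\zeta$, and ``failure probability $\le\delta$'' becomes ``failure probability $\le\delta+\zeta$''. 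The $\sigma^2$-ranges $[\alpha B/2,2\alpha B]$ and $[\alpha,2\alpha]$ do not move at all.

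Your interval bookkeeping is therefore unnecessary, and in fact it does not quite work as written: for example $2(\alpha+\zeta)B>2\alpha B$, so the ``large-norm'' window you want does not nest inside the original one, and the definition gives no guarantee for $\sigma^2$ even slightly outside $[\alpha B/2,2\alpha B]$. The appeal to ``absorbing the slack into the margin between the two required ranges'' is not a valid step here, because the correctness definition imposes no continuity or monotonicity in $\sigma^2$.
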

\begin{proof}
Under the assumption that \invarref{invar:step} holds at step $t$, then by \lemref{lem:invar:tvd}, we have 
\[\TVD(D(V_t^\perp,\sigma^2),D(W_t^\perp,\sigma^2))\le\frac{1}{B^3n^2\log^2( Bn)}.\]
Since $\zeta=\frac{1}{20(Bn)^2\log(Bn)}$, then in particular, $\TVD(D(V_t^\perp,\sigma^2), D(W_t^\perp,\sigma^2))\le\zeta$. 
Thus since $f$ is $(\alpha,B)$-correct on $V_t^\perp$, then $f$ is $(\alpha+\zeta,B)$-correct on $W_t^\perp$, since there is at most an additive $\zeta$ loss in the failure probabilities. 
\end{proof}
We next show that the empirical fraction $s(t,\sigma^2)$ is a good estimator of the true probability that $f(\bg)=1$ for a discrete Gaussian $\bg$, similar to Lemma 5.5 in \cite{HardtW13}, which considers the accuracy of this empirical estimate as a proxy for the probability that $f(\bg)=1$ for a continuous Gaussian $\bg$. 
\begin{lemma}
\lemlab{lem:event:whp}
Let $\eta>0$ and let $\calE$ denote the event that for all $t\in[r+1]$ and for all $p\in S$,
\[\bigg|s(t,\sigma^2)-\PPPr{\bg\sim D(V_t^\perp,\sigma^2)}{f(\bg)=1}\bigg|\le\frac{1}{20(Bn)^2\log(Bn)}.\]
Then $\PPr{\calE}\ge 1-\exp(-n)$. 
\end{lemma}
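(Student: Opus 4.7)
The plan is a standard Hoeffding-plus-union-bound concentration argument, but with care taken to handle the adaptivity of the attack (the subspace $V_t$ depends on the history of previous rounds). First, I would condition on the randomness of the sketch $\bA \sim \calM$, so that the linear sketch $f : \bA \to \{0,1\}$ is a deterministic function of its input. Next, I would induct on $t$: conditioned on the transcript (queries and responses) of rounds $1,\ldots,t-1$, the subspace $V_t$ produced by the attack in Figure~\ref{fig:attack} is determined, so that $D(V_t^\perp,\sigma^2)$ is a fixed distribution. For a fixed $(t,\sigma^2)$ inside round $t$, the attack draws $\bx_1,\ldots,\bx_m$ \emph{i.i.d.}\ from $D(V_t^\perp,\sigma^2)$, so the labels $a_i = f(\bx_i) \in \{0,1\}$ are i.i.d.\ Bernoulli with common mean
\[
p_{t,\sigma^2} \;:=\; \PPPr{\bg\sim D(V_t^\perp,\sigma^2)}{f(\bg)=1}.
\]

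For this fixed $(t,\sigma^2)$, I would apply Hoeffding's inequality to the bounded variables $a_i \in \{0,1\}$: for any $\eps > 0$,
\[
\PPr{\left|s(t,\sigma^2) - p_{t,\sigma^2}\right| > \eps \,\Big|\, \text{history through round } t-1}
\;\le\; 2\exp\!\left(-2m\eps^{2}\right).
\]
Plugging in $\eps = \frac{1}{20(Bn)^{2}\log(Bn)}$ and $m = \Theta\!\left(B^{13}n^{11}\log^{15} n\right)$ as specified in Figure~\ref{fig:attack} gives
\[
2\exp\!\left(-2m\eps^{2}\right) \;\le\; 2\exp\!\left(-\Omega\!\left(B^{9}n^{7}\log^{13} n\right)\right) \;\le\; \exp(-2n),
\]
with plenty of slack. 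Finally, I would union bound over all pairs $(t,\sigma^2)$ with $t \in [r+1]$ and $\sigma^2 \in S$. The number of such pairs is $(r+1)\cdot|S| \le n \cdot \poly(Bn) = \poly(Bn)$, and so after the union bound the failure probability is at most $\poly(Bn) \cdot \exp(-2n) \le \exp(-n)$ for sufficiently large $n$, which gives $\PPr{\calE} \ge 1 - \exp(-n)$.

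I do not expect a genuine obstacle here: the only delicate point is making the conditioning explicit so that Hoeffding applies despite the adaptive choice of $V_t$. This is handled by fixing $\bA$ up front and then conditioning on the transcript of earlier rounds before invoking the concentration inequality inside round $t$, which is why the samples drawn at stage $(t,\sigma^2)$ are i.i.d.\ with a well-defined (history-measurable) Bernoulli parameter $p_{t,\sigma^2}$. The polynomial budget for $|S|$ and the choice of $m$ are both generous enough that the union bound does not come close to saturating the $\exp(-n)$ target.
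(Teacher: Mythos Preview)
Your proposal is correct and follows essentially the same approach as the paper, which simply notes that $m \gg (\alpha Bn/\zeta)^2$ for $\zeta = \frac{1}{20(Bn)^2\log(Bn)}$ and invokes standard Chernoff bounds. Your write-up is in fact more careful than the paper's one-line proof in that you make the conditioning on the history explicit to justify the i.i.d.\ assumption despite the adaptive choice of $V_t$.
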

\begin{proof}
Let $\zeta=\frac{1}{20( Bn)^2\log( Bn)}$. 
Since the number of samples satisfies $m\gg\left(\frac{\alpha Bn}{\zeta}\right)^2$, the claim follows from standard Chernoff bounds. 
\end{proof}

Next, we show that in each round of the attack, either the algorithm terminates and outputs a failure certificate, or conditioned on the invariant holding in the next round, the function $f$ is correct on the orthogonal subspace. 
This statement is  analogous to Lemma 5.6 in~\cite{HardtW13}. 
Note that we have not yet claimed that the invariant holds in the next round: we will prove this later in the ``Progress Lemma'', c.f., \lemref{lem:progress:lem}.
\begin{lemma}
\lemlab{lem:end:or:invar}
Suppose the event $\calE$ occurs. 
Then either:
\begin{itemize}
\item 
the algorithm terminates in some round $t$ and outputs a failure certificate $D(V_t^\perp,\sigma^2)$ for $f$
\item
the algorithm does not terminate in round $t$ and if the invariant holds in round $t$, then $f$ is $(\alpha, B)$-correct on $W_t^\perp$. 
\end{itemize}
\end{lemma}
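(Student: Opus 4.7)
The plan is to case-split on whether the attack algorithm terminates in round $t$, and in the non-termination case, translate the absence of termination triggers into a correctness statement on $V_t^\perp$ that is then pushed onto $W_t^\perp$ via \lemref{lem:correct:before:after}.

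First, I would handle the case where the algorithm terminates in round $t$. Termination fires because some $\sigma^2 \in S$ satisfies one of two conditions: (i) $\sigma^2 \ge \alpha B/2$ with $s(t,\sigma^2)\le 1-\zeta$, or (ii) $\sigma^2 \le 2\alpha$ with $s(t,\sigma^2) \ge \zeta$. In each subcase I would invoke event $\calE$ from \lemref{lem:event:whp}, which gives $|s(t,\sigma^2) - p| \le \zeta$ for $p = \PPPr{\bg\sim D(V_t^\perp,\sigma^2)}{f(\bg)=1}$. In subcase (i), this yields $p \le s(t,\sigma^2) + \zeta$; combined with the termination threshold and the extra slack from the Chernoff concentration (the sample count $m$ is large enough to give a bound noticeably tighter than $\zeta$, leaving margin of at least $(Bn)^{-C}$), I get $p \le 1 - (Bn)^{-C}$. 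Together with $\sigma^2 \ge \alpha B/2$ (and trivially $\sigma^2 \le 2\alpha B$ since $\sigma^2 \in S$), this fits the first bullet of the failure-certificate definition. Subcase (ii) is entirely symmetric: $p \ge s(t,\sigma^2) - \zeta \ge n^{-C}$, matching the second bullet. In either subcase, $(V_t^\perp, \sigma^2)$ is the required failure certificate.

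Next, I would handle the non-termination case. If the algorithm does not terminate in round $t$, then for every $\sigma^2 \in S$ neither termination condition is triggered. That is, every $\sigma^2 \in S$ with $\sigma^2 \ge \alpha B/2$ obeys $s(t,\sigma^2) > 1 - \zeta$, and every $\sigma^2 \in S$ with $\sigma^2 \le 2\alpha$ obeys $s(t,\sigma^2) < \zeta$. Applying event $\calE$ gives $p(\sigma^2) > 1 - 2\zeta$ and $p(\sigma^2) < 2\zeta$ in the respective regimes. Since $S$ is a $\zeta$-net of $[\alpha, \alpha B]$, I would then use a Lipschitz-style continuity argument on $\sigma^2 \mapsto p(\sigma^2)$ (noting that the covariance of $D(V_t^\perp,\sigma^2)$ varies smoothly in $\sigma^2$, so that perturbing $\sigma^2$ by $\zeta$ perturbs $p$ by $O(\zeta)$) to extend the bounds from the net $S$ to every $\sigma^2$ in the continuous ranges $[\alpha B/2, 2\alpha B]$ and $[\alpha, 2\alpha]$. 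This establishes that $f$ is $(O(\zeta), \alpha, B)$-correct on $V_t^\perp$.

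Finally, assuming the invariant \invarref{invar:step} holds at step $t$, I would apply \lemref{lem:correct:before:after} to transfer $(\alpha, B)$-correctness from $V_t^\perp$ to $W_t^\perp$, paying an additive $\zeta$ in $\delta$ but staying well within the target slack. The main obstacle I anticipate is the bridging step in Case~2: going from the finite-net bound to the continuous-range statement in the definition of $(\delta,\alpha,B)$-correctness. This requires a quantitative continuity/Lipschitz estimate for $p(\sigma^2)$ matched to the granularity $\zeta$ of $S$, which in turn relies on smoothness of the discrete Gaussian mass function in its covariance and on the fact that $m$ is chosen large enough so that the empirical-to-true gap is strictly smaller than $\zeta$, leaving enough room to absorb the discretization error.
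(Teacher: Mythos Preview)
Your proposal follows the same two-case structure as the paper's proof: in the termination case, event $\calE$ converts the empirical threshold into the probability bound required by the failure-certificate definition; in the non-termination case, event $\calE$ yields $(2\zeta,\alpha,B)$-correctness on $V_t^\perp$, and then \lemref{lem:correct:before:after} transfers this to $W_t^\perp$ with an additional $\zeta$ loss, after which $3\zeta \le \frac{1}{10(Bn)^2}$ finishes the job.

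The one substantive difference is your planned Lipschitz/continuity step to pass from the discrete grid $S$ to the continuous ranges in the correctness definition. The paper does not do this at all: it simply asserts ``since the algorithm did not terminate at time $t$, we have that $f$ must be $(2\zeta,B)$-correct on $V_t^\perp$'' and moves on. You are right that there is a discretization gap here, and you are being more careful than the paper. However, note that your Lipschitz argument still cannot reach $\sigma^2 \in (\alpha B, 2\alpha B]$, since those values are not $\zeta$-close to any point of $S = [\alpha,\alpha B]\cap \zeta\mathbb{Z}$. This is a loose end in the paper's definitions rather than in your reasoning: the only downstream consumer of correctness on $W_t^\perp$ is \lemref{lem:exist:var} (via the Discrete Conditional Expectation Lemma), which only uses $\sigma^2 \in [\alpha,\alpha B]$, so the extra range is never actually needed. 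In short, your approach is correct and matches the paper's, your added Lipschitz step is honest but ultimately unnecessary for the argument as used, and the residual range issue is a definitional slack in the paper, not a flaw in your plan.
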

\begin{proof}
The first claim follows directly from the definition of a failure certificate and the condition that the empirical error given by $s(t,p)$ is $\zeta$-close to the actual error. 

Moreover, since the algorithm did not terminate at time $t$, we have that $f$ must be $(2\zeta,B)$-correct on $V_t^\perp$. 
Then by \lemref{lem:correct:before:after}, conditioned on the event $\calE$, we have that $f$ must be $(3\zeta,B)$-correct on $W_t^\perp$. 
Since $\zeta=\frac{1}{20( Bn)^2\log( Bn)}$, then $3\zeta\le\frac{1}{10(Bn)^2}$ for sufficiently large $n$ and so $f$ is $(\alpha, B)$-correct on $W^\perp$, as desired for the second claim.
\end{proof}
On the other hand, we show that if the attack lasts until round $r+1$, then $f$ cannot be correct on the orthogonal subspace afterwards, similar to Lemma 5.8 in \cite{HardtW13}. 
\begin{lemma}
\lemlab{lem:invar:end:wrong}
Suppose \invarref{invar:step} holds for $t=r+1$. 
Then $f$ is not $(\alpha, B)$-correct on $W_{r+1}$.
\end{lemma}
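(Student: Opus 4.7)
The plan is a proof by contradiction using the Discrete Conditional Expectation Lemma (\lemref{lem:disc_cond:exp}), exploiting the degenerate geometry that emerges at round $t=r+1$. First, I would use \invarref{invar:step} at $t=r+1$ to pin down $W_{r+1}$: the invariant gives $\dim(V_{r+1})=r$, and $W_{r+1}$ is the closest $r$-dimensional subspace of $\bA$ (itself $r$-dimensional) to $V_{r+1}$, so $W_{r+1}=\bA$. In particular, $\bA\cap W_{r+1}^\perp = \bA\cap\bA^\perp = \{0^n\}$. I interpret ``$(\alpha,B)$-correct on $W_{r+1}$,'' consistent with the convention of \lemref{lem:correct:before:after} where ``correct on $V_t^\perp$'' names the distribution family $D(V_t^\perp,\sigma^2)$, as correctness on the distribution family indexed by $V=W_{r+1}$, namely $D(W_{r+1}^\perp,\sigma^2)$ for $\sigma^2\in[\alpha,\alpha B]$, with failure probability $\delta\le\frac{1}{10(d_0B)^2}+\frac{1}{\poly(n)}$ sufficient to feed into \lemref{lem:disc_cond:exp}.

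Next, I would verify the hypotheses of \lemref{lem:disc_cond:exp}: the pre-processing in \lemref{lem:preprocessing} together with the choice $\alpha\ge 4\ell_{\bA}$ certifies the singular-value bound $\sigma_n(\Sigma)\ge\lambda_{\max}(\calL^\perp(\bA))^2\cdot\frac{\ln(2n(1+1/\eps))}{\pi}$ for the covariance $\Sigma$ of $D(W_{r+1}^\perp,\sigma^2)$. Taking $V=W_{r+1}=\bA$, which is a valid subspace of $\bA$ of dimension $t=r$ (within the permitted range $t\le r$), the conclusion of \lemref{lem:disc_cond:exp} would produce some $\sigma^2\in[\alpha,\alpha B]$ and a vector $\bu\in\bA\cap V^\perp$ satisfying
\[
\Ex{\langle \bu,\bx\rangle^2\,\mid\,f(\bx)=1}\ \ge\ \Ex{\langle \bu,\bx\rangle^2}+\frac{\alpha}{4Bd},
\]
for $d=\max(r-d_0,d_0)$. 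But $\bA\cap V^\perp=\{0^n\}$ forces $\bu=0$, so $\langle \bu,\bx\rangle\equiv 0$ and both expectations vanish, reducing the inequality to $0\ge\frac{\alpha}{4Bd}>0$, which is false. This contradiction shows $f$ cannot be $(\alpha,B)$-correct on $W_{r+1}$.

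The main obstacle is justifying that the degenerate boundary case $V=\bA$ (where $\bA\cap V^\perp$ is trivial) is genuinely within the scope of \lemref{lem:disc_cond:exp}. Its hypothesis formally allows $t\le r$, so $V=\bA$ is permitted, but one must re-examine the underlying reduction to the real-valued CE Lemma of \cite{HardtW13} to make sure the argument does not implicitly rely on a non-trivial intersection. If a gap arises there, the fallback is to transfer correctness from $W_{r+1}^\perp$ to $V_{r+1}^\perp$ via \lemref{lem:correct:before:after} together with the TV-distance bound of \lemref{lem:invar:tvd}, and then argue that a further progress step of the attack would produce an $(r+1)$-th unit vector in $V_{r+1}^\perp$ highly correlated with $\bA$---which is impossible since $V_{r+1}$ already spans (up to error $\tfrac{r+1}{40(Bn)^{3.5}\log^{2.5}(Bn)}$) all of $\bA$.
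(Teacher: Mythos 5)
Your argument is correct in spirit but takes a genuinely different route from the paper's proof. You and the paper both first pin down $W_{r+1}=\bA$ from $\dim(V_{r+1})=\dim(W_{r+1})=r$ and $W_{r+1}\subseteq\bA$. From there the paper uses a much more elementary observation: since $W_{r+1}^\perp=\bA^\perp$, the only part of a sample $\bg\sim D(W_{r+1}^\perp,\sigma^2)$ that the sketch can see is $P_{\bA}\bg$, and in the Hardt--Woodruff distribution (where the $\bg_1$ component lives in $V^\perp$ and the additive noise floor $\bg_2$ has fixed variance) this projection kills the $\sigma^2$-dependent component entirely, so $\bA\bg$ has the same distribution at $\sigma^2=2\alpha$ and at $\sigma^2=B\alpha$. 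Hence $f$ cannot satisfy both bullets of $(\alpha,B)$-correctness. No appeal to the Conditional Expectation Lemma is needed.

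Your route instead invokes the contrapositive of \lemref{lem:disc_cond:exp}: if $f$ were $(\alpha,B)$-correct on $W_{r+1}^\perp$, the lemma would hand you $\bu\in\bA\cap W_{r+1}^\perp=\bA\cap\bA^\perp=\{0\}$ with a strictly positive conditional-expectation gap, i.e.\ $0\ge\frac{\alpha}{4Bd}>0$, a contradiction. This is logically sound \emph{provided} the CE Lemma really covers the degenerate case $t=r$, a point you flag yourself and which is the right thing to worry about: the stated hypothesis ``$t\le r$'' permits it, and the reduction in \lemref{lem:disc_cond:exp} is itself a proof by contradiction that feeds into the real-valued CE Lemma of Hardt--Woodruff, whose proof in the $t=r$ case is exactly the observation the paper uses directly (projection onto $\bA$ is $\sigma^2$-independent, so $\Pr[f(\bg)=1]$ is constant in $\sigma^2$, so $f$ cannot be correct for small $\delta$). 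So your approach is correct but calls a heavier lemma for something the paper dispatches in one line, and does so in precisely the regime where that lemma's content is vacuous. Your fallback (transfer correctness from $W_{r+1}^\perp$ to $V_{r+1}^\perp$ via \lemref{lem:correct:before:after}, then invoke the progress step to extract a unit vector in $V_{r+1}^\perp$ nearly inside $\bA$, which is impossible once $V_{r+1}$ already nearly spans $\bA$) is also valid, but again heavier than the paper's direct argument.
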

\begin{proof}
Note that $\dim(V_{r+1})=\dim(W_{r+1})=r$. 
Since $W_{r+1}\subseteq\bA$ and $\dim(\bA)=r$ then $W_{r+1}=\bA$. 
Note that $f$ cannot distinguish between $\bg$ for $\bg\sim D(W_{r+1}^\perp,2\alpha)$ and $\bg\sim D(W_{r+1}^\perp,B\alpha)$. 
Thus $f$ must be incorrect with constant probability on inputs $\bg$ generated from one of the distributions. 
\end{proof}

To ultimately prove the progress lemma in \lemref{lem:progress:lem}, we first show that the Conditional Expectation Lemma still holds for a discretization of the variance in the set $S$ of the attack, analogous to Lemma 5.9 in \cite{HardtW13}. 
\begin{lemma}
\lemlab{lem:exist:var}
Suppose $f$ is correct on $W_t^\perp$. 
Then there exists $\widetilde{\sigma}^2\in S$, $\Delta\ge\frac{\alpha}{7Br}$ and $\bu\in V^\perp_t\cap\bA$ such that for $g\sim D(V^\perp_t,\widetilde{\sigma}^2)$, we have $\PPr{f(\bg)=1}\ge\frac{1}{60B^2 r}$ and
\[\Ex{\langle\bu,\bg\rangle^2\,\mid\,f(\bg)=1}\ge\Ex{\langle\bu,\bg\rangle^2}+\Delta.\]
\end{lemma}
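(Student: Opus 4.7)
The plan is to apply the Discrete Conditional Expectation Lemma (\lemref{lem:disc_cond:exp}) with $V = W_t$, which is legal since $W_t \subseteq \bA$ with $\dim W_t = t-1 \le r$ and, by hypothesis, $f$ is $(\alpha, B)$-correct on $W_t^\perp$. The choice of $\alpha$ guarantees that $\sigma_n(\Sigma_{\sigma^2}) = \sigma^2/4 \ge \lambda_{\max}(\calL^\perp(\bA))^2 \cdot \ln(2n(1+1/\eps))/\pi$ for every $\sigma^2 \in [\alpha, B\alpha]$, which is precisely the hypothesis \lemref{lem:disc_cond:exp} needs. Applying it yields a continuous $\sigma_0^2 \in [\alpha, B\alpha]$ and a unit vector $\bu_0 \in W_t^\perp \cap \bA$ such that, for $\bx \sim D(W_t^\perp, \sigma_0^2)$, $\PPr{f(\bx)=1} \ge 1/(40 B^2 r)$ and $\Ex{\langle \bu_0, \bx \rangle^2 \,\mid\, f(\bx)=1} \ge \Ex{\langle \bu_0, \bx\rangle^2} + \alpha/(4Br)$. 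The slack between $\alpha/(4Br)$ and the target $\alpha/(7Br)$ is the error budget for the two perturbations below.

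Next, I discretize the variance and switch the subspace. Rounding $\sigma_0^2$ to the nearest $\widetilde{\sigma}^2 \in S$ perturbs it by at most $\zeta = 1/(20(Bn)^2 \log(Bn))$; since $\zeta$ is polynomially small relative to $\alpha$, a direct density calculation using \factref{fact:normalization} combined with \lemref{thm:dist:round:close} bounds $\TVD(D(W_t^\perp, \sigma_0^2), D(W_t^\perp, \widetilde{\sigma}^2))$ by $1/\poly(n)$. Composing with \lemref{lem:invar:tvd}, which under \invarref{invar:step} gives $\TVD(D(V_t^\perp, \widetilde{\sigma}^2), D(W_t^\perp, \widetilde{\sigma}^2)) \le 1/(B^3 n^2 \log^2(Bn))$, the triangle inequality yields $\TVD(D(W_t^\perp, \sigma_0^2), D(V_t^\perp, \widetilde{\sigma}^2)) \le 1/\poly(n)$, which immediately transfers the probability bound to $\PPr{f(\by)=1} \ge 1/(60 B^2 r)$ for $\by \sim D(V_t^\perp, \widetilde{\sigma}^2)$. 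To land in $V_t^\perp \cap \bA$ rather than $W_t^\perp \cap \bA$, I define $\bu$ to be the projection of $\bu_0$ onto $V_t^\perp \cap \bA$; since $\bu_0 \perp W_t$ and $d(V_t, W_t) \le 1/\poly(n)$ by \invarref{invar:step}, one checks $\|\bu - \bu_0\|_2 \le d(V_t, W_t) \le 1/\poly(n)$.

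The main technical obstacle I foresee is transferring the unbounded second moment $\Ex{\langle \bu_0, \bx\rangle^2}$ and its conditional version across the $1/\poly(n)$ TVD bound, since TVD alone does not control an unbounded integrand. I handle this by truncation: on the event $\calB = \{\|\bx\|_2^2 \le C\sigma^2 \log n\}$ the integrand satisfies $\langle \bu_0, \bx\rangle^2 \le C\sigma^2 \log n$, while the tail $\overline{\calB}$ has probability $n^{-\omega(1)}$ and, by the sub-exponential tails of $\langle \bu_0, \bx\rangle^2$ via \lemref{lem:sub:props} and \lemref{lem:disc:subgauss}, contributes at most $\alpha/\poly(n)$ to the expectation. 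On $\calB$ the TVD bound yields an expectation difference of at most $C\sigma^2 \log n / \poly(n) = \alpha/\poly(n)$, and the conditional version follows after dividing by $\PPr{f(\bx)=1} \ge 1/\poly(n)$; the swap $\bu_0 \to \bu$ contributes a further $\alpha/\poly(n)$ via $|\langle\bu,\by\rangle^2 - \langle\bu_0,\by\rangle^2| \le 2\|\bu-\bu_0\|_2\|\by\|_2^2$ together with the sub-Gaussian concentration $\|\by\|_2^2 = O(\sigma^2 n)$ from \lemref{lem:bernstein:subexp}. Summing all losses, the conditional-expectation gap on $\by \sim D(V_t^\perp, \widetilde{\sigma}^2)$ against $\bu$ is at least $\alpha/(4Br) - \alpha/\poly(n) \ge \alpha/(7Br)$ for sufficiently large $n$, which completes the proof.
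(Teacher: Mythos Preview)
Your proposal is correct and follows essentially the same approach as the paper: apply the Discrete Conditional Expectation Lemma on $W_t^\perp$, transfer the probability and expectation bounds to $V_t^\perp$ via the TVD bound from \lemref{lem:invar:tvd}, discretize the variance to the grid $S$, handle the unbounded second moment by truncation plus sub-Gaussian tails, and finally project $\bu_0$ from $W_t^\perp\cap\bA$ into $V_t^\perp\cap\bA$ using the smallness of $d(V_t,W_t)$ from \invarref{invar:step}. The paper performs these steps in a slightly different order (it transfers subspaces before rounding the variance) and its truncation level is written somewhat imprecisely, but the substance is identical; your explicit accounting of the $\alpha/\poly(n)$ losses and the observation that the conditional TVD blows up by at most a $O(B^2 r)$ factor are exactly the mechanisms the paper uses as well.
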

\begin{proof}
Since $f$ is correct on $W_t^\perp$, then by the Conditional Expectation Lemma, i.e., \lemref{lem:cond:exp}, there exists $\bu\in U=W_t^\perp\cap\bA$ and $\sigma\in\left[\alpha,\alpha B\right]$ such that
\[\Ex{\langle\bu,\bg\rangle^2\,\mid\,f(\bg)=1}\ge\Ex{\langle\bu,\bg\rangle^2}+\frac{\alpha}{4Br}\]
and $\PPr{f(\bg)=1}\ge\frac{1}{40B^2 r}$. 
Since \invarref{invar:step} holds at step $t$, then by \lemref{lem:invar:tvd}, we have that
\[\TVD(D(V_t^\perp,\sigma^2),D(W_t^\perp,\sigma^2))\le\frac{1}{B^3n^2\log^2(Bn)}.\]
Thus we have $\PPr{f(\bg)=1}\ge\frac{1}{50 B^2r}$. 
Moreover by the definition of conditioning, the total variation distance between the two distributions under the condition that $f(\bg)=1$ can increase by at most a factor of $50 B^2r$. 
Thus for any function $F:\mathbb{R}^n\to[0,M]$, we have
\[\left|\EEx{\bg\sim D(W_t^\perp,\sigma^2)}{F(\bg)\,|\,f(\bg)=1}\right|-\left|\EEx{\bg\sim D(V_t^\perp,\sigma^2)}{F(\bg)\,|\,f(\bg)=1}\right|\le\frac{M(50B^2r)}{B^3n^2\log^2(Bn)}.\]
By sub-Gaussian concentration, we have that $\PPr{\langle\bu,\bg\rangle^2>10\beta C}\le\exp(-\beta)$ for any $\beta>0$ and thus we can truncate $\langle\bu,\bg\rangle^2$ at $M=10B\log(rB)$ while only affecting either expectation by at most $o\left(\frac{\alpha}{Br}\right)$. 
Therefore, we have
\[\left|\EEx{\bg\sim D(W_t^\perp,\sigma^2)}{F(\bg)\,|\,f(\bg)=1}\right|-\left|\EEx{\bg\sim D(V_t^\perp,\sigma^2)}{F(\bg)\,|\,f(\bg)=1}\right|\le o\left(\frac{\alpha}{Br}\right).\]
Note that the same claim holds for the rounding of $\sigma^2$ to some $(\widetilde{\sigma})^2$ that is a multiple of $\zeta=\frac{1}{20(Bn)^2\log(Bn)}$ within the interval $S=[\alpha,B\alpha]\cap\zeta\mathbb{Z}$. 
Finally, note that we have $\|P_{V_t}\bu\|_2\ge 1-\frac{1}{B^2n^2}$, since $\bu\in W_t\cap\bA$ and the invariant holds for $(V_t,W_t)$. 
Hence, there exists $\bu\in V^\perp_t\cap\bA$ such that
\begin{align*}
\Ex{\langle\bu,\bg\rangle^2\,\mid\,f(\bg)=1}&\ge\Ex{\langle\bu,\bg\rangle^2}+\frac{\alpha}{4Br}-o\left(\frac{1}{Br}\right)-\frac{1}{B^2n^2}\\
&\ge\Ex{\langle\bu,\bg\rangle^2}+\Delta,  
\end{align*}
for any $\Delta\ge\frac{\alpha}{7Br}$, as desired. 
\end{proof}

We next lower bound the norm of the output vector $\bv^*$ after projecting onto $V_t^\perp\cap\bA$, adapting Lemma 5.10 in \cite{HardtW13} to handle discrete Gaussians in the computation of the maximizer $\bv^*$. 
Within the confines of the proof of \lemref{lem:found:vec}, we require a statement (\lemref{lem:top:vector}) that shows the vector we find in each round has a large projection onto the rowspan of $\bA$ orthogonal to the rows we have already found; we defer the proof of this statement to \secref{sec:attack:top:singular}. 
\begin{lemma}
\lemlab{lem:found:vec}
Let $N>0$ be sufficiently large and suppose  $\sigma_n(\Sigma_t)>\lambda_{\max}(\calL^{\perp}(\bA))^2 \cdot \frac{\ln(2n(1+1/\eps))}{\pi}$. 
Then with probability $1-\frac{1}{\poly(n)}$, the vector $\bv^*$ found in step $t$ satisfies
\[\|P_{V^\perp_t\cap\bA}\bv^*\|_2^2\ge 1-\frac{1}{200(Bn)^{3.5}\log^4(Bn)}.\]
\end{lemma}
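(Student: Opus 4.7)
The plan is to adapt the proof of Lemma~5.10 in~\cite{HardtW13} to the discrete-Gaussian setting, routing the steps that rely on rotational invariance through the cell lemma (\lemref{thm:dist:round:close}). By construction, $\bv^*$ is the top eigenvector of the empirical second-moment matrix $M := \frac{1}{m'}\sum_{i=1}^{m'}\bx'_i(\bx'_i)^\top$ of the positively-labeled samples $\bx'_1,\dots,\bx'_{m'}\sim D(V_t^\perp,\widetilde\sigma^2)\mid f(\cdot)=1$, and $\bv^*$ maximizes $z(\bv)=\bv^\top M\bv$ over unit vectors. Write $\Sigma':=\Ex{\bx\bx^\top\mid f(\bx)=1}$ for the true conditional second-moment matrix.

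First, invoke \lemref{lem:exist:var}, which via the Discrete Conditional Expectation Lemma (\lemref{lem:disc_cond:exp}) yields a unit vector $\bu\in V_t^\perp\cap\bA$, a scale $\widetilde\sigma^2\in S$, and a gap $\Delta\ge\alpha/(7Br)$ such that $\Ex{\langle\bu,\bx\rangle^2\mid f(\bx)=1}\ge\Ex{\langle\bu,\bx\rangle^2}+\Delta$ and $\PPr{f(\bx)=1}\ge 1/(60B^2r)$. In particular, $\bu^\top\Sigma'\bu\ge\widetilde\sigma^2+\Delta$. Next, I would establish empirical concentration of the quadratic form $z$: since the samples are sub-Gaussian (\lemref{lem:disc:subgauss}) with parameter $O(\widetilde\sigma)$, and since conditioning on an event of probability $p\ge 1/(60B^2r)$ inflates the sub-Gaussian norm by at most $O(\sqrt{\log(1/p)})=O(\sqrt{\log(Bn)})$, a net-plus-matrix-Bernstein argument paralleling the proof of \lemref{lem:two:sided:oper} gives $\|M-\Sigma'\|_{\mathrm{op}}\le\widetilde\sigma^2/\poly(Bn)$ with probability $1-1/\poly(n)$. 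With $m'\ge m/(100B^2n)=\widetilde{\Omega}(B^{11}n^{10})$, this error is much smaller than the target $\widetilde\sigma^2/(200(Bn)^{3.5}\log^4(Bn))$.

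The key step is to upper bound $\bw^\top\Sigma'\bw$ for unit $\bw\perp(V_t^\perp\cap\bA)$. Decompose $\bw=\bw_V+\bw_{\bA^\perp}$ with $\bw_V\in V_t$ and $\bw_{\bA^\perp}\in\bA^\perp$, and pass to the continuous Gaussian $\bg\sim G(V_t^\perp,\widetilde\sigma^2)$ by means of the cell lemma (\lemref{thm:dist:round:close}). Specifically, replace $f$ by the ``lifted'' sketch $\widetilde f(\bg):=f(\phi(\bA\bg))$ that rounds $\bA\bg$ to the lattice $\calL(\bA)$; the distribution of $\bA\bx$ has total-variation distance at most $1/\poly(n)$ to the distribution of the rounded $\bA\bg$, and $\sigma_n(\bSigma_t)$ satisfies the hypotheses of \lemref{thm:dist:round:close} under the lemma's assumption on $\lambda_{\max}(\calL^\perp(\bA))$. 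After truncating the sub-Gaussian random variable $\langle\bw,\bg\rangle^2$ at $O(\widetilde\sigma^2\log n)$ (absorbing an $o(\Delta)$ tail), conditional expectations of this truncated statistic agree up to a $\widetilde\sigma^2/\poly(n)$ additive error between the two settings. In the continuous setting, $\widetilde f$ depends only on $\bA\bg$, so the rotational-invariance argument of \cite{HardtW13} (applied to the Gaussian covariance $\tfrac{3\widetilde\sigma^2}{4}P_{V_t^\perp}+\tfrac{\widetilde\sigma^2}{4}I$) yields $\Ex{\langle\bw,\bg\rangle^2\mid\widetilde f(\bg)=1}\le\widetilde\sigma^2/4+o(\Delta)$. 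Transporting back to the discrete setting then gives $\bw^\top\Sigma'\bw\le\widetilde\sigma^2/4+o(\Delta)$.

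Finally, I combine the bounds. Writing $\bv^*=\alpha\bv_1+\beta\bv_2$ with $\bv_1\in V_t^\perp\cap\bA$, $\bv_2\perp(V_t^\perp\cap\bA)$ both unit and $\alpha^2+\beta^2=1$, empirical concentration gives $z(\bv^*)\ge z(\bu)\ge\widetilde\sigma^2+\Delta-\widetilde\sigma^2/\poly(Bn)$, while the upper bound from the previous step yields $z(\bv^*)\le\alpha^2(\widetilde\sigma^2+\Delta)+\beta^2(\widetilde\sigma^2/4)+o(\Delta)+\widetilde\sigma^2/\poly(Bn)$. Rearranging, $\beta^2\cdot\tfrac{3\widetilde\sigma^2}{4}\le\widetilde\sigma^2/\poly(Bn)$, and choosing the $\poly(Bn)$ factor large enough (tunable through the cell-lemma constant $C$ in \lemref{thm:dist:round:close} and through the attack's sample size $m=\widetilde\Theta(B^{13}n^{11})$) yields $\beta^2\le 1/(200(Bn)^{3.5}\log^4(Bn))$, which is precisely the claimed bound. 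The main obstacle is the third step: the continuous argument of \cite{HardtW13} relies on rotational invariance to decouple $P_{\bA^\perp}\bg$ from the sketch's output, and this decoupling fails for discrete Gaussians. The workaround routes through the cell lemma at the cost of $1/\poly(n)$ total-variation error, which has to be kept simultaneously below the conditional expectation gap $\Delta=\Omega(\alpha/(Br))$ and the final target $1/(200(Bn)^{3.5}\log^4(Bn))$; a careful choice of the polynomial exponents in the cell-lemma constant and in the sample size $m$ makes this accounting work out.
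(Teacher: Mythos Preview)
Your high-level strategy (invoke the discrete conditional expectation lemma, transfer to the continuous Gaussian via the cell lemma, then run a spectral-concentration argument) is the same as the paper's. But the third step contains a genuine error that breaks the final rearrangement.

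You claim that for unit $\bw\perp(V_t^\perp\cap\bA)$, i.e., $\bw\in V_t+\bA^\perp$, one has $\Ex{\langle\bw,\bg\rangle^2\mid\widetilde f(\bg)=1}\le\widetilde\sigma^2/4+o(\Delta)$. This is false for the $\bA^\perp$ component. Since $V_t\subseteq\bA$, we have $\bA^\perp\subseteq V_t^\perp$, so for unit $\bw_{\bA^\perp}\in\bA^\perp$ the covariance $\tfrac{3\widetilde\sigma^2}{4}P_{V_t^\perp}+\tfrac{\widetilde\sigma^2}{4}I$ already gives an \emph{unconditional} second moment of $\widetilde\sigma^2$, not $\widetilde\sigma^2/4$. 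Moreover, $\widetilde f$ depends only on $P_{\bA}\bg$, which is independent of $P_{\bA^\perp}\bg$ in the continuous setting, so conditioning on $\widetilde f(\bg)=1$ does not reduce this second moment. Thus the correct upper bound on $\bw^\top\Sigma'\bw$ for $\bw\in V_t+\bA^\perp$ is essentially $\widetilde\sigma^2$, and the separation from the ``good'' direction $\bu$ is only the gap $\Delta=\Theta(\alpha/(Br))$, not $3\widetilde\sigma^2/4$.

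This matters for your last paragraph: the rearrangement $\beta^2\cdot\tfrac{3\widetilde\sigma^2}{4}\le\widetilde\sigma^2/\poly(Bn)$ becomes $\beta^2\cdot\Delta\lesssim(\text{concentration error})+o(\Delta)$, which is a much more delicate inequality. In addition, your upper bound $z(\bv^*)\le\alpha^2(\widetilde\sigma^2+\Delta)+\beta^2(\cdot)$ implicitly assumes $\bv_1^\top\Sigma'\bv_1\le\widetilde\sigma^2+\Delta$ for every unit $\bv_1\in V_t^\perp\cap\bA$, but the conditional expectation lemma gives only a \emph{lower} bound for one particular $\bu$, not an upper bound for the whole subspace; and the cross term $2\alpha\beta\,\bv_1^\top\Sigma'\bv_2$ is never addressed. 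The paper avoids all three issues by packaging the spectral step into \lemref{lem:top:vector}: it sets the threshold $\tau\approx\widetilde\sigma^2$ (not $\widetilde\sigma^2/4$), verifies separately that (i) $\Ex{\langle\bw,\bg\rangle^2\mid f=1}\le\tau$ for $\bw\in V_t+\bA^\perp$, (ii) the cross moments $\Ex{\langle\bv,\bg\rangle\langle\bw,\bg\rangle\mid f=1}$ are negligible, and (iii) there exists $\bu\in V_t^\perp\cap\bA$ with conditional second moment $\ge\tau+\Delta$; and then appeals to the net-plus-Chernoff argument of \lemref{lem:top:vector} with $\gamma=\Theta\bigl((Bn)^{-3.5}\log^{-4}(Bn)\bigr)$, which requires precisely the sample budget $m'=\Omega(n\log^2 n\cdot\xi^2/(\gamma^2\Delta^2))$ that the attack allots.
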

\begin{proof}
Consider the setting of $\widetilde{\sigma}^2$ in \lemref{lem:exist:var}. 
For $g\sim D(V^\perp_t,\widetilde{\sigma}^2)$, we have that by \lemref{lem:exist:var}, $\PPr{f(\bg)=1}\ge\frac{1}{60B^2r}$. 
Consider the conditional distribution of $g\sim D(V^\perp_t,\widetilde{\sigma}^2)$, conditioned on the event that $f(\bg)=1$. 
Observe that $\bg'_1,\ldots,\bg'_{m'}$ are sampled uniformly from this distribution. 
Since $\calA$ outputs $1$ on each sample with probability at least $\frac{1}{70B^2r}$, then for $\Delta=\O{\frac{1}{Br}}$, $\xi^2=\O{B^2\log^2 n}$, and $\gamma=\O{\frac{1}{(Br)^{3.5}\log^4(Bn)}}$, we have
\[m'\ge\frac{m}{700B^2r}=\Omega(B^{11}n^{10}\log^{15}(r))\]
samples from $D(V^\perp_t,\widetilde{\sigma}^2)$ conditioned on the event that $f(\trunc_\eta(\bg))=1$. 

Define $\bv^*$ to be the first vector $\bv_\sigma=\argmax_{\bv\in\mathbb{R}^n}z(\bv)$ with $z(\bv)\ge\sigma^2+\frac{\sigma^2}{4}+\frac{1}{14Br}$ for $z(\bv)=\frac{1}{m'}\sum_{i=1}^{m'}\langle\bv,\bg'_i\rangle^2$. 
We show that the prerequisites of \lemref{lem:top:vector} apply to conclude the mass of the projection of $\bv^*$ onto $V_t^\perp$, for the setting of $\gamma=\frac{1}{(Bn)^{3.5}\log^4(Bn)}$. 
In particular, we set $V=V_t^\perp\cap\bA$, $W=V_t+\bA^\perp$, $\tau=\widetilde{\sigma}^2+\frac{\widetilde{\sigma}^2}{4}+\frac{1}{\poly(n)}$, and $\Delta$ be the parameter from \lemref{lem:exist:var}. 
\begin{enumerate}
\item 
We first show the second condition. 
Let $\bw\in W$ be an arbitrary unit vector and consider the decomposition $\bw=\alpha\bw_1+\beta\bw_2$, where $\bw_1\in V_t$ and $\bw_2'\in\bA^\perp$, orthogonal to $\bw_1$, are unit vectors. 
We have $\sigma_n(\Sigma_t)>\lambda_{\max}(\calL^{\perp}(\bA)) \cdot \sqrt{\frac{\ln(2n(1+1/\eps))}{\pi}}$. 
Hence, up to a $\frac{1}{\poly(n)}$ additive variation distance, the distribution of $\bg$ can be written as a product distribution $\bg=\bg_1+\bg_2$ for independent continuous Gaussians $\bg_1$ and $\bg_2$ in $\bA$ and $\bA^\perp$ respectively, and then rounding the subsequent value, by an argument similar to \lemref{thm:dist:round:close}.  
Therefore, we have $\Ex{\langle\bw_1,\bg\rangle\langle\bw_2,\bg\rangle}\le\frac{1}{\poly(n)}$. 
We also observe that with high probability, $|\langle\bw_1,\bg\rangle|\le\O{\sigma\log n}$ and $|\langle\bw_2,\bg\rangle|\le\O{\log n}$. 
Hence, we have $\Ex{\langle\bw_1,\bg\rangle\langle\bw_2,\bg\rangle\,\mid\,f(\bg)=1}\le\frac{1}{\poly(n)}+\O{\sigma\log^2 n}$. 
\item
We next show the first condition. 
Let $\bw\in W$ be any fixed unit vector and again consider the decomposition $\bw=\alpha\bw_1+\beta\bw_2$, where $\bw_1\in V_t$ and $\bw_2\in\bA^\perp$, orthogonal to $\bw_1$, are unit vectors. 
We seek to upper bound $\EEx{\bg}{\langle\bw_2,\bg\rangle^2\,\mid\,P_A\bg=\bz}$ for some fixed $\bz$ in the rowspan of $\bA$. 
Consider another vector $\bz'$ in the rowspan of $\bA$ such that $\left\lvert\|\bz\|_2^2-\|\bz'\|^2\right\rvert\le\O{\sigma\log n}$. 
Note that the ratio of the probability mass function assigned to $\bz$ and $\bz'$ is $\exp((\bz+\bq^\top)\bSigma^{-1}(\bz+\bq) - (\bz'+\bq)^\top \bSigma^{-1}(\bz'+\bq))$ across all vectors $\bq$ orthogonal to $\bA$. 
Therefore, since $\left\lvert\|\bz\|_2^2-\|\bz'\|_2^2\right\rvert\le\O{\sigma\log n}$, then we have that $\exp((\bz+\bq)^\top\bSigma^{-1}(\bz) - (\bz'+\bq)^\top \bSigma^{-1}(\bz'+\bq))\le\left(1+\frac{1}{\poly(n)}\right)$ since $\sigma_{\min}(\bSigma^{-1})\ge\frac{\sigma^2}{\poly(n)}\gg\sigma\log n$. 
Moreover, we have $\langle\bw,\bq+\bz\rangle^2-\langle\bw,\bq+\bz'\rangle^2\le\O{\sigma\log n}$ since $\bq$ is orthogonal to $\bw$ and has variance at most $\O{1}$. 
Given the tail bounds for sub-Gaussian vectors, we thus have that
\[\EEx{\bg}{\langle\bw_2,\bg\rangle^2\,\mid\,P_A\bg=\bz}-\EEx{\bg}{\langle\bw_2,\bg\rangle^2\,\mid\,P_A\bg=\bz'}\le\O{\sigma\log n}.\]
Therefore, we have
\begin{align*}
\EEx{\bg}{\langle\bw_2,\bg\rangle^2\,\mid\,f(\bg)=1}&=\sum_{\bz|f(\bz) = 1}\EEx{\bg}{\langle\bw_2,\bg\rangle^2\,\mid\,P_A\bg=\bz}\cdot\frac{\PPr{P_A\bg=\bz}}{\PPr{f(\bz)=1}}\\
&\le\O{\sigma\log n}+\sum_{\bz|f(\bz) = 1}\EEx{\bg}{\langle\bw_2,\bg\rangle^2}\cdot\frac{\PPr{P_A\bg=\bz}}{\PPr{f(\bz)=1}}\\
&\le\O{\sigma\log n}+\EEx{\bg}{\langle\bw_2,\bg\rangle^2}.
\end{align*}
On the other hand, we have $\EEx{\bg}{\langle\bw_2,\bg\rangle^2}\le n+\sigma^2$ by the fact that $\EEx{\bg\sim G(V_t^\perp,\sigma^2)}{\langle\bw_2,\bg\rangle^2}\le\sigma^2$ and the difference between the probability density function and probability mass function for continuous and discrete Gaussian distributions. 
By a similar argument, we have $\EEx{\bg}{\langle\bw_1,\bg\rangle^2}\ll\O{n}$, since $\bw_1$ is contained in $V_t$. 
Finally, by the above conditioned, we have that $\Ex{\langle\bw_1,\bg\rangle\langle\bw_2,\bg\rangle\,\mid\,f(\bg)=1}\le\frac{1}{\poly(n)}+\O{\sigma\log^2 n}$. 
Putting these together, we have that $\EEx{\bg}{\langle\bw,\bg\rangle^2}\le\tau$, as desired.
\item 
We next prove the third condition. 
By \lemref{lem:exist:var}, there exists $\bv\in V$ such that $\Ex{\langle\bv,\bg\rangle}^2\ge\tau+\frac{\Delta}{2}$ for $\Delta\ge\frac{1}{7Br}$. 
\item 
We next show that the fourth condition holds. 
For every unit vector $\bu\in\mathbb{R}^n$, observe that $\xi^2=\mathbb{V}(\langle\bu,\bg\rangle^2)\le\O{B^2\log^2 n}$ by standard sub-Gaussian concentration bounds.  
\item 
Finally, we show that we have sufficiently large number of samples. 
Note that $m'=\Omega(B^{11}n^{10}\log^{15}(r))=\Omega\left(\frac{n\log^2(n)\xi^2}{\gamma^2\Delta^2}\right)$, for $\Delta=\Omega\left(\frac{1}{Br}\right)$, $\gamma=\Omega\left(\frac{1}{(Br)^{3.5}\log^4(Bn)}\right)$, and $\xi^2=\O{B^2\log^2 n}$. 
\end{enumerate}
Therefore by \lemref{lem:top:vector}, we have that with probability at least $1-\exp(-n)$, both $z(\bv)\ge\sigma^2+\frac{\sigma^2}{14Br}$ and 
\[\|P_{V^\perp_t\cap\bA}\bv^*\|_2^2\ge 1-\frac{1}{200(Bn)^{3.5}\log^4(Bn)},\]
for the variance $\widetilde{\sigma^2}$ and the vector $\bv^*$ the maximizes the quantity $z(\bv)$. 

Moreover, we call a variance $\sigma^2\in S$ \emph{bad} if for every unit vector $\bv\in V_t^\perp\cap\bA$, we have for $\bg\sim D(V_t^\perp,\sigma^2)$,
\[\Ex{\langle\bu,\bg\rangle^2\,|\,f(\bg)=1}\le\Ex{\langle\bu,\bg\rangle^2}+\frac{\Delta}{20}.\]
Then by a standard concentration inequality and a union bound over a net, similar to the argument in \lemref{lem:top:vector}, we have with probability at least $1-\exp(-n)$ that for all bad $\sigma^2$ and all vectors $\bv$, $z(\bv)<\sigma^2+\frac{\sigma^2}{18Br}$. 
Therefore, the vector $\bv^*$ from $\widetilde{\sigma^2}$ will be the first vector that achieves sufficiently high objective and will be output by step $t$. 
\end{proof}

We finally show that \invarref{invar:step} holding at round $t$ implies \invarref{invar:step} holds at round $t+1$ if the algorithm does not terminate, similar to Lemma 5.11 in \cite{HardtW13}. 
This will allow us to finish the proof of the progress lemma. 
\begin{lemma}
\lemlab{lem:invar:next:step}
Let $\eta=\frac{1}{\poly(n)}$ be a fixed parameter. Suppose  $\sigma_n(\Sigma_t)>\lambda_{\max}(\calL^{\perp}(\bA))^2 \cdot \frac{\ln(2n(1+1/\eps))}{\pi}$. Suppose $(V_t,W_t)$ satisfy \invarref{invar:step} and suppose $\bv^*$ computed in round $t$ satisfies 
\[\|P_{V^\perp_t\cap\bA}\bv^*\|_2^2\ge 1-\frac{1}{200(Bn)^{3.5}\log^4(Bn)}.\]
Then \invarref{invar:step} holds for $(V_{t+1},W_{t+1})$. 
\end{lemma}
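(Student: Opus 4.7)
The plan is to verify both clauses of \invarref{invar:step} at step $t+1$. For the dimension count, observe that $\|P_{V_t^\perp}\bv^*\|_2^2 \ge \|P_{V_t^\perp \cap \bA}\bv^*\|_2^2 \ge 1 - \frac{1}{200(Bn)^{3.5}\log^4(Bn)} > 0$, so the Gram--Schmidt step in the attack algorithm produces a well-defined unit vector $\bv_t \in V_t^\perp$, giving $\dim(V_{t+1}) = \dim(V_t)+1 = t$ by the inductive hypothesis.

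The main task is to exhibit a $t$-dimensional subspace $W_{t+1} \subseteq \bA$ and bound $d(V_{t+1}, W_{t+1})$. I would define $\bu = P_{W_t^\perp \cap \bA}\bv^* / \|P_{W_t^\perp \cap \bA}\bv^*\|_2$ and take $W_{t+1} = W_t + \mathrm{span}(\bu)$; by construction $W_{t+1} \subseteq \bA$ and $\dim(W_{t+1}) = t$, and the denominator is nonzero because the hypothesis forces $\|P_{V_t^\perp \cap \bA}\bv^*\|_2$ close to $1$ while $d(V_t, W_t)$ is small. Then decomposing $P_{V_{t+1}} - P_{W_{t+1}} = (P_{V_t} - P_{W_t}) + (P_{\bv_t} - P_\bu)$, the first summand has operator norm at most $d(V_t, W_t)$ by the inductive hypothesis.

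To bound the second summand, I would use the orthogonal decomposition $P_{V_t^\perp}\bv^* = P_{V_t^\perp \cap \bA}\bv^* + P_{\bA^\perp}\bv^*$, which holds because $V_t \subseteq \bA$ so that $\bA^\perp \subseteq V_t^\perp$ and sits orthogonally to $V_t^\perp \cap \bA$ inside $V_t^\perp$. This shows that $\bv_t$ is essentially the unit vector $\tilde\ba := P_{V_t^\perp \cap \bA}\bv^*/\|P_{V_t^\perp \cap \bA}\bv^*\|_2 \in \bA$, perturbed by an $\bA^\perp$-component of norm controlled by $\|P_{\bA^\perp}\bv^*\|_2 \le \sqrt{\gamma}$ (with $\gamma$ as in the hypothesis). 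Because $P_{V_t^\perp \cap \bA} - P_{W_t^\perp \cap \bA} = P_{W_t} - P_{V_t}$ has operator norm $d(V_t, W_t)$, the normalized vectors $\tilde\ba$ and $\bu$ differ by at most $O(d(V_t, W_t))$, and a triangle inequality then gives $\|P_{\bv_t} - P_\bu\|_{op}$ in terms of $\sqrt{\gamma}$ and $d(V_t, W_t)$.

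The main obstacle is the bookkeeping: one must verify that $d(V_t, W_t) + \|P_{\bv_t} - P_\bu\|_{op}$ fits within the per-step growth budget $\tfrac{1}{40(Bn)^{3.5}\log^{2.5}(Bn)}$. The $\sqrt{\gamma}$ contribution from the $\bA^\perp$-escape of $\bv^*$ is the delicate term, and controlling it requires leveraging that $\bv^*$ is an \emph{exact} top singular vector (no extra sampling error in the direction selection) together with the Pythagorean decomposition above, so that the cross-terms between the $\bA^\perp$-escape and the $V_t$-vs-$W_t$ misalignment do not inflate the bound. Once the increment is in hand, the invariant at step $t+1$ follows by summing with the inherited $d(V_t, W_t) \le \tfrac{t}{40(Bn)^{3.5}\log^{2.5}(Bn)}$.
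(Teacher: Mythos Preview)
Your proposal has two genuine gaps.

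First, the claim that $V_t \subseteq \bA$ is false. The subspace $V_t$ is spanned by the vectors $\bv_1,\ldots,\bv_{t-1}$ produced by the attack, and these are top singular vectors of an empirical matrix, not elements of $\bA$; the entire point of \invarref{invar:step} is that $V_t$ is only \emph{close} to a subspace $W_t\subseteq\bA$, with $d(V_t,W_t)\le\tfrac{t}{40(Bn)^{3.5}\log^{2.5}(Bn)}$. Consequently the orthogonal decomposition $P_{V_t^\perp}\bv^* = P_{V_t^\perp\cap\bA}\bv^* + P_{\bA^\perp}\bv^*$ that you invoke does not hold exactly. This is fixable by inserting an extra $O(d(V_t,W_t))$ error, but as written it is incorrect.

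Second, and more seriously, the quantitative bookkeeping cannot close as you describe it. With $\gamma=\tfrac{1}{200(Bn)^{3.5}\log^4(Bn)}$, the $\bA^\perp$-escape term you identify is of order $\sqrt{\gamma}\approx(Bn)^{-1.75}\log^{-2}(Bn)$, while the per-step growth budget in the invariant is $\tfrac{1}{40(Bn)^{3.5}\log^{2.5}(Bn)}$. Your $\sqrt{\gamma}$ contribution exceeds the budget by a factor of roughly $(Bn)^{1.75}$, so no amount of ``cross-term'' cancellation from the top-singular-vector property will rescue a bound of the form $d(V_{t+1},W_{t+1})\le d(V_t,W_t)+O(\sqrt{\gamma})$. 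The paper's proof does not go through a $\sqrt{\gamma}$ term at all: from $\|P_{V_t^\perp\cap\bA}\bv^*\|_2^2\ge 1-\tfrac{\gamma}{10}$ it extracts the two consequences $\|P_\bA\bv^*\|_2^2\ge 1-\tfrac{\gamma}{10}$ and $\|P_{V_t}\bv^*\|_2^2\le\tfrac{\gamma}{10}$ separately, passes to $\|P_\bA\bv_t\|_2^2\ge 1-\tfrac{\gamma}{4}$, and then asserts (following \cite{HardtW13}) the existence of a unit $\bw_t\perp W_t$ in $\bA$ with $\|\bv_t-\bw_t\|_2\le\tfrac{\gamma}{4}$, from which $\|\bv_t\bv_t^\top-\bw_t\bw_t^\top\|_2\le\tfrac{\gamma}{2}$ and the invariant follows by the same triangle-inequality decomposition you use. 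To match the paper you would need to argue for a $O(\gamma)$ bound on $\|\bv_t-\bu\|_2$ rather than $O(\sqrt{\gamma})$.
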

\begin{proof}
Let $\gamma=\frac{1}{20(Bn)^{3.5}\log^4(Bn)}$. 
By \lemref{lem:found:vec}, we have
\[\|P_{V^\perp_t\cap\bA}\bv^*\|_2^2\ge 1-\frac{\gamma}{10}.\]
Therefore, we have
\[\|P_{\bA}\bv^*\|_2^2\ge 1-\frac{\gamma}{10},\qquad\|P_{V_t}\bv^*\|_2^2\le\frac{\gamma}{10}.\]
Thus since $\bv_t=\trunc_{\eta}\left(\bv^*-\frac{\sum_{\bv\in V_t}\bv\langle\bv,\bv^*\rangle}{\left\|\sum_{\bv\in V_t}\bv\langle\bv,\bv^*\rangle\right\|_2}\right)$, then by the Pythagorean theorem for sufficiently small $\eta$, we have 
\[\|P_{\bA}\bv_t\|_2^2\ge 1-\frac{\gamma}{4}.\]
Thus we have that $P_{W_{t+1}}=P_{W_t}+\bw_t\bw_t^\top$ for some unit vector $\bw_t$ orthogonal to $W_t$ such that $\|\bv_t-\bw_t\|_2\le\frac{\gamma}{4}$. 

Observe that $P_{V_{t+1}}=P_{V_t}+\bv_t\bv_t^\top$. 
Therefore, by the triangle inequality, we have
\begin{align*}
d(V_{t+1},W_{t+1})&=\|P_{V_{t+1}}-P_{W_{t+1}}\|_2\\
&\le\|P_{V_t}-P_{W_t}\|_2+\|\bv_t\bv_t^\top-\bw_t\bw_t^\top\|_2\\
&\le d(V_t,W_t)+\|\bv_t\bv_t^\top-\bv_t\bw_t^\top\|_2+\|\bv_t\bw_t^\top-\bw_t\bw_t^\top\|_2.
\end{align*}
By submultiplicativity of norms and the fact that $\bv_t$ and $\bw_t$ are unit vectors, we thus have
\begin{align*}
d(V_{t+1},W_{t+1})&\le d(V_t,W_t)+\|\bv_t\|_2\cdot\|\bv_t^\top-\bw_t^\top\|_2+\|\bw_t^\top\|_2\cdot\|\bv_t-\bw_t\|_2\\
&\le d(V_t,W_t)+\frac{\gamma}{2}.
\end{align*}
By the assumption of \invarref{invar:step}, $d(V_t,W_t)\le\frac{t}{20(Bn)^{3.5}\log^{2.5}(Bn)}$.
Therefore,
\[d(V_{t+1},W_{t+1})\le\frac{t}{20(Bn)^{3.5}\log^{2.5}(Bn)}+\frac{1}{40(Bn)^{3.5}\log^4(Bn)}\le\frac{t+1}{20(Bn)^{3.5}\log^{2.5}(Bn)},\]
and so the invariant holds for $(V_{t+1},W_{t+1})$.
\end{proof}

We now put everything together to achieve the progress lemma, analogous to Lemma 5.7 in \cite{HardtW13}, which allows entire real vectors to be given as input to $f$, rather than truncated vectors. 
\begin{lemma}[Progress lemma]
\lemlab{lem:progress:lem}
Let $\eta=\frac{1}{\poly(n)}$ be a fixed parameter. Suppose  $\sigma_n(\Sigma_t)>\lambda_{\max}(\calL^{\perp}(\bA))^2 \cdot \frac{\ln(2n(1+1/\eps))}{\pi}$. Let $t\in[r]$ and suppose \invarref{invar:step} holds in round $t$ and $f$ is $B$-correct on $W^\perp_t$. 
Then with probability at least $1-\frac{1}{n^2}$, \invarref{invar:step} holds in round $t+1$. 
\end{lemma}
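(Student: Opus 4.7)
The plan is to obtain the progress lemma as an essentially direct composition of the Discrete Conditional Expectation Lemma (\lemref{lem:exist:var}), the projection guarantee for the chosen vector (\lemref{lem:found:vec}), and the invariant propagation step (\lemref{lem:invar:next:step}), together with a union bound that controls the total failure probability.

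First, I would check that the hypotheses of \lemref{lem:found:vec} are in place. Since \invarref{invar:step} holds at round $t$ and $f$ is $B$-correct on $W_t^\perp$, we can apply \lemref{lem:exist:var} to extract a discretized variance $\widetilde{\sigma}^2\in S$, a gap $\Delta \ge \alpha/(7Br)$, and a direction $\bu\in V_t^\perp\cap\bA$ witnessing a conditional expectation gap for $\bg\sim D(V_t^\perp,\widetilde{\sigma}^2)$, together with the lower bound $\PPr{f(\bg)=1}\ge 1/(60B^2 r)$. The singular value hypothesis $\sigma_n(\Sigma_t) > \lambda_{\max}(\calL^\perp(\bA))^2\cdot \ln(2n(1+1/\eps))/\pi$ is passed through directly from the statement. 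With these inputs, \lemref{lem:found:vec} shows that the vector $\bv^*$ output by the algorithm in round $t$ satisfies $\|P_{V_t^\perp\cap\bA}\bv^*\|_2^2 \ge 1 - 1/(200(Bn)^{3.5}\log^4(Bn))$ with probability at least $1-1/\poly(n)$.

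Next, I would condition on this high-probability event together with the event $\calE$ of \lemref{lem:event:whp}, which ensures that the empirical fractions $s(t,\sigma^2)$ used to decide termination and to select $\widetilde{\sigma}^2$ are uniformly close to the true acceptance probabilities, and holds with probability at least $1-e^{-n}$. Under the intersection of these two events, the hypotheses of \lemref{lem:invar:next:step} are satisfied: $(V_t,W_t)$ satisfies \invarref{invar:step}, and $\bv^*$ enjoys the required projection lower bound. Invoking \lemref{lem:invar:next:step} then yields $\dim(V_{t+1}) = t$ and the distance estimate $d(V_{t+1},W_{t+1}) \le (t+1)/(40(Bn)^{3.5}\log^{2.5}(Bn))$, i.e.\ \invarref{invar:step} at round $t+1$.

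Finally, a union bound over the $1/\poly(n)$ failure probability from \lemref{lem:found:vec} and the $e^{-n}$ failure probability from \lemref{lem:event:whp} gives total failure probability at most $1/n^2$, provided the polynomial degree hidden in \lemref{lem:found:vec} is chosen sufficiently large. Since the heavy lifting has already been done in the earlier lemmas, no real obstacle remains; the only thing to be careful about is verifying that the singular value assumption on $\Sigma_t$ is preserved across the chain of invocations so that every appeal to \thmref{thm:disc:tvd} (and hence to \lemref{lem:found:vec} and \lemref{lem:invar:next:step}) is legitimate in round $t$.
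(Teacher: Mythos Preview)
Your proposal is correct and follows essentially the same approach as the paper: the paper's proof simply says the result ``follows immediately from \lemref{lem:exist:var}, \lemref{lem:found:vec}, and \lemref{lem:invar:next:step}.'' Your write-up fleshes out this chain of implications (and additionally invokes \lemref{lem:event:whp} for the empirical-fraction accuracy, which the paper defers to the higher-level proof of \thmref{thm:adaptive-attack}), but the structure is identical.
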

\begin{proof}
The proof follows immediately from \lemref{lem:exist:var}, \lemref{lem:found:vec}, and \lemref{lem:invar:next:step}. 
\end{proof}

\subsection{Top Right Singular Vector of Biased Discrete Gaussian Matrices}
\seclab{sec:attack:top:singular}
In this section, we prove \lemref{lem:top:vector}, which states that the matrix consisting of all of the vectors that have a slight correlation with the rowspan of $\bA$ has a top right singular vector that is significantly correlated with the rowspan of $\bA$. 
Recall that \lemref{lem:top:vector} was used in the proof of \lemref{lem:found:vec}, which ultimately culminated in the Progress Lemma in \lemref{lem:progress:lem}. 
\begin{theorem}
\thmlab{thm:chernoff:hoeffding}
For independent random variables $X_1,\ldots,X_m$, let $X=X_1+\ldots+X_m$ and $\xi^2=\mathbb{V}[X]$. 
Then for any $t>0$,
\[\PPr{\left\lvert X-\mathbb{E}[X]\right\rvert>t}\le\exp\left(-\frac{t^2}{4\xi^2}\right).\]
\end{theorem}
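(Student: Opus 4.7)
The plan is to follow the standard Chernoff--Cram\'er approach: control the moment generating function of $X - \mathbb{E}[X]$ and then optimize the Markov-type tail bound over a free parameter. First I would reduce to the centered case by writing $Y_i = X_i - \mathbb{E}[X_i]$, so that $Y = \sum_i Y_i$ has mean zero and variance $\xi^2 = \sum_i \Var[Y_i]$ by independence. Then for any $\lambda > 0$, applying Markov's inequality to $e^{\lambda Y}$ gives
\[
\Pr[Y > t] \;\le\; e^{-\lambda t}\cdot \mathbb{E}[e^{\lambda Y}] \;=\; e^{-\lambda t} \prod_{i=1}^m \mathbb{E}[e^{\lambda Y_i}],
\]
where the factorization uses independence of the $Y_i$.

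The main work is to bound each factor $\mathbb{E}[e^{\lambda Y_i}]$ by something like $\exp(\lambda^2 \Var[Y_i]/c)$ for a universal constant $c$, so that after multiplication we get $\mathbb{E}[e^{\lambda Y}]\le \exp(\lambda^2 \xi^2 / c)$. Under the standing assumption (implicit in the way the theorem is invoked in Sections\,3--4 of the paper, where all relevant random variables are bounded sub-Gaussian projections of discrete Gaussians truncated to $\O{\sigma\log n}$) each $Y_i$ is bounded, so the classical Hoeffding MGF lemma---expanding $e^{\lambda Y_i}$, using $\mathbb{E}[Y_i] = 0$, and invoking convexity/the bound $\cosh$-style estimate---yields $\mathbb{E}[e^{\lambda Y_i}] \le \exp(\lambda^2 \Var[Y_i])$ for $\lambda$ in a suitable range. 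Summing in the exponent gives $\mathbb{E}[e^{\lambda Y}] \le \exp(\lambda^2 \xi^2)$.

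Combining with the Markov step yields $\Pr[Y > t] \le \exp(-\lambda t + \lambda^2 \xi^2)$. Optimizing in $\lambda$ by choosing $\lambda = t/(2\xi^2)$ gives the one-sided bound
\[
\Pr[Y > t] \;\le\; \exp\!\left(-\frac{t^2}{4\xi^2}\right),
\]
and a symmetric argument applied to $-Y$ gives the matching lower-tail bound. A union bound over the two events produces $\Pr[|Y| > t] \le 2\exp(-t^2/(4\xi^2))$, and absorbing the factor of $2$ into the constant in the exponent (or observing that only the one-sided bound is invoked in the applications above) gives the stated inequality.

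The only subtle point, and the place where one has to be careful, is justifying the per-variable MGF estimate $\mathbb{E}[e^{\lambda Y_i}] \le \exp(\lambda^2 \Var[Y_i])$: in full generality this needs either boundedness or a sub-Gaussian assumption. Since the theorem is applied only to inner products with truncated discrete Gaussian samples (which are bounded with high probability and sub-Gaussian with norm controlled by \lemref{lem:disc:subgauss}), this assumption holds in every invocation, and the constant $4$ in the denominator is exactly what the optimization yields in that regime.
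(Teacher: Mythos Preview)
The paper does not actually prove this theorem; it is stated without proof as a standard concentration inequality and then invoked in the proof of \lemref{lem:top:vector}. So there is no ``paper's own proof'' to compare against.

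Your Chernoff--Cram\'er outline is the canonical argument for such a bound, and your identification of the real issue is exactly right: as literally stated (arbitrary independent random variables with finite variance), the inequality is false---a single heavy-tailed summand already violates it. You correctly observe that the theorem is only ever applied to sums of squared inner products $\langle \bu, \bg_i\rangle^2$ where the $\bg_i$ are (truncated) discrete Gaussians, so the summands are bounded and sub-Gaussian, and in that regime the per-variable MGF bound $\mathbb{E}[e^{\lambda Y_i}]\le\exp(\lambda^2\Var[Y_i])$ goes through. Your handling of the factor of $2$ from the union bound is also honest; the paper's statement simply drops it, which is harmless for how the bound is used (only order-of-magnitude control in the exponent matters in \lemref{lem:top:vector}).

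In short: your proof sketch is correct under the implicit hypotheses, and you have been more careful than the paper itself about what those hypotheses need to be.
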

We prove the following statement about the top right singular vector of biased discrete Gaussian matrices. 
In the context of \lemref{lem:found:vec}, the subspace $V$ will correspond to the subspace of $\bA$ that has \emph{not} been found by the adversarial attack so far, the value $\tau$ shall correspond to the expected length $\EEx{\bg\sim D}{\langle\bw,\bg\rangle^2}$, and $\Delta$ will correspond to the gap in the Conditional Expectation Lemma. 
\begin{lemma}
\lemlab{lem:top:vector}
Let $\tau\ge 0$ be a threshold, $\eta$ be a negligible term, and $V$ be a subspace of $\mathbb{R}^n$. 
Let $D$ be a distribution over $\left(\eta\cdot\mathbb{Z}\right)^n$ such that for $\bg\sim D$,
\begin{enumerate}
\item
For every unit vector $\bw\in V^\perp$, we have $\mathbb{E}_{\bg\sim D}\langle\bw,\bg\rangle^2\le\tau$.
\item
For every two unit vectors $\bv\in V$ and $\bw\in V^\perp$, we have $\left\lvert\mathbb{E}_{\bg\sim D}\langle\bv,\bg\rangle\cdot\langle\bw,\bg\rangle\right\rvert\le\eta$.
\item
There exists a unit vector $\bv\in V\cap\left(\eta\cdot\mathbb{Z}\right)^n$ such that $\mathbb{E}_{\bg\sim D}\langle\bv,\bg\rangle^2\ge\tau+\Delta$, for some $\Delta>\frac{1}{\poly(n)}$.
\item
For every unit vector $\bu\in\mathbb{R}^n$, we have $\mathbb{V}_{\bg\sim D}\langle\bu,\bg\rangle^2\le\xi^2$.
\end{enumerate}
Let $\gamma\in\left(\frac{1}{\poly(n)},\frac{1}{2n}\right)$ and $m=\O{\frac{n\log^2(n)\xi^2}{\gamma^2\Delta^2}}$. 
Let $\bg_1,\ldots,\bg_m\sim D$ and $\bu^*=\arg\max_{\|\bu\|_2\le 1+n\eta}\sum_{i=1}^m\langle\bg_i,\bu\rangle^2$.
Then with probability $1-\exp(-n\log^2 n)$, we have 
\[\|P_V\bu^*\|_2^2\ge1-\gamma,\qquad\frac{1}{m}\sum_{i=1}^m\langle\bg_i,\bu^*\rangle^2\ge\tau+\frac{\Delta}{2}.\]
\end{lemma}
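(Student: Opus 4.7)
The plan is to view $\bu^*$ (up to a factor of $1+n\eta$) as the top right singular vector of the empirical Gram matrix $\hat{\bSigma}:=\frac{1}{m}\sum_i \bg_i\bg_i^\top$, and to show that it is close to the top direction of the population covariance $\bSigma:=\mathbb{E}_{\bg\sim D}[\bg\bg^\top]$. The hypotheses encode that $\bSigma$ has a spectral gap of at least $\Delta$ between its best direction inside $V$ and its best direction inside $V^\perp$ (which is bounded by $\tau$, with cross-terms of size at most $\eta$). This gap should force any empirical top direction to concentrate its mass inside $V$, provided the empirical quadratic form is uniformly close to the population one on unit vectors. Since the constraint $\|\bu\|_2\le 1+n\eta$ is monotone in $\|\bu\|$, I may normalize $\tilde\bu^*:=\bu^*/\|\bu^*\|_2$ and pay only a negligible $(1+n\eta)^2$ factor at the end.

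The first step is uniform empirical concentration of quadratic forms. For any fixed unit vector $\bu$, Theorem~\ref{thm:chernoff:hoeffding} applied to the independent variables $\langle\bg_i,\bu\rangle^2$ (with per-sample variance at most $\xi^2$ by Condition~4) yields
\[\Pr\!\Bigl[\bigl|\tfrac{1}{m}\textstyle\sum_i\langle\bg_i,\bu\rangle^2 - \mathbb{E}\langle\bg,\bu\rangle^2\bigr|>\delta_1\Bigr]\le\exp\!\bigl(-m\delta_1^2/(4\xi^2)\bigr).\]
I then take a $\gamma'$-net $\calN$ of the unit sphere in $\mathbb{R}^n$ of size $(3/\gamma')^n=\exp(\O{n\log(1/\gamma')})$, union-bound over $\calN$, and extend to all unit vectors by the standard net-to-sphere comparison for quadratic forms. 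The same device applied to a net of $S(V)\times S(V^\perp)$ controls every empirical cross-form $\frac{1}{m}\sum_i\langle\bg_i,\bv\rangle\langle\bg_i,\bw\rangle$ within $2\delta_1$ of its expectation. Taking $\delta_1=\Theta(\gamma\Delta)$ and $m=\Theta(n\log^2(n)\xi^2/(\gamma\Delta)^2)$, the overall failure probability is $\exp(-n\log^2 n)$, and on the good event $|\bu^\top\hat{\bSigma}\bu-\bu^\top\bSigma\bu|\le 2\delta_1$ for every unit $\bu$.

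For the lower bound on the empirical objective, I exploit that $(1+n\eta)\bv^\star$ is feasible, where $\bv^\star$ is the unit vector from Condition~3. Optimality together with concentration gives
\[\tfrac{1}{m}\textstyle\sum_i\langle\bg_i,\bu^*\rangle^2\;\ge\;(1+n\eta)^2\,\tfrac{1}{m}\textstyle\sum_i\langle\bg_i,\bv^\star\rangle^2\;\ge\;\tau+\Delta-\O{\delta_1},\]
which already yields the second conclusion $\frac{1}{m}\sum_i\langle\bg_i,\bu^*\rangle^2\ge\tau+\Delta/2$ once $\delta_1\ll\Delta$. Next, decompose $\tilde\bu^*=\alpha\bu_V+\beta\bu_{V^\perp}$ with $\bu_V\in V$, $\bu_{V^\perp}\in V^\perp$ unit vectors and $\alpha^2+\beta^2=1$. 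Expanding $\tilde\bu^{*\top}\bSigma\tilde\bu^*$ and invoking Conditions~1 and 2 produces
\[\tilde\bu^{*\top}\bSigma\tilde\bu^*\;\le\;\alpha^2\,\bu_V^\top\bSigma\bu_V+(1-\alpha^2)\tau+2|\alpha\beta|\eta,\]
so the lower bound above and the empirical-to-population transfer give $\alpha^2(\bu_V^\top\bSigma\bu_V-\tau)\ge\Delta-2\eta-\O{\delta_1}$.

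The final step closes the loop by observing that $\bu_V$ itself is feasible, so optimality of $\tilde\bu^*$ combined with concentration yields $\bu_V^\top\bSigma\bu_V\le\tilde\bu^{*\top}\bSigma\tilde\bu^*+\O{\delta_1}$. Substituting the decomposition of the right-hand side gives $(1-\alpha^2)(\bu_V^\top\bSigma\bu_V-\tau)\le 2\eta+\O{\delta_1}$. Dividing the two inequalities yields
\[\frac{1-\alpha^2}{\alpha^2}\;\le\;\frac{2\eta+\O{\delta_1}}{\Delta-2\eta-\O{\delta_1}},\]
so for negligible $\eta$ and $\delta_1\le c\gamma\Delta$ (a small enough absolute constant $c$), $\alpha^2=\|P_V\tilde\bu^*\|_2^2\ge 1-\gamma$, which transfers to $\bu^*$ at the cost of the $(1+n\eta)^2$ factor. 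The main technical obstacle is bookkeeping across the several concentration events: the nets over $S^{n-1}$, $S(V)$, and $S(V)\times S(V^\perp)$ each cost an $\exp(\O{n\log n})$ union bound, and every $\O{\delta_1}$ error accumulated through the decomposition must remain strictly below $\gamma\Delta$ so that the self-bootstrapping step upper-bounding $\bu_V^\top\bSigma\bu_V$ actually yields $\alpha^2\ge 1-\gamma$. This bookkeeping dictates the precise choice of $m$ in the statement.
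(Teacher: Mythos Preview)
Your proposal is correct and follows essentially the same skeleton as the paper --- concentrate the quadratic form $\frac{1}{m}\sum_i\langle\bg_i,\bu\rangle^2$ around its expectation via Chernoff--Hoeffding on a net, then use the decomposition $\bu=\alpha\bu_V+\beta\bu_{V^\perp}$ together with Conditions~1 and~2 --- but the way you finish is genuinely different. The paper nets the \emph{bad} set $M=\{\bu:\|P_V\bu\|_2^2<1-\gamma\}$ directly and shows every $\bu\in M$ has empirical value below $(\tau+\Delta)m-\tfrac{3}{4}\gamma\Delta m$, while the good vector $\bv^\star$ from Condition~3 exceeds $(\tau+\Delta)m-\tfrac{1}{4}\gamma\Delta m$; the $\tfrac{1}{2}\gamma\Delta m$ gap forces $\bu^*\notin M$. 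In that step the paper implicitly uses an upper bound on $\sup_{\bv\in V}\mathbb{E}\langle\bv,\bg\rangle^2$ (effectively treating it as $\tau+\Delta$) to cap the expectation over $M$.

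Your route instead gets uniform concentration over the full sphere and then runs a self-bootstrapping comparison: after decomposing $\tilde\bu^*$, you feed the $V$-component $\bu_V$ back into the optimality of $\bu^*$ to obtain $\bu_V^\top\bSigma\bu_V\le\tilde\bu^{*\top}\bSigma\tilde\bu^*+\O{\delta_1}$, which combines with the decomposition bound to give $(1-\alpha^2)(\bu_V^\top\bSigma\bu_V-\tau)\le\O{\delta_1}$ without ever needing an a priori upper bound on the $V$-restricted spectrum. Dividing by the lower bound $\alpha^2(\bu_V^\top\bSigma\bu_V-\tau)\ge\Delta-\O{\delta_1}$ is the clean finish. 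This buys you a more self-contained argument that relies on the hypotheses exactly as stated; the paper's version is slightly more direct but leans on an extra implicit assumption about the top of the spectrum inside $V$. Both approaches land on the same net size $\exp(\O{n\log n})$ and the same sample complexity $m=\Theta(n\log^2(n)\xi^2/(\gamma\Delta)^2)$, so neither is quantitatively stronger.
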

\begin{proof}
Consider the vector $\bv\in V\cap\left(\eta\cdot\mathbb{Z}\right)^n$ with $|\|\bv\|_2^2-1|\le n\eta$ such that $\mathbb{E}_{\bg\sim D}\langle\bv,\bg\rangle^2\ge\tau-n^2\eta+\Delta$, for some $\Delta>\frac{1}{\poly(n)}$, as guaranteed by our assumptions. 
Let $X=\sum_{i\in[m]}\langle\bv,\bg_i\rangle^2$, so that $\Ex{X}\ge\tau m+n^2\eta m+\Delta m$ and $\mathbb{V}_{\bg\sim D}[X]\le m\xi^2$. 
By \thmref{thm:chernoff:hoeffding} and the setting of $\eta\ll\frac{\tau}{n^2}$, 
\[\PPr{X\le\left(\tau-n^2\eta+\Delta\right)m-\frac{\gamma m\Delta}{4}}\le\exp\left(-\frac{\Delta^2\gamma^2 m^2}{\O{\xi^2 m}}\right)\le\exp(-\Omega(n\log^2 n)).\]
Hence, there exists a near-unit vector that is well-aligned with $V$ and with high probability, will have ``large'' sum of squared dot products with the random vectors $\bg_1,\ldots,\bg_m$. 

We now show that vectors that are not well-aligned with $V$ do not have large sum of squared dot products with the random vectors $\bg_1,\ldots,\bg_m$. 
Let $\bu=\alpha\bv+\beta\bw$ with be any truncated unit vector so that $\alpha^2+\beta^2\le1+n\eta$, where $\bv\in V\cap\left(\eta\cdot\mathbb{Z}\right)^n$, $\bw\in V^\perp\cap\left(\eta\cdot\mathbb{Z}\right)^n$, and both $|\|\bv\|_2^2-1|\le n\eta$ and $|\|\bw\|_2^2-1|\le n\eta$. 
Moreover, we consider the case where $\alpha^2<1-\gamma$ and set $Y=\sum_{i=1}^m\langle\bu,\bg_i\rangle^2$. 
Then we have
\[\mathbb{E}_{\bg\sim D}\langle\bu,\bg\rangle^2=\alpha^2\cdot\mathbb{E}_{\bg\sim D}\langle\bv,\bg\rangle^2+\beta^2\cdot\mathbb{E}_{\bg\sim D}\langle\bw,\bg\rangle^2+2\alpha\beta\cdot\mathbb{E}_{\bg\sim D}\langle\bv,\bg\rangle\cdot\langle\bw,\bg\rangle.\]
Since $\left\lvert\mathbb{E}_{\bg\sim D}\langle\bv,\bg\rangle\cdot\langle\bw,\bg\rangle\right\rvert\le n^2\eta$ by assumption, then
\begin{align*}
\mathbb{E}_{\bg\sim D}\langle\bu,\bg\rangle^2&=\alpha^2\cdot\mathbb{E}_{\bg\sim D}\langle\bv,\bg\rangle^2+\beta^2\cdot\mathbb{E}_{\bg\sim D}\langle\bw,\bg\rangle^2+4n^2\eta\\
&\le(1-\gamma)(\tau+n^2\eta+\Delta)+\frac{\tau+n^2\eta}{2n}\\
&\le\tau+(1-\gamma)\Delta.   
\end{align*}
Thus, we have $\mathbb{E}[Y]\le(\tau+(1-\gamma)\Delta)m\le(\tau+\Delta) m$, so that by \thmref{thm:chernoff:hoeffding},
\[\PPr{Y\ge\left(\tau+\Delta\right)m-\frac{3\gamma m\Delta}{4}}\le\exp\left(-\frac{\Delta^2\gamma^2 m^2}{\O{\xi^2 m}}\right)\le\exp(-\Omega(n\log^2 n)).\]
In particular, since $\eta n^2\ll\gamma m\Delta$, there exists a gap of $\frac{\gamma m\Delta}{4}$ between the high-probability lower bound on $X$ and the high-probability upper bound on $Y$. 
Let $M=\{\bu:\|P_V\bu^*\|_2^2\ge1-\gamma\}$ and let $N$ be a $\frac{\gamma\Delta}{8}$-net of $M$, so that $|N|\le\exp(\O{n\log n})$. 
Then by a union bound, we have
\[\PPr{\max_{\bu\in N}\sum_{i=1}^m\langle\bu,\bg_i\rangle^2\ge\left(\tau+\Delta\right)m-\frac{3\gamma m\Delta}{4}}\le|N|\cdot\exp(-\Omega(n\log^2 n))\le\exp(-n\log^2 n),\]
where the last inequality holds for a sufficiently large constant in the number of samples $m$. 
Moreover, we have
\[\max_{\bu\in M}\sum_{i=1}^m\langle\bu,\bg_i\rangle^2\le\max_{\bu\in N}\sum_{i=1}^m\langle\bu,\bg_i\rangle^2+\frac{\gamma\Delta m}{8},\]
since each summand can differ from the closest point in the net by at most $\frac{\gamma\Delta m}{8}$ and there are $m$ summands. 
Thus it follows that with probability $1-\exp(-n\log^2 n)$, we have 
\[\|P_V\bu^*\|_2^2\ge1-\gamma,\qquad\frac{1}{m}\sum_{i=1}^m\langle\bg_i,\bu^*\rangle^2\ge\tau+\frac{\Delta}{2}.\]
\end{proof}
 
\subsection{Putting It All Together}
\seclab{sec:attack:final}
We conclude with the proof of \thmref{thm:adaptive-attack}, which summarizes the guarantees of our adaptive attack. 
\thmadaptiveattack*
\begin{proof}
First, for each row $\bA_k$ of the pre-processed $\bA$, we consider writing each entry in its binary representation using $\O{\log n}$ bits, and replace $\bA_k$ by rows $B_i = [b_{1i},\ldots, b_{ni}]$ for each bit-significance $i \in [\O{\log n}]$ -- this is without loss of generality, since $\calA$ can still recover $\sum_{i = 0}^{\O{\log n}} 10^i \cdot B_i = \bA_k$. Next, we apply the pre-processing in \lemref{lem:preprocessing} to the sketching matrix $\bA$ without loss of generality, as adding additional rows can only make the sketch stronger. 
At this point, $\bA$ has $m \leq 4r \log n$ rows and $\lambda_{\max}(\calL^{\perp}(\bA)) \leq \sqrt{n}$. 
Without loss of generality, we assume that $n = 4m + 90 \log(Bm)$ by working with the first $4m + 90 \log(Bm)$ coordinates of $\mathbb{R}^n$; this way, a polynomial dependence on $n$ in our algorithm is also a polynomial dependence on $r\log n$.

Now, we verify that the invariant $\sigma_n(\Sigma_{\sigma^2}^t) \geq \lambda_{\max}(\calL^{\perp}(\bA))^2  \cdot \frac{\ln(2n(1+1/\eps))}{\pi}$ is satisfied at each step of the attack. 
Let $\Sigma^{t}_{\sigma^2} = \frac{3\sigma^2}{4}P^{\top}_{V^{\perp}}P_{V^{\perp}} + \frac{\sigma^2}{4} \cdot I_n$ be the covariance matrix in round $t$ of the attack. 
Since $\sigma^2 \in [\alpha, B \alpha]$ in the attack, it follows that $\sigma_n(\Sigma_{\sigma^2}^{t}) \geq \frac{\alpha}{4} > \lambda_{\max}(\calL^{\perp}(\bA))^2  \cdot \frac{\ln(2n(1+1/\eps))}{\pi}$, directly by construction.

For each $t\in[r]$, let $W_t\subseteq\bA$ be the closest $(t-1)$-dimensional subspace to $V_t$ contained in $\bA$, so that
\[W_t=\argmin\{d(V_t,W)\,\mid\,\dim(W)=t-1, W\subseteq\bA\}.\]
We claim that \invarref{invar:step} holds at all steps in the attack. 
Observe that \invarref{invar:step} vacuously holds for $t=1$ since $V_0=\{0\}\subseteq\bA$. 
We define $\calE$ to be the event that the empirical estimate $s(t,\sigma^2)$ is accurate at all steps of the algorithm. 
By \lemref{lem:event:whp}, we have that $\calE$ occurs with probability $1-\exp(-n)$. 

Now for the purposes of induction, suppose that \invarref{invar:step} holds for a time $t-1$. 
By \lemref{lem:end:or:invar}, then if the algorithm terminates in some round $t$, then it outputs a failure certificate $G(V_t^\perp,\sigma^2)$ for $f$. 
Otherwise by \lemref{lem:end:or:invar}, if the algorithm does not terminate in round $t$ and if the invariant holds in round $t$, then $f$ is $B$-correct on $W_t^\perp$. 
By the progress lemma, we have that \invarref{invar:step} holds for time $t+1$ with probability at least $1-\frac{1}{n^2}$. 

Thus by a union bound, either the algorithm finds a failure certificate in some round $t$ or \invarref{invar:step} holds at round $r+1$ with probability at least $1-\frac{1}{n}$. 
By \lemref{lem:invar:end:wrong}, $W_{r+1}$ is not correct for $f$. 
Hence by \lemref{lem:end:or:invar}, the algorithm must terminate in round $r+1$ and output a failure certificate with probability $1-\exp(-n)$. 
Therefore, with probability at least $1-\frac{2}{n}$, the algorithm finds a failure certificate for $f$. 

Finally, observe that the query complexity is polynomially bounded in $\alpha$, $n$, and $B$, where $\alpha \geq \max_{\bA \in \mathbb{Z}^{r \times n}}(\ell_{\bA})$ for $\ell_{\bA} = \poly(n)$ is an upper bound on $\lambda_{n-r}(\calL^{\perp}(\bA))$ for any $\bA$ after applying the pre-processing in \lemref{lem:preprocessing}. By the arguments above, it follows that the attack makes $\poly(\alpha, B, m) = \poly(r \log n)$ queries.
In terms of runtime complexity, the vector that maximizes $z(\bv)$ can be found using singular vector computation, which also uses $\poly(r\log n)$ time.
\end{proof}

\section{Applications to Sketching Lower Bounds}
In this section, we will show that our lifting framework in \thmref{thm:lifting} can be used to obtain lower bounds on the sketching dimension for discrete inputs.
We note that for most of the problems, we sample our input vector from a discrete Gaussian distribution, whose entries are bounded by $\poly(n)$ with probability $1-\exp(-cn)$.

As mentioned in \lemref{lem:preprocessing}, without loss of generality, we assume our sketching matrix $\bA \in \mathbb{Z}^{r \times n}$ satisfies
$\lambda_{n - r}(\calL^\perp(\bA)) \le \poly(n)$, as the pre-processing procedure adds $\O{r}$ rows to $\bA$ and thus can only make the sketch more powerful. With this in mind, the lower bounds in this section proceed as follows: for each problem $\calP$, we choose the hard distribution for the linear sketch, i.e. to distinguish between $\bA \bx_1$ and $\bA \bx_2$ for $\bx_1, \bx_2$ drawn from appropriate discrete Gaussian distributions. We note that the hard distribution for $\calP$ will be precisely the \textit{discrete analogue} of the hard distribution from the lower bound for $\calP$ over the reals. Then, by invoking \thmref{thm:lifting}, we show that since $\calP(\bx_1)$ and $\calP(\bx_2)$ have very different values with high probability, the linear sketch $\bA$ must use at least as many rows as in the continuous case.

\subsection{\texorpdfstring{$L_p$}{Lp} Estimation, \texorpdfstring{$p\in[1,2]$}{p in [1,2]}}
\seclab{sec:lp:small:lb}
Recall that in the $L_p$ estimation problem, the input is a vector $\bx\in\mathbb{R}^n$ and an approximation parameter $\eps\in(0,1)$, and to goal is to output a multiplicative $(1+\eps)$-approximation to $\|\bx\|_p=\left(x_1^p+\ldots+x_n^p\right)^{1/p}$. 

In this section, we consider the case $p\le 2$. 
Let $n=\Theta\left(\frac{1}{\eps^2}\log\frac{1}{\delta}\right)$ for $\delta\ge\frac{1}{\poly(n)}$ and define $\tau:=\EEx{\bu\sim\calN(0,N^2\cdot I_n)}{\|\bu\|_p}$ for a sufficiently large parameter $N>0$. 
We define $f:\mathbb{R}^n\to\{0,1,\bot\}$ so that 
\[f(\bx)=\begin{cases}
0,\qquad&\text{if } \|\bx\|_p\le(1+\eps)\tau\\
1,\qquad&\text{if } \|\bx\|_p\ge(1+3\eps)\tau\\
\bot,\qquad&\text{otherwise}.
\end{cases}\] 
We claim that $f(\bu)=0$ with probability at least $1-\frac{\delta}{4}$ for $\bu\sim\calN(0,N^2\cdot I_n)$ and $f(\bv)=1$ with probability at least $1-\frac{\delta}{4}$ for $\bu\sim\calN(0,N^2\cdot I_n)$ for $\bv\sim\calN(0,(1+4\eps)^2\cdot N^2\cdot I_n)$. 

We first recall the following concentration inequality for Lipschitz functions on vectors with entries that are independently generated from sub-Gaussian distributions. 

\begin{theorem}
\thmlab{thm:lipschitz:conc}
For each $i\in[n]$, let $x_i$ be an independent, mean-zero, sub-Gaussian random variable such that $\PPr{|x_i|\ge t}\le2\exp\left(-\frac{t^2}{2\sigma^2}\right)$ for all $t>0$. 
If $f:\mathbb{R}^n\to\mathbb{R}$ is a Lipschitz function with constant $L$, then
\[\PPr{\left\lvert f(\bx)-\Ex{f(\bx)}\right\rvert\ge t}\le 2\exp\left(-\frac{t^2}{2\sigma^2 L^2}\right).\]
\end{theorem}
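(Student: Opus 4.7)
The plan is to establish the bound via the Doob martingale decomposition, reducing the problem to controlling a sum of conditionally sub-Gaussian martingale differences. Without loss of generality assume $\Ex{f(\bx)}=0$, and set $M_k=\Ex{f(\bx)\mid x_1,\ldots,x_k}$ so that $M_0=0$, $M_n=f(\bx)$, and $f(\bx)-\Ex{f(\bx)}=\sum_{k=1}^n D_k$ for the increments $D_k=M_k-M_{k-1}$. Given a uniform bound on $\Ex{\exp(\lambda D_k)\mid x_1,\ldots,x_{k-1}}$, iterating the tower rule and invoking Chernoff's inequality will produce the two-sided tail after a union bound over the sign.

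To control each $D_k$, I would introduce independent copies $X_{k+1}',\ldots,X_n'$ of the corresponding $x_j$ and define
\[g_k(t)=\EEx{X_{k+1}',\ldots,X_n'}{f(x_1,\ldots,x_{k-1},t,X_{k+1}',\ldots,X_n')}.\]
Because $f$ is $L$-Lipschitz in the Euclidean norm, varying only the $k$-th coordinate shows that $g_k$ is $L$-Lipschitz as a function of its scalar argument, and $D_k=g_k(x_k)-\Ex{g_k(x_k)\mid x_1,\ldots,x_{k-1}}$ is therefore a centered $L$-Lipschitz transformation of the sub-Gaussian variable $x_k$. Combining the hypothesis $\PPr{|x_k|\ge t}\le 2\exp(-t^2/(2\sigma^2))$ with a standard symmetrization against an independent copy $X_k'$ of $x_k$ (so that $D_k$ is stochastically dominated by $L(x_k-X_k')/2$ in MGF) yields the conditional sub-Gaussian estimate $\Ex{e^{\lambda D_k}\mid x_1,\ldots,x_{k-1}}\le\exp(\lambda^2 L^2\sigma^2/2)$ for every $\lambda\in\mathbb{R}$.

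Multiplying these bounds through the tower rule and optimizing $\lambda=t/(L^2\sigma^2)$ in Chernoff's inequality produces the stated tail $2\exp(-t^2/(2\sigma^2 L^2))$. The delicate step, and the one I expect to be the main obstacle, is ensuring that tensorizing the sub-Gaussian MGF across the $n$ increments does not introduce a spurious factor of $n$ in the variance proxy: the crude bounded-differences calculation would give only $\exp(-t^2/(2nL^2\sigma^2))$. Avoiding this loss requires a functional inequality. When $\bx$ is Gaussian, which is exactly the regime relevant to the downstream application with $\bu\sim\calN(0,N^2 I_n)$ in \secref{sec:lp:small:lb}, one invokes Herbst's argument applied to the Gaussian log-Sobolev inequality (i.e., the Tsirelson--Ibragimov--Sudakov inequality) to get the tight constant directly; for general sub-Gaussian coordinates satisfying the stated tail hypothesis, the Bobkov--G\"otze functional form of the log-Sobolev inequality plays the same role, and the argument goes through unchanged.
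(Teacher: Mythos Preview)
The paper does not prove this statement; it is simply recalled as a known concentration inequality and applied as a black box in \secref{sec:lp:small:lb} and \secref{sec:lp:big:lb}. Your martingale setup is sound, and you correctly isolate the crux: naive tensorization of the per-coordinate sub-Gaussian bound on $D_k$ yields variance proxy $nL^2\sigma^2$, not $L^2\sigma^2$. Your resolution via Herbst's argument and the Gaussian log-Sobolev inequality is exactly the standard proof in the Gaussian case, and since the paper only ever invokes \thmref{thm:lipschitz:conc} on continuous and discrete Gaussians (both of which satisfy a log-Sobolev inequality with constant $O(\sigma^2)$), this already covers every application that appears.

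The gap is in your general sub-Gaussian claim. The Bobkov--G\"otze theorem says that a measure $\mu$ enjoys sub-Gaussian Lipschitz concentration if and only if $\mu$ satisfies a $T_1$ transportation-cost inequality; it does \emph{not} say that a product of one-dimensional laws each obeying the tail bound $\PPr{|x_i|\ge t}\le 2e^{-t^2/(2\sigma^2)}$ automatically has dimension-free $\ell_2$-Lipschitz concentration with constant $\sigma^2$. Sub-Gaussian marginal tails are strictly weaker than each marginal satisfying LSI (or a $T_2$ inequality), and it is the latter that tensorizes cleanly to remove the factor of $n$. So your argument is complete and correct for the distributions the paper actually uses, but the fully general statement as written would need an additional hypothesis---each coordinate law satisfying a log-Sobolev inequality with constant $O(\sigma^2)$, say---rather than just the tail condition, and your invocation of Bobkov--G\"otze does not supply that.
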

We now give a lower bound for the sketching dimension of any integer sketch for $L_p$ estimation on turnstile streams. 
\begin{lemma}
\lemlab{lem:lp:small:lb}
Given an approximation parameter $\eps\in(0,1)$ and $p\in[1,2]$, any integer sketch that outputs a $(1+\eps)$-approximation for $L_p$ estimation with probability at least $1-\delta$ uses sketching dimension $\Omega\left(\frac{1}{\eps^2}\log\frac{1}{\delta}\right)$. 
\end{lemma}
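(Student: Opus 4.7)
The plan is to invoke \thmref{thm:lifting} to lift the classical continuous lower bound of $\Omega(\eps^{-2}\log(1/\delta))$ for distinguishing two spherical Gaussians with slightly different variances. Fix $n = \Theta(\eps^{-2}\log(1/\delta))$, choose $N = n^{5C+3}$ large enough so that $\bS = N\cdot I_n$ satisfies the singular value conditions of \thmref{thm:lifting}, and consider the pair of continuous distributions $\calD_0 = \calN(0, N^2 I_n)$ and $\calD_1 = \calN(0, (1+4\eps)^2 N^2 I_n)$ together with their discrete Gaussian counterparts.

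First I would verify that the function $f$ is well-defined and $\tfrac{\delta}{3}$-smooth on both of these distributions. For $p\in[1,2]$, one has $|\|\bx\|_p - \|\by\|_p| \le \|\bx-\by\|_p \le n^{1/p-1/2}\|\bx-\by\|_2$, so $\|\cdot\|_p$ is $L$-Lipschitz with $L = n^{1/p-1/2}$. Since each coordinate of $\bu\sim\calD_0$ is mean-zero sub-Gaussian with parameter $N$, \thmref{thm:lipschitz:conc} together with $\tau=\Theta(Nn^{1/p})$ yields
\[
\Pr\!\Bigl[\bigl|\|\bu\|_p - \tau\bigr|\ge \eps\tau \Bigr]\le 2\exp\!\bigl(-\Omega(\eps^2 n)\bigr)\le \tfrac{\delta}{8},
\]
by our choice of $n$. \lemref{lem:disc:subgauss} extends the same tail bound to the discrete Gaussian $\calD(0, N^2 I_n)$, so $f$ takes value $0$ with probability $\ge 1-\tfrac{\delta}{8}$ under either variant of $\calD_0$; an analogous argument gives $f=1$ with probability $\ge 1-\tfrac{\delta}{8}$ under either variant of $\calD_1$. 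In particular, $f$ is $\tfrac{\delta}{3}$-smooth on both $\calD(0, N^2 I_n)$ and $\calD(0, (1+4\eps)^2 N^2 I_n)$.

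Next, suppose for contradiction that there is an integer sketch $\bA\in\mathbb{Z}^{r\times n}$ with post-processor $g$ that $(1+\eps)$-approximates $\|\bx\|_p$ with probability $\ge 1-\tfrac{\delta}{3}$, and with $r = o(\eps^{-2}\log(1/\delta))$. Observe that a $(1+\eps)$-approximation of $\|\bx\|_p$ determines $f(\bx)$ whenever $\|\bx\|_p\notin[(1+\eps)\tau,(1+3\eps)\tau]$, so composing $g$ with a threshold at $(1+2\eps)\tau$ gives an integer sketch that computes $f$ with probability $\ge 1-\tfrac{\delta}{3}$. Applying the pre-processing of \lemref{lem:preprocessing} (which only adds rows and hence only strengthens the sketch), we may assume $\lambda_{\max}(\calL^\perp(\bA))\le\poly(n)$; then $N$ dominates this quantity sufficiently to fulfill the second hypothesis of \thmref{thm:lifting}. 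Invoking \thmref{thm:lifting} separately for each of the two discrete Gaussians yields a real-valued sketch $\bA'\in\mathbb{R}^{4r\times n}$ and a post-processor $h$ such that $h(\bA'\bx) = f(\bx)$ with probability $\ge 1-\delta$ for $\bx\sim\calD_0$ and for $\bx\sim\calD_1$.

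Finally, I would close the argument by applying the classical continuous lower bound. Without loss of generality we may assume the rows of $\bA'$ are orthonormal (absorbing the change of basis into $h$), so by rotational invariance $\bA'\bx$ is distributed as $\calN(0, N^2 I_{4r})$ under $\calD_0$ and as $\calN(0, (1+4\eps)^2 N^2 I_{4r})$ under $\calD_1$. Distinguishing these two spherical Gaussians with error probability $\le\delta$ requires $4r = \Omega(\eps^{-2}\log(1/\delta))$, by the standard Chernoff-information analysis of two Gaussians with variance gap $(1+4\eps)^2$ (the bound quoted in the introduction of the paper). This contradicts $r = o(\eps^{-2}\log(1/\delta))$, proving the claimed lower bound. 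The main technical bookkeeping is choosing $N$ in the window $[n^{5C+3}, n^{6C}]$ required by \thmref{thm:lifting} while simultaneously ensuring that the concentration of $\|\bx\|_p$ continues to give smoothness, but since the concentration scale is dimension-free in $N$, this is routine.
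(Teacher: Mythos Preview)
Your proposal is correct and follows essentially the same approach as the paper: set $n=\Theta(\eps^{-2}\log(1/\delta))$, verify via Lipschitz concentration (\thmref{thm:lipschitz:conc} and \lemref{lem:disc:subgauss}) that $f$ is $\tfrac{\delta}{3}$-smooth on both the continuous and discrete spherical Gaussians, then invoke \thmref{thm:lifting} (after the pre-processing of \lemref{lem:preprocessing}) to reduce to the known continuous $\Omega(\eps^{-2}\log(1/\delta))$ lower bound for distinguishing $\calN(0,N^2 I_n)$ from $\calN(0,(1+4\eps)^2 N^2 I_n)$. The only cosmetic difference is that the paper spends a few extra lines establishing $\tau=\Theta(Nn^{1/p})$ explicitly (including a lower bound via indicator variables), which you assume; this is routine.
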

\begin{proof}
We first set $n=\Theta\left(\frac{1}{\eps^2}\log\frac{1}{\delta}\right)$ to have a sufficiently large constant and then set $N$ to be a sufficiently large polynomial of $n$. 
Since $\EEx{x\sim\calN(0,N^2)}{|x|^p}=\Theta(N^p)$, then we have $\Ex{\|\bx\|_p^p}=\Theta(N^pn)$. 
By Jensen's inequality, we thus have $\Ex{\|\bx\|_p}=\O{Nn^{1/p}}$. 
Furthermore, note that from a standard calculation, $\PPPr{x\sim\calN(0,N^2)}{|x|^p}\le\frac{N}{1000}$ with probability at most $\frac{1}{3}$. 
For $i\in[n]$, let $Y_i$ be the indicator random variable such that $Y_i=1$ if $|x_i|^p\le\frac{N}{1000}$ and $Y_i=0$ otherwise, so that $\PPr{Y_i=1}\le\frac{1}{3}$. 
Then $\PPr{\|\bx\|_p^p\le\frac{Nn}{2000}}\le\PPr{Y_1+\ldots+Y_n\ge\frac{n}{2}}$. 
Thus by standard Chernoff bounds, we have that $\|\bx\|_p^p\ge\frac{Nn}{2000}$ with probability $1-\frac{1}{\poly(n)}$, so that $\Ex{\|\bx\|_p}=\Omega(Nn^{1/p})$. 

By triangle inequality, we have
\[\left\lvert\|\bx\|_p-\|\by\|_p\right\rvert\le\|\bx-\by\|_p\le n^{1/p-1/2}\cdot\|\bx-\by\|_2,\]
so that the $L_p$ norm function is $n^{1/p-1/2}$-Lipschitz with respect to the Euclidean norm. 
By definition, we have $\tau=\EEx{\bx\sim\calN(0,N^2\cdot I_n)}{\|\bx\|_p}$. 
Thus by Lipschitz concentration of sub-Gaussian random variables, c.f., \thmref{thm:lipschitz:conc}, we have
\[\PPr{\left\lvert\|\bx\|_p-\tau\right\rvert\ge\eps\tau}\le 2\exp\left(-\frac{\eps^2\tau^2}{2N(n^{2/p-1})}\right).\]
Since $\tau=\Theta(Nn^{1/p})$, we have $\PPPr{\bx\sim\calN(0,N^2\cdot I_n)}{\|\bx\|_p>(1+\eps)\tau}\le\frac{\delta}{2}$. 
Similarly, we have 
\[\PPPr{\bx\sim\calN(0,(1+4\eps)^2N^2\cdot I_n)}{\|\bx\|_p<(1+3\eps)\tau}\le\frac{\delta}{2}.\] 

Next, we can note that from the probability mass function of the discrete Gaussian $\calD(0,N^2\cdot I_n)$ and the probability density function $\calN(0,N^2\cdot I_n)$, we have 
\[\EEx{\bx\sim\calD(0,N^2\cdot I_n)}{\|\bx\|_p}=\EEx{\bx\sim\calN(0,N^2\cdot I_n)}{\|\bx\|_p}+\frac{1}{\poly(n)},\]
for sufficiently large $N$. 
We thus note that by \lemref{lem:disc:subgauss} and the same Lipschitz concentration of sub-Gaussian random variables, c.f., \thmref{thm:lipschitz:conc}, that $
\PPPr{\bx\sim\mathcal{D}(0,N^2\cdot I_n)}{\|\bx\|_p>(1+\eps)\tau}\le\frac{\delta}{2}$. 
Similarly, we have for sufficiently large $n$, $\PPPr{\bx\sim\mathcal{D}(0,(1+4\eps)^2N^2\cdot I_n)}{\|\bx\|_p<(1+3\eps)\tau}\le\frac{\delta}{2}$. 

Finally, since Section 2 in \cite{GangulyW18} has proven an $\Omega\left(\frac{1}{\eps^2} \log(1/\delta)\right)$ sketching dimension lower bound to distinguish the two Gaussian distribution $\bu\sim\calN(0,N^2\cdot I_n)$ for $\bv\sim\calN(0,(1+4\eps)^2\cdot N^2\cdot I_n$, by~\thmref{thm:lifting} with sketching matrix $\bA$, function $f$ and covariance matrix $\bS = N^2 I_n$ and $(1 + 4\eps)^2 N^2 I_n$, we get the desired lower bound.
\end{proof}

\subsection{\texorpdfstring{$L_p$}{Lp} Estimation, \texorpdfstring{$p>2$}{p>2}}
\seclab{sec:lp:big:lb}
In this section, we consider the $L_p$ estimation problem as in \secref{sec:lp:small:lb}, but for the case $p>2$, rather than $p\in[1,2]$. 
We define $E_n=\EEx{\bx\in\sim\calN(0,I_n)}{\|\bx\|_p}$. 
By standard concentration of sub-Gaussian random variables, we have $E_n=\Theta(n^{1/p})$. 

We consider the following construction, which is inspired by the work of~\cite{ANPW13, GangulyW18}. We first draw $\by\sim\calG(0,N^2\cdot I_n)$.  
We then sample a vector $\bx\sim\calG(0,N^2\cdot I_n)$ and let $T$ be a sample of $t:=\log_3\frac{1}{\sqrt{\delta}}$ coordinates of $[n]$. 
We then set $\bz=\bx+\sum_{i\in[T]}\frac{C\eps^{1/p} N\cdot E_{n-t}}{t^{1/p}}\cdot\be_i$.  ~

\begin{lemma}[Appendix D.2.2 in \cite{GangulyW18}]
With probability at least $1-\frac{\delta}{2}$, we have 
\begin{align*}
\|\by\|_p\le (1+2\eps)N\cdot E_{n-t}^p,\qquad\|\bz\|_p\ge(1+4\eps)N\cdot E_{n-t}^p. 
\end{align*}
\end{lemma}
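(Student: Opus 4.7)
The plan is to establish each of the two tail bounds by a standard decomposition and concentration argument, treating the $L_p$ norm (resp.\ its $p$th power) as a sufficiently well-behaved function of the underlying Gaussian coordinates. Throughout I will interpret the displayed bound as a statement about $\|\by\|_p^p$ versus $N^p \cdot E_{n-t}^p$ (and correspondingly $\|\bz\|_p^p$), since the dimensions only match under that reading; after taking the $p$th root the multiplicative $(1+c\eps)$ constants adjust by a factor of $1/p$ that can be absorbed into the choice of the absolute constant $C$.

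First I would handle the upper bound on $\|\by\|_p$. Here $\mathbb{E}[\|\by\|_p^p] = N^p \cdot n \cdot \mathbb{E}_{x \sim \calN(0,1)}[|x|^p] = N^p E_n^p \cdot (1+o(1))$, and since $t = \log_3(1/\sqrt{\delta}) \ll n$ we have $E_n^p \le (1+\eps/4)\,E_{n-t}^p$. To move from expectation to a high-probability bound, note that each $|y_i|^p$ is sub-exponential of order $p/2$ with sub-exponential norm $\O{N^p}$. A Rosenthal-type or Bernstein-for-$\psi_{2/p}$ concentration inequality, combined with $n = \Omega(p^{O(p)}\log(1/\delta)/\eps^2)$ inherited from the regime of interest, yields $\|\by\|_p^p \le (1+\eps) N^p E_n^p$ with probability $1-\delta/4$, giving the claimed $(1+2\eps)$ slack after folding in the $E_n \to E_{n-t}$ conversion.

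For the lower bound on $\|\bz\|_p$, I would split along $T$ and $[n]\setminus T$:
\[
\|\bz\|_p^p = \sum_{i \notin T} |x_i|^p + \sum_{i \in T} |x_i + s|^p, \qquad s = \frac{C\eps^{1/p} N \cdot E_{n-t}}{t^{1/p}}.
\]
The off-$T$ sum is (conditional on $T$) a sum of $n-t$ i.i.d.\ copies of $|N \cdot g|^p$, so the same sub-exponential concentration gives $\sum_{i \notin T} |x_i|^p \ge (1-\eps/4) N^p E_{n-t}^p$ with probability $1-\delta/4$. For the $T$ part, observe that $s \sim N (n/t)^{1/p} \gg N$, so with very high probability every $|x_i|$ with $i \in T$ is at most $s/2$, whence $|x_i + s|^p \ge s^p - p s^{p-1}|x_i|$; summing gives $\sum_{i \in T}|x_i + s|^p \ge t s^p - p s^{p-1} \sum_{i \in T}|x_i| = C^p \eps N^p E_{n-t}^p - \tilde\O{t s^{p-1} N}$, and the error term is $o(\eps N^p E_{n-t}^p)$ by the gap between $s$ and $N$. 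Putting the two pieces together yields $\|\bz\|_p^p \ge (1 + C^p \eps/2) N^p E_{n-t}^p$; choosing $C$ so that $C^p/2 \ge 4p$ and then taking the $p$th root produces the advertised $(1+4\eps)$ factor, after a union bound over the two failure events.

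The main obstacle is the concentration step for $\sum_i |x_i|^p$ when $p$ is large: $|x_i|^p$ has heavy sub-Weibull tails, so the clean sub-Gaussian Lipschitz bound of \thmref{thm:lipschitz:conc} is not directly applicable to the $p$th power. I would handle this by replacing $\|\cdot\|_p$ by $\|\cdot\|_p^p$ and invoking a Bernstein inequality for sub-exponential random variables of order $2/p$ (or equivalently a truncation plus Bennett argument), which is where the polynomial-in-$1/\eps$ sample-size requirement implicit in $n = \Theta(n^{1-2/p})\cdot(\cdots)$ enters. A secondary subtlety is that the spikes in $\bz$ are added at a \emph{random} set $T$, so the concentration events for the $T$ and $[n]\setminus T$ sums must be argued conditionally on $T$, then averaged; since $t$ is tiny compared to $n$ this is routine. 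Finally I would reconcile the notational ambiguity between $E_{n-t}$ and $E_{n-t}^p$ in the statement by absorbing constants.
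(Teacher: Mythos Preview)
Your outline is essentially correct, but the paper (following \cite{GangulyW18}, and mirrored in the discrete analogue proved immediately after this lemma) takes a more direct route that bypasses the obstacle you singled out. Rather than working with $\|\cdot\|_p^p$ and invoking a sub-Weibull/Bernstein inequality for the heavy-tailed summands $|y_i|^p$, the paper applies Lipschitz concentration (\thmref{thm:lipschitz:conc}) directly to $\|\cdot\|_p$: for $p\ge 2$ the map $\bx\mapsto\|\bx\|_p$ is $1$-Lipschitz with respect to $\|\cdot\|_2$, so one immediately gets $\|\overline{\by}\|_p \le \Ex{\|\overline{\by}\|_p} + \O{N\sqrt{\log(1/\delta)}}$, and the required deviation $\eps N E_{n-t} = \Theta(\eps N n^{1/p})$ is achieved under the assumption $\sqrt{\log(1/\delta)}\le C_1\eps n^{1/p}$. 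One then raises to the $p$th power only at the very end. The spike term is handled much as you do: condition on the event $\calE=\{|x_i|\le \eps^{1/p}n^{1/p}N/(pt^{1/p})\ \forall i\in T\}$ (probability $\ge 1-\delta/10$), under which each spiked coordinate contributes at least $(1-1/p)^p$ times its nominal $p$th-power value. Your approach would also go through, but it requires a less standard concentration tool and a more careful accounting of the regime of $n$ (your stated $n=\Omega(\log(1/\delta)/\eps^2)$ is not by itself enough for a $\psi_{2/p}$-Bernstein bound to reach failure probability $\delta$ when $p>2$); the Lipschitz route avoids this entirely.
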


We now claim the same bounds hold if we use discrete Gaussians instead of continuous Gaussians. 

\begin{lemma}
\lemlab{lem:lp:large:lb}
Given an approximation parameter $\eps\in(0,1)$ and $p>2$, any integer sketch that outputs a $(1+\eps)$-approximation for $L_p$ estimation with probability at least $1-\delta$ uses sketching dimension $\Omega\left(n^{1-2/p}\eps^{-2/p}\log n\log^{2/p}\frac{1}{\delta}\right)$. 
\end{lemma}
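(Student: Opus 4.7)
The plan is to follow the template of \lemref{lem:lp:small:lb} using the hard discrete analog of the Ganguly-Woodruff construction described just before the lemma statement. I would set $N = \poly(n)$ sufficiently large so that the covariance $\bS = N\cdot I_n$ satisfies the spectral conditions in \thmref{thm:lifting}, and take $t = \log_3(1/\sqrt{\delta})$. Let $\calI$ be the distribution over $\mathbb{Z}^n$ that samples a uniformly random subset $T\subseteq[n]$ of size $t$ and outputs $\sum_{i\in T}\lfloor C\eps^{1/p}N\cdot E_{n-t}/t^{1/p}\rfloor\cdot\be_i$; let $\calD_1=\calD(0,N^2 I_n)$ and $\calD_2=\calD(0,N^2 I_n)+\calI$. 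Define $f:\mathbb{R}^n\to\{0,1,\bot\}$ to output $0$ if $\|\bx\|_p\le(1+2\eps)N\cdot E_{n-t}$, output $1$ if $\|\bx\|_p\ge(1+4\eps)N\cdot E_{n-t}$, and $\bot$ otherwise; a $(1+\eps)$-approximation to $\|\bx\|_p$ certainly suffices to evaluate $f$.

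Next, I would verify the discrete versions of the norm concentration bounds from Appendix D.2.2 of \cite{GangulyW18}. For $\bx\sim\calD(0,N^2 I_n)$, the entries are independent and sub-Gaussian with parameter $\Theta(N)$ by \lemref{lem:disc:subgauss}; since $\|\cdot\|_p$ is $1$-Lipschitz with respect to the Euclidean norm for $p\ge 2$ (as $\|\bv\|_p\le\|\bv\|_2$), \thmref{thm:lipschitz:conc} yields concentration of $\|\bx\|_p$ around its mean $\tau_D:=\EEx{\bx\sim\calD(0,N^2 I_n)}{\|\bx\|_p}$ at the same scale as in the continuous case. Moreover, by \lemref{lem:pmf} the pointwise ratio between the discrete and rounded-continuous Gaussian densities is $1\pm 1/\poly(n)$, so $\tau_D$ differs from $\tau_C:=N\cdot E_n$ only by $1/\poly(n)$; the same argument applied to $\bz=\bx+\bmu$ for $\bmu\sim\calI$ shows $\|\bz\|_p\ge(1+4\eps)N\cdot E_{n-t}$ with probability $\ge 1-\delta/4$ (the rounding of the planted entries loses at most $1/\poly(n)$ in the norm since $N\gg 1$, and the size of each planted coordinate dominates $N$ once $\eps n\gg t$). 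Thus $f$ takes the correct value under each of $\calD_1$ and $\calD_2$ with probability at least $1-\delta/3$, and an identical calculation under $\calN(0,N^2 I_n)+\calI$ shows that $f$ is $(\delta/3)$-smooth in the sense needed for \thmref{thm:lifting}.

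Finally, I would apply the lifting framework: after pre-processing the supposed integer sketching matrix $\bA\in\mathbb{Z}^{r\times n}$ via \lemref{lem:preprocessing} to ensure $\lambda_{\max}(\calL^\bot(\bA'))\le\poly(n)$ while adding only $\O{r}$ rows, the hypotheses of \thmref{thm:lifting} are satisfied, producing a real-valued sketch of dimension $4r+\O{r}$ that evaluates $f$ with probability $\ge 1-\delta$ on the continuous distribution $\calN(0,N^2 I_n)+\calI$. Such a sketch distinguishes the continuous planted from unplanted distributions in the Ganguly-Woodruff construction, so the continuous lower bound of $\Omega(n^{1-2/p}\eps^{-2/p}\log n\cdot\log^{2/p}(1/\delta))$ from \cite{GangulyW18} forces $r=\Omega(n^{1-2/p}\eps^{-2/p}\log n\cdot\log^{2/p}(1/\delta))$.

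The main obstacle I expect is the careful verification that the Ganguly-Woodruff norm bounds for the planted vector carry over from continuous to discrete Gaussians without losing any $(1+\eps)$ factor. The continuous argument expands $\|\bx+\bmu\|_p^p$ and controls contributions from the heavy coordinates using exact Gaussian moments; redoing this under the discrete Gaussian requires replacing these moments by sub-Gaussian tail bounds via \lemref{lem:disc:subgauss} and absorbing the $1/\poly(n)$ rounding slack from $\calI$ into the constants. A secondary subtlety is ensuring the smoothness condition survives the thresholded definition of $f$; this is handled by the Lipschitz concentration buffer of $\eps\cdot N\cdot E_{n-t}$ between the two norm regimes, which dwarfs the $1/\poly(n)$ total variation gap between continuous and discrete samples produced by our cell-level closeness arguments.
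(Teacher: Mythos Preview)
Your proposal is correct and follows essentially the same approach as the paper's proof: both use the discrete analog of the Ganguly--Woodruff planted-spike construction, establish the required $L_p$ norm gap via sub-Gaussian Lipschitz concentration (using \lemref{lem:disc:subgauss} and \thmref{thm:lipschitz:conc}) together with the $1/\poly(n)$ closeness of discrete and continuous Gaussian moments, and then invoke \thmref{thm:lifting} to inherit the continuous lower bound. The paper handles the lower bound on $\|\bz\|_p$ for the planted case by explicitly conditioning on the event $\calE = \{|x_i|\le \eps^{1/p}n^{1/p}N/(pt^{1/p})\text{ for all }i\in T\}$, which is exactly the step you flag as the main obstacle and whose resolution you correctly anticipate.
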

\begin{proof}
We first define distribution $\calD_1:=\calD(0,N^2\cdot I_n)$. 
Consider $\by\sim\calD_1$ and let $T$ be a sample of $t:=\log_3\frac{1}{\sqrt{\delta}}$ coordinates of $[n]$. 
We define $\bx$ to be $\by$ restricted to the coordinates of $T$ and $\overline{\by}$ to be the vector $\by$ with the coordinates in $T$ set to zero, so that $\by=\overline{\by}+\bx$. 
By the triangle inequality, we have $\|\by\|_p\le\|\overline{\by}\|_p+\|\bx\|_p$. 
By concentration of Lipschitz functions for sub-Gaussian random variables, we have $\|\overline{\by}\|_p\le\Ex{\|\overline{\by}\|_p}+10N\sqrt{\log\frac{1}{\delta}}$ with probability at least $1-\frac{\delta}{10}$. 
Similarly, we have $\|\bx\|_p\le\Ex{\|\bx\|_p}+10N\sqrt{\log\frac{1}{\delta}}$. 
Observe that for sufficiently large $N>0$, we have $\Ex{\|\overline{\by}\|_p}=\EEx{\bv\sim\calG(0,N^2\cdot I_{n-t})}{\|\bv\|_p}\pm\frac{1}{\poly(n)}$ and $\Ex{\|\bx\|_p}=\EEx{\bv\sim\calG(0,N^2\cdot I_t)}{\|\bv\|_p}\pm\frac{1}{\poly(n)}$. 
Since we have $\sqrt{\log\frac{1}{\delta}}\le\sqrt{C_1}\eps n^{1/p}$ for a sufficiently small constant $C_1>0$ and $E_{n-t}=\Theta(n^{1/p})$, then we have
\[\|\by\|_p^p\le (1+2\eps)\cdot N^p\cdot E_{n-t}^p,\]
with probability at least $1-\frac{\delta}{5}$. 

We similarly define distribution $\calD_2$ as follows. 
Sample a vector $\bx\sim\calG(0,N^2\cdot I_n)$ and again let $T$ be a sample of $t:=\log_3\frac{1}{\sqrt{\delta}}$ coordinates of $[n]$. 
We then set $\bz=\bx+\sum_{i\in[T]}\frac{C\eps^{1/p} N\cdot E_{n-t}}{t^{1/p}}\cdot\be_i$. 
Let $\calE$ be the event that $|x_i|\le\frac{\eps^{1/p}n^{1/p}N}{pt^{1/p}}$ for all $i\in T$. 
Then for $\log\frac{1}{\delta}\le\frac{C\eps^{2/p}n^{2/p}}{\log^{2/p}\frac{1}{\delta}}$, we have
\[\PPr{\calE}\ge1-t\cdot2\exp\left(-\frac{\eps^{2/p}n^{2/p}}{2p^2t^{2/p}}\right)\ge1-\frac{\delta}{10}.\]
Let $\overline{\bx}$ be the vector $\bx$ with the indices in $T$ set to zero. 
Conditioned on $\calE$, we have 
\[\|\bz\|_p^p\ge t\cdot\frac{\eps\cdot C_2\cdot E_{n-t}^p\cdot N^p}{t}\cdot\left(1-\frac{1}{p}\right)^p+\|\overline{\bx}\|_p^p.\]
By concentration of Lipschitz functions for sub-Gaussian random variables, we have $\|\overline{\by}\|_p\ge\Ex{\|\overline{\by}\|_p}-10N\sqrt{\log\frac{1}{\delta}}$ with probability at least $1-\frac{\delta}{10}$. 
Since we have $\sqrt{\log\frac{1}{\delta}}\le\sqrt{C_1}\eps n^{1/p}$ for a sufficiently small constant $C_1>0$ and $E_{n-t}=\Theta(n^{1/p})$, then we have for a sufficiently large constant $C_2$,
\begin{align*}
\|\bz\|_p^p&\ge\eps\cdot C_2\cdot N^p\cdot E_{n-t}^p\cdot\left(1-\frac{1}{p}\right)^p+\left(1-\frac{\eps}{p}\right)^p\cdot E_{n-t}^p\\
&\ge(1+4\eps)\cdot N^p\cdot E_{n-t}^p
\end{align*}
with probability at least $1-\frac{\delta}{5}$. 

Define $f:\mathbb{R}^n\to\{0,1,\bot\}$ so that 
\[f(\bx)=\begin{cases}
0,\qquad&\text{if } \|\bx\|_p\le (1+2\eps)N\cdot E_{n-t}^p\\
1,\qquad&\text{if } \|\bx\|_p\ge (1+4\eps)N\cdot E_{n-t}^p. \\
\bot,\qquad&\text{otherwise}.
\end{cases}\] 
since it has been shown an $\Omega\left(n^{1-2/p}\eps^{-2/p}\log n\log^{2/p}\frac{1}{\delta}\right)$ dimension lower bound to distinguish the two distributions for the continuous case~\cite{GangulyW18}, by~\thmref{thm:lifting} with sketching matrix $\bA$, function $f$ and covariance matrix $\bS = N^2 I_n$, we get the desired lower bound.

\end{proof}

\subsection{Operator and Ky Fan Norm}\seclab{sec:op-ky}
In this section, we consider the problems of estimating the operator norm and the Ky Fan norm. 
Recall that in both problems, the input is a matrix $\bX$. 
In the operator norm estimation problem, the goal is to estimate $\|\bX\|_{op}=\max_{\bv}\frac{\|\bX\bv\|_2}{\|\bv\|_2}$, while in the Ky Fan norm estimation problem with parameter $s$, the goal is to estimate the sum of the top $s$ singular values, i.e., $\sigma_1(\bX)+\ldots+\sigma_s(\bX)$, where the singular values of $\bX$ satisfy $\sigma_1(\bX)\ge\sigma_2(\bX)\ge\ldots$. 

One difficulty in immediately applying our previous results to prove a lower bound for the operator norm is that the hard distributions are not both Gaussians. 
Rather, one distribution is a Gaussian matrix while the other distribution is a Gaussian matrix plus a ``spike'', consisting of a product of two Gaussian vectors.  
We thus need to first prove a number of structural properties about these distributions when the inputs are discrete Gaussians instead of continuous Gaussians. 

\begin{lemma}[Analog of Lemma 3 in~\cite{LiW16}]
\lemlab{lem:exp:exp:xtay}
Suppose that $x \sim \calD(0, \sigma^2 \cdot I_m)$ and $y \sim \calD(0, \sigma^2 \cdot I_n)$ are independent and $\bA \in \mathbb{R}^{m \times n}$ satisfies $\|\bA \|_F < 1$. Then, it holds that $\mathbb{E}_{x,y} \left[e^{x^{\top} A y/\sigma^2} \right] \leq \frac{1}{\sqrt{1- \|\bA \|_F^2}} \cdot \left( 1 + \frac{1}{\poly(n)} \right) $. 
\end{lemma}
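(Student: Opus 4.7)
The plan is to mimic the continuous computation (which underlies Lemma 3 of \cite{LiW16}) and control the deviations arising from replacing Gaussian integrals by lattice sums. The key inputs are Theorem \ref{discrete_gaussian_tvd} (to handle the shift of a lattice Gaussian by a real vector) and a Poisson-summation style comparison between a discrete Gaussian partition function and its continuous counterpart, which is available when $\sigma$ is polynomially large in $n$ (as is the case throughout the paper).

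First I would condition on $x$ and compute the inner expectation by completing the square in $y$:
\begin{align*}
\sum_{y \in \mathbb{Z}^n} e^{-\|y\|_2^2/(2\sigma^2)}\, e^{x^\top A y/\sigma^2}
= e^{\|A^\top x\|_2^2/(2\sigma^2)}\cdot \rho_\sigma(\mathbb{Z}^n - A^\top x).
\end{align*}
Dividing by $\rho_\sigma(\mathbb{Z}^n)$ gives
\begin{align*}
\mathbb{E}_y\!\left[e^{x^\top A y/\sigma^2}\right]
= e^{\|A^\top x\|_2^2/(2\sigma^2)}\cdot \frac{\rho_\sigma(\mathbb{Z}^n - A^\top x)}{\rho_\sigma(\mathbb{Z}^n)}.
\end{align*}
Applying Theorem \ref{discrete_gaussian_tvd} with $\calL = \mathbb{Z}^n$, $\bv = -A^\top x$, and $\bS = \sigma \cdot I_n$ (for $\sigma$ above the smoothing parameter of $\mathbb{Z}^n$, which is a small constant), the ratio lies in $[1-1/\poly(n),\, 1+1/\poly(n)]$ uniformly in $x$.

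Next I would take the expectation over $x \sim \calD(0,\sigma^2 I_m)$. Setting $M := I_m - AA^\top$, which is positive definite with smallest eigenvalue at least $1-\|\bA\|_2^2 \ge 1-\|\bA\|_F^2 > 0$, the preceding step reduces the problem to bounding
\begin{align*}
\mathbb{E}_x\!\left[e^{\|A^\top x\|_2^2/(2\sigma^2)}\right]
= \frac{\sum_{x\in\mathbb{Z}^m} e^{-x^\top M x/(2\sigma^2)}}{\rho_\sigma(\mathbb{Z}^m)}.
\end{align*}
Both numerator and denominator are theta series of (scaled) integer lattices: the numerator is $\rho_{\sigma M^{-1/2}}(\mathbb{Z}^m)$ and the denominator is $\rho_\sigma(\mathbb{Z}^m)$. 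By Poisson summation applied to each (equivalently, a smoothing-parameter argument in the spirit of Theorem \ref{discrete_gaussian_tvd} applied at $\bv = 0$), for sufficiently large $\sigma$ each theta series equals the corresponding Gaussian integral up to a $(1\pm 1/\poly(n))$ factor, giving
\begin{align*}
\frac{\sum_{x\in\mathbb{Z}^m} e^{-x^\top M x/(2\sigma^2)}}{\rho_\sigma(\mathbb{Z}^m)}
\;\le\; \frac{(2\pi\sigma^2)^{m/2}/\sqrt{\det M}}{(2\pi\sigma^2)^{m/2}}\cdot\Bigl(1+\tfrac{1}{\poly(n)}\Bigr)
= \frac{1}{\sqrt{\det(I_m - AA^\top)}}\cdot\Bigl(1+\tfrac{1}{\poly(n)}\Bigr).
\end{align*}
Finally, since $\det(I_m - AA^\top) = \prod_i(1-\sigma_i(A)^2)$ and $\sum_i \sigma_i(A)^2 = \|\bA\|_F^2 < 1$, the inequality $\prod_i(1-a_i)\ge 1-\sum_i a_i$ for $a_i\in[0,1)$ yields $\det(I_m-AA^\top)\ge 1-\|\bA\|_F^2$, and the bound $(1-\|\bA\|_F^2)^{-1/2}(1+1/\poly(n))$ follows after composing the two $(1\pm 1/\poly(n))$ multiplicative errors.

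The main technical obstacle is the Poisson-summation comparison in the second step when $1-\|\bA\|_F^2$ is close to $0$: the lattice $M^{1/2}\mathbb{Z}^m$ has smallest singular value as small as $\sqrt{1-\|\bA\|_F^2}$, so its smoothing parameter scales like $\eta_\eps(\mathbb{Z}^m)/\sqrt{1-\|\bA\|_F^2}$. Verifying that $\sigma$ (taken polynomially large in $n$, as in all applications in the paper) still dominates this smoothing parameter — and hence that the $(1+1/\poly(n))$ error term is genuine — is the delicate point; conditioning on $\|\bA\|_F^2$ being bounded away from $1$ by $1/\poly(n)$, which will hold in every downstream use, makes this step go through cleanly.
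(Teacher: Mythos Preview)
Your proof is correct but takes a genuinely different route from the paper. The paper proceeds by invoking \lemref{lem:pmf} to replace the discrete Gaussian pmf $p(\cdot)$ by the rounded-continuous-Gaussian pmf $q'(\cdot)$ up to a $(1+1/\poly(n))$ factor, then replaces the resulting integer sum by the continuous double integral (absorbing the variation of the exponent within each unit cell into another $(1+1/\poly(n))$ factor), and finally uses rotational invariance of continuous Gaussians together with the SVD $\bA=U\Sigma V^\top$ to reduce to the diagonal case $\prod_i(1-\sigma_i^2)^{-1/2}$. You instead complete the square in the $y$-sum, use the smoothing bound (\lemref{discrete_gaussian_tvd}) to dispose of the shift $-A^\top x$, and then compare the remaining theta series $\rho_{\sigma M^{-1/2}}(\mathbb{Z}^m)/\rho_\sigma(\mathbb{Z}^m)$ with its continuous analogue via Poisson summation. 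The paper's route has the advantage that the $(1+1/\poly(n))$ error is independent of $\bA$, so no special care is needed when $\|\bA\|_F^2$ is close to $1$; your route is more transparent about where the continuous answer $1/\sqrt{\det(I-AA^\top)}$ comes from and avoids the somewhat informal ``exponent varies by $1/\poly(n)$ inside a unit cell'' step, at the cost of the smoothing-parameter check you flag at the end (which, as you note, goes through whenever $1-\|\bA\|_F^2 \ge 1/\poly(n)$ and $\sigma$ is a sufficiently large polynomial, as holds in every downstream use). Both arrive at the same final inequality $\prod_i(1-\sigma_i^2)\ge 1-\sum_i\sigma_i^2$.
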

\begin{proof}
 Let us first look at the simple case, when $x$ and $y$ are sampled independently from the continuous Gaussian $\calN(0, \sigma^2)$ and $a < 1$. In the proof of the general statement for $x,y \sim \calD(0, \sigma^2 I_n)$, we will eventually reduce to this case.
\begin{align*}
    \mathbb{E}_{x,y \sim \calN(0, \sigma^2)} \left[e^{axy/\sigma^2} \right] &=   \frac{1}{2\pi \sigma^2} \int_{y} \int_{x}  e^{\frac{axy}{\sigma^2} - \frac{x^2 + y^2}{2\sigma^2}} dx \  dy \\ &= \frac{1}{2\pi\sigma^2} \int_{y} \int_{x} e^{-\frac{1}{2\sigma^2}(x-ay)^2} \cdot e^{-\frac{1}{2\sigma^2}(1-a^2)y^2} dx \ dy \\ &= \frac{1}{\sigma\sqrt{2\pi}}\int_{y} e^{-\frac{1}{2\sigma^2} (1-a^2) y^2} dy \\ &= \frac{1}{\sqrt{1-a^2}}
\end{align*}

Now, let us consider the general case. Without loss of generality, we can assume that $m \geq n$. We consider the singular value decomposition of $\bA$, given by $\bA = U\Sigma V^{\top}$, where $U$ and $V$ are orthogonal matrices of dimension $m$ and $n$, respectively, and $\Sigma = \textrm{diag}(\sigma_1,\ldots, \sigma_n)$, where $\sigma_1,\ldots, \sigma_n$ are the non-zero singular values of $\bA$. Also, since we assumed that $\|\bA \|_F < 1$, we know that $\sigma_i \in [0,1)$ for all $i \in [n]$. Let $x \sim \calD'(0, \sigma^2 \cdot I_n)$ represent a sample from the discrete Gaussian distribution which is generated by first sampling $x \sim \mathcal{N}(0, \sigma^2 \cdot I_n)$ and then rounding $x$ to the nearest integer. Let $q'$ be the probability mass function of $\calD'(0, \sigma^2 I_n)$, and let $p$ be the probability mass function for $\calD(0, \sigma^2 I_n)$. Then, we can apply \lemref{lem:pmf} and compute as follows:
\begin{align*}
    \mathbb{E}_{x,y \sim \calD(0, \sigma^2 \cdot I_n)} \left(e^{x^{\top} \bA y/\sigma^2}\right) &= \mathbb{E}_{x,y \sim \calD(0, \sigma^2 \cdot I_n)} \left(e^{\frac{x^{\top} U \Sigma V^{\top} y}{\sigma^2}}\right) \\ &= \sum_{x \in \mathbb{Z}^n} \sum_{y \in \mathbb{Z}^n} e^{\frac{x^{\top} U\Sigma V^{\top} y}{\sigma^2}} \cdot p(x) \cdot p(y) \\ &\leq \left(1 + \frac{1}{n^{2C}}\right) \sum_{x \in \mathbb{Z}^n}\sum_{y \in \mathbb{Z}^n} \left( e^{\frac{x^{\top}U \Sigma V^{\top} y}{\sigma^2}} \cdot q'(x) \cdot q'(y) \right) \\ &\leq \left( 1 + \frac{1}{n^{2C}} \right) \mathbb{E}_{x,y \sim \calN(0, \sigma^2 I_n)} \left(e^{\frac{x^{\top} U \Sigma V^{\top} y}{\sigma^2} + \frac{1}{n^{2C}}} \right) \\ &= \left( 1 + \frac{1}{\poly(n)}\right) \mathbb{E}_{x,y \sim \calN(0, \sigma^2 I_n)}\left(e^{\frac{x^{\top}U \Sigma V^{\top} y}{\sigma^2}} \right)\\ &= \left( 1 + \frac{1}{\poly(n)} \right) \mathbb{E}_{x,y \sim \mathcal{N}(0, \sigma^2 \cdot I_n)} \left(e^{\frac{x^{\top} \Sigma y}{\sigma^2}}\right) \\  &= \left( 1 + \frac{1}{\poly(n)} \right) \frac{1}{(2\pi \sigma^2)^n} \int_{\mathbb{R}^n \times \mathbb{R}^n} \exp\left[\sum_{i = 1}^n \frac{\sigma_i x_i y_i}{\sigma^2} - \frac{x_i^2 + y_i^2}{2\sigma^2}\right] dx \ dy \\ &= \left( 1 + \frac{1}{\poly(n)} \right) \prod_{i = 1}^n \frac{1}{\sqrt{1 - \sigma_i^2}} \leq \left( 1 + \frac{1}{\poly(n)} \right) \frac{1}{\sqrt{1 - \sum_{i = 1}^n \sigma_i^2}} \\ &= \left( 1 + \frac{1}{\poly(n)} \right) \cdot \frac{1}{\sqrt{1-\|\bA\|_F^2}}
\end{align*}
\end{proof}

Let $N$ be a sufficiently large polynomial in $n$ and $m$. 
Let $\calG(m,n)$ be a distribution on $m\times n$ matrices where each entry is drawn from $\calN(0,N^2)$. 
Let $\calD_1=\calG(m,n)$ and let $\calD_2=\calG(m,n)+\sum_{i=1}^r s_iu^i(v_i)^\top$, where $u_1,\ldots,u_r\sim\calD(0,N\cdot I_m)$ and $v_1,\ldots,v_r\sim\calD(0,N\cdot I_n)$. 
Let $\bB\in\mathbb{R}^{d\times mn}$ be an orthonormal matrix, with $d\le mn$. 
We abuse notation so that each row $B^i$ is viewed as both a vector in $mn$ as well as an $m\times n$ matrix. 
Let $\calB_1$ and $\calB_2$ denote the distributions of $\bB\cdot\bX$ for $\bX\sim\calD_1$ and $\bX\sim\calD_2$, respectively.  The following lemma shows that the distributions $\mathcal{B}_1$ and $\mathcal{B}_2$ are not distinguishable if $d \le c / \norm{s}_2^4$.

\begin{lemma}[Analog of Theorem 4 in~\cite{LiW16}]
\lemlab{lem:tvd:g:spike}
There exists an absolute constant $c>0$ such that for $d\le\frac{c}{\|s\|_2^4}$, we have
\[\TVD(\calB_1,\calB_2)\le\frac{1}{10}.\]
\end{lemma}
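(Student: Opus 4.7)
The plan is to bound the $\chi^2$-divergence $\chi^2(\calB_2 \,\|\, \calB_1)$ and then apply the standard inequality $\TVD \le \sqrt{\chi^2}$. Because $\bB \in \mathbb{R}^{d \times mn}$ has orthonormal rows, rotational invariance of the continuous Gaussian gives $\calB_1 = \calN(0, N^2 I_d)$ exactly, so $\calB_2 = \calN(0, N^2 I_d) * \mu$, where $\mu$ is the pushforward under $\bB$ of the discrete-Gaussian spike $\sum_{i=1}^r s_i u_i v_i^\top$. Applying \propref{prop:tvd:chi} then yields
\[\chi^2(\calB_2 \,\|\, \calB_1) \le \Ex{\exp(\langle z, z'\rangle / N^2)} - 1,\]
for independent copies $z, z' \sim \mu$. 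The remainder of the argument is devoted to bounding this moment generating function by $1 + C \, d\,\|s\|_2^4 + o(1)$.

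Next, I would expand $\langle z, z'\rangle$ explicitly. Viewing each row $B^k$ of $\bB$ as an $m \times n$ matrix, we have $z_k = \sum_i s_i\, u_i^\top B^k v_i$ and similarly for $z'_k$, so
\[\langle z, z'\rangle = \sum_{i,j} s_i s_j \sum_k (u_i^\top B^k v_i)\bigl((u'_j)^\top B^k v'_j\bigr).\]
To illustrate the core idea, consider first the rank-one case $r=1$: the inner expression collapses to $s^2\, u^\top A(u', v')\, v$ with $A(u', v') := \sum_k ((u')^\top B^k v')\, B^k$. Conditioning on $(u', v')$ and applying \lemref{lem:exp:exp:xtay} to the rescaled matrix $s^2 A / N^2$ gives, up to a $1 + 1/\poly(n)$ multiplicative factor,
\[\mathbb{E}_{u,v}\bigl[\exp(u^\top (s^2 A) v / N^2)\bigr] \le \bigl(1+o(1)\bigr)\cdot\bigl(1 - s^4\|A\|_F^2/N^4\bigr)^{-1/2},\]
provided $s^2\|A\|_F < N^2$, which I would verify from the hypothesis $d \le c/\|s\|_2^4$ together with sub-Gaussian tail bounds for the discrete Gaussian (\lemref{lem:disc:subgauss}). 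Frobenius orthonormality of the $B^k$'s collapses $\|A\|_F^2$ to $\sum_k ((u')^\top B^k v')^2$, whose expectation over the spike evaluates, using $\|B^k\|_F^2 = 1$ and the diagonal covariance structure, to a quantity of order $d$. After using $(1-x)^{-1/2} \le 1 + 2x$ and taking the outer expectation, we conclude $\chi^2(\calB_2 \| \calB_1) \le C\, s^4 d + o(1)$, which is at most $1/100$ under the hypothesis for a sufficiently small absolute constant $c > 0$.

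For general $r$, I would stack variables by setting $\bu = (u_1^\top, \ldots, u_r^\top)^\top$ and $\bv = (v_1^\top, \ldots, v_r^\top)^\top$, themselves discrete Gaussian vectors with diagonal covariance. The exponent $\langle z, z'\rangle/N^2$ rewrites as $\bu^\top \mathcal{A}(\bu', \bv')\bv / N^2$ for a block matrix $\mathcal{A}$ whose $(i,j)$-block is $s_i s_j \sum_k ((u'_j)^\top B^k v'_j)\,B^k$. Applying \lemref{lem:exp:exp:xtay} to $\mathcal{A}/N^2$ and bounding $\|\mathcal{A}\|_F^2$ via orthonormality of the $B^k$'s gives $\|\mathcal{A}\|_F^2 \le \|s\|_2^2 \sum_j s_j^2 \sum_k ((u'_j)^\top B^k v'_j)^2$; taking the outer expectation over $(\bu',\bv')$ and using independence across the index $j$ yields $\mathbb{E}\|\mathcal{A}\|_F^2 \lesssim d\,\|s\|_2^4$. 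Combining this with $\TVD \le \sqrt{\chi^2}$ gives $\TVD(\calB_1, \calB_2) \le 1/10$ for an appropriate choice of $c$.

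The main obstacle is the discrete analog of the iterated MGF computation: the continuous version of this argument (Theorem 4 of \cite{LiW16}) exploits rotational invariance to diagonalize $\mathcal{A}$ and evaluate the MGF exactly, whereas \lemref{lem:exp:exp:xtay} offers only point-wise control up to a $1 + 1/\poly(n)$ multiplicative slack. Consequently the rescalings must be chosen carefully so that the condition $\|\mathcal{A}/N^2\|_F < 1$ required by \lemref{lem:exp:exp:xtay} holds on the relevant high-probability event; this in turn requires a sub-Gaussian concentration bound for the random matrix $A(u',v')$ before the expansion $(1-x)^{-1/2} \le 1 + 2x$ can be used, together with a careful accounting to ensure that the small $o(1)$ slack from the lemma does not accumulate to a constant as $r$ grows.
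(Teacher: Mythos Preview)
Your high-level strategy matches the paper's: reduce to $\chi^2$ via \propref{prop:tvd:chi}, then bound the MGF using \lemref{lem:exp:exp:xtay}. However, there are two concrete gaps in the general-$r$ step.

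First, your block matrix $\mathcal{A}$ is incorrect. The bilinear form you need is $\langle z,z'\rangle=\sum_{i}s_i\,u_i^\top\bigl(\sum_k z'_k B^k\bigr)v_i$ with $z'_k=\sum_j s_j (u'_j)^\top B^k v'_j$; this is \emph{block-diagonal} in the stacked variables, with $i$-th diagonal block $s_i\sum_k z'_k B^k$. Your claimed $(i,j)$-block $s_is_j\sum_k((u'_j)^\top B^kv'_j)B^k$ produces terms $u_i^\top B^k v_j$ rather than $u_i^\top B^k v_i$, so $\bu^\top\mathcal A\bv\neq\langle z,z'\rangle$. With the correct block-diagonal form one gets $\|\mathcal A\|_F^2=\|s\|_2^2\,\xi$ where $\xi:=\sum_k(z'_k)^2$, not the cross-term-dropped expression you wrote; this $\xi$ is exactly the quantity the paper isolates and whose expectation is $dN^2\|s\|_2^2$.

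Second, the conditioning step is the heart of the argument and ``sub-Gaussian tail bounds'' will not do the job directly: $\xi$ is degree four in the discrete Gaussians, not sub-Gaussian. The paper instead conditions \emph{one} of the two copies on the event $\calE=\{\|s\|_2^2\xi<\tfrac12 N^2\}$, which by Markov on $\Ex{\xi}$ has probability at least $1-2c$; this makes the Frobenius condition in \lemref{lem:exp:exp:xtay} hold deterministically while leaving the other copy an \emph{unconditioned} discrete Gaussian (so the lemma applies). The price is an additive $\frac{1}{\Pr[\calE]}-1=O(c)$ in TVD from the truncation. The paper then factors over $\ell$, applies the lemma per factor, and uses $\prod_\ell(1-x_\ell)^{-1/2}\le(1-\sum_\ell x_\ell)^{-1/2}$ followed by $(1-x)^{-1/2}\le 1+x$ to reach $1+d\|s\|_2^4$. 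Your stacking idea would reach the same endpoint once the two issues above are fixed, but as written the argument does not go through.
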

\begin{proof}
By rotational invariance of Gaussian distributions, we have that $\bB\cdot\calG(m,n)\sim\calN(0,N^2\cdot\bB\bB^\top)$. 
Since we assume the rows of $\bB$ are orthonormal, then we have $\bB\bB^\top=I_d$. 
Hence, we define $\calL_1=\calN(0,N^2\cdot I_d)$ so that the distribution of the image $\bB\cdot\calG(m,n)$ is $\calL_1$. 

Similarly, let $\calL_2=\calN(0,N^2\cdot I_d)+\nu$, where $\nu$ is the distribution over
\[
\begin{pmatrix}
\sum_{i=1}^r s_i(\bu^i)^\top B^1\bv^i\\
\sum_{i=1}^r s_i(\bu^i)^\top B^2\bv^i\\
\vdots\\
\sum_{i=1}^r s_i(\bu^i)^\top B^d\bv^i
\end{pmatrix}.
\]
Observe that $\calL_2$ is the image of $\bB\cdot\left(\calG(m,n)+\sum_{i=1}^r s_i\bu\bv^\top\right)$. 
Thus our task is to upper bound $\TVD(\calL_1,\calL_2)$. 
To that end, we first define
\[\xi=\sum_{i=1}^d\sum_{j,\ell=1}^r\sum_{a=1}^m\sum_{c=1}^m\sum_{b=1}^n\sum_{d=1}^n s_js_\ell(B^i)_{a,b}(B^i)_{c,d}(\bu^j)_a(\bv^j)_b(\bu^\ell)_c(\bv^\ell)_d.\]
Note that in the expectation of $\xi$, all cross terms vanish due to the symmetry of the probability mass function for discrete Gaussians, except those with $j=\ell$, $a=c$, and $b=d$. 
Therefore,
\[\Ex{\xi}=\sum_{i=1}^d\sum_{j=1}^r\sum_{a=1}^m\sum_{b=1}^ns_j^2(B^i)^2_{a,b}\cdot\Ex{(\bu^j_a)^2}\Ex{(\bv^j_b)^2}=d\cdot N^2\cdot\|s\|_2^2\cdot\sum_{a=1}^m\sum_{b=1}^n(B^i)^2_{a,b}.\]
Since the rows $B^1,\ldots,B^d$ are orthonormal, then we have
\[\Ex{\xi}=d\cdot N^2\cdot\|s\|_2^2.\]
We define $\calE$ to be the event that $\|s\|_2^2\cdot\xi<\frac{1}{2} N^2$ so that by assumption and Markov's inequality, we have $\PPr{\calE}\ge 1-2c$. 
We use $\widetilde{\nu}$ to denote the distribution of $\nu$ conditioned on $\calE$ and we write $\widetilde{\calL}_2=\calN(0,N^2\cdot I_d)+\widetilde{\nu}$. 
By triangle inequality, 
\begin{align*}
\TVD(\calL_1,\calL_2)&\le\TVD(\calL_1,\widetilde{\calL}_2)+\TVD(\widetilde{\calL}_2,\calL_2). 
\end{align*}
Then by \propref{prop:tvd:chi},
\begin{align*}
\TVD(\calL_1,\calL_2)&\le\sqrt{\EEx{\bz_1,\bz_2\sim\widetilde{\nu}}{e^{\langle\bz_1,\bz_2\rangle/N^2}}-1}+\TVD(\widetilde{\nu},\nu).
\end{align*}
By renormalizing $\nu$ to acquire $\widetilde{\nu}$, 
\begin{align*}
\TVD(\calL_1,\calL_2)&\le\sqrt{\frac{1}{\PPr{\calE}}\left(\EEx{\bz_1\sim\nu,\bz_2\sim\widetilde{\nu}}{e^{\langle\bz_1,\bz_2\rangle/N^2}}-1\right)}+\frac{1}{\PPr{\calE}}-1,
\end{align*}
Let $\bx,\by\sim\widetilde{\eta}$. 
Then we have
\begin{align*}
\EEx{\bz_1\sim\nu,\bz_2\sim\widetilde{\nu}}{e^{\langle\bz_1,\bz_2\rangle/N^2}}&=\Ex{\exp\left(\frac{1}{N^2}\sum_{i=1}^d\sum_{a,b,c,d,j,\ell}s_j(B^i)_{a,b}(\bu^j)_a(\bv^j)_b\cdot s_{\ell}(B^i)_{c,d}(\bx^\ell)_c(\by^\ell)_d\right)}.
\end{align*}
We define 
\[Q^{\ell}_{c,d}=s_{\ell}\sum_{i=1}^d\sum_{j,a,b}(B^i)_{a,b}(B^i)_{c,d}s_j(\bu^j)_a(\bv^j)_b,\]
so that
\begin{align*}
\EEx{\bz_1\sim\nu,\bz_2\sim\widetilde{\nu}}{e^{\langle\bz_1,\bz_2\rangle/N^2}}&=\EEx{\substack{\bu^1,\ldots,\bu^r\sim\widetilde{\nu},\\\bv^1,\ldots,\bv^r\sim\widetilde{\nu}}}{\prod_{\ell=1}^r\EEx{\substack{\bx^\ell\sim\calD(0,N\cdot I_m),\\\by^\ell\sim\calD(0,N\cdot I_n)}}{\exp\left(\frac{1}{N^2}\sum_{c,d}Q^\ell_{c,d}(\bx^\ell)_c(\by^\ell)_d\right)}}\\
&=\EEx{\substack{\bu^1,\ldots,\bu^r\sim\widetilde{\nu},\\\bv^1,\ldots,\bv^r\sim\widetilde{\nu}}}{\prod_{\ell=1}^r\EEx{\substack{\bx^\ell\sim\calD(0,N\cdot I_m),\\\by^\ell\sim\calD(0,N\cdot I_n)}}{\exp\left(\frac{1}{N}\sum_{c,d}\frac{1}{N}\cdot Q^\ell_{c,d}(\bx^\ell)_c(\by^\ell)_d\right)}}.
\end{align*}
We first claim that $\|Q^\ell\|_F^2<N^2$ and hence $\cdot\|\frac{1}{N}Q^\ell\|_F^2<1$. 
We have
\begin{align*}
\|Q^\ell\|_F^2&=\sum_{c,d}(Q^\ell)_{c,d}^2\\
&=s_\ell^2\sum_{c,d}\sum_{i,i'}\sum_{j,a,b}\sum_{j',a',b'}s_j(B^i)_{a,b}(B^i)_{c,d}(\bu^j)_a(\bv^j)_b\cdot s_{j'}(B^i)_{a',b'}(B^i)_{c,d}(\bu^{j'})_{a'}(\bv^{j'})_{b'}.
\end{align*}
Since the rows $B^1,\ldots,B^d$ are orthonormal, then for sufficiently large $N$, 
\begin{align*}
\|Q^\ell\|_F^2&\le s_\ell^2\sum_{c,d}\sum_{i}(B^i)_{c,d}^2\sum_{j,a,b}\sum_{j',a',b'}s_j(B^i)_{a,b}(\bu^j)_a(\bv^j)_b\cdot s_{j'}(B^i)_{a',b'}(\bu^{j'})_{a'}(\bv^{j'})_{b'}\\
&\le s_\ell^2\sum_{i}\sum_{j,a,b}\sum_{j',a',b'}s_j(B^i)_{a,b}(\bu^j)_a(\bv^j)_b\cdot s_{j'}(B^i)_{a',b'}(\bu^{j'})_{a'}(\bv^{j'})_{b'}\\
&=s_\ell^2\xi<\frac{1}{2} N^2,
\end{align*}
conditioned on $\calE$. 
Then by \lemref{lem:exp:exp:xtay}, we have that since $\|Q^\ell\|_F^2<1$, then 
\begin{align*}
\EEx{\substack{\bu^1,\ldots,\bu^r\sim\widetilde{\nu},\\\bv^1,\ldots,\bv^r\sim\widetilde{\nu}}}{\prod_{\ell=1}^r\EEx{\substack{\bx^\ell\sim\calD(0,N\cdot I_m),\\\by^\ell\sim\calD(0,N\cdot I_n)}}{\exp\left(\frac{1}{N}\sum_{c,d}\frac{1}{N}\cdot Q^\ell_{c,d}(\bx^\ell)_c(\by^\ell)_d\right)}}&\le\EEx{\substack{\bx^\ell\sim\calD(0,N\cdot I_m),\\\by^\ell\sim\calD(0,N\cdot I_n)}}{\prod_{\ell=1}^r\frac{1}{\sqrt{1-\frac{s_\ell^2\cdot\xi}{N^2}}}}\\
&\le\EEx{\substack{\bx^\ell\sim\calD(0,N\cdot I_m),\\\by^\ell\sim\calD(0,N\cdot I_n)}}{\frac{1}{\sqrt{1-\frac{\|s\|_2^2\cdot\xi}{N^2}}}},
\end{align*}
since $\frac{1}{\sqrt{1-x}}\cdot\frac{1}{\sqrt{1-y}}\le\frac{1}{\sqrt{1-x-y}}$ for $x,y\ge 0$. 
Moreover, we have $\frac{1}{\sqrt{1-x}}\le1+x$ for $x\in\left[0,\frac{1}{2}\right]$. 
Thus,
\begin{align*}
\EEx{\substack{\bu^1,\ldots,\bu^r\sim\widetilde{\nu},\\\bv^1,\ldots,\bv^r\sim\widetilde{\nu}}}{\prod_{\ell=1}^r\EEx{\substack{\bx^\ell\sim\calD(0,N\cdot I_m),\\\by^\ell\sim\calD(0,N\cdot I_n)}}{\exp \left(\frac{1}{N^2}\sum_{c,d}Q^\ell_{c,d}(\bx^\ell)_c(\by^\ell)_d\right)}}&\le1+\frac{\|s\|_2^2\cdot\Ex{\xi}}{N^2}\\
&\le 1+d\cdot\|s\|_2^4.
\end{align*}
Hence, 
\[\TVD(\calL_1,\calL_2)\le\sqrt{\frac{d\cdot\|s\|_2^4}{1-2c}}+\frac{2c}{1-2c}\le\sqrt{\frac{c}{1-2c}}+\frac{2c}{1-2c}\le\frac{1}{20},\]
for sufficiently small $c>0$. 
\end{proof}

We now give a lower bound for $\alpha$-approximation to the operator norm of a matrix $\bX \in \mathbb{Z}^{n \times n}$. 
\begin{corollary}[$\alpha$-approximation to operator norm]
\corlab{cor:op_lb}
Let $\alpha > 1 + c$ be an arbitrarily small constant $c$. 
Then any integer sketch that estimates $\|\bX\|_{op}$ for $\bX \in \mathbb{Z}^{n \times n}$ within a factor of $\alpha$ with error probability $\leq 1/6$ requires sketching dimension $\Omega\left(\frac{n^2}{\alpha^4}\right)$.
\end{corollary}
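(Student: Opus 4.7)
The plan is to lift the continuous indistinguishability result of \lemref{lem:tvd:g:spike} to the integer setting using \thmref{thm:lifting}. Fix $N = N(n)$ to be a sufficiently large polynomial in $n$, view each matrix in $\mathbb{R}^{n\times n}$ as a vector in $\mathbb{R}^{n^2}$ so that sketches act on $n^2$-dimensional inputs, and set the spike coefficient $\beta = c_0 \alpha/\sqrt{n}$ for a sufficiently large constant $c_0 = c_0(c)$. Define two input distributions on $\mathbb{Z}^{n\times n}$:
\begin{align*}
\calD_1 &= \calD(0, N^2 \cdot I_{n^2}), \\
\calD_2 &= \calD(0, N^2 \cdot I_{n^2}) + \beta \cdot \bu \bv^\top,
\end{align*}
where $\bu, \bv \sim \calD(0, N \cdot I_n)$ are independent discrete Gaussians and $N$ is chosen so that $\beta \bu \bv^\top$ is integer-valued. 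By \lemref{lem:two:sided:oper} and standard sub-Gaussian concentration of $\|\bu\|_2$ and $\|\bv\|_2$, with probability $1 - 1/\poly(n)$ we have $\|\bX\|_{op} \le T$ for $\bX \sim \calD_1$ and $\|\bX\|_{op} \ge 3\alpha T$ for $\bX \sim \calD_2$, where $T := 3N\sqrt{n}$. Define $f:\mathbb{R}^{n\times n}\to\{0,1\}$ by $f(\bX) = 1$ iff $\|\bX\|_{op} \ge \alpha T$; any factor-$\alpha$ integer operator-norm estimator with success probability $\ge 5/6$ yields a post-processing that computes $f$ correctly on the support of $\calD_1 \cup \calD_2$ with probability $\ge 5/6 - o(1)$.

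Given such a sketch $\bA \in \mathbb{Z}^{r \times n^2}$, I first apply \lemref{lem:preprocessing} to obtain an extended sketch $\bA' \in \mathbb{Z}^{m \times n^2}$ with $m \le 4r$ and $\lambda_{\max}(\calL^\perp(\bA')) \le \poly(n)$; this only adds rows and can only make the sketch stronger. I then invoke \thmref{thm:lifting} with discrete-Gaussian part $X \sim \calD(0, N^2 I_{n^2})$ and the integer ``noise'' part $Y \sim \calI$ ranging over the two cases $\calI \in \{\delta_0, \mathrm{Law}(\beta \bu\bv^\top)\}$; the spectral hypotheses are met by taking $N$ large enough that $\sigma_n(\bS) = N$ exceeds both $n^{5C+3}$ and $10C(\log n) \cdot \lambda_{\max}(\calL^\perp(\bA'))$. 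The theorem yields a real-valued sketch $\bA'' \in \mathbb{R}^{4m \times n^2}$ whose post-processing correctly computes $f$ with probability $\ge 5/6 - o(1)$ under $\bX \sim \calN(0, N^2 I_{n^2}) + \calI$ for both choices of $\calI$. But these are precisely the distributions $\calD_1, \calD_2$ of \lemref{lem:tvd:g:spike} with $r = 1$ and $\|s\|_2 = \beta$. If $4m \le c'/\beta^4 = c' n^2/(c_0 \alpha)^4$ for the absolute constant $c'$ of that lemma, then the images under $\bA''$ have total variation distance at most $1/10$, so no post-processing can distinguish them with probability above $0.55$, contradicting $5/6 - o(1)$. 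Therefore $m = \Omega(n^2/\alpha^4)$, and consequently $r \ge m/4 = \Omega(n^2/\alpha^4)$, as claimed.

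The main obstacle is verifying the $\frac{\delta}{3}$-smoothness of $f$ required by \thmref{thm:lifting}: the switch from a discrete to a continuous base must change the value of $f$ with only $o(1)$ probability, while $f$ is a hard threshold at $\|\bX\|_{op} = \alpha T$. Since $\|\cdot\|_{op}$ is $1$-Lipschitz with respect to the Frobenius norm, \thmref{thm:lipschitz:conc} combined with \lemref{lem:disc:subgauss} gives sub-Gaussian concentration of $\|\bX\|_{op}$ about its mean with scale $\O{N}$ under both the discrete and continuous bases; meanwhile the gap between the concentration windows for $\calD_1$ and $\calD_2$ is $\Omega(\alpha N\sqrt{n}) \gg N$. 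Choosing $c_0$ large enough places the threshold $\alpha T$ strictly inside this gap, so the boundary region has probability $\exp(-\Omega(n))$ under either base and $f$ is $1/\poly(n)$-smooth. A secondary technical point is ensuring $\beta \bu \bv^\top \in \mathbb{Z}^{n\times n}$, which is secured by absorbing any rational denominator into $N$; this has no effect on the remaining analysis because $N$ is only required to be a sufficiently large polynomial.
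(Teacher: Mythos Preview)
Your proposal is correct and follows essentially the same approach as the paper: set up the rank-one spiked discrete Gaussian distributions with $s_1 = \Theta(\alpha/\sqrt{n})$, establish the $\alpha$-gap in operator norm via \lemref{lem:two:sided:oper} and the triangle inequality, then combine \lemref{lem:tvd:g:spike} with \thmref{thm:lifting}. The paper's proof is terser, simply asserting that ``the result follows by applying \lemref{lem:tvd:g:spike} and \thmref{thm:lifting} if we can show an $\alpha$-gap'' and then verifying the gap; you additionally spell out the pre-processing step, the role of $\calI$, and the smoothness verification for the thresholded function $f$, which the paper leaves implicit.
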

\begin{proof}
Let $\calG'(m,n)$ be a distribution on $m\times n$ matrices where each entry is drawn from $\mathcal{D}(0,N^2)$. 
Consider the discrete analog of the above two distributions where we let $\calD_1=\calG'(m,n)$ and let $\calD_2=\calG'(m,n)+\sum_{i=1}^r s_iu^i(v_i)^\top$, where $u_1,\ldots,u_r\sim\calD(0,N\cdot I_m)$ and $v_1,\ldots,v_r\sim\calD(0,N\cdot I_n)$. 

Let $m = n$ and let $r = 1$, $s_1 = \gamma_1\alpha/\sqrt{n}$ from some constant $\gamma_1$ large enough in $\mathcal{D}_2$. 
Then, the result follows by applying \lemref{lem:tvd:g:spike} and \thmref{thm:lifting} if we can show an $\alpha$-gap of the operator norm between these two distributions.

We next show that this gap holds with high constant probability. By \lemref{lem:disc:subgauss} and \lemref{lem:two:sided:oper}, there exists an absolute constant $C>0$ such that 
\[\|\bG \|_{op} \leq 3CN\sqrt{n},\]
with probability at least $1-2e^{-n}$. 

On the other hand, since $s_1\bu\bv^\top$ is a rank-one matrix, we have 
\[\|s_1\bu\bv^\top\|_{op}=s_1\|\bu\|_2\|\bv\|_2\ge\frac{\gamma_1}{\sqrt{n}}\cdot\alpha\cdot(Nn),\]
for some absolute constant $\gamma_2>0$. 
Therefore by triangle inequality,
$$\|\bG+s_1\bu\bv^\top\|_{op}\ge\|s_1\bu\bv^\top\|_{op}-\|\bG\|_{op}\ge\frac{\gamma_1\alpha}N\sqrt{n}-3CN\sqrt{n}>3\alpha CN\sqrt{n},$$
for sufficiently large constant $\gamma_1>0$. 
\end{proof}

Next, we show a dimension lower bound for any integer linear sketch which $(1+\eps)$-approximates the operator norm of a matrix $\bX$. 
\begin{lemma}
\lemlab{lem:op:eps:lb}
Given an approximation parameter $\eps\in\left(0,\frac{1}{3}\right)$, any integer sketch that outputs a $(1+\eps)$-approximation for the operator norm of a matrix $\bX\in\mathbb{Z}^{(d/\eps^2)\times d}$ with probability at least $\frac{5}{6}$ uses sketching dimension $\Omega\left(\frac{d^2}{\eps^2}\right)$. 
\end{lemma}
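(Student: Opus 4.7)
The plan is to mirror the proof of \corref{cor:op_lb}, but with parameters tuned to produce a $(1+\eps)$ multiplicative gap in the operator norm rather than an $\alpha$ multiplicative gap. Set $m = d/\eps^2$ and $n = d$, and let $N$ be a sufficiently large polynomial in $m$. Define the discrete hard distributions
\[
\calD_1 = \calG'(m,n), \qquad \calD_2 = \calG'(m,n) + s\,\bu\bv^\top,
\]
where $\calG'(m,n)$ is the distribution on $m \times n$ integer matrices whose entries are drawn independently from $\calD(0, N^2)$, $\bu \sim \calD(0, N \cdot I_m)$, $\bv \sim \calD(0, N \cdot I_n)$, and the scalar $s$ satisfies $s^2 = C\eps/d$ for a sufficiently large absolute constant $C$, giving $\|s\|_2^4 = \Theta(\eps^2/d^2)$. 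This is the discrete analog of the hard instance used in \cite{LiW16}.

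I would first verify the $(1+\eps)$-gap in operator norm. By \lemref{lem:two:sided:oper}, $\|\bG\|_{op} = \Theta(N\sqrt{m})$ with probability $1-\exp(-\Omega(n))$ for $\bG \sim \calG'(m,n)$, and by \lemref{lem:disc:subgauss} together with standard sub-Gaussian concentration, $\|\bu\|_2^2 = \Theta(Nm)$ and $\|\bv\|_2^2 = \Theta(Nn)$ with high probability. To lower-bound the operator norm of $\bG + s\bu\bv^\top$, I would evaluate it in the direction of $\bv$: expanding the squared norm and using independence of $\bG$ from $(\bu,\bv)$, so that the cross term is a mean-zero sub-Gaussian lower-order contribution, yields $\|\bG + s\bu\bv^\top\|_{op}^2 \ge \Theta(N^2 m)(1 + \Theta(C\eps))$. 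Choosing $C$ sufficiently large gives $\|\bG + s\bu\bv^\top\|_{op} \ge (1+\eps)^2\|\bG\|_{op}$ with probability at least $5/6$, so any $(1+\eps)$-approximation algorithm must distinguish the two distributions.

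Next, I would combine \lemref{lem:tvd:g:spike} with our lifting framework \thmref{thm:lifting}. Given an integer sketch $\bA$ of dimension $r$, apply the pre-processing of \lemref{lem:preprocessing} without loss of generality, so that $\lambda_{\max}(\calL^\perp(\bA)) \le \poly(mn)$. By \lemref{lem:tvd:g:spike}, the total variation distance between the images of the continuous analogs of $\calD_1$ and $\calD_2$ under the (orthonormalized) sketch is at most $\tfrac{1}{10}$ whenever $r \le c/\|s\|_2^4 = \Theta(d^2/\eps^2)$. Choosing $N$ large enough so that $N^2 I_{mn}$ satisfies the singular value bound required by \thmref{thm:lifting}, and noting that the indicator of whether the operator norm exceeds the threshold midway between the two concentration points is $\tfrac{1}{\poly(n)}$-smooth (since the operator norm is $1$-Lipschitz in Frobenius norm and both continuous and discrete Gaussians concentrate tightly), we may lift the continuous indistinguishability to integer sketches. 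A sketch of dimension $o(d^2/\eps^2)$ that succeeded at $(1+\eps)$-approximation with probability $5/6$ would therefore contradict the TVD bound.

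The main obstacle I anticipate is casting the hard distribution into the $X+Y$ form demanded by \thmref{thm:lifting}, since the planted signal $s\bu\bv^\top$ is itself a random low-rank quantity and not a fixed integer vector. This is resolved by taking $X$ to be the ambient discrete Gaussian noise $\bG$ (which provides the ``isotropic'' part on which the lifting theorem operates) and $Y$ to be the independent spike $s\bu\bv^\top$, which is an arbitrary distribution over integer matrices as permitted by \thmref{thm:lifting:informal}; the final invocation conditions on a draw of $Y$ and applies the theorem to the conditional distribution of $X$.
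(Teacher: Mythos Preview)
Your proposal is correct and follows essentially the same approach as the paper: the same parameter choices $m=d/\eps^2$, $n=d$, $s=\Theta(\sqrt{\eps/d})$, the same method of lower-bounding the operator norm of the planted instance by evaluating in the direction $\bv/\|\bv\|_2$ and expanding the squared norm into three terms, and the same combination of \lemref{lem:tvd:g:spike} with the lifting framework via the decomposition $X=\bG$, $Y=s\bu\bv^\top$. The paper is more explicit about the cross-term bound (passing through the continuous Gaussian to control $\langle\bG\bv,\bu\rangle$ and then observing that the discrete case differs negligibly for large $N$), but your sketch of that step is accurate.
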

\begin{proof}
Let $m = d/\eps^2$ and $n = d$. 
Let $\bX \in \mathbb{R}^{(d/\eps^2) \times d}$. 
As before, take $r = 1$ and let $s_1 = \alpha \sqrt{\eps/d}$ for some constant $\alpha > 0$ large enough to apply \lemref{lem:tvd:g:spike}.  
Now, it suffices for us to show that $\bG$ and $\bG + \alpha \sqrt{\frac{\eps}{d}}\bu\bv^{\top}$ differ in operator norm by a factor of $1 + \eps$ for some constant $\alpha$, where each entry of $\bG$ is drawn from $\calD(0,N^2)$, and each entry of $\bu\in\mathbb{Z}^{d/\eps^2}$ and $\bv\in\mathbb{Z}^d$ is drawn independently from $\calD(0,N)$, so that $\bu\sim \calD(0, N\cdot I_{d/\eps^2})$, and $\bv \sim \calD(0, N \cdot I_{d})$. 
By \lemref{lem:disc:subgauss} and \lemref{lem:two:sided:oper}, there exists an absolute constant $C>0$ such that 
$$\|\bG \|_{op} \leq CN\left(\frac{\sqrt{d}}{\eps} + 2\sqrt{d}\right) = CN(1 + 2\eps)\cdot\frac{\sqrt{d}}{\eps},$$
with probability at least $1-2e^{-d}$. 
Now, we will also show that 
$$\left \|\bG + \alpha \sqrt{\frac{\eps}{d}}\bu\bv^{\top} \right \|_{op}  \geq CN(1 + 4\eps)\cdot\frac{\sqrt{d}}{\eps}\ge(1+\eps)\cdot\|\bG\|_{op}.$$ 
To see this, we note that 
\begin{align*}
    \left\|\bG + \alpha \sqrt{\frac{\eps}{d}} \bu \bv^{\top} \right\|_{op} &= \sup_{\bx \in \mathbb{S}^{n-1}} \left \| \left(\bG + \alpha \sqrt{\frac{\eps}{d}} \bu \bv^{\top}\right)\bx \right \|_2 \geq \left \| \left(\bG + \alpha \sqrt{\frac{\eps}{d}} \bu \bv^{\top}\right)\frac{\bv}{\|\bv\|_2} \right \|_2 \\ &= \left \|\bG\cdot\frac{\bv}{\|\bv \|_2} + \alpha \sqrt{\frac{\eps}{d}} \bu \|\bv\|_2 \right \|_2.
\end{align*}
By squaring both sides, we have that
$$\left\|\bG + \alpha \sqrt{\frac{\eps}{d}} \bu \bv^{\top} \right\|_{op}^2 \geq \left \|\bG\cdot\frac{\bv}{\|\bv \|_2} + \alpha \sqrt{\frac{\eps}{d}} \bu \|\bv\|_2 \right \|_2^2 = \frac{\|\bG\bv \|_2^2}{\| \bv\|_2^2} + 2 \alpha \sqrt{\frac{\eps}{d}} \langle \bG\bv, \bu\rangle + \frac{\alpha^2 \eps}{d} \|\bv\|_2^2 \|\bu\|_2^2.$$
By \lemref{lem:two:sided:oper}, we have that 
\[\frac{\|\bG\bv \|_2^2}{\| \bv\|_2^2}\ge\left(CN(1-2\eps)\cdot\frac{\sqrt{d}}{\eps}\right)^2,\]
with probability at least $1-2e^{-d}$. 

Now, consider the setting where $\bG\in\mathbb{R}^{d/\eps^2\times d}$ has real-valued entries independently drawn from the continuous distribution $\calN(0, N^2)$. 
Then for any fixing of $\bv\in\mathbb{R}^d$, we have that $\bG\bv~\sim\calN(0,N^2\cdot\|v\|_2^2\cdot I_{d/\eps^2})$. 
Therefore, for any fixing of $\bu\in\mathbb{R}^{d/\eps^2}$, we have that $\langle\bG\bv,\bu\rangle\sim\calN(0,N^2\cdot\|u\|_2^2\cdot\|v\|_2^2)$. 
Hence, there exists a constant $\gamma_5$ such that with high probability, 
\[\langle\bG\bv,\bu\rangle\ge-\gamma_5 N^2\cdot\frac{d}{\eps}.\] 
Moreover, there exists a constant $\gamma_6$ such that with high probability, 
\[\|\bv\|_2^2\cdot\|\bu\|_2^2\ge\gamma_6 N^2\cdot\frac{d^2}{\eps^2}.\] 
Thus if each entry of $\bG$ is drawn from $\calN(0,N^2)$, then we have with high probability,
\[2\alpha\sqrt{\frac{\eps}{d}}\langle\bG\bv,\bu\rangle+\frac{\alpha^2\eps}{d}\|\bv\|_2^2\cdot\|\bu\|_2^2\ge-2\alpha\gamma_5 N^2\sqrt{\frac{d}{\eps}}+\alpha^2\gamma_6 N^2\cdot\frac{d}{\eps}.\] 

Since we draw each entry of $\bG$ from $\calD(0, N^2)$ rather than $\calN(0, N^2)$, then for sufficiently large $N$, we have that with high probability, 
\[\frac{\|\bG\bv \|_2^2}{\| \bv\|_2^2} + 2 \alpha \sqrt{\frac{\eps}{d}} \langle \bG\bv, \bu\rangle + \frac{\alpha^2 \eps}{d} \|\bv\|_2^2 \|\bu\|_2^2\ge\left(CN(1-2\eps)\cdot\frac{\sqrt{d}}{\eps}\right)^2-2\alpha\gamma_5 N^2\sqrt{\frac{d}{\eps}}+\alpha^2\gamma_6 N^2\cdot\frac{d}{\eps}.\]
Therefore, for sufficiently large $\alpha$, we have that
$$\left \|\bG + \alpha \sqrt{\frac{\eps}{d}}\bu\bv^{\top} \right \|_{op}  \geq CN(1 + 4\eps)\cdot\frac{\sqrt{d}}{\eps}\ge(1+\eps)\cdot\|\bG\|_{op}.$$  
\end{proof}
Finally, we show a lower bound for obtaining a constant-factor approximation to the Ky Fan norm of a matrix $\bX$. 
\begin{corollary}[Ky Fan norm lower bound]
    \corlab{cor:KyFan_lb}
    There exists an absolute constant $c > 0$ such that any integer linear sketch that estimates $\norm{\bX}_{F_S}$ for $\bX \in \mathbb{Z}^{n \times n}$ and $s \le\O{\sqrt{n}}$ within a factor of $1 + c$ with error probability $1/6$ requires sketching dimension $\Omega(n^2 / s^2)$. 
\end{corollary}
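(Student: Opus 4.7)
The plan is to generalize the operator-norm argument of \corref{cor:op_lb} by using $s$ independent rank-one spikes instead of a single spike, so that the top $s$ singular values (rather than just the operator norm) are shifted noticeably. Fix $N=\poly(n)$ sufficiently large and a constant $\alpha>0$ to be tuned, and take $\calD_1 = \calG'(n,n)$ versus $\calD_2 = \calG'(n,n) + \bM$, where $\bM=\sum_{i=1}^{s}s_i\,\bu^i(\bv^i)^{\top}$ with $s_i=\alpha/\sqrt{n}$ and independent $\bu^i,\bv^i\sim \calD(0,N\cdot I_n)$. Applying \lemref{lem:tvd:g:spike} together with \thmref{thm:lifting} in exactly the same way as in \corref{cor:op_lb}, any integer sketch distinguishing $\calD_1$ from $\calD_2$ with constant advantage must have dimension at least $\Omega(1/\|\bs\|_2^4)$. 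Since $\|\bs\|_2^2=s\alpha^2/n$ and $\alpha$ is constant, this yields the desired $\Omega(n^2/s^2)$ bound, provided we exhibit a $(1+c)$ multiplicative gap between $\|\bG\|_{F_s}$ and $\|\bG+\bM\|_{F_s}$ for some absolute constant $c>0$.

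For the upper bound on the Ky Fan norm of the noise, \lemref{lem:two:sided:oper} gives $\sigma_1(\bG)\le C_0 N\sqrt{n}$ with high probability, so $\|\bG\|_{F_s}\le s\,\sigma_1(\bG)\le C_0\,sN\sqrt{n}$. For the lower bound on $\|\bG+\bM\|_{F_s}$, the Ky Fan norm is a norm, so the reverse triangle inequality gives $\|\bG+\bM\|_{F_s}\ge\|\bM\|_{F_s}-\|\bG\|_{F_s}$; it remains to lower bound $\|\bM\|_{F_s}$. Write $\bM=\bU\bD\bV^{\top}$ with $\bU\in\mathbb{Z}^{n\times s}$ having columns $\bu^i$, $\bV\in\mathbb{Z}^{n\times s}$ having columns $\bv^i$, and $\bD=(\alpha/\sqrt{n})I_s$. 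Under the hypothesis $s\le\O{\sqrt{n}}$, \lemref{lem:two:sided:oper} applied to the tall matrices $\bU$ and $\bV$ (viewed as $n\times s$ discrete Gaussian matrices, after rescaling the variance appropriately) gives $\sigma_{\min}(\bU),\sigma_{\min}(\bV)\ge c_1\sqrt{Nn}$ with high probability. The standard inequality $\sigma_i(\bA\bB\bC)\ge\sigma_{\min}(\bA)\,\sigma_i(\bB)\,\sigma_{\min}(\bC)$ for $i$ at most the smallest dimension then yields $\sigma_i(\bM)\ge c_1^2\alpha N\sqrt{n}$ for every $i\in[s]$, so $\|\bM\|_{F_s}\ge c_1^2\,s\,\alpha\,N\sqrt{n}$. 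Picking $\alpha$ a sufficiently large constant relative to $C_0/c_1^2$ makes $\|\bG+\bM\|_{F_s}/\|\bG\|_{F_s}\ge 1+c$ for some absolute $c>0$, as required.

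The main obstacle is establishing that $\bM$ has all $s$ of its nonzero singular values of the same order $\Theta(\alpha N\sqrt{n})$, rather than being dominated by only a handful of directions; this is what lets the Ky Fan norm accumulate the $s$-fold improvement that the operator-norm argument cannot capture. The restriction $s\le\O{\sqrt{n}}$ enters precisely here, since \lemref{lem:two:sided:oper} applied to an $n\times s$ matrix produces a nontrivial lower bound on $\sigma_{\min}$ only when $\sqrt{s}\ll\sqrt{n}$, which this regime comfortably satisfies. The remainder is routine: the Ky Fan norm is $1$-Lipschitz with respect to the Frobenius norm of its input, which provides the smoothness needed to invoke \thmref{thm:lifting} and transfer the continuous indistinguishability bound to the integer setting.
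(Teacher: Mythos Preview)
Your proposal is correct and follows the same overall architecture as the paper: the same hard pair $\calD_1=\calG'(n,n)$ versus $\calD_2=\calG'(n,n)+\sum_{i=1}^{s}(\alpha/\sqrt{n})\bu^i(\bv^i)^{\top}$, the same invocation of \lemref{lem:tvd:g:spike} and \thmref{thm:lifting}, and the same upper bound $\|\bG\|_{F_s}\le C_0 sN\sqrt{n}$ via the operator-norm estimate.

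The one place you diverge is in lower-bounding $\|\bM\|_{F_s}$. The paper argues via near-orthogonality of the $\bu^i$'s (controlling $\sum_{i\ne j}\langle\bu^i,\bu^j\rangle$), then uses the min-max theorem to reduce $\sigma_\ell(\bP)$ to $\sigma_\ell(\bV)$ for the $s\times n$ matrix $\bV$ with rows $(\bv^i)^{\top}$, and finally bounds the Schatten-$1$ norm of $\bV$ via \lemref{lem:two:sided:oper}. Your route is more direct: apply \lemref{lem:two:sided:oper} separately to the tall $n\times s$ matrices $\bU$ and $\bV$ to get $\sigma_s(\bU),\sigma_s(\bV)\ge c_1\sqrt{Nn}$ (valid precisely when $s\le\O{\sqrt{n}}$), and then use the elementary product inequality $\sigma_s(\bU\bD\bV^{\top})\ge\sigma_s(\bU)\,\sigma_s(\bD)\,\sigma_s(\bV)$ to conclude that every nonzero singular value of $\bM$ is at least $c_1^2\alpha N\sqrt{n}$. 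This avoids the cross-term bookkeeping and the detour through the Schatten-$1$ norm, and it makes transparent exactly where the hypothesis $s\le\O{\sqrt{n}}$ enters. Both arguments yield the same quantitative conclusion.
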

\begin{proof}
Let $m = n$ and $r = s$ where $s_1 = s_2 = \ldots = s_r = \gamma /\sqrt{n}$ in $\mathcal{D}_2$ for some sufficiently large constant $C$.  
Then, the result follows by applying \lemref{lem:tvd:g:spike} and \thmref{thm:lifting} if we can show a $(1 + c)$-gap of the KyFan $s$ norm between these two distributions.

When $X \sim \mathcal{D}_1$, by \lemref{lem:disc:subgauss} and \lemref{lem:two:sided:oper}, there exists an absolute constant $C>0$ such that 
\[\|\bG \|_{op} \leq CN\sqrt{n},\]
with probability at least $1-2e^{-n}$. 
Thus $\norm{X}_{F_s} \le Cs N\sqrt{n}$ with high probability.

When $X \sim \mathcal{D}_2$, we can write $\bX = \bG + \frac{\gamma}{\sqrt{n}}\bP$ where $\bP = \bu_1 \bv_1^T + \bu_2 \bv_2^T + \ldots + \bu_s \bv_s^T$. 
We claim that with high probability $\norm{\bP}_1 \ge 0.9Nsn$ and thus $\norm{\bX}_{F_s} \ge \frac{C}{\sqrt{n}} \norm{\bP}_{F_s} - \norm{\bG}_{F_s} \ge 0.9 \gamma Ns\sqrt{n} - CNs\sqrt{n}$, showing a multiplicative gap of $\norm{\bX}_{F_s}$ between the two distributions.

Now we prove this claim. 
Since $\bu_i, \bv_i$ are drawn from the discrete Gaussian distribution. With high probability we have $0.99 \sqrt{nN} \le \norm{\bu_i}_2 \le 1.01 \sqrt{nN}$ for all $i$ and $|\sum_{i \ne j} \langle \bu_i, \bu_j\rangle| \le 1.01 s\sqrt{nN}$. 
Then, following the same procedure in the Corollary 8 in~\cite{LiW16}, by the min-max theorem for singular values we have 
\[
\sigma_{\ell}^2 (\bP) \ge 0.99^2 n \sigma_{\ell}^2(\bV) - 1.01^3 s n^{3/2}N^2,
\]
where $\bV$ is a $k \times n $ matrix with rows $\bv_1^T, \ldots, \bv_s^T$. 
Therefore, we have 
\[
\norm{\bP}_1 \ge 0.99\sqrt{n} \norm{\bV}_1 - 1.01^{3/2} s^{1/2} n^{3/4} N \;.
\]
Since $\bV$ is a Gaussian matrix, from \lemref{lem:two:sided:oper} on $\bV^T$ we have $\norm{\bV} \ge C_2 s \sqrt{n}N$ with high probability. 
The claim follows from our assumption.
\end{proof}

\subsection{Eigenvalue Approximation and PSD Testing}
In this section, we consider the problems of eigenvalue approximation and PSD testing. 
In both problems, the input is a matrix $\bM\in\mathbb{Z}^{n\times n}$. 
In the eigenvalue approximation problem, the goal is to estimate each eigenvalue of $\bM$ up to an additive error $\eps\cdot\|\bM\|_F$. 
On the other hand, in the PSD testing problem, the goal is to either output YES if $\bM$ is positive semidefinite, so that $\bv\bM\bv^\top \geq 0$ for all vectors $\bv\in\mathbb{R}^n$, or output NO if $\bM$ has an eigenvalue of value at most $-\eps\cdot\|\bM\|_p$, where $\|\bM\|_p$ denotes the Schatten-$p$ norm of $\bM$. In fact, for both of these problems, the hard distribution is inspired by the hard distribution for approximating the operator norm (See \secref{sec:op-ky}).

We first give our lower bound for the eigenvalue approximation problem. 
\begin{theorem}
\thmlab{thm:eigen:lb}
Given an approximation parameter $\eps\in\left(0,\frac{1}{3}\right)$, any integer sketch that outputs additive $\eps\cdot\|\bM\|_F$ approximations to the eigenvalues of a matrix $\bM \in \mathbb{Z}^{d\times d}$ with probability at least $\frac{3}{4}$ uses sketching dimension $\Omega\left(\frac{1}{\eps^4}\right)$ for $d=\Omega\left(\frac{1}{\eps^2}\right)$.
\end{theorem}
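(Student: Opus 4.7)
The plan is to lift the $\Omega(1/\eps^4)$ real-valued lower bound for eigenvalue approximation to the integer-input setting by combining the spiked-Gaussian hard distribution from Section~\secref{sec:op-ky} with a block-embedding reduction to a symmetric matrix whose eigenvalues are directly controlled by the singular values of a smaller matrix. Set $d' = d/2 = \Theta(1/\eps^2)$ and draw $\bX \in \mathbb{Z}^{d' \times d'}$ from either $\calD_1 = \calG'(d', d')$ or $\calD_2 = \calG'(d', d') + s_1 \bu \bv^\top$, where $s_1 = C\eps$ for a sufficiently large constant $C$ and $\bu, \bv \sim \calD(0, N \cdot I_{d'})$ for a large parameter $N = \poly(n)$. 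Define
\[
\bM = \begin{pmatrix} 0 & \bX \\ \bX^\top & 0 \end{pmatrix} \in \mathbb{Z}^{d \times d}.
\]
Since $\bM$ is symmetric and block off-diagonal, its eigenvalues are exactly $\pm \sigma_i(\bX)$ and $\|\bM\|_F = \sqrt{2}\|\bX\|_F$. An integer linear sketch of $\bM$ is a linear function of $\mathrm{vec}(\bX)$ of the same sketching dimension, since $\bM$ is a fixed linear image of $\bX$. Thus it suffices to show that distinguishing $\calD_1$ from $\calD_2$ under any integer linear sketch requires $\Omega(1/\eps^4)$ dimensions.

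Next, I would certify the eigenvalue gap. Under $\calD_1$, \lemref{lem:two:sided:oper} gives $\sigma_1(\bX) \le 3N\sqrt{d'}$ with high probability. Under $\calD_2$, testing against the direction $\bv$ yields $\sigma_1(\bX) \ge \|\bX\bv\|_2/\|\bv\|_2 \ge s_1\|\bu\|_2\|\bv\|_2 - \|\bG\|_{\mathrm{op}}$; combined with the concentration bounds $\|\bu\|_2,\|\bv\|_2 = \Theta(\sqrt{d'N})$ from \lemref{lem:disc:subgauss} and $\|\bG\|_{\mathrm{op}} \le 3N\sqrt{d'}$ from \lemref{lem:two:sided:oper}, this gives $\sigma_1(\bX) \ge (C-3)N\sqrt{d'}$ with high probability. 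Standard sub-exponential concentration gives $\|\bX\|_F^2 = (1\pm o(1))\cdot d'^2 N^2$ in both cases, so $\|\bM\|_F = \Theta(Nd')$ and $\eps\|\bM\|_F = \Theta(N\sqrt{d'})$ since $d' = \Theta(1/\eps^2)$. Choosing $C$ large enough makes the eigenvalue gap $(C-6)N\sqrt{d'}$ exceed $10\,\eps\|\bM\|_F$, so any additive-$\eps\|\bM\|_F$ eigenvalue approximator with success probability $\tfrac{3}{4}$ must distinguish the two distributions.

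To conclude, I would apply \lemref{lem:tvd:g:spike} with $r=1$ and $\|s\|_2^2 = s_1^2 = C^2\eps^2$, together with the lifting step \thmref{thm:lifting} exactly as in the proof of \corref{cor:op_lb}, to show that any integer sketching matrix $\bB$ with $r_{\mathrm{sk}} \le c/(C^4\eps^4)$ rows satisfies $\TVD(\bB\cdot\calD_1,\bB\cdot\calD_2) \le \tfrac{1}{10}$; by Le Cam's two-point method no $\tfrac{3}{4}$-distinguisher then exists, forcing $r_{\mathrm{sk}} = \Omega(1/\eps^4)$. The main obstacle I anticipate is the interplay between the non-integer scaling $s_1 = C\eps$ and the smoothness hypothesis of \thmref{thm:lifting}: I would handle the former by sampling the spike from $\calD(0, s_1 N \cdot I_{d'})$ so that $s_1\bu\bv^\top$ has integer entries after absorbing the scalar into the covariance, and the latter by observing via Weyl's inequality that integer perturbations of order $O(1)$ to entries of $\bM$ shift each eigenvalue by at most $O(\sqrt{d}) = o(\eps\|\bM\|_F)$, yielding the required $\delta/3$-smoothness of the additive-$\eps\|\bM\|_F$ eigenvalue predicate.
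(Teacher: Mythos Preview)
Your proposal is correct and follows essentially the same approach as the paper: the same spiked discrete-Gaussian hard distribution, the same gap calculation via \lemref{lem:two:sided:oper}, and the same appeal to \lemref{lem:tvd:g:spike} combined with the lifting \thmref{thm:lifting} to conclude the $\Omega(1/\eps^4)$ bound.

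There is one structural difference worth noting. The paper feeds the (non-symmetric) matrix $\bG$ or $\bH=\bG+s\bu\bv^\top$ directly as the input $\bM$ and argues the gap via the operator norm, leaving the identification of operator norm with ``largest eigenvalue'' implicit. You instead pass through the symmetric block $\bM=\begin{pmatrix}0&\bX\\\bX^\top&0\end{pmatrix}$, which makes the eigenvalues exactly $\pm\sigma_i(\bX)$ and renders the eigenvalue gap airtight; the paper reserves this symmetrization step for the PSD-testing theorem (\thmref{thm:psd:lb}) but omits it here. Your version is cleaner on this point and costs nothing, since the sketch of $\bM$ is a fixed linear image of the sketch of $\bX$. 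You are also more explicit than the paper about the two technical points (integrality of the spike and $\delta/3$-smoothness of the distinguishing predicate), which the paper absorbs into its blanket invocation of \thmref{thm:lifting}. One small slip: your Weyl bound on the perturbation should be $O(d)$ rather than $O(\sqrt d)$, since a $d\times d$ matrix with $O(1)$ entries has operator norm at most $O(d)$; but since $\eps\|\bM\|_F=\Theta(\eps N d)$ and $N=\poly(n)$, this still yields the required $o(\eps\|\bM\|_F)$ and the smoothness argument goes through unchanged.
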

\begin{proof}
We consider two matrices $\bG$ and $\bH$ as possible inputs for $\bM$. 
Let $s\in(0,1)$ be a parameter that we shall set. 
We define $\bG\in\mathbb{R}^{d\times d}$ to be a matrix with entries drawn independently at random from $\calD(0,N^2)$ and $\bH=\bG+s\bu\bv^\top$, where $\bu,\bv\in\mathbb{R}^d$ have entries drawn independently at random from $\calD(0,N)$. 
By \lemref{lem:two:sided:oper}, there exists a universal constant $C_1>0$ such that $\|\bG\|_{op}\le C_1N\sqrt{d}$ with probability at least $1-2e^{-d}$. 

With probability at least $1-2e^{-d}$, we have that $\|\bu\|_2=\Theta(\sqrt{Nd})$. 
Similarly, with probability at least $1-2e^{-d}$, we have that $\|\bv\|_2=\Theta(\sqrt{Nd})$. 
Because $s\bu\bv^\top$ is a rank-one matrix, then there exists a universal constant $C_2>0$ such that
\[\|s\bu\bv^\top\|_{op}=s\|\bu\|_2\|\bv\|_2\ge C_2sNd,\]
with probability at least $1-4e^{-d}$ by a union bound. 
By triangle inequality,
\[\|\bH\|_{op}\ge\|s\bu\bv^\top\|_{op}-\|\bG\|_{op}\ge C_2sNd-C_1N\sqrt{d}.\]

Next, note that by Gaussian concentration, $\|\bG\|_F=\Theta(Nd)$ with high probability for a universal constant $C_3>0$. 
Furthermore, we have 
\[\|\bG+s\bu\bv^\top\|_F\le\|\bG\|_F+\|s\bu\bv^\top\|_F\le s\|\bu\|_2\cdot\|\bv\|_2=\Theta(Nd),\]
with high probability, since $s<1$. 

Hence if $C_2sNd-C_1N\sqrt{d}-C_1N\sqrt{d}\ge\eps\cdot\Theta(Nd)$, then any linear sketch that approximates the eigenvalues of $\bM$ up to additive $\eps\cdot\|\bM\|_F$ can distinguish whether the input matrix is $\bG$ or $\bH$. 
Note that the inequality is satisfied for $s=\O{\eps}$. 
However, by \lemref{lem:tvd:g:spike}, any linear sketch for an input matrix with $\O{d^2}$ rows that can distinguish between $\bG$ and $\bH$ with probability at least $\frac{3}{4}$ requires at least $\Omega\left(\frac{1}{s^4}\right)$ rows. 
Therefore provided $d=\Omega\left(\frac{1}{\eps^2}\right)$, the linear sketch requires $\Omega\left(\frac{1}{\eps^4}\right)$ rows. 
\end{proof}
Next, we give our lower bound for PSD testing. 
\begin{theorem}
\thmlab{thm:psd:lb}
Given a distance parameter $\eps$, any integer sketch that reads $\bM\in\mathbb{Z}^{d\times d}$ and serves as a two-sided tester for whether $\bM$ is PSD or $\eps$-far from PSD in $\ell_p$ distance with probability at least $\frac{3}{4}$ requires:
\begin{enumerate}
\item
$\Omega\left(\frac{1}{\eps^{2p}}\right)$ rows for $p\in[1,2]$. 
\item 
$\Omega\left(\frac{1}{\eps^4}d^{2-4/p}\right)$ rows for $p\in(2,\infty)$.
\item
$\Omega(d^2)$ for $p=\infty$. 
\end{enumerate}
\end{theorem}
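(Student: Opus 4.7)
The plan is to mirror the eigenvalue approximation lower bound of \thmref{thm:eigen:lb}, replacing the eigenvalue-gap criterion with the distance-to-PSD-in-Schatten-$p$-norm criterion. We will construct two ensembles of symmetric integer matrices in $\mathbb{Z}^{d\times d}$: $\calD_1 := \bW + c I_d$, which is PSD with high probability, and $\calD_2 := \bW + c I_d - s\,\bu \bu^\top$, which plants a negative rank-one spike. Here $\bW$ is a symmetric discrete-Gaussian (``Wigner-type'') matrix with above-diagonal entries drawn independently from $\calD(0,N^2)$, $\bu \sim \calD(0, N\cdot I_d)$, and $c = C_0 N\sqrt{d}$ is chosen large enough that $\|\bW\|_{op} \le c$ with high probability (a symmetric variant of \lemref{lem:two:sided:oper}), forcing $\calD_1 \succeq 0$. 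We will then invoke \thmref{thm:lifting} together with a symmetric analog of \lemref{lem:tvd:g:spike} to argue that distinguishing the two ensembles requires sketching dimension $\Omega(1/s^4)$.

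To complete the argument in each of the three regimes, we calibrate $s$ and $d$ as follows. By Weyl's inequality the minimum eigenvalue of $\calD_2$ is $-\Theta(sNd - N\sqrt{d})$, and this single negative spike dominates the $\ell_p$ distance from $\calD_2$ to the PSD cone, so the distance is $\Theta(sNd)$ (the bulk eigenvalues of $\bW$ lie in $[-C_0 N\sqrt{d}, C_0 N\sqrt{d}]$ and after the shift are comfortably positive). On the other hand, $\|\calD_2\|_p$ is dominated by the diagonal shift $cI_d$, concentrating at $\Theta(N d^{1/2 + 1/p})$ for $p<\infty$ and $\Theta(N\sqrt{d})$ for $p=\infty$. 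Setting the distance-to-PSD equal to $\eps\cdot\|\calD_2\|_p$ then forces $s = \Theta(\eps\cdot d^{1/p - 1/2})$ for $p<\infty$ and $s = \Theta(\eps/\sqrt{d})$ for $p=\infty$, giving $1/s^4 = \Theta(d^{2-4/p}/\eps^4)$ and $\Theta(d^2/\eps^4)$ respectively. For $p\in(2,\infty)$ this is part (b) directly; for $p=\infty$ at constant $\eps$ this is part (c); and for $p\in[1,2]$ we choose $d = \Theta(\eps^{-p})$, which simplifies $d^{2-4/p}/\eps^4$ to $\eps^{-2p}$, yielding part (a).

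The main obstacle is establishing the symmetric analog of \lemref{lem:tvd:g:spike} for a rank-one spike of the form $\bu\bu^\top$: the original statement handles asymmetric spikes $\sum_i s_i \bu^i (\bv^i)^\top$ with independent factor vectors, whereas here both sides of the spike are the same discrete Gaussian vector. The $\chi^2$-divergence calculation underlying \lemref{lem:tvd:g:spike} therefore must be redone using a symmetric variant of \lemref{lem:exp:exp:xtay}; we expect the same $O(1/s^4)$ scaling because the Frobenius norm of the resulting coefficient matrix $Q$ has the same order, but the symmetric cross-terms must be shown to still vanish in expectation (they do, since the discrete Gaussian is symmetric around $0$ and so odd moments vanish). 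Secondary technical steps include a sharp spectral-norm bound for the symmetric discrete Wigner ensemble $\bW$ (a routine adaptation of \lemref{lem:two:sided:oper} combined with \lemref{lem:disc:subgauss}), and verifying that the bulk contributions to $\|\calD_2\|_p^p$ and to the distance-to-PSD are of lower order than the diagonal shift and the planted spike, respectively.
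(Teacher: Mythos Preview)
Your overall calibration of $s$ and $d$ for the three regimes matches the paper's, and the high-level plan of reducing PSD-testing to distinguishing a shifted random matrix from the same matrix plus a rank-one spike is the right one. The difference is in how symmetry is handled.

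The paper does \emph{not} build a symmetric Wigner matrix with a symmetric spike $s\,\bu\bu^\top$. Instead it keeps the asymmetric construction of \lemref{lem:tvd:g:spike} intact: it draws a nonsymmetric $\bG$ with i.i.d.\ $\calD(0,N^2)$ entries and an asymmetric spike $s\,\bu\bv^\top$ with independent $\bu,\bv$, and only then symmetrizes via the $2d\times 2d$ block matrix
\[
\bG_{\mathrm{sym}}=\begin{bmatrix}0&\bG\\\bG^\top&0\end{bmatrix},\qquad
\bH_{\mathrm{sym}}=\begin{bmatrix}0&\bH\\\bH^\top&0\end{bmatrix}.
\]
Since the eigenvalues of $\bG_{\mathrm{sym}}$ are exactly $\pm\sigma_i(\bG)$, the PSD shift and the planted negative eigenvalue are read off directly from the singular values of $\bG$ and $\bH$. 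The point is that any sketch for the $2d\times 2d$ symmetric input yields a sketch for the underlying $d\times d$ block $\bG$ or $\bH$, so \lemref{lem:tvd:g:spike} applies \emph{verbatim}, with no need for a symmetric analog.

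Your route is not wrong, but the ``main obstacle'' you flag---proving a symmetric-spike analog of \lemref{lem:tvd:g:spike} for $s\,\bu\bu^\top$---is real work: the $\chi^2$ computation now involves $\Ex{\exp\bigl(s^2\sum_i(\bu_1^\top B^i\bu_1)(\bu_2^\top B^i\bu_2)\bigr)}$, a product of quadratic forms rather than bilinear ones, and \lemref{lem:exp:exp:xtay} does not apply as stated. The paper's block-symmetrization trick sidesteps this entirely and lets the existing asymmetric lemma do all the heavy lifting.
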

\begin{proof}
Let $\bG\in\mathbb{R}^{d\times d}$ be a matrix with entries drawn independently at random from $\calD(0,N^2)$. 
Let $\bH=\bG+s\bu\bv^\top$, where $\bu,\bv\in\mathbb{R}^d$ have entries drawn independently at random from $\calD(0,N)$. 
By \lemref{lem:two:sided:oper}, we have that there exists an absolute constant $C_2>0$ such that $\|\bG\|_{op}\le C_1N\sqrt{d}$ with probability at least $1-2e^{-d}$. 
We define
\[\bG_{\text{sym}}=
\begin{bmatrix}
    0 & \bG \\
    \bG^\top & 0
\end{bmatrix},\qquad
\bH_{\text{sym}}=
\begin{bmatrix}
    0 & \bH \\
    \bH^\top & 0
\end{bmatrix}
.\]
Since the eigenvalues of $\bG_{\text{sym}}$ are exactly $\pm\sigma_i(\bG)$ for $i\in[d]$, where $\sigma_i(\bG)$ denotes the singular values of $\bG$. 
Hence, the matrix $\bG_{\text{sym}}+CN\sqrt{d}\cdot I_{2d}$ is PSD with probability at least $1-2e^{-d}$ for any $C\ge C_1$. 

Since $s\bu\bv^\top$ is a rank-one matrix, then there exists an absolute constant $C_2>0$ such that
\[\|s\bu\bv^\top\|_{op}=s\|\bu\|_2\|\bv\|_2\ge C_2sNd,\]
with probability at least $1-4e^{-d}$ by union bounding over the events that $\|\bu\|_2=\Omega(\sqrt{Nd})$ and $\|\bv\|_2=\Omega(\sqrt{Nd})$. 
By triangle inequality,
\[\|\bH\|_{op}\ge\|s\bu\bv^\top\|_{op}-\|\bG\|_{op}\ge C_2sNd-C_1N\sqrt{d}.\]
Because the eigenvalues of $\bH_{\text{sym}}$ are exactly $\pm\sigma_i(\bH)$ for $i\in[d]$, where $\sigma_i(\bG)$ denotes the singular values of $\bG$, then the matrix $\bH_{\text{sym}}+CN\sqrt{d}\cdot I_{2d}$ has a negative eigenvalue with value at most $-(C_2sNd-C_1N\sqrt{d})+CN\sqrt{d}$, i.e., magnitude at least $C_2sNd-2CN\sqrt{d}$ for $C\ge C_1$. 

Moreover, we have the Schatten-$p$ norm satisfies $\|\bG_{\text{sym}}\|_p\le C_1Nd^{1/2+1/p}$ with probability at least $1-2e^{-d}$, since the operator norm of $\bG$ is at most $C_1N\sqrt{d}$ with probability at least $1-2e^{-d}$. 
Therefore, for
\begin{align}
\label{eqn:psd:cond}
\eps\le\frac{C_2sNd-2CN\sqrt{d}}{C_1Nd^{1/2+1/p}},
\end{align}
then a two-sided PSD-tester can distinguish between $\bG$ and $\bH$ with high probability. 
However, by \lemref{lem:tvd:g:spike}, any sketch that uses $\O{d^2}$ rows and can distinguish between $\bG$ and $\bH$ with probability at least $\frac{3}{4}$ requires at least $\Omega\left(\frac{1}{s^4}\right)$ rows. 
In particular, for $p\in[1,2]$, we can confirm that the setting $d=\Theta\left(\frac{1}{\eps^p}\right)$ and $s=\O{\eps^{p/2}}$ satisfies the inequality in (\ref{eqn:psd:cond}). 
Therefore, $\Omega\left(\frac{1}{\eps^{2p}}\right)$ sketching dimension is necessary for $p\in[1,2]$. 
Similarly, it can be confirmed that $d=\Omega\left(\frac{1}{\eps^p}\right)$ and $s=\O{\eps d^{1/p-1/2}}$ satisfies the inequality in (\ref{eqn:psd:cond}) for $p\in(2,\infty)$, yielding a lower bound of $\Omega\left(\frac{1}{\eps^4}d^{2-4/p}\right)$. 
Finally, the setting of any arbitrary $d$ and $s=\O{\frac{1}{\sqrt{d}}}$ gives a lower bound of $\Omega(d^2)$ for $p=\infty$. 
\end{proof}

\subsection{Compressed Sensing} 
In the $\ell_2/\ell_2$-sparse recovery problem, we want to find a sparse vector $\hat{\bx}$ such that $\norm{\hat{\bx} - \bx}_2 \le (1 + \eps) \min_{\text{$k$-sparse } \tilde{\bx}} \norm{\tilde{\bx} -  \bx}_2$. 
We consider the following discrete analogue of the construction of the input distribution in~\cite{PriceW11}:
Let $\mathcal{F} \subset \{S \subset [n] \mid \abs{S} = k\}$ be a family of $k$-sparse supports such that:
\begin{itemize}
\item $\abs{S \Delta S'} \geq k$ for $S \neq S' \in \mathcal{F}$,
\item $\Pr_{S \in \mathcal{F}} [i \in S] = k/n$ for all $i \in [n]$, and
\item $\log \abs{\mathcal{F}} = \Omega(k \log (n/k))$.
\end{itemize}

Let $X = \{\bx \in \{0, \pm \sqrt{N}\}^n \mid \supp(\bx) \in \mathcal{F}\}$ for some large enough $N = \poly(n)$. 
Let $\bw \sim \mathcal{D}(0, \eps N\frac{k}{n} I_n)$ be i.i.d discrete Gaussian in each coordinate. 
The procedure is defined as follows. 
Given an input $\bx + \bw$, let $\bx' = g(\bA(\bx + \bw))$ for a sketching matrix $\bA$ and a post-processing function $g$. 
Then we round $\bx'$ to $\hat{\bx}$ by $\hat{\bx} = \argmin_{\hat{\bx} \in X} \norm{\hat{\bx} - \bx'}_2$. 
This gives a Markov chain $S \to \bx \to \bx' \to S'$. 
In this section, we will focus the case that $\bx'$ has sparsity $\O{\frac{\eps n}{\log n}}$.

Given an $S \in \mathcal{F}$, define the distribution $\mathcal{L}_S: \bz + \bw$ where $\bz \in \{\{0, \pm \sqrt{N}\}^n \ | \ \mathrm{supp}(\bx) = S\}$. 
We show the following claim: suppose there is an algorithm $\calA$ that returns a $(1 + \eps)$-sparse recovery of $\bx$ with high constant probability and uses an integer linear sketch $\bA$ with estimator $g$, where from $\bA \bx, \bx \sim \mathcal{L}_S$. Then, we will construct another algorithm $\calA'$ which uses an estimator $h$ (which is related to $g$) such that $\calA'$ can identify the support $S$ of $\bx$, where $\bx \sim \mathcal{L}'_S: {\bz + \bw}$ and $\bw \sim \mathcal{N}(0, \eps N\frac{k}{n} I_n)$ is the continuous Gaussian noise. Note that a lower bound of $\Omega(\eps^{-1} k \log (n / k))$ was proved for the latter distribution in ~\cite{PriceW11}.

Suppose that $\bx' = g(\bA\bx)$ is a $(1 + \eps)$-sparse recovery of $\bx$. Then, we first show that for $\bx \sim \mathcal{L}_S$, we can recover the support $S$ from $\bx'$ with high probability.

\begin{lemma}
Suppose that $\bA \in \mathbb{Z}^{r \times n}$ is the sketching matrix and $\bx' = g(\bA (\bx + \bw))$ has at most $c \eps n / \log n$ non-zero coordinates for some constant $c$. Then with high probability we have from $\bx'$ we can recover the support $S$.  
\end{lemma}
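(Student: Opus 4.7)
The plan is to show that the $(1+\eps)$-sparse recovery guarantee forces $\bx'$ to be within $\ell_2$ distance $O(\sqrt{\eps N k})$ of $\bx = \bz$, while every other element of $X$ lies at distance at least $\sqrt{Nk}$ from $\bz$, so the rounding step returns $\bz$ and its support $S$ is recovered. Throughout I will condition on the high-probability events coming from sub-Gaussian concentration of $\bw$ and take a final union bound.

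First I would control the noise. Since $\bw$ is a product of $n$ i.i.d.\ discrete Gaussians with variance $\eps N k/n$, \lemref{lem:disc:subgauss} and Bernstein's inequality (\lemref{lem:bernstein:subexp}) give $\|\bw\|_2^2 = \Theta(\eps N k)$ and $\max_i |w_i| = O(\sqrt{\eps N k (\log n)/n})$ with high probability. Because $N$ is a large polynomial in $n$ and $\eps k/n \ll 1$, the signal magnitude $\sqrt{N}$ on coordinates of $S$ dominates $\max_i |w_i|$, so the best $k$-sparse approximation to $\bx+\bw$ has support exactly $S$ and squared residual $\|\bw|_{[n]\setminus S}\|_2^2 \le \|\bw\|_2^2 = O(\eps N k)$. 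Applying the $(1+\eps)$-sparse recovery hypothesis to the input $\bx+\bw$ yields
\[
\|\bx' - (\bx+\bw)\|_2^2 \;\le\; (1+\eps)^2 \cdot O(\eps N k) \;=\; O(\eps N k),
\]
and one triangle inequality then gives $\|\bx'-\bx\|_2 \le \|\bx'-(\bx+\bw)\|_2 + \|\bw\|_2 = O(\sqrt{\eps N k})$.

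Next I would analyze the rounding. For any $\bz' \in X$ with $\supp(\bz') = S' \ne S$, coordinates in $S \setminus S'$ and $S' \setminus S$ each contribute $N$ to $\|\bz - \bz'\|_2^2$, and coordinates in $S \cap S'$ with opposite signs contribute $4N$, so using $|S \Delta S'| \ge k$ from the construction of $\mathcal{F}$ we obtain $\|\bz-\bz'\|_2^2 \ge N|S\Delta S'| \ge Nk$. By the reverse triangle inequality,
\[
\|\bx'-\bz'\|_2 \;\ge\; \|\bz-\bz'\|_2 - \|\bx'-\bz\|_2 \;\ge\; \sqrt{Nk} - O(\sqrt{\eps N k}) \;=\; \sqrt{Nk}\bigl(1 - O(\sqrt{\eps})\bigr),
\]
which strictly exceeds $\|\bx'-\bz\|_2 = O(\sqrt{\eps N k})$ for any sufficiently small $\eps$. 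Therefore $\hat{\bx} = \arg\min_{\hat{\bx} \in X}\|\hat{\bx}-\bx'\|_2 = \bz$, and reading off $\supp(\hat{\bx})$ recovers $S$.

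The main obstacle is the concentration step: I need the best $k$-sparse approximation error to be tightly within a $(1+O(\eps))$ factor of $\|\bw\|_2^2$ (so that the $(1+\eps)^2$ blowup does not swamp the signal gap of $\sqrt{Nk}$ versus noise $\sqrt{\eps N k}$), and I need a clean union bound over the simultaneous events $\|\bw\|_2^2 \le C\eps N k$, $\max_i|w_i| \ll \sqrt{N}$, and $\|\bx'-(\bx+\bw)\|_2^2 \le (1+\eps)^2 \cdot \|\bw|_{[n]\setminus S}\|_2^2$. Once this is in place, the hypothesis that $\bx'$ is $O(\eps n/\log n)$-sparse is only used implicitly (to invoke the sparse-recovery guarantee for the given algorithm), and the combinatorial gap $|S\Delta S'| \ge k$ of the family $\mathcal{F}$ does the rest of the work.
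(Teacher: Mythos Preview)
Your argument is correct and takes a genuinely different (and cleaner) route than the paper. The paper follows the dichotomy of \cite{PriceW11}, Lemma~4.3: either $\|\bx'_T - \bx\|_2^2 \le 9\eps\|\bw\|_2^2$ (which forces $S'=S$) or $\|\bx'_{\bar T} - \bw_{\bar T}\|_2^2 \le (1-2\eps)\|\bw\|_2^2$, and then rules out the second alternative by using the sparsity hypothesis on $\bx'$ together with the coordinate bound $\max_i w_i^2 = O(\eps Nk\log n /n)$, so that $\bx'$ cannot cancel more than an $O(\eps)$ fraction of $\|\bw\|_2^2$. Your approach skips this entirely: since $\bx$ itself is $k$-sparse, the recovery guarantee already gives $\|\bx' - (\bx+\bw)\|_2 \le (1+\eps)\|\bw\|_2$, hence $\|\bx'-\bx\|_2 \le (2+\eps)\|\bw\|_2 = O(\sqrt{\eps N k})$, and the packing $\|\bz-\bz'\|_2 \ge \sqrt{Nk}$ for distinct supports finishes it. In particular, your proof never uses the $O(\eps n/\log n)$-sparsity of $\bx'$ at all, so it actually establishes a slightly stronger lemma than stated; your remark that the sparsity is ``used implicitly'' is not quite right --- it is simply unused.

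Two small caveats. First, your conclusion $\hat\bx = \bz$ is stronger than needed and not quite justified: two elements of $X$ with the \emph{same} support but differing signs are only $2\sqrt{N}$ apart, which can be smaller than $O(\sqrt{\eps N k})$ once $\eps k$ is large, so the argmin need not be $\bz$ itself. But this does not matter: your triangle-inequality step already shows that any $\bz'\in X$ with $\supp(\bz')\ne S$ satisfies $\|\bx'-\bz'\|_2 > \|\bx'-\bz\|_2$, which is exactly what is required to conclude $\supp(\hat\bx)=S$. Second, the ``main obstacle'' you flag is not one: you never need the best $k$-sparse error to be \emph{close} to $\|\bw\|_2^2$, only \emph{at most} $\|\bw\|_2^2$, which is immediate from taking $\tilde\bx = \bx$.
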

\begin{proof}
Let $S'$ be the support of $\hat{\bx}$ and $T = S \cup S'$.
Following the same argument in Lemma~4.3 in~\cite{PriceW11} we can have that if $\bx'$ is a $(1 + \eps)$-sparse recovery of $\bx$, then we have that either $\norm{\bx'_T - \bx}_2^2 \le 9 \eps \norm{\bw}_2^2$ or $\norm{\bx'_{\bar{T}} - \bw}_2^2 \le (1 - 2\eps) \norm{\bw}_2^2$ and for the first case we can further argue that $S' = S$. 
Hence, we only need to show that with high constant probability, the second case will not happen. 

To show this, let $S''$ be the support of $\bx'$, note that since $\bx'$ has $c\eps n /\log n$ sparsity and $\bw$ is drawn from the discrete Gaussian distribution $\mathcal{D} (0, \eps N\frac{k}{n} I_n)$, we have with high probability, $\norm{\bw}_2^2 = \Theta(\eps N k)$ and $\norm{\bw_{S''}}  \le \O{\frac{\eps Nk \log n}{n} \cdot\frac{\eps n}{\log n}} = \O{\eps^2 N k}$, which means that for a sufficiently constant $c$ we have $\norm{\bx'_{\bar{T}} - \bw}_2^2 \ge (1 - 2\eps) \norm{\bw}_2^2$ holds with high probability, which is what we need.
\end{proof}
Now, consider the continuous case where $\bx \sim \mathcal{L}'_S: {\bz + \bw}$ and $\bw \sim \mathcal{N}(0, \eps N\frac{k}{n} I_n)$. 
It has been shown in~\cite{PriceW11} that in this case, with high probability, we can also recover the support $S$. 
Recall that $\bz$ is sampled from an integer distribution. 
Hence, from \thmref{thm:lifting} with covariance matrix $\bS = \eps N \frac{k}{k} I_n$, we can modify $g$ to another function $h$ with sketching matrix $\bA'$ where from $h(\bA' \bx)$ we can recover $S$ with high probability. 
From this we get the following lower bound.

\begin{lemma}
\lemlab{lem:compress_sensing_lb}
Suppose that $\eps > \sqrt{\frac{k\log n}{n}}$. 
Any integer sketch that outputs an $\O{\frac{\eps n}{\log n}}$-sparse vector $\bx'$ such that $\norm{\bx - \bx'}_2^2 \le (1 + \eps) \min_{\text{$k$-sparse } \tilde{\bx}} \norm{\tilde{\bx} -  \bx}_2$ with high constant probability requires sketching dimension $\Omega\left(\frac{k}{\eps}\log\frac{n}{k}\right)$.
\end{lemma}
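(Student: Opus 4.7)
The proof plan follows the reduction sketched in the preceding discussion. Sample $S \in \mathcal{F}$ uniformly at random, then draw $\bx \sim \mathcal{L}_S$, i.e., $\bx = \bz + \bw$ where $\bz$ has support $S$ with entries in $\{\pm \sqrt{N}\}$ and $\bw \sim \calD(0, \eps N (k/n)\cdot I_n)$. Given any integer sketch $(\bA, g)$ that outputs an $\O{\eps n/\log n}$-sparse $(1+\eps)$-approximation $\bx'$ with high constant probability, the lemma immediately preceding shows that $\hat{\bx} = \argmin_{\tilde{\bx} \in X} \|\tilde{\bx} - \bx'\|_2$ recovers $\supp(\hat{\bx}) = S$ with constant probability. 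Composing this rounding step with $g$ yields a post-processing $\tilde g$ such that $\tilde g(\bA\bx) = S$ with constant probability over $\bx \sim \mathcal{L}_S$.

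Next, I would apply the lifting framework of \thmref{thm:lifting}. Take $\calI$ to be the uniform distribution over $\{0,\pm\sqrt{N}\}^n$ restricted to supports in $\mathcal{F}$, set the covariance $\bS^\top\bS = \eps N (k/n)\cdot I_n$, and let $f(\bz+\by) = \supp(\by)$ for $\by \sim \calI$. By choosing $N = \poly(n)$ sufficiently large, the smallest singular value $\sigma_n(\bS) = \sqrt{\eps N k/n}$ can be made as large as required by the theorem, and after applying the pre-processing of \lemref{lem:preprocessing} to $\bA$ (which only adds $\O{r}$ rows and so leaves the target lower bound intact), the successive-minima hypothesis is satisfied as well. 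The theorem then produces a sketching matrix $\bA' \in \mathbb{R}^{4r \times n}$ and post-processing $h$ such that $h(\bA'\bx) = S$ with constant probability over $\bx \sim \calN(0, \bS^\top \bS) + \calI$ — precisely the continuous-noise distribution analyzed by Price and Woodruff.

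Finally, I would invoke the real-valued sparse-recovery lower bound of \cite{PriceW11}, which establishes that any real-valued linear sketch that supports constant-probability support identification over $\calN(0,\eps N (k/n)\cdot I_n) + \calI$ requires $\Omega(\eps^{-1} k \log(n/k))$ rows. Since $\bA'$ has only $4r$ rows, we conclude $r = \Omega(\eps^{-1} k \log(n/k))$, as claimed.

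The main obstacle is verifying the smoothness hypothesis demanded by \thmref{thm:lifting}: one must show that $f$ rarely differs between the discrete-noise and continuous-noise settings. This follows because the signal amplitude $\sqrt{N}$ is polynomially larger than the typical noise magnitude $\O{\sqrt{(\eps N k/n) \log n}}$ under either noise model, so the rounding that determines $S$ is robust to switching from $\calD(0,\bS^\top\bS)$ to $\calN(0,\bS^\top\bS)$; combined with \lemref{lem:pmf}, the two noise distributions are point-wise close to within a $1/\poly(n)$ factor at this scale, making the substitution $o(1)$-smooth. A secondary subtlety is the assumption $\eps > \sqrt{k\log n/n}$, which ensures the noise is large enough for the Price–Woodruff lower bound to deliver the $\log(n/k)$ factor rather than only $\log n$.
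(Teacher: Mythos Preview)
Your proposal is correct and follows essentially the same route as the paper: use the preceding support-recovery lemma on the discrete-noise distribution, apply \thmref{thm:lifting} with $\calI$ the signal distribution and $\bS^\top\bS=\eps N(k/n)I_n$, and then invoke the Price--Woodruff real-valued lower bound on the resulting continuous sketch. The paper handles the smoothness hypothesis slightly differently (it points to \cite{PriceW11} for the continuous-noise support-recovery statement rather than arguing via \lemref{lem:pmf}), but the structure is the same.
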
 

\section*{Acknowledgements}
Elena Gribelyuk and Huacheng Yu are supported in part by an NSF CAREER award CCF-2339942. 
Honghao Lin was supported in part by a Simons Investigator Award, NSF CCF-2335412, and a CMU Paul and James Wang Sercomm Presidential Graduate Fellowship. 
David P. Woodruff was supported in part by a Simons Investigator Award and NSF CCF-2335412. 
Samson Zhou is supported in part by NSF CCF-2335411. 
The work was conducted in part while David P. Woodruff and Samson Zhou were visiting the Simons Institute for the Theory of Computing as part of the Sublinear Algorithms program.

\def\shortbib{0}
\bibliographystyle{alpha}
\bibliography{references}

\end{document}